\newcommand{\parallelsum}{\mathbin{\!/\mkern-5mu/\!}}
\newcommand{\bbR}{\mathbb{R}}
\newcommand{\sC}{\mathcal{C}}
\newcommand{\sD}{\mathcal{D}}
\newcommand{\sE}{\mathcal{E}}
\newcommand{\sH}{\mathcal{H}}
\newcommand{\sR}{\mathcal{R}}
\newcommand{\sS}{\mathcal{S}}
\newcommand{\sT}{\mathcal{T}}
\newcommand{\sO}{\mathcal{O}}
\newcommand{\sU}{\mathcal{U}}
\newcommand{\sV}{\mathcal{V}}
\newcommand{\sZ}{\mathcal{Z}}
\newcommand{\ve}{\mathbf{e}}
\newcommand{\vx}{\mathbf{x}}
\newcommand{\vtheta}{\boldsymbol{\theta}}
\newcommand{\vxi}{\boldsymbol{\xi}}
\newcommand{\vepsilon}{\boldsymbol{\epsilon}}
\newcommand{\vzero}{\mathbf{0}}
\newcommand{\tx}{\tilde{x}}
\newcommand{\tp}{\tilde{p}}
\newcommand{\ttheta}{\tilde{\theta}}
\newcommand{\htheta}{\hat{\theta}}
\newcommand{\hPsi}{\widehat{\Psi }}
\newcommand{\hsigma}{\hat{\sigma}}
\newcommand{\tsigma}{\tilde{\sigma}}
\newcommand{\txi}{\tilde{\xi}}
\newcommand{\hxi}{\hat{\xi}}
\newcommand{\hsC}{\widehat{\mathcal{C} }}
\newcommand{\tsC}{\widetilde{\mathcal{C} }}
\newcommand{\tsP}{\widetilde{\mathcal{P} }}
\newcommand{\hsP}{\widehat{\mathcal{P} }}
\newcommand{\mA}{\mathbf{A}}
\newcommand{\mC}{\mathbf{C}}
\newcommand{\mD}{\mathbf{D}}
\newcommand{\mF}{\mathbf{F}}
\newcommand{\mI}{\mathbf{I}}
\newcommand{\mU}{\mathbf{U}}
\newcommand{\mV}{\mathbf{V}}
\newcommand{\mX}{\mathbf{X}}
\newcommand{\mY}{\mathbf{Y}}
\newcommand{\ma}{\mathbf{a}}
\newcommand{\mb}{\mathbf{b}}
\newcommand{\Expect}{\mathbb{E}}
\newcommand{\Prob}{\mathbb{P}}
\newcommand{\intd}{\,\mathrm{d}}
\newcommand{\tr}{\mathrm{tr}}
\newtheorem{theorem}{Theorem}
\newtheorem{lemma}[theorem]{Lemma} 
\newtheorem{proposition}[theorem]{Proposition} 
\newtheorem{remark}[theorem]{Remark}
\newtheorem{corollary}[theorem]{Corollary}
\newtheorem{condition}[theorem]{Condition}
\numberwithin{equation}{section}
\theoremstyle{plain}
\begin{document}

\begin{frontmatter}
	%%%%%%%%%%%%%%%%%%%%%%%%%%%%%%%%%%%%%%%%%%%%%%
	%%                                          %%
	%% Enter the title of your article here     %%
	%%                                          %%
	%%%%%%%%%%%%%%%%%%%%%%%%%%%%%%%%%%%%%%%%%%%%%%
	\title{Functional Principal Subspace Sampling for Large Scale Functional Data Analysis}
	%\title{A sample article title with some additional note\thanksref{T1}}
	\runtitle{Randonmized FDA with FunPrinSS}
	%\thankstext{T1}{A sample of additional note to the title.}
	
	\begin{aug}
		\author[A]{\fnms{Shiyuan} \snm{He}\ead[label=e1]{heshiyuan@ruc.edu.cn}}
		\and 
		\author[B]{\fnms{Xiaomeng} \snm{Yan}\ead[label=e2]{xiaomengyan@stat.tamu.edu}}
		%%%%%%%%%%%%%%%%%%%%%%%%%%%%%%%%%%%%%%%%%%%%%%
		%% Addresses                                %%
		%%%%%%%%%%%%%%%%%%%%%%%%%%%%%%%%%%%%%%%%%%%%%%
		\address[A]{Center for Applied Statistics, Institute of Statistics and Big Data, Renmin University of China, Bejing, 100872, China. \printead{e1}}
		\address[B]{Department of Statistics, Texas A\&M University, College Station, TX, 77840, USA. \printead{e2}}
		
			\runauthor{S. He and X. Yan.}
		
%		\affiliation{Some University and Another University}
	\end{aug}

	\begin{abstract}
		Functional data analysis (FDA) methods have computational and theoretical appeals for some high dimensional data, but lack the scalability to modern large sample datasets. To tackle the challenge, we develop randomized algorithms for two important FDA methods: functional principal component analysis (FPCA) and functional linear regression (FLR) with scalar response. The two methods are connected as they both rely on the accurate estimation of functional principal subspace. The proposed algorithms draw subsamples from the large dataset at hand and apply FPCA or FLR over the subsamples to reduce the computational cost. To effectively preserve subspace information in the subsamples, we propose a functional principal subspace sampling probability, which removes the eigenvalue scale effect inside the functional principal subspace and properly weights the residual. Based on the operator perturbation analysis, we show the proposed probability has precise control over the first order error of the subspace projection operator and can be interpreted as an importance sampling for functional subspace estimation. Moreover, concentration bounds for the proposed algorithms are established to reflect the low intrinsic dimension nature of functional data in an infinite dimensional space. The effectiveness of the proposed algorithms is demonstrated upon synthetic and real datasets.
	\end{abstract}
	
\begin{keyword}[class=MSC]
	\kwd[Primary ]{62R10} 
	\kwd[; secondary ]{62H25}
	\kwd{62D99}
\end{keyword}
	
	\begin{keyword}
		\kwd{functional principal component analysis}
		\kwd{functional linear regression}
		\kwd{randomized algorithm}
		\kwd{operator perturbation theory}
	\end{keyword}
\tableofcontents
	
\end{frontmatter}

\section{Introduction}
\label{sec:introduction}

The modern era brings us the opportunities and challenges of big data, where both sample size and dimension increase out of the  capacity of classical statistical methods.  Examples of the large scale datasets include:  millions of  spectra collected by   astronomical surveys (e.g. LAMOST of \cite{zhao2012lamost} and SDSS of \cite{eisenstein2011sdss});   petabyte amounts of hyperspectral images recorded by airborne or satellite remote sensing \citep{liu2018remote}; brain medical images through the functional magnetic resonance imaging  \citep[fMRI][]{tian2010functional, turkbrowne2013functional}.  A common characteristic of these datasets is that their observations  can be viewed as functions over some continuous domain (e.g. temporal or spatial domain), and functional data analysis \citep[FDA,][]{kokoszka2017introduction} can serve as a designated toolbox.  FDA treats functional observations as realizations of some stochastic process  with  low intrinsic dimension in an infinite dimensional Hilbert space.  This means observations can be effectively approximated by splines,  wavelets or linear combinations of a few functional principal components. In contrast to classical multivariate methods,  FDA takes these features into consideration to achieve  optimal convergence rates.   Despite their demonstrated value in theory and application, the current FDA  methods lack the scalability for modern large scale datasets \citep{hadjipantelis2018functional}.

Among the current approaches to making statistical methods scalable to large datasets, randomized algorithms have recently gained popularity.   The randomized algorithms approach the problem by constructing a  substantially smaller  sketch of the large scale data at hand. Then, existing  methods  are  applied to the sketch to reduce computation cost. When the sketch  keeps the most relevant information,  the result computed from the sketch should remain close to the result from the original dataset.  	There are a few ways to construct a sketch of a large data set. One approach is to draw subsamples  with respect to some carefully designed probability, which will select  informative samples with larger probability. These sampling probabilities include  the importance sampling in matrix multiplication \citep{drineas2006fast}, the leverage sampling for least squares regression \citep{drineas2012fast,ma2015statistical}, the subspace sampling for low rank matrix construction \citep{Drineas2006Subspace}, etc. Effective sampling probabilities have also been proposed for logistic regression \citep{wang2018optimal,wang2019more} and generalized linear models \citep{zhang2021optimal} to minimize the asymptotic  variance of the estimator.   	An alternative way towards sketching is to mix the original data with a random projection and  draw samples from the projected data  \citep{drineas2011faster,wang2017sketched}. 

Since the seminal works of \cite{drineas2006fast, drineas2006fast2, drineas2006fast3}  for matrix multiplication and approximation, the idea of randomized algorithm has  been successfully applied to optimization \citep{pilanci2015randomized,pilanci2017newton}, low rank matrix estimation \citep{halko2011finding, Drineas2006Subspace}, least squares regression \citep{drineas2012fast}, nonparametric kernel regression \citep{yang2017randomized}, etc.  These algorithms are able to yield comparatively  accurate results at reduced computational and storage costs. See the references \cite{drineas2018lectures,woodruff2014sketching}  for an overview.    Most theoretical analysis of these randomized algorithms is conducted from the algorithmic perspective, where the analysis is carried out conditionally on an arbitrarily fixed dataset.   Some recent works \citep{ma2015statistical,raskutti2016a,wang2017sketched} also  draw analysis from the statistical perspective, where statistical properties such as bias and average prediction error are considered.  

The current literature of randomized algorithms focuses on multivariate statistical methods which are not directly applicable to the functional data setting.  Their theoretical results also does not naturally extend to  functional data in an infinite dimensional space.  \cite{he2020randomized}  recently studied randomized estimation of functional covariance operator. Still, most functional data methods remain to be explored via the randomized algorithm approach for scalability. Among these methods, functional principal component analysis (FPCA)
\citep[e.g.][]{james2000principal,yao2005functional, peng2009geometric} is one of the most important.  Like the classical principal component analysis, FPCA identifies a few orthonormal functional principal components  (FPC)  explaining most of the variability of a dataset.  The computation and theoretical analysis of functional data in an infinite-dimensional  space are usually facilitated by FPCA. Indeed, a large amount of  FDA methods have been proposed based on FPC.  For example, when fitting functional linear regression,  \citep{hall2007methodology,kato2012estimation} first identify a few FPCs of the predictor,  and then the model is fitted  via the leading FPCs. For functional two-sample  test, the test statistic can be constructed based on FPCs \citep{horvath2013estimation}. The work of \cite{delaigle2010defining}  relies on FPCs to define probability density for random functions.  More examples can be found in~\cite{horvath2012inference}.

Due to the fundamental role of FPCA, we study a randomized FPCA algorithm and extend it to functional linear regression (FLR) with scalar response in this work. Our work focuses on the fully observed functional data case, where the function value is known at any point of its domain. In practical applications, for example, these could be functions recorded with a fixed high frequency over a time interval or recorded over a dense spatial grid. This type of data has been widely collected by astronomical spectral surveys, remote sensing, etc, as discussed at the beginning of this section.  Specifically, we will view  functional observations as indivisible elements in a Hilbert space, and the sketched data is constructed based on subsampling.  Our basic algorithm is motivated by the work of~\cite{he2020randomized} for which we now provide a review.

\subsection{Review of the Importance Sampling}
\label{subsec:intro:CovOperator}

Let  $\sH_X$  be a Hilbert space   equipped with an inner product $\langle\cdot, \cdot\rangle$ and an induced norm $\|\cdot \|$. 
For example, when the space $\sH_X = L_2(T)$ is the set of all square integrable functions over a compact interval $T$, the inner product is $\langle x, x'\rangle = \int_T x(t) x'(t)\intd t$ and the   norm is $\|x\| = \big(\int_T x^2(t) \intd t\big)^{1/2}$ for $x,x'\in\sH_X$. Given a random element $x\in\sH_X$ with zero mean and finite second moment $\Expect \|x\|^2$, its \textit{covariance operator} $\sC_{XX}$: $\sH_X \mapsto \sH_X$ is a mapping such that $u\in\sH_X$ is mapped to $\Expect \langle u,x\rangle x$. Equivalently, we can write $\sC_{XX}= \Expect\left( x\otimes x\right)$, where $\otimes$ is  tensor product such that $\left(x\otimes x\right) u = \langle x, u\rangle \times x$ for any $u\in \sH_X$.  Covariance operator generalizes the concept of covariance matrix in multivariate statistics.
For any $u,v \in \sH_X$, it holds that $\langle\sC_{XX} u, v\rangle = \Expect  (\langle x, u\rangle \langle x, v\rangle)$. We can see that $\langle\sC_{XX} u, v\rangle$ quantifies the correlation between the random scalars $\langle x, u\rangle$ and $\langle x, v\rangle$. A comprehensive introduction to these FDA concepts can be found in \cite{hsing2015theoretical}.

\begin{algorithm}[t]
	\caption{Randomized Covariance Operator Estimation\label{alg:sampleCov}}
	\textbf{Input}: Dataset $\{x_n\}_{n=1}^N$; sampling probability $\{p_n\}_{n=1}^N$;	subsample size $C$. \\
	\textbf{Output}: $\tsC_{XX}$. 
	\begin{algorithmic}[1]
		\FOR{$c = 1,\cdots, C$}
		\STATE Sample $\tx_c$ from $\{x_n\}_{n=1}^N$ and get 
		$\tp_c$ with probability $\Prob\big((\tx_c, \tp_c)= (x_n, p_n)\big) = p_n.$
		\ENDFOR
		\STATE Compute $\tsC_{XX} = \frac{1}{C}\sum_{c=1}^C \frac{1}{N\tp_c}\tx_c\otimes \tx_c$.
	\end{algorithmic}
\end{algorithm}

Suppose we have  $N$ (not necessarily independent) realizations $\{x_n\}_{n=1}^N$ of  the random element $x$. The  covariance operator $\sC_{XX}$ can be estimated by the \textit{empirical covariance operator} 
$\hsC_{XX} = (1/N)\sum_{n=1}^N x_n\otimes x_n$. The empirical covariance operator involves the calculation of $N$ tensor  products and their summation. This computational cost  is overwhelming for large sample size. For example, when each $x_n\in L_2(T)$ is digitally stored as a high dimensional vector of length $L$ over a dense grid of $T$, the cost of computing $\hsC_{XX}$ scales as $O(N L^2)$. In a typical astronomical survey, $L$ is in the  order of thousands, and $N$ has the magnitude of millions. Under this scenario, computing a covariance operator using all available data for scientific research  is unrealistic.

Based on  the randomized matrix multiplication of~\cite{drineas2006fast}, 
the  algorithm of  \cite{he2020randomized}   estimates the covariance operator from a sketch of  much smaller  size.  The procedure is summarized in Algorithm~\ref{alg:sampleCov}. Given a full dataset $\{x_n\}_{n=1}^N$ and a sampling probability  $\{p_n\}_{n=1}^N$, it draws with replacement a subsample $\{\tx_c\}_{c=1}^C$ of size $C$ from $\{x_n\}_{n=1}^N$.  	Then, the subsampled empirical covariance operator is computed as  
\begin{equation}\label{eqn:subsampleCXX}
	\tsC_{XX} = \frac{1}{C}\sum_{c=1}^C \frac{1}{N\tp_c}\tx_c\otimes \tx_c.
\end{equation}
Suppose $c$-th subsample is indexed by $i_c$ in the original dataset, we have set $\tx_c = x_{i_c}$ and $\tp_c = p_{i_c}$  correspondingly in the above expression.  It is obvious that, conditional on the full dataset $\{x_n\}_{n=1}^N$,  the subsampled $\tsC_{XX}$ is an unbiased estimator of the full sample covariance operator $\hsC_{XX}$ with any strictly positive probability $\{p_n\}_{n=1}^N$.  It is shown in \cite{he2020randomized} that the optimal  sampling probability, which minimizes the expected squared Hilbert-Schmidt norm of the subsampling error $\tsC_{XX} - \hsC_{XX}$,   is of the form $p_n\propto \|x_n\|^2$. This  sampling probability will be referred as the \textit{importance sampling} (IMPO) throughout this work.

\subsection{Limitation of the Importance Sampling}
\label{subsec:intro:toy}
Recall the purpose of this work is to develop randomized functional principal component analysis (FPCA) and its extension to functional linear regression (FLR). 
 At first glance,   the importance sampling (IMPO)  of \cite{he2020randomized} is already a plausible sampling strategy. By direct application of perturbation theory, the eigenfunctions (functional principal components) of $\tsC_{XX}$ and $\hsC_{XX}$ should not be far away from each other, provided the related eigengap is positive and the difference $\tsC_{XX} - \hsC_{XX}$ is small. However, for  FPCA and FLR, this importance sampling probability is  far from being optimal.

To see the reasons,  consider a toy  principal component analysis example in $\bbR^4$. We generate  $N=1000$ observations by $\vx_n = \sigma_1 \xi_{n1} \vtheta_1 +  \sigma_2 \xi_{n2} \vtheta_2  +  \sigma_3 \xi_{n3} \vtheta_3$, where the principal component directions $\vtheta_1 = (1,1,0,0)^T/\sqrt{2}$,  $\vtheta_2 = (1,-1,0,0)^T/\sqrt{2}$ and  $\vtheta_3 = (0,0,1,1)^T/\sqrt{2}$  are scaled by $\sigma_1 = 10$, $\sigma_2=2$, $\sigma_3=0.1$, respectively. The scores $\xi_{n1}, \xi_{n2},\xi_{n3}\in\bbR$  are drawn from the standard normal distribution. 	Figure~\ref{fig:toyFPCA} plots the generated points by their first two coordinates. Suppose we want to estimate the first two principal components $\vtheta_1,\vtheta_2$  from  subsamples, and  the sampling is taken with respect to  the IMPO sampling probability~\citep{he2020randomized}.  To illustrate the effect of IMPO, we color points with larger $p_n\propto \|x_n\|^2= \sigma_1^2\xi_{n1}^2+\sigma_2^2\xi_{n2}^2+\sigma_3^2\xi_{n3}^2$ by darker blue in  the left penal of Figure~\ref{fig:toyFPCA}. It is evident this  probability over-emphasizes the observations along the first direction  $\vtheta_1$ while neglects the observations along    $\vtheta_2$.  When $C$ subsamples are drawn according to IMPO,  the resulting subsamples  will likely contain abundant information for estimating $\vtheta_1$, but  fail to determine $\vtheta_2$ accurately. 	

In the above example, the first two scaling factors only have a moderately large ratio $\sigma_1^2/\sigma_2^2 = 25$, but their effect in determining $p_n$ is not negligible.   To resolve the issue, we may consider to remove the scaling factor to get a new sampling probability $p_n\propto \xi_{n1}^2 + \xi_{n2}^2$. The  points  in the right penal of Figure~\ref{fig:toyFPCA} are colored accordingly. Notice this  probability puts  equal sampling weights for observations along $\vtheta_1$ and $\vtheta_2$.  The subsamples, which are drawn by this probability, are more likely to contain balanced information for estimating the first two principal components.   Moreover, this idea   can be generalized to infinite dimensional space to estimate the leading $R(>0)$ functional principal components.

\begin{figure}[t]
	\centering
	\includegraphics[width = 0.7\textwidth]{./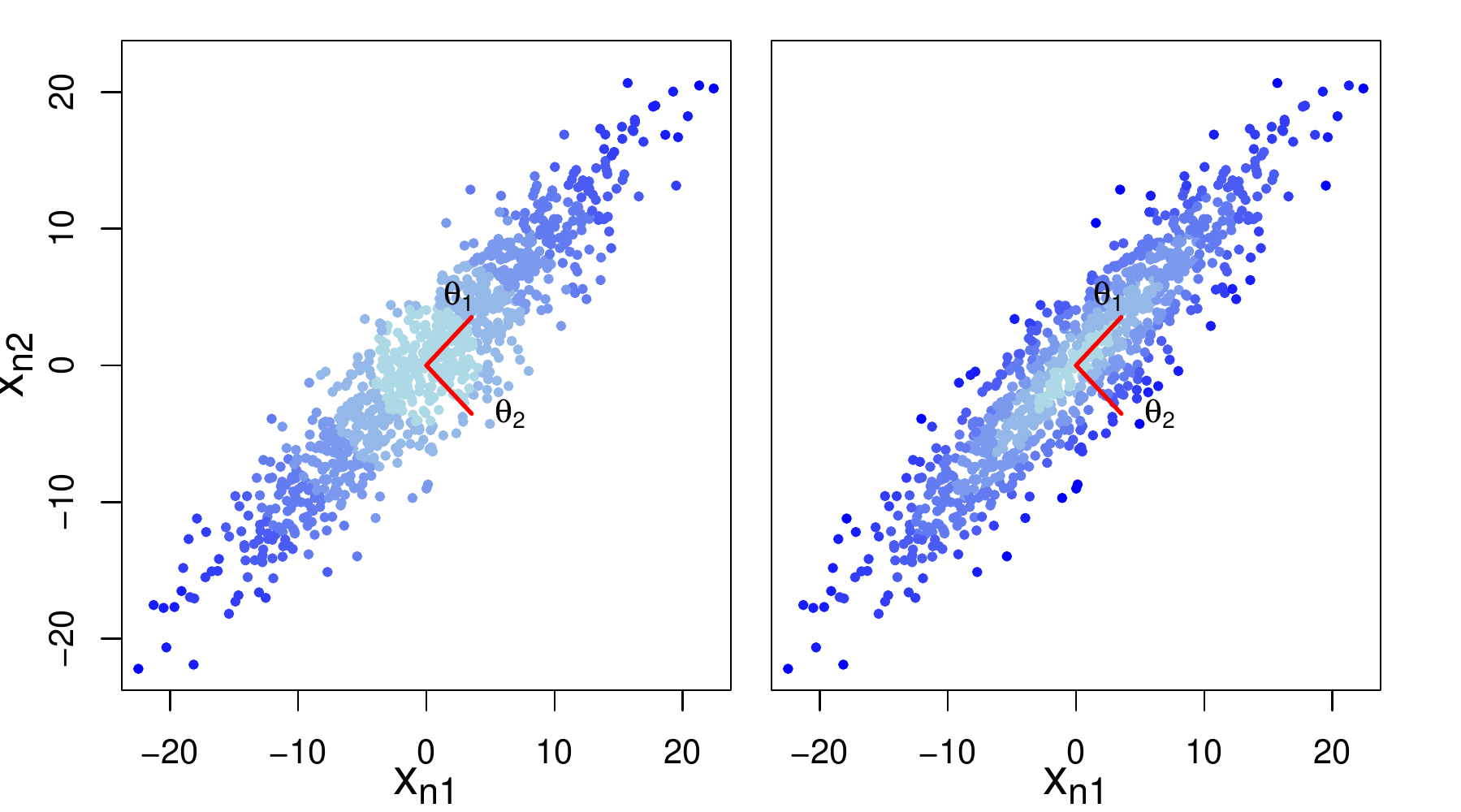} 
	\caption{An toy example for principal component analysis in $\bbR^4$. The $N = 1000$ points $\vx_n = (x_{n1}, x_{n2}, x_{n3}, x_{n4})^T\in\bbR^2$ are plotted by their first two coordinates $(x_{n1}, x_{n2})$.. The red lines mark the first and second principal components $\vtheta_1$ and $\vtheta_2$. The left panel colors the data points by $p_n \propto \| x_n\|^2$, while the right panel by $p_n \propto (\xi_{n1}^2 + \xi_{n2}^2)$. Points with larger sampling probability has darker blue color.} \label{fig:toyFPCA}
\end{figure}

\subsection{Contributions of This Work}
\label{subsec:intro:Contributions}

This work develops randomized algorithms for functional principal component analysis (FPCA) and functional linear regression (FLR) with scalar response. Our methodological and theoretical contributions are summarized in the following paragraphs.

In Section~\ref{sec:fpca}, a randomized algorithm is developed for FPCA. Motivated by the  example from Section~\ref{subsec:intro:toy}, we propose the \textit{functional principal subspace sampling probability} (FunPrinSS)   to effectively preserve   subspace information in  subsamples. The  FunPinSS  probability removes the  eigenvalue scale effect inside the functional principal subspace; and at the same time, it properly weights the complementary subspace  residual.  	The latter part is necessary because we need to control the interaction between the leading and the remaining eigenfunctions in the infinite dimensional space.	 As the exact value of the FunPrinSS probability is unknown unless we execute full sample computation, a fast and theoretically justifiable algorithm is proposed for practical implementation.  
  
  For the theoretical analysis of the randomized FPCA algorithm,   the task of precisely estimating the leading $R$ eigenfunctions  is translated into estimating the projection operator of their spanned  subspace.   The analysis is conducted from an algorithmic perspective, which is conditional on the full dataset and imposes minimal assumption (on eigengap).  With  operator perturbation theory, we are able to decompose the  subsample estimation error  into a first order term and a higher order term.  The proposed FunPrinSS probability is then justified from two aspects:   
  \begin{enumerate}
 \item[(i)] From a heuristic  aspect, we show the proposed probability enjoys the  property of precisely controlling  the first order error.  In fact, the FunPrinSS probability can be interpreted as an importance sampling  for the  first order error. This means FunPrinSS for FPCA plays a  role similar to that of the IMPO sampling~\citep{he2020randomized} in minimizing the  difference $\tsC_{XX} - \hsC_{XX}$ for covariance operator.  
 \item[(ii)] From a  rigorous aspect,  we develop  concentration bounds for the proposed algorithm. The constants of the derived bounds reflect the low intrinsic dimension nature of functional data. That is, the subsample uncertainty is characterized by the dimension of the functional principal subspace and the intrinsic dimension of the residual covariance operator. These dimension parameters are usually small for functional data, and as a result, our theory indicates the subsample  size need not be very large to control the subsample error within a desired accuracy level.
  \end{enumerate}

The randomized FPCA algorithm and its analysis naturally extend to the  functional linear regression  setting in Section~\ref{sec:flr}. For FLR as an inverse problem  in the Hilbert space,  the  involved operator inverse is not well defined due to its infinitely many eigenvalues decreasing to zero.  We study an estimator whose operator inverse is truncated after its leading $R$ functional principal components. It follows that the accuracy of the FLR estimator relies on the precision of the subsampled estimates of functional principal components.  Based on the FunPrinSS probability, a randomized algorithm for FLR is proposed.

For the theoretical justification  the randomized FLR algorithm, we conduct a careful analysis of the  subsample prediction error and decompose it into three components: the functional principal subspace estimation error, the truncated inverse operator estimation error, and an orthogonal residual error.   The error decomposition reveals  distinct sources of uncertainty during subsampling.  Like FPCA, our FLR theoretical  bound is  further established via the operator perturbation expansion  of the related error components. 

The proposed randomized algorithms are closely related to the survey based functional data analysis methods  \citep[see e.g.][]{cardot2011horvitz,cardot2013confidence,degras2014rotation,lardin2014analysing}. These works study  the estimation of the   mean (or total) function, covariance function, or eigenfunctions for a  population (with possibly unknown size).  Classical survey sampling strategies (e.g. simple sampling without replacement and stratified sampling) are employed to draw functional samples. Most of the existing works focus on estimating the mean or total function. When auxiliary variables are available, \cite{cardot2013uniform}  construct a model-assisted estimator of the mean function. For  eigenfunction estimation,  \cite{cardot2010properties}  establish the asymptotic design unbiasedness and asymptotic variance of the  estimator for general survey sampling probabilities.  However, no specific   strategy is  proposed  to reduce the asymptotic variance of the eigenfunction estimator.   To our best knowledge, functional linear regression has not been considered and analyzed yet in the survey based functional data analysis framework.

\vspace{10pt}

We conclude this section by introducing notations used throughout this work.
Let $\mathcal{J}: \sH_X \mapsto \sH_X$ be a bounded linear operator defined on $\sH_X$. The \textit{operator norm} is defined by $\Vert\mathcal{J}\Vert :=  \sup_{u\in\sH_X:\, \|u\|\le 1} \|\mathcal{J}u\|$. Assume that $\{e_{j}\}_{j=1}^\infty$ is a complete orthonormal system for $\sH_X$, the \textit{Hilbert--Schmidt norm} for $\mathcal{J}$ is defined by $\Vert\mathcal{J} \Vert_{HS}^2 :=  \sum_{j=1}^\infty\langle \mathcal{J}e_{j},e_{j}\rangle^2$.  Besides, its \textit{intrinsic dimension} is computed via $\text{intdim}(\mathcal{J}) := \tr(\mathcal{J}) / \Vert \mathcal{J} \Vert$, where the \textit{trace} of $\mathcal{J}$ is defined as $\tr(\mathcal{J}) := \sum_{j=1}^\infty  \langle \mathcal{J} e_{j},e_{j}\rangle$. In this work, the $N$-dimensional Euclidean space $\bbR^N$  is equipped with a normalized inner product $\langle \mathbf{a}, \mathbf{b} \rangle_N = (1/N)\sum_{i=1}^{N} a_i b_i$ for any vectors $\mathbf{a}, \mathbf{b}\in \bbR^{N}$ and its induced norm is defined as $\|\mathbf{a}\|_N=\big(\sum_{i=1}^{N} a_i^2/N\big)^{1/2}$.

The rest of the article is organized as follows. In Section~\ref{sec:fpca}, we develop our randomized algorithm for FPCA. We propose the functional principal subspace probability and a fast two-step algorithm to estimate it. The theoretical analysis of the randomized FPCA algorithms is conducted in Section~\ref{sec:fpca:theory}. In Section~\ref{sec:flr}, the algorithm and theoretical results are  extended to functional linear regression with scalar response. Simulation experiments are conducted in Section~\ref{sec:simu} to compare distinct sampling proposals. The algorithm performance is further examined over an astronomical real dataset in Section~\ref{sec:realdata}.

\section{Functional Principal Subspace Sampling for FPCA}\label{sec:fpca}

Functional principal component analysis \citep[FPCA,][]{horvath2012inference,ramsay2004functional} is the cornerstone for many functional data analysis methods.   To reduce the computation cost of FPCA over a large dataset, we develop a randomized algorithm for FPCA  in Section~\ref{subsec:fpca:algorithm}.   To retain the most relevant principal subspace information after subsampling, we propose the functional principal subspace sampling (FunPrinSS)  probability in Section~\ref{subsec:fpca:prob}.  A practical implementation of our algorithm is discussed in Section~\ref{subsec:fpca:implementation}.

\subsection{Randomized FPCA}
\label{subsec:fpca:algorithm}

FPCA generalizes the classical multivariate principal component analysis to identify the modes of variations among a  dataset.   Classical multivariate principal components can be computed from  the eigen-decomposition of a covariance matrix.  Similarly, functional principal components can be estimated via the \textit{eigenvalue-eigenfunction decomposition}  of the empirical covariance operator  $\hsC_{XX} = \sum_{r=1}^\infty \hsigma_r^2 \htheta_{r}\otimes \htheta_{r}$, where $\hsigma_1^2 \geq \hsigma_2^2\geq\cdots$ are non-negative  eigenvalues, and 
$\htheta_r$'s are the corresponding eigenfunctions (or called \textit{functional principal components}).   According to the Karhunen-Lo\`{e}ve (KL) expansion, each sample $x_n$ can be represented  by 
\begin{equation}\label{eq:KL}
	x_n =\sum_{r = 1}^{\infty}\hsigma_r\hxi_{nr}\htheta_r,
\end{equation} 
where $\hxi_{nr}$ is the $r$-th  \textit{principal component score} of the $n$-th sample. It is computed by $\hxi_{nr} = \langle x_n, \htheta_r\rangle/\hsigma_r$ when  $\hsigma_r^2 > 0$, and we set  $\hxi_{nr} = 1$ when $\hsigma_r^2 =0$.  Let $\vxi_{r} = (\hxi_{1r},\cdots, \hxi_{Nr})^T$ denote the vector of the $r$-th score for all samples.  It satisfies the normalizing property $\| \vxi_{r}\|_N = 1$.  From the KL expansion and the  orthonormality of the eigenfunctions, we can also find that $\|x_n\|^2 = \sum_{r = 1}^\infty\sigma_r^2\hxi_{nr}^2$.

\begin{algorithm}[t]
	\caption{Randomized FPCA\label{alg:sampleFPCA}}
	\textbf{Input}: Dataset $\{x_n\}_{n=1}^N$; sampling probability $\{p_n\}_{n=1}^N$;	subsample size $C$. \\
	\textbf{Output}:  $\tsigma_{r}$ and $\ttheta_r$ for $r=1,\cdots, R$
	\begin{algorithmic}[1]
		\FOR{$c = 1,\cdots, C$}
		\STATE Sample $\tx_c$ from $\{x_n\}_{n=1}^N$ and get 
		$\tp_c$ with probability $\Prob\big((\tx_c, \tp_c)= (x_n, p_n)\big) = p_n.$
		\ENDFOR
		\STATE Compute $\tsC_{XX} = \frac{1}{C}\sum_{c=1}^C \frac{1}{N\tp_c}\tx_c\otimes \tx_c$.
		\STATE Compute the eigenvalue-eigenvector decomposition $\tsC_{XX} = \sum_{r=1}^\infty \tsigma_r^2 \ttheta_r\otimes \ttheta_r$.
	\end{algorithmic}
\end{algorithm}

When the eigenvalues $\hsigma_r^2$'s  have a fast decay toward zero, the covariance operator can be well approximated by the leading eigenpairs,   $\hsC_{XX} \approx \sum_{r=1}^R \hsigma_r^2 \htheta_{r}\otimes \htheta_{r}$ for some $R$.  Besides,  each sample can be approximated	 by $x_n \approx \sum_{r = 1}^{R}\hsigma_r\hxi_{nr}\htheta_r$. In practice,  obtaining the leading $R$  functional principal components $\htheta_{1}, \cdots, \htheta_{R}$ relies on the computation and decomposition of  the empirical $\hsC_{XX}$. As noted in Section~\ref{sec:introduction},  computing $\hsC_{XX}$ is an expensive procedure for a large dataset. We propose to estimate the full sample $\htheta_{1}, \cdots, \htheta_{R}$ based on a subsampled dataset. In particular,  we can apply Algorithm~\ref{alg:sampleCov} to get a subsampled covariance operator $\tsC_{XX}$, and perform eigenvalue-eigenfunction decomposition of $\tsC_{XX}$ instead of $\hsC_{XX}$.  This is summarized in  Algorithm~\ref{alg:sampleFPCA} for some sampling probability $\{p_n\}_{n=1}^N$, where the probability satisfies $p_n \ge 0$ and $\sum_{n=1}^N p_n = 1$.	The algorithm will output subsampled eigenfunction estimates $\ttheta_1,\cdots, \ttheta_R$. The effectiveness of Algorithm~\ref{alg:sampleFPCA} relies on suitably choosing the sampling probability, which will be discussed in details in Section~\ref{subsec:fpca:prob}.

\subsection{Functional Principal Subspace Sampling Probability}
\label{subsec:fpca:prob}

Algorithm~\ref{alg:sampleFPCA} leaves the sampling probability $\{p_n\}_{n=1}^N$ unspecified. As noted in Section~\ref{subsec:intro:toy}, though $p_n\propto \| x_n\|^2$ is optimal in controlling the error $\tsC_{XX} - \hsC_{XX}$ in terms of Hilbert-Schmidt  norm, it is far from being optimal for the principal subspace estimation. The problem arises because the eigenvalues $\hsigma_{r}^2$'s place  unequal weights on distinct principal component directions for $p_n\propto \| x_n\|^2= \sum_{r = 1}^\infty\sigma_r^2\hxi_{nr}^2$.  To remove the influence of eigenvalues in determing the sampling probability, we propose  the  \textit{functional principal subspace sampling (FunPrinSS) probability}, which is
\begin{equation} \label{eqn:prob:FPCAProptoExact}
	p_n^{\mathrm{Exact}} =
	\frac{\sum_{r=1}^R \hxi_{nr}^2 + \|(I - \hsP_R) x_n \|^2  / \hsigma_{R}^2}{\sum_{m=1}^N \big[ \sum_{r=1}^R \hxi_{mr}^2 + \|( I - \hsP_R) x_m \|^2  / \hsigma_{R}^2\big] },
\end{equation}
where $I$ is the identity operator (i.e., $I x = x$ for any $x\in\sH_X$), and 
$\hsP_R = \sum_{r=1}^R \htheta_{r} \otimes \htheta_{r}$ is the projection operator of the subspace spanned by $\htheta_{1},\cdots, \htheta_{R}$.
For the $n$-th sample, the numerator of~\eqref{eqn:prob:FPCAProptoExact} includes two parts: (i) the sum of squared scores in the principal subspace $\sum_{r=1}^R \hxi_{nr}^2$ ; and (ii) the squared norm of $(I-\hsP_R) x_n$, which is the  observation projected out of the principal subspace.  The second part of the numerator is necessary because, in addition to the first $R$ scores, all the remaining scores will affect the operator perturbation  $\tsP_R - \hsP_R$ (see Lemma~\ref{lemma:pcaLinearSum} of Section~\ref{subsec:fpca:maintheory} and the followed discussion).  However, it is not practical to compute  all the remaining scores in the infinite dimensional function space for a large dataset. 
For this reason, we adopt the squared norm of the projected residual $\|( I - \hsP_R) x_n \|^2  / \hsigma_{R}^2$ in~\eqref{eqn:prob:FPCAProptoExact} to account for the effect of the remaining scores.

The proposed sampling probability~\eqref{eqn:prob:FPCAProptoExact} is denoted as ``Exact'' because   its value is determined by  the full sample eigenfunctions $\htheta_{1},\cdots, \htheta_R$ and the related scores $\hat{\xi}_{nr}$. In practice, its exact   value is unknown  unless we carry out  full sample FPCA computation.  Fortunately, practitioners only need an approximate value of~\eqref{eqn:prob:FPCAProptoExact} as  input for  Algorithm~\ref{alg:sampleFPCA}.  A  probability $\{p_n\}_{n=1}^N$ is regarded as an approximation to~\eqref{eqn:prob:FPCAProptoExact} if
	\begin{equation} \label{eqn:prob:NearlyExact}
		p_n \ge p_n^{\mathrm{Exact}} /\beta
	\end{equation}
	for some \textit{fixed} $\beta\ge 1$. We call  a sampling probability $\{p_n\}_{n=1}^N$ satisfying~\eqref{eqn:prob:NearlyExact}  \textit{nearly exact FunPrinSS probability}. Section~\ref{subsec:fpca:implementation} will develop a fast algorithm to obtain an estimate $\{\hat{p}_n\}_{n=1}^N$ fulfilling the lower bound requirement~\eqref{eqn:prob:NearlyExact} with high probability. Our theoretical analysis  of the randomized FPCA  will also be based on any sampling probability satisfying~\eqref{eqn:prob:NearlyExact}.

The normalizing constant  for the sampling probability~\eqref{eqn:prob:FPCAProptoExact} has an interesting interpretation. Define $\mathcal{R}:=\sum_{r=R+1}^\infty \hsigma_r^2 \big(\htheta_r\otimes \htheta_r\big) $ as the residual operator obtained by removing the first $R$ eigenpairs  from $\hsC_{XX}$, i.e. $\mathcal{R}= \hsC_{XX} - \sum_{r=1}^R \hsigma_r^2 \big(\htheta_r\otimes \htheta_r\big)$. Let $\Delta_R$ be the intrinsic dimension of $\mathcal{R}$   such that
\begin{equation} \label{eqn:residualIntrinsic}
	\Delta_R:= \mathrm{int dim}(\mathcal{R}) = 	\tr(\mathcal{\sR}) / \Vert \sR \Vert = 
	\sum_{s=R+1}^\infty \hsigma_s^2    / \hsigma_{R+1}^2,
\end{equation}
if $\hsigma_{R+1}^2 > 0$;  and $	\Delta_R = 0$ if $\hsigma_{R+1}^2 = 0$.
Then, for the normalizing constant in the denominator of~\eqref{eqn:prob:FPCAProptoExact}, we have
\begin{align}
	\sum_{m=1}^N\big[\sum_{r=1}^R  \hxi_{mr}^2 + \|( I - \hsP_R) & x_m \|^2  / \hsigma_{R}^2\big] 
	=\sum_{m=1}^N\big[\sum_{r=1}^R \hxi_{mr}^2 +\sum_{s=R+1}^N \hsigma_s^2 \hxi_{ms}^2   / \hsigma_{R}^2\big] \nonumber \\
	&=N\big(R +\sum_{s=R+1}^N \hsigma_s^2    / \hsigma_{R}^2\big) 
	\le N\big(R + \Delta_R\big), \label{eqn:fpca:dimension}
\end{align}
where the second equality uses the score  normalizing property $\|\vxi_r\|_N = 1$ for  $\vxi_{r} = (\hxi_{1r},\cdots, \hxi_{Nr})^T$. The last inequality uses that $\hsigma_{R}^2\ge \hsigma_{R+1}^2$. The above states that the normalizing constant in~\eqref{eqn:prob:FPCAProptoExact} is upper bounded by   $ N(R + \Delta_R)$. In particular, the quantity $R+\Delta_R$  is a summation of the principal subspace dimension $R$ and the residual  intrinsic dimension $\Delta_R$.

\subsection{Implementation of FunPrinSS}\label{subsec:fpca:implementation}

\begin{algorithm}[t]
	\caption{Estimating the Functional Principal Subspace Probability\label{alg:sampleProb}}
	\textbf{Input}: The dataset $\{x_n\}_{n=1}^N$;
	the subsample size $C$; the number  $R$. \\
	\textbf{Output}: The estimated $\{p_n \}_{n=1}^N$. 
	\begin{algorithmic}[1]
		\STATE Compute $p_n' = \alpha \cdot (1/N) + (1-\alpha) \cdot \| x_n\|^2  /\big(\sum_{m=1}^N  \| x_n\|^2\big)$ for each $n$.
		\STATE Apply Algorithm~\ref{alg:sampleFPCA}  with  
		$\{p'_n\}_{n=1}^N$ and $C$ to get the pilot estimates $(\tsigma'_{r})^2$ and $\ttheta'_r$ for $r\le R$.
		\FOR{$n=1,\cdots, N$}
		\STATE Compute $\hxi'_{nr} = \langle\ttheta'_r, x_n\rangle / \tsigma'_{r}$ for $r=1,\cdots, R$.
		\STATE Compute $\hat{p}_n = \sum_{r=1}^R (\hxi'_{nr})^2 + \|(I - \tsP'_R) x_n \|^2  / (\tsigma'_{R})^2$.
		\ENDFOR
		\STATE Normalize $\{\hat{p}_n \}_{n=1}^N$ to become a proper probability.
	\end{algorithmic}
\end{algorithm}

Recall the exact value of~\eqref{eqn:prob:FPCAProptoExact} is not available unless we obtain $\htheta_1, \cdots, \htheta_R$ with full sample computation. This would contradict the goal of reducing computational cost via subsampling. In practice, a two-step procedure can be carried out:

\begin{quote}
\begin{enumerate}
	\item[Step 1.]	Obtain an estimate $\{\hat{p}_n\}_{n=1}^N$  of the exact sampling probability~\eqref{eqn:prob:FPCAProptoExact}  via the randomized procedure in Algorithm~\ref{alg:sampleProb}. 
	\item[Step 2.] Plug the  estimated probability $\{\hat{p}_n\}_{n=1}^N$   into Algorithm~\ref{alg:sampleFPCA}, which will gives the final estimator of the functional principal components. 
\end{enumerate} 
\end{quote}
Specifically, in the first step, Algorithm~\ref{alg:sampleProb} estimates the  sampling probability~\eqref{eqn:prob:FPCAProptoExact}  as follows. At first (Line 1--2), it executes Algorithm~\ref{alg:sampleFPCA} with the  sampling probability 
$$p_n' = \alpha  (1/N) + (1-\alpha)  \| x_n\|^2  / \Big(\sum_{m=1}^N  \| x_n\|^2\Big),$$ for some $\alpha\in[0,1]$.  The probability $\{p_n'\}_{n=1}^N$ is a mixture of the uniform sampling probability and the  importance sampling probability \citep{he2020randomized}.  This provides a pilot estimate of the first $R$ eigenvalues  $(\tsigma'_{r})^2$ and eigenfunctions  $\ttheta'_r$  in Line~2 of Algorithm~\ref{alg:sampleProb}. At the same time, a pilot estimation of the projection operator $\hsP'_R = \sum_{r=1}^R \htheta'_r \otimes \htheta'_r$ is obtained. With these pilot estimates, we compute the score estimates $\hxi'_{nr} = \langle\ttheta'_r, x_n\rangle / \tsigma'_{r}$ for each sample in Line~4 of Algorithm~\ref{alg:sampleProb}. 
Then, the  probability estimator  $\hat{p}_n \propto \big(\sum_{r=1}^R (\hxi_{nr}')^2 + \|(I - \hsP'_R) x_n \|^2  / (\hsigma'_{R})^2\big)$ is computed for each sample  in Line~5. Algorithm~\ref{alg:sampleProb} will output the normalized $\{\hat{p}_n\}_{n=1}^N$. In Section~\ref{subsec:fpca:theoryprob}, it will be shown that the  probability $\{\hat{p}_n\}_{n=1}^N$ computed by Algorithm~\ref{alg:sampleProb} satisfies the lower bound requirement~\eqref{eqn:prob:NearlyExact}  with high probability.

\begin{remark}
	Two-step procedures can often be found in the literature of randomized algorithms. In~\cite{drineas2012fast}, the leverage scores are approximately computed by sketching the design matrix. In the work of~\cite{wang2018optimal,zhang2021optimal} for generalized linear regression, the sampling probabilities are also computed from  pilot regression estimators.
\end{remark}

\begin{remark} \label{remark:complexity:fpca} 
	The computational complexity of the above two-step procedure can be analyzed when each functional observation is digitally recorded as a high dimensional vector of length $L$. In this case, the eigenfunction can be obtained by directly applying SVD to the data matrix. The full sample computational complexity  is $\mathcal{O}(NL\min\{N,L\})$. On the other hand, the proposed  Algorithm~\ref{alg:sampleFPCA} costs a complexity of order $\mathcal{O}( NRL + CL\min\{C,L\})$, where  computing the sampling probabilities via Algorithm~\ref{alg:sampleProb} costs $\mathcal{O}(NRL)$. 
	
	Compared with the full sample computation,  the proposed algorithm has much smaller complexity when $C\ll N$ and $R\ll L$.  Specifically, the relation $R\ll L$ is a characteristic of functional data, whose intrinsic dimension is much  smaller than the ambient dimension.  Recall from Section~\ref{sec:introduction}, in practical applications such as astronomical spectral processing,   $L$ is in the  order of thousands, and $N$ has the magnitude of millions. Some work \citep{connolly1994spectral}  has found $R=3$  principal components are enough to capture most variability of the dataset.  Theorem~\ref{thm:projectionOperator}   suggests the  subsample size $C$ need not to be very large for accurate subsample estimation, due to the small  dimension parameter $R+\Delta_R$ for functional data. 
\end{remark}

\section{Theoretical Analysis of Randomized FPCA with FunPrinSS}\label{sec:fpca:theory}

This section develops a theoretical justification of Algorithm~\ref{alg:sampleFPCA} when the employed sampling probability satisfies~\eqref{eqn:prob:NearlyExact}. The derivation is based on the operator perturbation theory reviewed in Section~\ref{subsec:fpca:operator}, and  the main results follow in Section~\ref{subsec:fpca:maintheory}.  After that,   the theoretical justification of Algorithm~\ref{alg:sampleProb} is presented in Section~\ref{subsec:fpca:theoryprob}. We will show its estimated probability satisfies the lower bound requirement~\eqref{eqn:prob:NearlyExact} with high probability. Finally, we  discuss practical choice of the parameter $R$  in Section~\ref{subsec:fpca:chooseR}.

Our analysis   formulates estimating  the leading $R$ eigenfunctions as estimating the projection operator $\hsP_R = \sum_{r=1}^R \htheta_{r} \otimes \htheta_{r}$.  Algorithm~\ref{alg:sampleFPCA} delivers a randomized estimate $\tsP_R =  \sum_{r=1}^R \ttheta_r \otimes \ttheta_r$, which is computed from the sketched data.   We will focus on bounding their difference $ \tsP_R - \hsP_R$.  The theoretical analysis is carried out from an algorithmic perspective.  In other words, the concentration bounds are established conditional on the full dataset, and the derivation exploits  the  independence of the  subsampling draws  within our algorithms. The  full dataset can be almost arbitrary and the  theoretical bounds provide worst-case guarantee. Neither specific distribution nor independence assumption is imposed on the full dataset.  Only a single condition on the eigenvalues is required.

\begin{condition} \label{assumption:eigenvalue}
	The  empirical eigenvalues of $\hsC_{XX}$ satisfy the  following relation:
	$$\hsigma_{1}^2 > \hsigma_{2}^2 > \cdots > \hsigma_{R}^2 >
	\hsigma_{R+1}^2 \geq \hsigma_{R+2}^2 \geq \cdots.$$
\end{condition}

\begin{remark}
	A strictly positive  \textit{eigengap} $g_R  = \hsigma_{R}^2-\hsigma_{R+1}^2 > 0$ is the minimal assumption  for the identifiability  of $\hsP_R$.  To simplify presentation, we further assume the leading $R$ eigenvalues $\hsigma_{1}^2, \cdots, \hsigma_{R}^2$ are distinct with multiplicity one. If there exists repeated eigenvalues, the same theoretical results can be obtained using a similar argument but with complications in notation. Lastly,  note we always have $\hsigma_r^2  = 0$ for $r>N$, 	because the empirical  $\hsC_{XX}$ is computed from  $N$ samples, i.e. $\text{rank}\big(\hsC_{XX}\big)\le N$.
\end{remark}

\subsection{Operator Perturbation Theory} 
\label{subsec:fpca:operator}

We  set up the tools from  the operator perturbation theory~\citep{hsing2015theoretical} for further analysis of the randomized FPCA. For the operator $ \hsC_{XX}$,  its  \textit{resolvent}  is defined as $\sR_{\hsC_{XX}}(\eta): = ( \hsC_{XX}-\eta I)^{-1}$, where $I$ is the identity mapping. Let $\Gamma_R:=\{\eta\in \mathbb{C}:\ \mathrm{dist}(\eta,\,[\hsigma_R^2, \hsigma_1^2]) = g_R/2\}$ represent the boundary of a disk in a complex plane. Every point on  $\Gamma_R$ has equal distance $g_R/2=(\hsigma_{R}-\hsigma_{R+1})/2$ to the interval $[\hsigma_R^2, \hsigma_1^2]\subset \bbR$  on the positive real axis.
Then, the subspace projection operator can be expressed as a contour integration
$\hsP_R = -\frac{1}{2\pi i} \oint_{\Gamma_R} \sR_{\hsC_{XX}}(\eta) \intd \eta$.

Denote $\sE = \tsC_{XX} - \hsC_{XX}$ as the covariance operator approximation error. Then, using the result of~\cite{koltchinskii2016asymptotics}, the difference of the projection operators can be decomposed as  $\tsP_R - \hsP_R = L_R(\sE) + S_R(\sE) $, where
\begin{align}
	L_R(\sE): & = \frac{1}{2\pi i} \oint_{\Gamma_R} \sR_{\hsC_{XX}}(\eta) \sE \sR_{\hsC_{XX}}(\eta) \mathrm{d}\eta,\, \label{eqn:projerror:L}\\
	S_R(\sE):&=  \tsP_R - \hsP_R - L_R(\sE). \label{eqn:projerror:S}
\end{align}
Notice the first term $L_R(\sE)$ is  linear in $\sE$, while $S_R(\sE)$ is the residual  error.  Because $\Expect (\sE)=0$ where the expectation is taken with respect to the subsampling procedure, the  error $L_R(\sE)$ in~\eqref{eqn:projerror:L} also has zero expectation  $\Expect L_R(\sE)=0$.

Recall   $g_R  = \hsigma_{R}^2-\hsigma_{R+1}^2 (> 0)$ is the eigengap.
In the case  $\|\tsC_{XX}-\hsC_{XX}\|<g_R/3$, it holds that $$\max_{r}|\tsigma_r^2-\hsigma_r^2| \leq \|\tsC_{XX}-\hsC_{XX}\|<g_R/3.$$ This implies that the first $R$ eigenvalues of $\tsC_{XX}$ are strictly in the interior of the disk $\Gamma_R$. Therefore, the relation $\tsP_R = -\frac{1}{2\pi i} \oint_{\Gamma_R} \sR_{\tsC_{XX}}(\eta) \intd \eta$ holds for the subsampled estimator with $\sR_{\tsC_{XX}}(\eta) = ( \tsC_{XX}-\eta I)^{-1}$. In this case, the residual approximation error $S_R(\sE)$  has the  integral expression 
$$  S_R(\sE)= -\frac{1}{2\pi i} \oint_{\Gamma_R}
\sum_{k\ge 2}  [-\sR_{\hsC_{XX}}(\eta) \sE]^k \sR_{\hsC_{XX}}(\eta) \mathrm{d}\eta.$$
From the above, we can see that $L_R(\sE)$ can be regarded as the first order approximation error for 
$\tsP_R - \hsP_R$, while $S_R(\sE)$ as the higher order error. In fact, according to Lemma 2 of \cite{koltchinskii2016asymptotics}, the two terms $L_R(\sE)$ and $S_R(\sE)$ can be bounded as in the following lemma.

\begin{lemma} \label{lemma:lsbound}
	For the full sample covariance operator $\hsC_{XX}$ and its subsampled counterpart $\tsC_{XX}$, denote $\sE = \tsC_{XX} - \hsC_{XX}$ as their difference. Consider the decomposition $\tsP_R - \hsP_R = L_R(\sE) + S_R(\sE)$ with $L_R(\sE)$ and $S_R(\sE)$ defined in~\eqref{eqn:projerror:L} and~\eqref{eqn:projerror:S}, respectively.  It holds that
	\begin{equation*}
		(i)\quad \Vert L_R(\sE)  \Vert \le \big[ 1 + (\hsigma_1^2 - \hsigma_R^2)/(\pi g_R) \big] \times  \Vert \sE\Vert ; \quad 	(ii)\quad \Vert S_R(\sE)  \Vert \le K_R\left( \Vert \sE\Vert  / g_R  \right)^2. 
	\end{equation*}
	where $K_R := 15[1 + 2(\hsigma_1^2 - \hsigma_R^2)/ (\pi g_R )]$.
\end{lemma}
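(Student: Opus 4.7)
The plan is to reduce both bounds to standard length-times-supremum estimates on the contour $\Gamma_R$. Everything rests on two geometric facts about $\Gamma_R$, which I would establish first. Since every $\eta\in\Gamma_R$ sits at distance exactly $g_R/2$ from the interval $[\hsigma_R^2,\hsigma_1^2]$ holding the top-$R$ eigenvalues, and since the remaining spectrum lies in $[0,\hsigma_{R+1}^2]=[0,\hsigma_R^2-g_R]$, which is also at distance at least $g_R/2$ from $\Gamma_R$, the spectral theorem yields the uniform resolvent bound
\[
\|\sR_{\hsC_{XX}}(\eta)\|=\big[\mathrm{dist}(\eta,\mathrm{spec}(\hsC_{XX}))\big]^{-1}\le 2/g_R,\qquad \eta\in\Gamma_R.
\]
Moreover, $\Gamma_R$ decomposes into two semicircular arcs of radius $g_R/2$ (around $\hsigma_R^2$ and $\hsigma_1^2$) and two horizontal segments of length $\hsigma_1^2-\hsigma_R^2$, so its total length is $\pi g_R+2(\hsigma_1^2-\hsigma_R^2)$.

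For (i), I would plug the integral representation \eqref{eqn:projerror:L} into the submultiplicative inequality
\[
\|L_R(\sE)\|\le \frac{1}{2\pi}\oint_{\Gamma_R}\|\sR_{\hsC_{XX}}(\eta)\|^2\,\|\sE\|\,|d\eta|,
\]
split the contour into arc and segment pieces, and apply the uniform resolvent bound $(2/g_R)^2$ on each piece. The arc contribution produces the $\|\sE\|$ term and the segment contribution produces the $(\hsigma_1^2-\hsigma_R^2)/(\pi g_R)\,\|\sE\|$ term in the stated constant.

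For (ii), I would use the Neumann expansion
\[
\sR_{\tsC_{XX}}(\eta)=\sum_{k\ge 0}\big(-\sR_{\hsC_{XX}}(\eta)\sE\big)^k\sR_{\hsC_{XX}}(\eta),
\]
which converges uniformly on $\Gamma_R$: combining the standing condition $\|\sE\|<g_R/3$ (invoked in the paragraph preceding the lemma) with $\|\sR_{\hsC_{XX}}(\eta)\|\le 2/g_R$ forces $\|\sR_{\hsC_{XX}}(\eta)\sE\|\le 2/3<1$. The tail starting at $k=2$ is exactly $S_R(\sE)$, so after summing the geometric series one obtains
\[
\|S_R(\sE)\|\le \frac{1}{2\pi}\oint_{\Gamma_R}\|\sR_{\hsC_{XX}}(\eta)\|\cdot\frac{\big(\|\sR_{\hsC_{XX}}(\eta)\|\,\|\sE\|\big)^2}{1-\|\sR_{\hsC_{XX}}(\eta)\|\,\|\sE\|}\,|d\eta|.
\]
Substituting $\|\sR_{\hsC_{XX}}(\eta)\|\le 2/g_R$ in the numerator and the uniform lower bound $1-\|\sR_{\hsC_{XX}}(\eta)\|\,\|\sE\|\ge 1/3$ in the denominator reduces the integrand to a constant multiple of $\|\sE\|^2/g_R^3$. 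Integrating against the contour length $\pi g_R+2(\hsigma_1^2-\hsigma_R^2)$ then yields an estimate of the form $K_R(\|\sE\|/g_R)^2$ with $K_R=15[1+2(\hsigma_1^2-\hsigma_R^2)/(\pi g_R)]$.

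The only delicate point is justifying uniform convergence of the Neumann series globally on $\Gamma_R$; the condition $\|\sE\|<g_R/3$ is precisely what delivers this. The remainder of the argument is bookkeeping with the contour geometry and with the geometric-series tail estimate.
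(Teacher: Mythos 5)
The paper does not actually prove this lemma: it is imported verbatim from Lemma~2 of \cite{koltchinskii2016asymptotics}, so your proposal is a reconstruction rather than a parallel to an in-paper argument. Your two geometric inputs are correct --- $\Gamma_R$ is a stadium of length $\pi g_R+2(\hsigma_1^2-\hsigma_R^2)$, and $\|\sR_{\hsC_{XX}}(\eta)\|\le 2/g_R$ on $\Gamma_R$ since both the top-$R$ eigenvalues and the residual spectrum $[0,\hsigma_{R+1}^2]$ sit at distance at least $g_R/2$ from the contour. Your treatment of part~(ii) in the regime $\|\sE\|<g_R/3$ is also sound: carried out, it gives $\|S_R(\sE)\|\le 12\,[1+2(\hsigma_1^2-\hsigma_R^2)/(\pi g_R)](\|\sE\|/g_R)^2$, which is dominated by $K_R(\|\sE\|/g_R)^2$.

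There are, however, two genuine gaps. For part~(i), the length-times-supremum estimate does not produce the stated constant: it yields
\[
\|L_R(\sE)\|\;\le\;\frac{1}{2\pi}\Big(\frac{2}{g_R}\Big)^{2}\|\sE\|\,\big[\pi g_R+2(\hsigma_1^2-\hsigma_R^2)\big]\;=\;\frac{2}{g_R}\Big[1+\frac{2(\hsigma_1^2-\hsigma_R^2)}{\pi g_R}\Big]\,\|\sE\|,
\]
so the arc contribution is $(2/g_R)\|\sE\|$, not $\|\sE\|$, and the segment contribution is $4(\hsigma_1^2-\hsigma_R^2)\|\sE\|/(\pi g_R^{2})$, not $(\hsigma_1^2-\hsigma_R^2)\|\sE\|/(\pi g_R)$; your claim that the two pieces ``produce'' the stated terms is off by factors of $2/g_R$ and $4/g_R$. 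This is not an artifact of crude estimation: the explicit expansion $L_R(\sE)=\sum_{r\le R<s}(\hsigma_r^2-\hsigma_s^2)^{-1}(\hsP_r\sE\hsP_s+\hsP_s\sE\hsP_r)$ applied to $\sE=\htheta_R\otimes\htheta_{R+1}+\htheta_{R+1}\otimes\htheta_R$ gives $\|L_R(\sE)\|=\|\sE\|/g_R$, so any bound in terms of $\|\sE\|$ alone must carry a $1/g_R$ prefactor and your route cannot close on the constant as stated. For part~(ii), your argument is confined to the event $\|\sE\|<g_R/3$: without it, neither the identity $\tsP_R=-\frac{1}{2\pi i}\oint_{\Gamma_R}\sR_{\tsC_{XX}}(\eta)\,\mathrm{d}\eta$ nor the convergence of the Neumann series is available. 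The lemma carries no such hypothesis, and it is used unconditionally in the proof of Theorem~\ref{thm:projectionOperator}, where the bound on $\|S_R(\sE)\|$ must hold on the whole event $\{\|\sE\|<\hsigma_1^2\epsilon\}$ with $\epsilon$ only bounded from below. To complete the proof you must handle $\|\sE\|\ge g_R/3$ separately, e.g.\ via $\|S_R(\sE)\|\le\|\tsP_R\|+\|\hsP_R\|+\|L_R(\sE)\|$ combined with $(\|\sE\|/g_R)^2\ge\|\sE\|/(3g_R)$; this case split is exactly how the cited source obtains its absolute constant.
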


\subsection{Main Results} \label{subsec:fpca:maintheory}
We are now ready to  derive our main theorem for the randomized FPCA in Algorithm~\ref{alg:sampleFPCA} with the nearly exact sampling probability~\eqref{eqn:prob:NearlyExact}.  Before presenting the formal result, we  provide a heuristic justification  of  the proposed functional principal subspace sampling probability~\eqref{eqn:prob:FPCAProptoExact} and~\eqref{eqn:prob:NearlyExact}.   The decomposition in~\eqref{eqn:projerror:L} and~\eqref{eqn:projerror:S} suggests that, to minimize the loss $\| \tsP_R - \hsP_R\|$,  we need to have precise control over the first order error $L_R(\sE)$; at the same time, the magnitude of  $S_R(\sE)$ should also be contained with high probability.   Our proposed FunPrinSS probability backs up this intuition. In fact, the  probability~\eqref{eqn:prob:FPCAProptoExact} or~\eqref{eqn:prob:NearlyExact} can be interpreted as an importance sampling probability to control the first order error  $L_R(\sE)$. The  argument is based on the following fact.

\begin{lemma}  \label{lemma:pcaLinearSum}
	The first order error term  $L_R(\sE)$	can be expressed as  $L_R(\sE) = (1/C)\sum_{c=1}^C \sZ_c / (N\tp_c)$. Each  $\sZ_c$ is  related to one subsampled $\tx_c$ and that
	$$ \sZ_c =  \sum_{r=1}^R
	\sum_{s=R+1}^\infty \frac{\hsigma_r\hsigma_s}{\hsigma_r^2 - \hsigma_s^2} \times
	(\txi_{cr}\txi_{cs}) \times 
	\big[\htheta_r \otimes \htheta_s +
	\htheta_s \otimes \htheta_r\big].$$
	In the above,   $\txi_{cs}  = \langle \tx_c, \htheta_s\rangle/\hsigma_{s}$ when $\hsigma_s>0$ and $\txi_{cs}  =1$ when $\hsigma_s=0$.
\end{lemma}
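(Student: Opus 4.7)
The plan is to exploit the linearity of $L_R(\cdot)$ together with a direct evaluation of $L_R(\htheta_j\otimes\htheta_k)$ via residue calculus. Observing that
$$\sE \;=\; \frac{1}{C}\sum_{c=1}^C\Big[\frac{1}{N\tp_c}\tx_c\otimes\tx_c \;-\; \hsC_{XX}\Big],$$
the linearity of $L_R$ reduces the claim to the two identities $L_R(\hsC_{XX})=0$ and $L_R(\tx_c\otimes\tx_c)=\sZ_c$, after which multiplying by the weight $1/(N\tp_c)$ and averaging over $c$ delivers the stated representation.

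Next, I would apply the KL-type expansion~\eqref{eq:KL} to each $\tx_c$, giving $\tx_c\otimes\tx_c = \sum_{j,k\ge 1}\hsigma_j\hsigma_k\,\txi_{cj}\txi_{ck}\,\htheta_j\otimes\htheta_k$. Using the spectral representation $\sR_{\hsC_{XX}}(\eta) = \sum_i(\hsigma_i^2-\eta)^{-1}\htheta_i\otimes\htheta_i$ and the orthonormality of $\{\htheta_i\}$, a direct computation on an arbitrary test vector shows
$$\sR_{\hsC_{XX}}(\eta)\big(\htheta_j\otimes\htheta_k\big)\sR_{\hsC_{XX}}(\eta) \;=\; \frac{1}{(\hsigma_j^2-\eta)(\hsigma_k^2-\eta)}\,\htheta_j\otimes\htheta_k.$$
Interchanging the double sum with the contour integral (justified by uniform boundedness of $\sR_{\hsC_{XX}}(\eta)$ on the compact contour $\Gamma_R$, which lies at distance $g_R/2$ from the spectrum) reduces the task to evaluating the scalar integrals $I_{jk} := \frac{1}{2\pi i}\oint_{\Gamma_R}[(\hsigma_j^2-\eta)(\hsigma_k^2-\eta)]^{-1}\,\intd\eta$.

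The final step is to evaluate $I_{jk}$ by partial fractions together with the standard residue identity $\frac{1}{2\pi i}\oint_{\Gamma_R}(a-\eta)^{-1}\intd\eta = -\mathbf{1}\{a\in\mathrm{int}(\Gamma_R)\}$. By construction $\Gamma_R$ encloses exactly $\hsigma_1^2,\ldots,\hsigma_R^2$ and excludes the rest of the spectrum, so $I_{jk}=0$ whenever $j,k$ lie on the same side of the cutoff $R$ (including the double-pole case $j=k$, whose residue vanishes), and $I_{jk} = (\hsigma_j^2-\hsigma_k^2)^{-1}$ whenever exactly one of $j,k$ is $\le R$. Substituting this back, only the index pairs $(j,k)=(r,s)$ and $(j,k)=(s,r)$ with $r\le R<s$ survive; combining them via the symmetric pair $\htheta_r\otimes\htheta_s+\htheta_s\otimes\htheta_r$ reproduces exactly the claimed formula for $\sZ_c$. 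The same calculation applied term-by-term to $\hsC_{XX}=\sum_j\hsigma_j^2\htheta_j\otimes\htheta_j$ leaves only diagonal entries $j=k$, whose $I_{jj}$ all vanish, establishing $L_R(\hsC_{XX})=0$.

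The argument is almost entirely mechanical; the main bookkeeping items are (i) the convention $\txi_{cs}=1$ when $\hsigma_s=0$, which is harmless since the prefactor $\hsigma_s$ inside $\tx_c\otimes\tx_c$ kills any such term, (ii) justifying the termwise action of the bounded linear map $L_R$ on the KL expansion, which follows from continuity of $L_R$ in the Hilbert--Schmidt norm together with convergence of the partial sums of $\tx_c\otimes\tx_c$ in that norm, and (iii) careful sign tracking in the partial-fraction step. These are the only real obstacles and none of them poses genuine analytic difficulty.
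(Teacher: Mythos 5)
Your proposal follows essentially the same route as the paper: both expand the resolvent in the eigenbasis of $\hsC_{XX}$ and evaluate the contour integral over $\Gamma_R$ by residues, so that only the cross terms $r\le R<s$ survive with coefficient $(\hsigma_r^2-\hsigma_s^2)^{-1}$; the paper packages this as the identity $L_R(\sE)=\sum_{r\le R}\sum_{s>R}(\hsigma_r^2-\hsigma_s^2)^{-1}(\hsP_r\sE\hsP_s+\hsP_s\sE\hsP_r)$ and then computes $\hsP_r\sE\hsP_s$ termwise, whereas you expand $\sE$ into rank-one pieces first --- the same computation in a different order. One small slip to fix when writing it out: the blanket formula $I_{jk}=(\hsigma_j^2-\hsigma_k^2)^{-1}$ is only literally correct when $j\le R<k$; for the mirror pair $(j,k)=(s,r)$ with $s>R\ge r$ the partial-fraction/residue evaluation gives $I_{sr}=(\hsigma_r^2-\hsigma_s^2)^{-1}$ (as it must, since the integrand is symmetric in its indices), and it is this symmetric value --- not your literal formula, which would produce a spurious antisymmetric combination $\htheta_r\otimes\htheta_s-\htheta_s\otimes\htheta_r$ --- that yields the correct $+$ sign in the claimed $\htheta_r\otimes\htheta_s+\htheta_s\otimes\htheta_r$.
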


\begin{remark}
	Note the quantity $\txi_{cs}$  associated with $\tx_{c}$ is  computed with the full sample eigenvalue $\hsigma_s$ and  the full sample eigenfunction $\htheta_{s}$. Suppose the $c$-th subsample is the $i_c$-th observation in the original  dataset (i.e., $\tx_{c} = x_{i_c}$), and the original  $x_{i_c}$ has full sample scores		$\hxi_{i_c,1},\hxi_{i_c,2},\cdots$.  Recall we have set $\tp_c=p_{i_c}$ as the sampling probability of  the original $i_c$-th observation.  Similarly,  we  denote for the $c$-th subsample $\txi_{cs}=\hxi_{i_c,s}$  ($s=1,2,\cdots$) as the corresponding full sample score. 
\end{remark}

Lemma~\ref{lemma:pcaLinearSum} reveals that $L_R(\sE)$ can be expressed as a summation of $C$ terms, and each summand $\sZ_c$ is related to one subsampled $\tx_c$.  For the importance sampling, 
the sampling probability is proportional to the size of each summand.  
We can show the proposed probability~\eqref{eqn:prob:FPCAProptoExact} or~\eqref{eqn:prob:NearlyExact} serve this purpose by computing the squared norm of each summand $\sZ_c$ in Lemma~\ref{lemma:pcaLinearSum} as
\begin{align} \label{eqn:ZcHSnorm}
	\| \sZ_c\|_{HS}^2  = 
	2 \sum_{r=1}^R &
	\sum_{s=R+1}^\infty \frac{\hsigma_r^2\hsigma_s^2}{[\hsigma_r^2 - \hsigma_s^2]^2} \cdot
	\txi_{cr}^2\txi_{cs}^2 \nonumber \\
	= &\quad 2 \sum_{r=1}^R
	\sum_{s=R+1}^\infty 
	\underbrace{ \frac{\hsigma_r^4}{[\hsigma_r^2 - \hsigma_s^2]^2}}_{f_{rs}^2} \cdot \txi_{cr}^2 (\hsigma_s^2\txi_{cs}^2 /\hsigma_r^2) .
\end{align}
Define $f_{rs}:= (1-q_{r,s})^{-1}$ with $q_{r,s} = \hsigma_s^2/\hsigma_r^2$ for the above equation.
For $r<s$, it is obvious  that $q_{r,s} \le 1$ and $f_{rs} \ge 1$.
Consider the special case where the leading $R$ principal components dominate the overall signal, i.e., $\hsigma_{R}^2 \gg \hsigma_{R+1}^2$. In this case, we have the approximation $f_{rs}  \approx 1$ for  $r(\le R)$ and $s(>R)$. 	It follows approximately that
\begin{align} 
	\| \sZ_c\|_{HS}^2 	&\approx 
	2 \sum_{r=1}^R
	\sum_{s=R+1}^\infty  \txi_{cr}^2 \big(\hsigma_s^2\txi_{cs}^2 /\hsigma_r^2\big) \stackrel{(i)}{\le} 2  \Big( \sum_{r=1}^R\txi_{cr}^2\Big) 
	\Big( \sum_{s=R+1}^\infty\hsigma_s^2 \txi_{cs}^2 /\hsigma_R^2 \Big) \nonumber\\
	&\stackrel{(ii)}{\le} \frac{1}{2} \Big( \sum_{r=1}^R\txi_{cr}^2
	+ \sum_{s=R+1}^\infty\hsigma_s^2 \txi_{cs}^2 /\hsigma_R^2 \Big)^2 \nonumber\\
	&=\frac{ 1}{2} \Big( \sum_{r=1}^R\txi_{cr}^2
	+  \|(I - \hsP_R) \tx_c \|^2  / \hsigma_{R}^2 \Big)^2. \label{eqn:intuitive}
\end{align}
In the above, (i) uses that $\hsigma_{r}^2 \ge \hsigma_{R}^2$ for $r\le R$, and (ii) uses the  inequality $4ab\le (a+b)^2$ for $a,b\in\bbR$. As expected,  the last line of~\eqref{eqn:intuitive}  indicates the approximate proportional relation $\tp_c^{\mathrm{Exact}} \propto \| \sZ_c\|_{HS}$ for the FunPrinSS probability~\eqref{eqn:prob:FPCAProptoExact}. This resembles the importance sampling $\tp_c \propto \| \tx_{c} \otimes \tx_{c} \|_{HS} = \|\tx_{c}\|^2$ for the covariance operator estimation~\eqref{eqn:subsampleCXX} in~\cite{he2020randomized}.	 From here, we find the proposed FunPrinSS probability  can be viewed as   an importance sampling probability  to control $L_R(\sE)$ when $\hsigma_{R}^2 \gg \hsigma_{R+1}^2$.

The above heuristic justification is developed in the  case $\hsigma_{R}^2 \gg \hsigma_{R+1}^2$. Under the general eigenvalue setting of Condition~\ref{assumption:eigenvalue}, to  control the first order term $L_R(\sE)$, we  define $$G_R := (1 - q_{R,R+1})^{-1} = \hsigma_{R}^2/(\hsigma_{R}^2-\hsigma_{R+1}^2),$$ and find
it as an upper bound for the ratio factor $f_{rs}$ in~\eqref{eqn:ZcHSnorm}. That is, for $r\le R < s$, we have
$$ f_{rs}  = (1- q_{r,s} )^{-1} \le  (1-  q_{R,R+1} )^{-1}= G_R.$$ This relation holds because  $f(q) = 1/(1-q)$ is an increasing function for $q\in(0,1)$.  Theorem~\ref{thm:projectionOperator} below formalizes this idea and provides theoretical guarantees for the subsampled  FPCA in Algorithm~\ref{alg:sampleFPCA} with the nearly exact sampling probability~\eqref{eqn:prob:NearlyExact}.  Note all the  probability bounds  in this work are derived conditional on the full dataset and with respect to the subsampling.

\begin{theorem} \label{thm:projectionOperator}
	Under Condition~\ref{assumption:eigenvalue}, define  $G_R = \hsigma_{R}^2/(\hsigma_{R}^2-\hsigma_{R+1}^2)$ and define 
	$$V = \max\{G_R^2 Z^2/(2\beta),\, Z\}\quad \text{and}\quad L=\max\{G_R Z /\sqrt{2},\, Z+1\},$$  with 
	$Z = \beta (R + \Delta_R)$.  Then,  for sampling probability 
	$\{p_n\}_{n=1}^N$ satisfying $p_n \ge p_n^{\text{Exact}} / \beta$
	with $\beta\ge 1$, and for $\epsilon$ satisfying $\epsilon\cdot C \ge \sqrt{CV} + L/3$, it holds that 
	\begin{equation} \label{thm:projectionOperator:bound}
		\|\tsP_R - \hsP_R\| \le \epsilon +K_R\cdot \hsigma_1^4 \epsilon^2/ g_R ^2,
	\end{equation}
	with probability at least $1 - 12(R+\Delta_R)\exp \Big( -\frac{C\epsilon^2/2}{ V+ L \epsilon /3} \Big).$ 
	
\end{theorem}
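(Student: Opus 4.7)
The plan is to work with the decomposition $\tsP_R - \hsP_R = L_R(\sE) + S_R(\sE)$ from Section~\ref{subsec:fpca:operator} and bound the two pieces by separate applications of the intrinsic-dimension operator Bernstein inequality (Tropp). The first application controls the linear piece $L_R(\sE)$ at the scale $\epsilon$; the second controls $\|\sE\|$ at the scale $\hsigma_1^2\epsilon$, which then feeds into Lemma~\ref{lemma:lsbound}(ii) to give $\|S_R(\sE)\| \le K_R\hsigma_1^4\epsilon^2/g_R^2$. A union bound over the two events, combined with the triangle inequality, yields~\eqref{thm:projectionOperator:bound}.

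For the first application, Lemma~\ref{lemma:pcaLinearSum} already writes $L_R(\sE) = (1/C)\sum_c \sZ_c/(N\tp_c)$ as an i.i.d.\ sum of symmetric operators with mean zero, the latter because $\sZ(x)$ only involves pairs with $r \le R < s$ and $(1/N)\sum_n \hxi_{nr}\hxi_{ns} = \delta_{rs}$ (a consequence of $\|\vxi_r\|_N = 1$ and diagonality of $\hsC_{XX}$ in its own eigenbasis). To obtain the almost-sure bound I would re-run the chain of inequalities~\eqref{eqn:ZcHSnorm}--\eqref{eqn:intuitive}, this time with the exact factor $f_{rs} = (1-\hsigma_s^2/\hsigma_r^2)^{-1} \le G_R$ in place of the heuristic $f_{rs}\approx 1$, which gives $\|\sZ_c\|_{HS}^2 \le (G_R^2/2)A_{i_c}^2$ with $A_n := \sum_{r\le R}\hxi_{nr}^2 + \|(I-\hsP_R)x_n\|^2/\hsigma_R^2$. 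Combined with $p_n \ge p_n^{\mathrm{Exact}}/\beta$ and the normalizer estimate $\sum_m A_m \le N(R+\Delta_R)$ from~\eqref{eqn:fpca:dimension}, this yields $\|\sZ_c/(N\tp_c)\| \le G_R Z/\sqrt{2}$, the first entry of $L$; the analogous calculation of $\|\Expect \sZ_1^2/(N\tp_1)^2\|$ from $(1/N^2)\sum_n \sZ(x_n)^2/p_n$ produces the variance proxy $G_R^2 Z^2/(2\beta)$, the first entry of $V$.

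For the second application, write $\sE = (1/C)\sum_c[\tx_c\otimes\tx_c/(N\tp_c) - \hsC_{XX}]$ and split $\|\tx_c\|^2 = \sum_{r\le R}\hsigma_r^2\txi_{cr}^2 + \sum_{s>R}\hsigma_s^2\txi_{cs}^2$; bounding the first block by $\hsigma_1^2\sum_{r\le R}\txi_{cr}^2$ and the second by $\hsigma_R^2\sum_{s>R}(\hsigma_s^2/\hsigma_R^2)\txi_{cs}^2$ gives $\|\tx_c\|^2 \le \hsigma_1^2 A_{i_c}$. Together with the same probability lower bound this produces $\|\tx_c\otimes\tx_c/(N\tp_c) - \hsC_{XX}\| \le \hsigma_1^2(Z+1)$ and, via $(x\otimes x)^2 = \|x\|^2\,x\otimes x$, the variance bound $\hsigma_1^4 Z$; rescaling by $\hsigma_1^2$ (so that the Bernstein tail is read at $\|\sE\| \le \hsigma_1^2\epsilon$) matches the second entries of $L$ and $V$. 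Parallel trace computations in each of the two applications replace the operator-norm bound by $\tr(\hsC_{XX})$, so that the intrinsic dimension is $O(R+\Delta_R)$ in each case; the prefactor $12(R+\Delta_R)$ then emerges from the union bound once the two variance/boundedness pairs are pooled via the maxes defining $V$ and $L$, with the side condition $\epsilon C \ge \sqrt{CV} + L/3$ being the usual range of validity of the intrinsic-dimension Bernstein inequality. The main technical obstacle is the variance calculation for $L_R(\sE)$: since $\sZ_c$ is a sum of non-commuting operators $\htheta_r\otimes\htheta_s + \htheta_s\otimes\htheta_r$, squaring it produces cross terms that must be resolved using the orthonormality of the eigenfunctions before the operator-norm and trace of $\Expect\sZ_c^2/(N\tp_c)^2$ can be read off cleanly.
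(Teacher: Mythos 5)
Your proposal is correct and follows essentially the same route as the paper: the same $L_R(\sE)+S_R(\sE)$ split, the same two operator-Bernstein applications (one to $L_R(\sE)$ at scale $\epsilon$ with bound $G_RZ/\sqrt{2}$ and variance proxy $G_R^2Z^2/(2\beta)$, one to $\sE$ at scale $\hsigma_1^2\epsilon$ with bound $Z+1$ and variance $Z$), and the same union bound with intrinsic-dimension prefactors ($2R$ and $\Delta_0\le R+\Delta_R$) pooling to $12(R+\Delta_R)$.
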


In the above theorem, $G_R$ can be viewed as a measurement of the eigengap $\hsigma_{R}^2-\hsigma_{R+1}^2$. The dimension $R+\Delta_R$ discussed  in~\eqref{eqn:fpca:dimension} also appears in multiple places above. On the right hand side of the error bound~\eqref{thm:projectionOperator:bound}, the first term $\epsilon$ bounds the first order error $L_R(\sE)$ in~\eqref{eqn:projerror:L}, while the second term $K_R\cdot \hsigma_1^4 \epsilon^2/ g_R ^2$ bounds  the high order $S_R(\sE)$ in~\eqref{eqn:projerror:S}.   

According to~\eqref{thm:projectionOperator:bound} of Theorem~\ref{thm:projectionOperator}, the subsample error  $\|\tsP_R - \hsP_R\|$ can be controlled within a  precision level $\epsilon$ with high probability. For a  large enough $C$ and with probability at least $0.9$, the bound~\eqref{thm:projectionOperator:bound} holds  for
the precision level $\epsilon$  of magnitude order $ \sO \big((R+\Delta_R) \log^{1/2}[120(R+\Delta_R)]/C^{1/2} \big)$. The precision level $\epsilon$ is inversely proportional to $C^{1/2}$, and is proportional to the  dimension parameter $R+\Delta_R$ with an additional logarithm factor. The nature of functional data usually indicates the dimension parameter $R+\Delta_R$ is small, though the ambient space dimension is infinite. This means  the subsample size $C$ need not be very large to control the subsample error toward a small accuracy level  $\epsilon$.

\subsection{Theoretical Justification of Algorithm~\ref{alg:sampleProb} }
\label{subsec:fpca:theoryprob}

 Algorithm~\ref{alg:sampleProb} can  be justified for providing the pilot probability estimate. We show that its estimated probability $\{\hat{p}_n\}_{n=1}^N$ remains close to the target~\eqref{eqn:prob:FPCAProptoExact} in the sense of~\eqref{eqn:prob:NearlyExact}. Its proof is derived in Section~\ref{proposition:problowerbound:proof} in the Appendix.

\begin{proposition} \label{proposition:problowerbound}
	Set $\Delta_0 = \mathrm{intdim}(\hsC_{XX})$,
	$\gamma_0 = \sqrt{(\Delta_0 +1)\log(120\Delta_0)}$  and $G_R = (1-q_{R,R+1})^{-1}$.	
	For large enough subsample size $C$,
	with probability at least $0.9$, the  pilot sampling probability $\{\hat{p}_n\}_{n=1}^N$  computed by Algorithm~\ref{alg:sampleProb} with $\alpha=0.5$ satisfy
	$$
	\min\Big\{\frac{p^{\mathrm{Exact}}}{\hat{p}_n} ,\, \frac{ \hat{p}_n}{p^{\mathrm{Exact}}} 	\Big\} \,\ge\, \frac{1 - \gamma_1 C^{-1/2}-\gamma_2 C^{-1}}{1 + \gamma_1 C^{-1/2}+\gamma_2C^{-1}},
	$$
	where  $\gamma_1 =35(1+G_R )(R+\Delta_R) +	8\hsigma_{1}^2 \gamma_0/ \hsigma_{R}^2$
	and $\gamma_2 = 32K_R\hsigma_{1}^6 G_R^2 /(g_R^2\hsigma_R^2)$.
\end{proposition}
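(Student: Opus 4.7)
The plan is to propagate randomness through the two-step Algorithm~\ref{alg:sampleProb} by three successive bounds: a bound on the pilot subspace operator $\tsP'_R$ from Theorem~\ref{thm:projectionOperator}, a bound on the pilot eigenvalues $\tsigma'_r$ from matrix Bernstein, and then a multiplicative comparison between the unnormalised numerators
\[
\hat A_n = \sum_{r=1}^R (\hxi'_{nr})^2 + \tfrac{\|(I-\tsP'_R) x_n\|^2}{(\tsigma'_R)^2}, \qquad A_n = \sum_{r=1}^R \hxi_{nr}^2 + \tfrac{\|(I-\hsP_R) x_n\|^2}{\hsigma_R^2}
\]
uniform in $n$. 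Writing $\hat p_n = \hat A_n/\sum_m \hat A_m$ and $p_n^{\mathrm{Exact}} = A_n/\sum_m A_m$, the stated ratio bound then drops out of a standard normalisation argument.

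The first move is to verify that the pilot mixture $p'_n = 1/(2N) + \tfrac{1}{2}\|x_n\|^2/\sum_m \|x_m\|^2$ is itself a nearly exact FunPrinSS probability. Using the pointwise inequality $A_n \le \|x_n\|^2/\hsigma_R^2$ (which follows from $\hsigma_r^2 \ge \hsigma_R^2$ for $r\le R$), the identity $\sum_m\|x_m\|^2 = N\tr(\hsC_{XX}) = N\hsigma_1^2\Delta_0$, and the normaliser bound from~\eqref{eqn:fpca:dimension}, one obtains $p_n^{\mathrm{Exact}}/p_n' \le \beta'$ with $\beta' = O(\hsigma_1^2\Delta_0/\hsigma_R^2)$. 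Theorem~\ref{thm:projectionOperator} applied with this $\beta'$ then yields a bound of the form $\|\tsP'_R - \hsP_R\| \le \epsilon + K_R\hsigma_1^4\epsilon^2/g_R^2$ with $\epsilon$ of order $(1+G_R)(R+\Delta_R)/C^{1/2}$, valid with probability at least $0.95$ for $C$ large enough. In parallel, the intrinsic-dimension matrix Bernstein inequality (as in~\cite{he2020randomized}) applied to $\sE' = \tsC'_{XX} - \hsC_{XX}$ gives $\|\sE'\| \le O(\hsigma_1^2\gamma_0/C^{1/2})$ on a second event of probability at least $0.95$; Weyl's inequality then supplies $|(\tsigma'_r)^2 - \hsigma_r^2| \le \|\sE'\|$ for every $r$, and in particular $(\tsigma'_R)^2 \ge \hsigma_R^2/2$ once $C$ is sufficiently large.

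The main technical work is converting these operator-level bounds into the multiplicative comparison $|\hat A_n - A_n| \le \delta\, A_n$ with $\delta = \gamma_1 C^{-1/2} + \gamma_2 C^{-1}$, uniformly in $n$. I would decompose
\[
\hat A_n - A_n = \bigl[(\tsigma'_R)^{-2} - \hsigma_R^{-2}\bigr]\|(I-\tsP'_R) x_n\|^2 + \hsigma_R^{-2}\langle x_n,(\hsP_R - \tsP'_R) x_n\rangle + \sum_{r\le R}\Bigl[\tfrac{\langle x_n,\ttheta'_r\rangle^2}{(\tsigma'_r)^2} - \tfrac{\langle x_n,\htheta_r\rangle^2}{\hsigma_r^2}\Bigr],
\]
writing the tail-subspace increment via the identity $\|(I-\tsP'_R)x_n\|^2 - \|(I-\hsP_R)x_n\|^2 = \langle x_n,(\hsP_R - \tsP'_R)x_n\rangle$. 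Each piece is then compared against the quadratic form $A_n = \langle x_n,\hat M_R x_n\rangle$ for the positive self-adjoint operator $\hat M_R := \sum_{r\le R}\hsigma_r^{-2}\htheta_r\otimes\htheta_r + \hsigma_R^{-2}(I-\hsP_R)$, whose spectrum $\{\hsigma_r^{-2}\}_{r\le R}\cup\{\hsigma_R^{-2}\}$ supplies both $A_n \ge \|x_n\|^2/\hsigma_1^2$ and $\|(I-\hsP_R)x_n\|^2/\hsigma_R^2 \le A_n$, $\sum_r\hxi_{nr}^2 \le A_n$. The tail-eigenvalue piece inherits the $\|\sE'\|/\hsigma_R^4$ scaling and yields the $8\hsigma_1^2\gamma_0/\hsigma_R^2$ contribution to $\gamma_1$; the tail-subspace piece inherits the $\|\tsP'_R - \hsP_R\|/\hsigma_R^2$ scaling from the first-order term of Theorem~\ref{thm:projectionOperator} and yields the $35(1+G_R)(R+\Delta_R)$ contribution; the top-$R$ sum contributes analogous linear pieces absorbed into $\gamma_1$, while the quadratic-in-$\epsilon$ term $K_R\hsigma_1^4\epsilon^2/g_R^2$ together with the eigenvalue--subspace cross product produces $\gamma_2 = 32K_R\hsigma_1^6 G_R^2/(g_R^2\hsigma_R^2)$.

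The main obstacle is the bookkeeping for the top-$R$ sum. Using $(\tsigma'_r)^{-2}\ttheta'_r\otimes\ttheta'_r - \hsigma_r^{-2}\htheta_r\otimes\htheta_r = (\tsigma'_r)^{-2}(\ttheta'_r\otimes\ttheta'_r - \htheta_r\otimes\htheta_r) + [(\tsigma'_r)^{-2} - \hsigma_r^{-2}]\htheta_r\otimes\htheta_r$, the first summand must be controlled collectively through $\|\tsP'_R - \hsP_R\|$ rather than through individual eigenfunction perturbations --- otherwise one would need eigengap assumptions beyond Condition~\ref{assumption:eigenvalue} --- while the $\hsigma_1^2/\hsigma_R^2$ ratios introduced when dividing by $A_n$ must be carefully routed so that no extra factor survives beyond what is stated. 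Once $|\hat A_n - A_n| \le \delta A_n$ holds uniformly, summing over $n$ gives $|\hat D - D| \le \delta D$ for $D = \sum_m A_m$ and $\hat D = \sum_m \hat A_m$, and therefore $\hat p_n/p_n^{\mathrm{Exact}} = (\hat A_n/A_n)(D/\hat D) \in [(1-\delta)/(1+\delta),(1+\delta)/(1-\delta)]$; a union bound over the two high-probability events from the second paragraph supplies the claimed probability at least $0.9$.
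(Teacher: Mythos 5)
Your skeleton matches the paper's: both reduce the claim to a uniform multiplicative comparison $|\hat A_n - A_n|\le \gamma A_n$ of the unnormalised numerators, use essentially the same three-way decomposition (truncated-inverse difference $\hsC_{XX}^+-(\tsC_{XX}')^+$, projection difference $\tsP_R'-\hsP_R$, and the reciprocal-eigenvalue term $\hsigma_R^{-2}-(\tsigma_R')^{-2}$), split the first two into first-order and higher-order perturbation pieces, and finish with the normalisation and a union bound. Your treatment of the eigenvalue term and of the higher-order terms $S_R(\sE')$ and $Q_R(\sE')$ via operator norms is exactly what the paper does, and your verification that the pilot mixture is nearly exact with $\beta'=O(\hsigma_1^2\Delta_0/\hsigma_R^2)$ is correct.

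The genuine gap is in how you bound the \emph{first-order} pieces. You propose to control them through $\|\tsP_R'-\hsP_R\|$ obtained from Theorem~\ref{thm:projectionOperator} applied with $\beta'$, followed by the conversion $\|x_n\|^2\le\hsigma_1^2 A_n$. This costs you twice: the precision $\epsilon$ from Theorem~\ref{thm:projectionOperator} scales like $\beta'(R+\Delta_R)C^{-1/2}\sim \hsigma_1^2\Delta_0(R+\Delta_R)/(\hsigma_R^2 C^{1/2})$, and the step $|\langle x_n,(\hsP_R-\tsP_R')x_n\rangle|/\hsigma_R^2\le\|\tsP_R'-\hsP_R\|\,\hsigma_1^2 A_n/\hsigma_R^2$ costs another $\hsigma_1^2/\hsigma_R^2$. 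The resulting contribution to $\gamma_1$ is of order $\hsigma_1^4\Delta_0(R+\Delta_R)/\hsigma_R^4$, not the stated $35(1+G_R)(R+\Delta_R)$; since the constants are part of the proposition, this route proves only a weaker statement. The paper avoids the loss by never taking operator norms of the first-order terms: it expands $L_R(\sE')$ and $P_R(\sE')$ explicitly in the eigenbasis (Lemma~\ref{lemma:pcaLinearSum} and~\eqref{eqn:inversefirstorder}), applies Cauchy--Schwarz so that the $n$-dependent factor is exactly $\sum_{r\le R}\hxi_{nr}^2+\sum_{s>R}\hsigma_s^2\hxi_{ns}^2/\hsigma_R^2=A_n$ (up to a factor $G_R$), and controls the remaining $n$-independent random sums $\sum_{r,s}\big(\sum_c \txi_{cr}\hsigma_s\txi_{cs}/(\hsigma_R N\tp_c)\big)^2$ in second moment via Markov's inequality. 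That pairing of the eigen-coordinates of $x_n$ with the random sums, rather than a worst-case operator-norm bound, is the ingredient your proposal is missing and the one that produces the clean $(1+G_R)(R+\Delta_R)C^{-1/2}$ rate.
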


One  implication of Proposition~\ref{proposition:problowerbound} is that, when the subsample size $C$ is large enough, it holds with high probability that $\hat{p}_n \ge p^{\mathrm{Exact}}/\beta$ for $\beta = \frac{1 + \gamma_1 C^{-1/2}+\gamma_2 C^{-1}}{1 - \gamma_1 C^{-1/2}-\gamma_2C^{-1}}$. This  corresponds to our definition of the nearly exact FunPrinSS probability~\eqref{eqn:prob:NearlyExact}. 
Besides, the value of $\beta(\ge 1)$ will converge to one as $C$ increases to infinity.

\subsection{Choice of $R$}
\label{subsec:fpca:chooseR}
	
For functional principal component analysis, the subspace dimension $R$ is usually chosen based on 
the fraction of  the  variance explained (FVE)   by the leading $R$ eigenfuctions. That is, $R$ is chosen such that the FVE is greater than a pre-specified threshold value (e.g. $90\%$). 
Given the full sample estimate $\hsP_R$ , its fraction of  variance  explained can be computed by
$$
\widehat{\text{FVE}} = \frac{\sum_{r=1}^{R} \hsigma_{r}^2}{\sum_{r=1}^{\infty} \hsigma_{r}^2}=\frac{\sum_{n=1}^{N} \| \hsP_R x_n\|^2}{\sum_{n=1}^{N} \|x_n\|^2};
$$
and correspondingly, the  fraction of  variance explained by the subsampled estimate $\tsP_R$ is
$\widetilde{\text{FVE}} = \sum_{n=1}^{N} \| \tsP_R x_n\|^2 / \sum_{n=1}^{N} \|x_n\|^2$.
As a direct consequence of Theorem~\ref{thm:projectionOperator}, we can bound the difference between the $\widehat{\text{FVE}}$ of the full sample estimator and the  
$\widetilde{\text{FVE}}$ of the subsampled estimator.

\begin{corollary} \label{corollary:pev}
Under the conditions of Theorem~\ref{thm:projectionOperator}, it holds that
	\begin{equation}  \label{eqn:pevdiff}
\big|\widehat{\mathrm{FVE}}  -\widetilde{\mathrm{FVE}} \big| \le \epsilon +K_R\cdot \hsigma_1^4 \epsilon^2/ g_R ^2,
\end{equation}
with probability at least $1 - 12(R+\Delta_R)\exp \big( -\frac{C\epsilon^2/2}{ V+ L \epsilon /3} \big).$
\end{corollary}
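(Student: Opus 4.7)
The plan is to reduce the bound on $|\widehat{\mathrm{FVE}} - \widetilde{\mathrm{FVE}}|$ to the operator-norm bound on $\tsP_R - \hsP_R$ already given by Theorem~\ref{thm:projectionOperator}, and then just invoke that theorem. Since both $\hsP_R$ and $\tsP_R$ are orthogonal projections, hence self-adjoint and idempotent, I would first rewrite, for each sample $x_n$,
\begin{equation*}
\|\hsP_R x_n\|^2 = \langle x_n, \hsP_R x_n\rangle, \qquad \|\tsP_R x_n\|^2 = \langle x_n, \tsP_R x_n\rangle.
\end{equation*}
Subtracting and summing over $n$ gives
\begin{equation*}
\sum_{n=1}^N \|\hsP_R x_n\|^2 - \sum_{n=1}^N \|\tsP_R x_n\|^2 = \sum_{n=1}^N \langle x_n, (\hsP_R - \tsP_R) x_n\rangle,
\end{equation*}
which is the key identity; the quadratic form on the right is now linear in the operator difference.

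Next I would bound each summand by Cauchy--Schwarz and the operator norm:
\begin{equation*}
|\langle x_n, (\hsP_R - \tsP_R) x_n\rangle| \le \|x_n\|\cdot \|(\hsP_R - \tsP_R) x_n\| \le \|\hsP_R - \tsP_R\|\cdot \|x_n\|^2.
\end{equation*}
Dividing through by $\sum_{n=1}^N \|x_n\|^2$ then yields the clean estimate
\begin{equation*}
\big|\widehat{\mathrm{FVE}} - \widetilde{\mathrm{FVE}}\big| \le \|\hsP_R - \tsP_R\|,
\end{equation*}
so the FVE discrepancy is controlled by the projection-operator error uniformly in the dataset.

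Finally I would apply Theorem~\ref{thm:projectionOperator}: on the event of probability at least $1 - 12(R+\Delta_R)\exp\!\big(-\tfrac{C\epsilon^2/2}{V + L\epsilon/3}\big)$, one has $\|\tsP_R - \hsP_R\| \le \epsilon + K_R\hsigma_1^4\epsilon^2/g_R^2$, and combining this with the deterministic bound above gives~\eqref{eqn:pevdiff}. There is essentially no obstacle here beyond recognizing the self-adjoint/idempotent reformulation of $\|Px_n\|^2$ as a quadratic form in $P$; the rest is a one-line operator-norm inequality followed by a direct citation of the preceding theorem.
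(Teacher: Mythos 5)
Your proposal is correct and is essentially identical to the paper's own proof: both rewrite $\|\hsP_R x_n\|^2-\|\tsP_R x_n\|^2$ as the quadratic form $\langle x_n,(\hsP_R-\tsP_R)x_n\rangle$, bound it by $\|\hsP_R-\tsP_R\|\cdot\|x_n\|^2$, and then invoke Theorem~\ref{thm:projectionOperator}. No further comment is needed.
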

\begin{proof}
With  Theorem~\ref{thm:projectionOperator}, the result~\eqref{eqn:pevdiff} is straightforward by observing 
	\begin{align*}
		&	\Big|\frac{\sum_{n=1}^{N} \| \hsP_R x_n\|^2}{\sum_{n=1}^{N} \|x_n\|^2} -
		\frac{\sum_{n=1}^{N} \| \tsP_R x_n\|^2}{\sum_{n=1}^{N} \|x_n\|^2}\Big| \\
		= &\Big|	\frac{\sum_{n=1}^{N} \langle x_n, (\hsP_R-\tsP_R) x_n\rangle}{\sum_{n=1}^{N} \|x_n\|^2}\Big| 
		\le  \epsilon +K_R\cdot \hsigma_1^4 \epsilon^2/ g_R ^2.
	\end{align*}
The last inequality employs $|\langle x_n, (\hsP_R-\tsP_R) x_n\rangle| \le \|\hsP_R-\tsP_R\| \cdot \|x_n\|^2$.
\end{proof}

Corollary~\ref{corollary:pev} indicates  $\widetilde{\text{FVE}}$ is a reliable estimate of 
$\widehat{\text{FVE}}$. When applying the randomized algorithm, we can use the criterion  $\widetilde{\text{FVE}}$ to select $R$.  After obtaining the subsampled estimate $\tsP_R$,  computing
$\widetilde{\text{FVE}} $ requires one additional scan over the whole dataset and have a computational complexity $\sO(NRL)$. This complexity is no more than that of Algorithm~\ref{alg:sampleProb}, see Remark~\ref{remark:complexity:fpca}.

\section{Extenstion: Randomized Functional Linear Regression}\label{sec:flr}

A considerable number of   functional data models rely on functional principal components for dimension reduction. 
One of these models is functional linear regression (FLR) with scalar response. In this section, we extend the results from Section~\ref{sec:fpca} and Section~\ref{sec:fpca:theory}  to FLR and develop its theoretical guarantees.

\subsection{Randomized FLR} \label{subsec:flr:algorithm}
FLR is a generalization of the classical linear regression to the functional setting.  For  a functional predictor  $x_n\in \sH_X$ and a scalar response $Y_n\in\bbR$, FLR assumes the relation $Y_n = \alpha + \langle x_n, \Psi\rangle + \epsilon_n$ for $n=1,\cdots, N$. In this model,  $\alpha\in\bbR$  is  the intercept term, $\Psi\in \sH_X$ is the regression function and $\epsilon_n$ is some random error with zero mean. When both $x_n$ and $\Psi$ belong to $L_2(T)$ for some compact interval $T$, the regression model is  written as $Y_n = \alpha + \int_T x_n(t) \Psi(t) \intd t + \epsilon_n$.
Given any estimator $\hPsi$, the  intercept is directly available via $\hat{\alpha} = (1/N) \sum_{n=1}^N (Y_n - \langle x_n, \hPsi\rangle)$. In the following, we  focus on the estimation of the regression function $\Psi$ by assuming that $Y_n$ and $x_n$ have been centered with zero mean, such that $\hat{\alpha} = 0$.  

Considerable amount of works has been proposed for estimating $\Psi$ \citep[e.g.][]{yao2005regression,yuan2010reproducing}. One of the  commonly used estimators for $\Psi$ is based on the truncated inverse of  the empirical covariance operator $\hsC_{XX} = \frac{1}{N}\sum_{i=1}^N x_i\otimes x_i$; see  \cite{hall2007methodology, cardot2007clt}. 	Suppose the covariance operator admits the eigenvalue-eigenfunction decomposition	$\hsC_{XX} =  \sum_{r=1}^\infty \hsigma_r^2\cdot  \htheta_r \otimes \htheta_r$. 	Its rank-$R$ \textit{truncated inverse} is defined as 	$\hsC_{XX}^+ = \sum_{r=1}^R (1/\hsigma_r^2)\cdot  \htheta_r \otimes \htheta_r$. Based on this, the regression function is estimated by  $\hPsi = \hsC_{XX}^{+} \hat{z}$ with $\hat{z} = (1/N) \sum_{n=1}^N Y_n x_n$.

\begin{algorithm}[t]
	\caption{Base Randomized FLR\label{alg:sampleFLR}}
	\textbf{Input}: Dataset $\{(x_n, Y_n)\}_{n=1}^N$; sampling probability $\{p_n\}_{n=1}^N$;	subsample size $C$. \\
	\textbf{Output}: the regression function $\widetilde{\Psi}$ 
	\begin{algorithmic}[1]
		\FOR{$c = 1,\cdots, C$}
		\STATE Sample $(\tx_c, \tilde{Y}_c)$ from $\{(x_n, Y_n)\}_{n=1}^N$ and get 
		$\tp_c$ with  $\Prob\big((\tx_c, \tilde{Y}_c, \tp_c)= (x_n, Y_n, p_n)\big) = p_n.$
		\ENDFOR
		\STATE Compute $\tsC_{XX} = \frac{1}{C} \sum_{c=1}^C \frac{1}{N\tp_c}\tx_c\otimes \tx_c $ and $\tilde{z} = \frac{1}{C} \sum_{c=1}^C \frac{1}{N\tp_c}\tilde{Y}_c \tx_c$.
		\STATE Compute $\widetilde{\Psi} = \tsC^+_{XX} \tilde{z} $
	\end{algorithmic}
\end{algorithm}

To reduce the cost of computing $\hPsi$ from the full sample, 
we adapt  the randomized algorithm to estimate it based on a subset of the data. The algorithm will sample a subset $\{(\tilde{x}_c, \tilde{Y}_c)\}_{c=1}^C$ of observation pairs  from the full sample $\{(x_n, Y_n)\}_{n=1}^N$. The sampling is taken  with replacement and  according to some probability distribution $\{p_n\}_{n=1}^N$. Given the subsampled  pairs, we can compute two quantities,
$\tsC_{XX} = \frac{1}{C} \sum_{c=1}^C \frac{1}{N\tp_c}\tx_c\otimes \tx_c $ and $\tilde{z} = \frac{1}{C} \sum_{c=1}^C \frac{1}{N\tp_c}\tilde{Y}_c \tx_c$,  which are unbiased estimators of $\hsC$ and $\hat{z}$, respectively. Finally, the full sample regression function is estimated by its subsampled counterpart $\widetilde{\Psi} = \tsC^+_{XX} \tilde{z}$. This procedure is listed in Algorithm~\ref{alg:sampleFLR}. 

In Algorithm~\ref{alg:sampleFLR}, notice the truncated inverse $\tsC^+_{XX}$ is computed from the eigenvalue-eigenfunction decomposition of the subsampled $\tsC_{XX}$. It is evident that the accuracy of the randomized estimator $\widetilde{\Psi}$ relies on the precision of the  principal subspace estimated by $\tsC_{XX}$. Our proposed FunPrinSS probability~\eqref{eqn:prob:FPCAProptoExact} is a proper choice for the randomized FLR to capture the principal subspace information.  

In practice, the	FunPrinSS probability~\eqref{eqn:prob:FPCAProptoExact} is unknown without full sample computation. We can carry out a two-step procedure similarly to that of Section~\ref{subsec:fpca:implementation}. Firstly, the FunPrinSS probability is estimated via Algorithm~\ref{alg:sampleProb}. Secondly, the estimated probability is  supplied to Algorithm~\ref{alg:sampleFLR} to obtain $\widetilde{\Psi}$.  As noted in Proposition~\ref{proposition:problowerbound}, the  probability  obtained in the first step is close to the target FunPrinSS probability with high probability when the subsample size $C$ is large enough. 

Application of  the randomized FLR algorithm also requires the specification of the principal subspace dimension $R$. Practitioners can choose $R$ such that the fraction of variance explained (FVE) by the  principal subspace is close to $100\%$. In this way, most of the variability of the predictor $x$ is captured and the remaining variability can be regarded as being negligible.

\begin{remark} \label{remark:complexity:flr}
When  the functional observation is recorded as a high dimensional vector of length $L$, 
the computational complexity of such two-step procedure can be similarly analyzed as in Remark~\ref{remark:complexity:fpca}. 
 In this case, the computational complexity of FLR with full sample is $\mathcal{O}(NL\min\{L,N\}+ L(R+N))$. The proposed two-step procedure for randomized FLR requires  $\mathcal{O}( CL\min\{C,L\} + (C+NR)L )$.  	Compared with the full sample computation,  the proposed algorithm has much smaller complexity when $C\ll N$ and $R\ll L$. 
\end{remark}

\subsection{Analysis of Randomized FLR with FunPrinSS}\label{subsec:flr:theory}

We continue to theoretically analyze the subsampled regression estimator $\widetilde{\Psi}$ obtained from Algorithm~\ref{alg:sampleFLR} with the  nearly exact sampling probability~\eqref{eqn:prob:NearlyExact}.  Toward this target, we express the full sample estimator $\hPsi$ and the subsampled estimator $\widetilde{\Psi}$  in compact expressions. These expressions resemble the commonly used ones  for the classical linear regression. First of all, define an operator $\sT: \sH_X\to \bbR^N$ such that  an element $u\in\sH_X$ is mapped to
\begin{equation}
	\sT u = (\langle x_1, u\rangle, \cdots, \langle x_N, u\rangle)^T. \label{eqn:sTMap}
\end{equation}
Let  $\mY = (y_1,\cdots, y_N)^T$ and $\vepsilon = (\epsilon_1, \cdots, \epsilon_N)^T$ be  $N$-dimensional vectors of the responses and the residuals, respectively. Then, the regression model for all $N$  observations can be expressed as $\mY = \sT \Psi + \vepsilon$. The operator $\sT$ plays a similar role as the design matrix  for classical linear regression.

Recall from the end of Section~\ref{sec:introduction}, the Euclidean space $\bbR^N$ is equipped with the normalized inner product $\langle\cdot, \cdot\rangle_N$.  We  denote $\sT^*:\bbR^N\to \sH_X$ as 
the adjoint operator of $\sT$. For any $\mathbf{a}\in\bbR^N$ and $u\in\sH_X$, it holds that
$$
\langle \sT^* \mathbf{a}, u\rangle \stackrel{(i)}{=}\langle \mathbf{a}, \sT u\rangle_N  \stackrel{(ii)}{=} \frac{1}{N} \sum_{n=1}^N a_n \langle x_n, u\rangle
= \big\langle \sum_{n=1}^N a_n x_n/N,\ u \big\rangle.
$$
In the above, (i) uses the definition of adjoint operator, and (ii) employs~\eqref{eqn:sTMap}.
From the above equation, we find that the adjoint operator $\sT^*$ maps any $\mathbf{a}\in\bbR^N$ to $(1/N)\sum_{n=1}^N a_n x_n $. Moreover, for any $u\in\sH_X$, we can find that
$$
\sT^*\sT u = \sT^*\begin{pmatrix}
	\langle x_1, u \rangle\\\vdots \\\langle x_N, u \rangle
\end{pmatrix}
=\frac{1}{N} \sum_{n=1}^N \langle x_n, u \rangle x_n = \hsC_{XX} u.
$$
The above equation implies that $\hsC_{XX} = \sT^*\sT$. Similarly, we can verify $\hat{z} = \sT^*\mY$. Combining these results, the full sample estimator of the regression function can be equivalently expressed as $\hPsi =\hsC_{XX}^{+} \hat{z} = (\sT^*\sT)^+(\sT^*\mY)$.

To find a closed form expression for the subsampled $\widetilde{\Psi}$,
we represent the subsampling process by an operator $\sD:\bbR^N\to\bbR^C$.  As a mapping between Euclidean spaces, $\sD$ can be  represented by a matrix $\mD \in\bbR^{C\times N}$. The $(c,n)$-th entry of the matrix $\mD$ is $D_{cn}=1/\sqrt{Np_n}$ if $x_n$ is selected as the $c$-th subsample; all the other entries are zero. The adjoint operator of $\sD$ is denoted as
$\sD^* :\bbR^C\to\bbR^N$. Note the Euclidean space $\bbR^C$ is also equipped with the normalized inner product $\langle\cdot, \cdot\rangle_C$, such that
for $\mathbf{a}', \mathbf{b}' \in\bbR^C$ we have $\langle\mathbf{a}', \mathbf{b}'\rangle_C = (1/C)\sum_{c=1}^C a'_cb'_c$. Now for any  $\mathbf{a}' \in\bbR^C$
and  $ \mathbf{b} \in\bbR^N$, we have
$$
\langle \mathbf{a}', \mD \mathbf{b}\rangle_C = \frac{1}{C} \langle \mathbf{a}', \mD \mathbf{b}\rangle_2  = \frac{1}{C} \langle \mD^T\mathbf{a}',  \mathbf{b}\rangle_2 =
\langle (N/C)\mD^T \mathbf{a}', \mathbf{b} \rangle_N,
$$
where $\langle \cdot, \cdot\rangle_2$ is the  classical (unnormalized) inner product for Euclidean space, which is the summation of element-wise products. From the above equation, we can see that the matrix representation of $\sD^*$ is $(N/C)\mD^T$, which is the transpose of $\mD$ and multiplied by a ratio factor $N/C$. With some computation, we can verify that for any $u\in\sH_X$
$$
\sT^*\sD^*\sD\sT u = (N/C) \sT^*\mD^T\mD\begin{pmatrix}
	\langle x_1, u \rangle\\\vdots \\\langle x_N, u \rangle
\end{pmatrix} = \frac{1}{C} \sum_{c=1}^C \frac{1}{N\tp_c}
\langle \tx_c,u\rangle \tx_c = \tsC_{XX} u.
$$
It follows that $\tsC_{XX} = \sT^*\sD^*\sD\sT$, and
$\tilde{z} = \sT^*\sD^*\sD\mY$ can be verified due to a similar reasoning. 
This leads to a compact expression for the randomized regression estimator 
$\widetilde{\Psi} =  \tsC^+_{XX} \tilde{z} = ( \sT^*\sD^*\sD\sT)^+ ( \sT^*\sD^*\sD\mY)$. The above results are summarized in Lemma~\ref{lemma:reg:compact} below, which gives an alternative expression of the full sample estimator and the subsampled estimator obtained via Algorithm~\ref{alg:sampleFLR}.

\begin{lemma} \label{lemma:reg:compact}
	The full sample regression estimator can be expressed as $$\hPsi = (\sT^*\sT)^+(\sT^*\mY),$$ and 
	the subsampled estimator  can be expressed as $$\widetilde{\Psi} = ( \sT^*\sD^*\sD\sT)^+ ( \sT^*\sD^*\sD\mY).$$
\end{lemma}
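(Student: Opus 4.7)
The plan is to verify four operator identities, $\hsC_{XX} = \sT^*\sT$, $\hat{z} = \sT^*\mY$, $\tsC_{XX} = \sT^*\sD^*\sD\sT$, and $\tilde{z} = \sT^*\sD^*\sD\mY$, and then substitute them into the definitions $\hPsi = \hsC_{XX}^+\hat{z}$ and $\widetilde{\Psi} = \tsC^+_{XX}\tilde{z}$. The proof is essentially bookkeeping: the authors have already carried out each step inline in the paragraphs preceding the lemma, so my write-up would organize those computations into a formal proof.

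First I would derive $\sT^*\colon\bbR^N\to\sH_X$ explicitly. Applying the defining identity $\langle \sT^*\ma, u\rangle = \langle \ma, \sT u\rangle_N$ to an arbitrary pair $(\ma,u)$ and expanding the normalized inner product on $\bbR^N$ produces the $1/N$ normalizing factor and yields $\sT^*\ma = (1/N)\sum_{n=1}^N a_n x_n$. Taking $\ma = \mY$ recovers $\hat{z}$, and composing $\sT^*$ with $\sT u = (\langle x_n,u\rangle)_n$ recovers $\hsC_{XX} u$, establishing the first two identities.

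Second I would handle the subsampling operator $\sD$. Its matrix $\mD\in\bbR^{C\times N}$ has a single nonzero entry $1/\sqrt{Np_n}$ in each row, corresponding to the draw. Because $\bbR^N$ carries $\langle\cdot,\cdot\rangle_N$ while $\bbR^C$ carries $\langle\cdot,\cdot\rangle_C$, the mismatched normalizations force an extra factor $N/C$ in the adjoint, giving $\sD^* = (N/C)\mD^\mathsf{T}$. Then $\sD\sT u$ is the $C$-vector with entries $\langle \tx_c,u\rangle/\sqrt{N\tp_c}$, and propagating it back through $\sD^*$ and $\sT^*$ produces the weighted rank-one sum $(1/C)\sum_c (N\tp_c)^{-1}\langle \tx_c,u\rangle \tx_c = \tsC_{XX} u$. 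Replacing $\sT u$ by $\mY$ in the same computation recovers $\tilde{z}$. Substituting all four identities into the definitions of $\hPsi$ and $\widetilde{\Psi}$ finishes the proof.

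There is no genuine obstacle here, since the lemma is a reformulation rather than a theorem. The only point requiring care is the bookkeeping of normalizations: both the $1/N$ in $\sT^*$ and the $N/C$ in $\sD^*$ arise from the choice of normalized inner products on $\bbR^N$ and $\bbR^C$, and dropping either factor would misplace constants in $\hsC_{XX}$ and $\tsC_{XX}$. Once these are tracked consistently, everything collapses to direct substitution.
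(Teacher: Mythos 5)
Your proposal is correct and follows essentially the same route as the paper: the authors likewise derive $\sT^*\ma=(1/N)\sum_n a_n x_n$ from the normalized adjoint identity, obtain $\sD^*=(N/C)\mD^\mathsf{T}$ from the mismatched normalizations on $\bbR^N$ and $\bbR^C$, verify $\hsC_{XX}=\sT^*\sT$, $\hat z=\sT^*\mY$, $\tsC_{XX}=\sT^*\sD^*\sD\sT$, $\tilde z=\sT^*\sD^*\sD\mY$, and conclude by substitution. Your bookkeeping of the $1/N$ and $N/C$ factors matches the paper's inline derivation exactly.
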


Now we show that the subsampled estimator provides a good approximation of the full sample estimator under suitable criterion. Our aim is to  characterize the closeness between the full sample estimator $\hPsi$ and the subsampled estimator  $\widetilde{\Psi}$. In the literature of functional linear regression, strong consistency of the estimated regression function is not generally possible. The estimators are only guaranteed to converge in some weak norm such as the prediction error on a new sample \citep[see][]{cardot2007clt, yuan2010reproducing}.
We choose to measure the difference $\widetilde{\Psi}-\hPsi$ based on the  full sample prediction error  $\|\sT\widetilde{\Psi}-\sT\hPsi \|_N^2
= \frac{1}{N} \sum_{n=1}^N \langle x_n,  \widetilde{\Psi}-\hPsi \rangle^2$. When only a fraction of the whole dataset is employed to estimate $\widetilde{\Psi}$, this metric partially gauges the quality of 
$\widetilde{\Psi}$ by the samples out of the subsampled (training) dataset. 
Our theoretical target boils down to the analysis of the error $ \sT\widetilde{\Psi}-\sT\hPsi$.  Lemma~\ref{lemma:regErrorDeco} below provides a decomposition of the subsampling error, the proof of which can be found in Section~\ref{appendix:flr:proof} in the Appendix. Recall that $x_n$ can be represented as $x_n = \sum_{r = 1}^{\infty}\hsigma_r\hxi_{nr}\htheta_r$ due to the Karhunen-Lo\`{e}ve expansion~\eqref{eq:KL}. The following decomposition involves an operator $\sT^+:\bbR^N\to \sH_X$, which maps $\mathbf{a}\in\bbR^N$  to $
\sT^+\mathbf{a} = \frac{1}{N}\sum_{n=1}^N a_n x_n^+ \in\sH_X$, with 
$x_n^+ = \sum_{r=1}^R \hxi_{nr} \htheta_r/\hsigma_r$  corresponding to each $x_n$.

\begin{lemma} \label{lemma:regErrorDeco}
	The subsampling error in full sample prediction has the following decomposition
	\begin{align}
	 \sT\widetilde{\Psi} - \sT\hPsi  \nonumber 
		=\underbrace{ \sT(\tsP_R-	\hsP_R)\sT^+\mY }_{\mathrm{I}} &
	 	+ \underbrace{ \sT(\tsC^+_{XX} - \hsC^+_{XX})\sT^*(\sD^*\sD)\mY^{\perp} }_{\mathrm{II}} \\
	&\qquad	+  \underbrace{ \sT \hsC^+_{XX}\sT^*(\sD^*\sD)\mY^{\perp}}_{\mathrm{III}} . \label{eqn:regression:errorterm23}
	\end{align}
	In the above, $\mY^\perp = \mY -\sT \hPsi$
	is the residual of the full sample estimator. In addition, 
	we have that 
	\begin{equation} \label{eqn:regression:perp}
		\sT \hsC_{XX}^+\sT^* \mY^{\perp} = \vzero.
	\end{equation}
\end{lemma}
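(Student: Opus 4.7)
The plan is to work entirely with the compact operator expressions from Lemma~\ref{lemma:reg:compact} and reduce the decomposition to a calculation about truncated inverses and their associated projectors. The starting point is
$\widetilde{\Psi} = \tsC_{XX}^{+}(\sT^{*}\sD^{*}\sD)\mY$ and $\hPsi = \hsC_{XX}^{+}\sT^{*}\mY$. I will split $\mY = \sT\hPsi + \mY^{\perp}$ inside $\widetilde{\Psi}$, which yields
\begin{equation*}
\widetilde{\Psi} = \tsC_{XX}^{+}\tsC_{XX}\hPsi + \tsC_{XX}^{+}\sT^{*}\sD^{*}\sD\,\mY^{\perp},
\end{equation*}
since $\sT^{*}\sD^{*}\sD\sT = \tsC_{XX}$. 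The eigenfunction definition of the truncated inverse gives $\tsC_{XX}^{+}\tsC_{XX} = \tsP_{R}$ and, identically, $\hsC_{XX}^{+}\hsC_{XX} = \hsP_{R}$. Because the range of $\hsC_{XX}^{+}$ is the span of $\htheta_{1},\dots,\htheta_{R}$, the estimator $\hPsi$ lies inside that subspace and hence $\hsP_{R}\hPsi = \hPsi$. Subtracting $\hPsi$ therefore leaves
\begin{equation*}
\widetilde{\Psi} - \hPsi = (\tsP_{R} - \hsP_{R})\hPsi + \tsC_{XX}^{+}\sT^{*}\sD^{*}\sD\,\mY^{\perp}.
\end{equation*}

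Next I will verify the identity $\hPsi = \sT^{+}\mY$ that makes term I match the statement of the lemma. This is a short direct computation: using $\hxi_{nr} = \langle x_{n},\htheta_{r}\rangle/\hsigma_{r}$ and the definition of $x_{n}^{+}$,
\begin{equation*}
\hsC_{XX}^{+}\hat{z} = \frac{1}{N}\sum_{n=1}^{N} Y_{n}\sum_{r=1}^{R}\frac{\langle x_{n},\htheta_{r}\rangle}{\hsigma_{r}^{2}}\htheta_{r} = \frac{1}{N}\sum_{n=1}^{N} Y_{n} x_{n}^{+} = \sT^{+}\mY.
\end{equation*}
Splitting the remaining subsampling residual by writing $\tsC_{XX}^{+} = \hsC_{XX}^{+} + (\tsC_{XX}^{+} - \hsC_{XX}^{+})$ and then applying $\sT$ to both sides separates the second summand into terms II and III exactly as stated in~\eqref{eqn:regression:errorterm23}.

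For the orthogonality claim~\eqref{eqn:regression:perp}, I will actually prove the stronger fact $\hsC_{XX}^{+}\sT^{*}\mY^{\perp} = 0$. Applying $\sT^{*}$ to $\mY^{\perp} = \mY - \sT\hPsi$ and using $\sT^{*}\sT = \hsC_{XX}$ together with $\hsC_{XX}\hsC_{XX}^{+} = \hsP_{R}$ gives
\begin{equation*}
\sT^{*}\mY^{\perp} = \hat{z} - \hsC_{XX}\hsC_{XX}^{+}\hat{z} = (I - \hsP_{R})\hat{z}.
\end{equation*}
Since the range of $\hsC_{XX}^{+}$ is spanned by $\htheta_{1},\dots,\htheta_{R}$ and $(I-\hsP_{R})\hat{z}$ is orthogonal to every $\htheta_{r}$ with $r \le R$, we have $\hsC_{XX}^{+}(I-\hsP_{R}) = 0$, and~\eqref{eqn:regression:perp} follows after applying $\sT$.

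The main obstacle is essentially bookkeeping: several rank-$R$ projection-type operators ($\tsP_{R}$, $\hsP_{R}$, $\tsC_{XX}^{+}\tsC_{XX}$, $\hsC_{XX}^{+}\hsC_{XX}$) must be identified and tracked correctly, and one must make sure the range condition $\hPsi \in \mathrm{span}\{\htheta_{1},\dots,\htheta_{R}\}$ is used at the right moment to convert $\hsC_{XX}^{+}\hsC_{XX}\hPsi$ back to $\hPsi$ without losing any pieces. Once these identifications are made, the rest reduces to an algebraic rearrangement on the Hilbert-space side followed by a single application of $\sT$.
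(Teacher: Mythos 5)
Your proof is correct and follows essentially the same route as the paper's: both rest on the identities $\tsC_{XX}^+\tsC_{XX}=\tsP_R$ and $\hsC_{XX}^+\hsC_{XX}=\hsP_R$, the identification $\hPsi=\sT^+\mY$, and the orthogonality of $\mY^\perp$ to the leading principal subspace. Your choice to substitute $\mY=\sT\hPsi+\mY^\perp$ only inside $\widetilde{\Psi}$ and to verify the (slightly stronger) identity $\hsC_{XX}^+\sT^*\mY^\perp=0$ on the Hilbert-space side is just streamlined bookkeeping of the paper's decomposition $\mY=\mY^{\parallelsum}+\mY^\perp$ via the score-vector projection $\sum_{r=1}^R\vxi_r\otimes\vxi_r$, since $\sT\hPsi=\mY^{\parallelsum}$.
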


Lemma~\ref{lemma:regErrorDeco} shows that the subsampling error $ \sT\widetilde{\Psi} -\sT\hPsi $ can be decomposed into three interpretable terms. The first term (I)  accounts for the subsampling estimation error of the principal subspace $\tsP_R-\hsP_R$. The second term (II) arises  due to the difference between
the truncated inverses  $\tsC^+_{XX} - \hsC^+_{XX}$. 
The third term (III) can be interpreted as follows.
Notice we have $\sT \hsC_{XX}^+\sT^* \mY^{\perp} = \vzero$ by the last conclusion of Lemma~\ref{lemma:regErrorDeco}. This relation holds due to the similar reason  that the residual is orthogonal to the column space of the design matrix for classical linear regression. In fact, \eqref{eqn:regression:perp} holds because our $\mY^\perp $ is 
orthogonal to the space spanned by the first $R$ score vectors , i.e., $\text{span}(\vxi_1,\cdots, \vxi_R)$
with  $\vxi_{r} = (\hxi_{1r},\cdots, \hxi_{Nr})^T$. This orthogonality is violated due to subsampling, which explains the appearance of the last term (III) in~\eqref{eqn:regression:errorterm23}. With Lemma~\ref{lemma:regErrorDeco} and the operator perturbation expansion techniques in Section~\ref{subsec:fpca:operator}, we can establish the following   result for the randomized FLR. 

\begin{theorem} \label{thm:regression}
	Under Condition~\ref{assumption:eigenvalue},
	set $Z_1 =\hsigma_1^6K_R / ( g_R ^2\hsigma_R^2) $ and $Z_2 = \beta(R+\Delta_R)$. In addition, let $V=  2 +   (2+G_R^2)Z_2^2/\beta$ and 
	$L= \sqrt{2R}+ ( \sqrt{2}+ G_R/\sqrt{2} )Z_2$.
	Suppose $C$ subsamples are obtained according to the  probability 
	$\{p_n\}$ satisfying
	$p_n \ge  p_n^{\mathrm{Exact}}/\beta$
	with some $\beta\ge 1$. Then with probability at least
	\begin{align} \label{thm:regression:eventprob}
		&1 - 16(R + \Delta_R)  \exp\Big(-\frac{C\epsilon^2}{V +L\epsilon/3 }\Big)
		- \frac{3Z_2}{\epsilon^2 C},
	\end{align}
	it holds that
	\begin{align}
		\| \sT\widetilde{\Psi} - \sT\hPsi  \|_N & \le 
		\underbrace{ \big( \epsilon+ \hsigma_RZ_1/\hsigma_1 \epsilon^2\big)  \|\mY \|_N }_{a} 
		+\underbrace{4\big(\epsilon+  Z_1\epsilon^2 \big)   \|\mY^{\perp}\|_N}_{b}.
		\label{thm:regression:bound}
	\end{align}
	In the above, $\epsilon$ satisfy $\epsilon  \ge   \sqrt{V/C} + L/(3C)$ and $\epsilon\le  \min\{1, g_R/3\}$. 
\end{theorem}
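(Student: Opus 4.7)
The plan is to apply Lemma~\ref{lemma:regErrorDeco} to split $\sT\widetilde{\Psi}-\sT\hPsi$ into the three interpretable pieces $\mathrm{I},\mathrm{II},\mathrm{III}$, bound each by a device tailored to its structure, and combine via the triangle inequality together with a union bound. The key geometric facts driving the whole analysis are that $\sT^{+}\mY=\hPsi$ lies in $\mathrm{range}(\hsP_R)$, that $\sT\hsC_{XX}^{+}\sT^{*}$ is precisely the $\langle\cdot,\cdot\rangle_N$-projection onto $\mathrm{span}(\vxi_1,\ldots,\vxi_R)$, and that $\mY^{\perp}$ is orthogonal to this span (a strengthening of~\eqref{eqn:regression:perp} that also yields $\hsC_{XX}^{+}\sT^{*}\mY^{\perp}=\vzero$).

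For term $\mathrm{I}=\sT(\tsP_R-\hsP_R)\hPsi$, I would decompose $\tsP_R-\hsP_R=L_R(\sE)+S_R(\sE)$ as in~\eqref{eqn:projerror:L}--\eqref{eqn:projerror:S}. The linear piece $\sT L_R(\sE)\hPsi$ is a mean-zero average of $C$ independent subsampled summands, because $L_R$ is linear in $\sE$ and $L_R(\hsC_{XX})=0$ (each coordinate integral $\oint_{\Gamma_R}\hsigma_r^{2}(\hsigma_r^{2}-\eta)^{-2}\mathrm{d}\eta$ vanishes by the double-pole residue formula). A vector Bernstein inequality, applied in the same spirit as in the proof of Theorem~\ref{thm:projectionOperator}, then produces the clean first-order bound $\epsilon\,\|\mY\|_N$, avoiding the spurious $\hsigma_1/\hsigma_R$ factor that the naive bound $\hsigma_1\|\tsP_R-\hsP_R\|\,\|\hPsi\|$ would give. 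The higher-order piece $\sT S_R(\sE)\hPsi$ is bounded by Lemma~\ref{lemma:lsbound}(ii) together with $\|\sT\|_{\mathrm{op}}\le\hsigma_1$ and the Bessel bound $\|\hPsi\|\le\|\mY\|_N/\hsigma_R$ (applied to the orthonormal scores $\{\vxi_r\}_{r\le R}$), giving the $\hsigma_R Z_1/\hsigma_1\cdot\epsilon^{2}\|\mY\|_N$ contribution on the same Bernstein event.

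For term $\mathrm{III}=\sT\hsC_{XX}^{+}\sT^{*}\sD^{*}\sD\mY^{\perp}$, the projection identity gives
\[
\|\mathrm{III}\|_N^{2}=\sum_{r=1}^{R}\bigl\langle\sD^{*}\sD\mY^{\perp},\vxi_{r}\bigr\rangle_N^{2},
\]
and each inner product is a centered mean-of-$C$ average of $\{Y_{i_c}^{\perp}\hxi_{i_c,r}/(Np_{i_c})\}_c$, since $\langle\mY^{\perp},\vxi_r\rangle_N=0$ and $\Expect[\sD^{*}\sD]=I_N$. A direct Chebyshev argument, using the FunPrinSS lower bound $1/p_n\le\beta\cdot N(R+\Delta_R)/w_n$ (where $w_n$ is the numerator of~\eqref{eqn:prob:FPCAProptoExact}) together with $\hxi_{nr}^{2}\le w_n$ to bound each per-coordinate variance by $\beta(R+\Delta_R)\|\mY^{\perp}\|_N^{2}/C$, produces the $3Z_2/(\epsilon^{2}C)$ deviation factor in~\eqref{thm:regression:eventprob} and contributes at most $\epsilon\|\mY^{\perp}\|_N$ to~\eqref{thm:regression:bound}.

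Term $\mathrm{II}$ is the expected main obstacle. The difficulty is that $\tsC_{XX}^{+}-\hsC_{XX}^{+}$ is not a standard Neumann perturbation, because the two truncated inverses live on distinct $R$-dimensional subspaces. I would split $\sD^{*}\sD\mY^{\perp}=\mY^{\perp}+(\sD^{*}\sD-I)\mY^{\perp}$ to reduce $\mathrm{II}$ to a deterministic part $\sT(\tsC_{XX}^{+}-\hsC_{XX}^{+})\sT^{*}\mY^{\perp}$ plus a stochastic correction. The deterministic part simplifies via $\hsC_{XX}^{+}\sT^{*}\mY^{\perp}=\vzero$ and the algebraic identity $\tsC_{XX}^{+}(I-\hsP_R)\hat{z}=\tsC_{XX}^{+}\sT^{*}\mY^{\perp}$, reducing it to a quantity controlled by $\|\tsP_R-\hsP_R\|$ and $\|\sT^{*}\mY^{\perp}\|\le\hsigma_R\|\mY^{\perp}\|_N$; the $\hsigma_R$ factor cancels the $1/\tsigma_R^{2}$ from $\tsC_{XX}^{+}$ up to the Theorem~\ref{thm:projectionOperator} eigenvalue-shift control. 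For the stochastic correction I would use the resolvent representation $\tsC_{XX}^{+}-\hsC_{XX}^{+}=-\frac{1}{2\pi\mathrm{i}}\oint_{\Gamma_R}\eta^{-1}\sR_{\hsC_{XX}}(\eta)\sE\sR_{\tsC_{XX}}(\eta)\mathrm{d}\eta$, expand $\sR_{\tsC_{XX}}(\eta)=\sR_{\hsC_{XX}}(\eta)+O(\sE)$, and apply a second Bernstein bound to $(\sD^{*}\sD-I)\mY^{\perp}$ projected onto the leading eigenspace. Union-bounding the two Bernstein events from $\mathrm{I}$ and $\mathrm{II}$ into $16(R+\Delta_R)\exp\bigl(-C\epsilon^{2}/(V+L\epsilon/3)\bigr)$ and adding the Chebyshev event from $\mathrm{III}$ produces~\eqref{thm:regression:eventprob}; the explicit values of $V$ and $L$ drop out of tracking second-moment and essential-supremum constants of the Bernstein summands under the FunPrinSS constraint $p_n\ge p_n^{\mathrm{Exact}}/\beta$.
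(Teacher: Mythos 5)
Your overall architecture matches the paper's: decompose via Lemma~\ref{lemma:regErrorDeco}, treat the first-order parts of terms (I) and (II) by operator-Bernstein bounds on sums of independent subsampled summands, treat the higher-order parts by Lemma~\ref{lemma:lsbound}, and handle the $\mY^{\perp}$-dependent quantities by Markov/Chebyshev (which is exactly where the $3Z_2/(\epsilon^2 C)$ term comes from). Your treatment of term (I) is essentially the paper's Lemma~\ref{lemma:regression:part1}, and your projection identity for term (III) is the paper's~\eqref{eqn:lemma:yperpResult3} (though note that bounding each of the $R$ coordinate variances separately by $Z_2\|\mY^{\perp}\|_N^2/C$ and summing gives $RZ_2$, not $Z_2$; you must sum $\sum_{r\le R}\hxi_{nr}^2\le w_n$ \emph{inside} the variance computation to recover the stated constant).

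The genuine gap is in your ``deterministic part'' of term (II). After reducing to $\sT\tsC_{XX}^{+}(\tsP_R-\hsP_R)(I-\hsP_R)\hat z$, the chain you describe --- $\|\tsP_R-\hsP_R\|$ times $\|\sT^{*}\mY^{\perp}\|\le\hsigma_R\|\mY^{\perp}\|_N$ with ``the $\hsigma_R$ factor cancelling the $1/\tsigma_R^{2}$'' --- accounts only for $\hsigma_R/\tsigma_R^2\asymp 1/\hsigma_R$ and still leaves the prefactor $\|\sT\|\le\hsigma_1$, so you land at roughly $(\hsigma_1/\hsigma_R)\,\epsilon\,\|\mY^{\perp}\|_N$ for the first-order contribution, which is not dominated by the claimed $4\epsilon\|\mY^{\perp}\|_N$ when $\hsigma_1/\hsigma_R$ is large. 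To close this you either need the \emph{joint} bound $\|\sT\tsC_{XX}^{+}\|\le O(1/\hsigma_R)$ (using $\langle\tsC_{XX}^{+}u,\hsC_{XX}\tsC_{XX}^{+}u\rangle\le\langle u,\tsC_{XX}^{+}u\rangle+\|\sE\|\|\tsC_{XX}^{+}u\|^2$ on the event $\|\sE\|\le g_R/3$), which you do not state, or you need the paper's device: expand $\tsC_{XX}^{+}-\hsC_{XX}^{+}=P_R(\sE)+Q_R(\sE)$ via the contour integral with the $1/\eta$ weight and bound the sandwiched first-order operator $N\sV_1 P_R(\sE)\sV_2$ directly by operator Bernstein, where the factorizations $\sT=\sU_1\sV_1$ and $\sT^{*}=\sV_2\sU_2$ carry the eigenvalue weights so that the $1/\hsigma_r^2$ scales cancel exactly and the variance proxy is of order $Z_2^2$ with no $\hsigma_1/\hsigma_R$ inflation. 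Relatedly, your ``second Bernstein bound'' on $(\sD^{*}\sD-I)\mY^{\perp}$ cannot be applied jointly with the random operator $\tsC_{XX}^{+}-\hsC_{XX}^{+}$, since both depend on the same draw; the paper decouples them by bounding the operator norm $\|\sV_1P_R(\sE)\sV_2\|$ and the vector norm $\|\sU_2\sD^{*}\sD\mY^{\perp}\|_N$ on separate high-probability events and multiplying.
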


Theorem~\ref{thm:regression} directly implies that, given a  subsample size $C$, we can control  $\|\sT\widetilde{\Psi}- \sT\hPsi \|_N$ within a  precision level $\epsilon$ in~\eqref{thm:regression:bound} with high probability. 	For a  large enough $C$ and with probability at least $0.9$, the bound~\eqref{thm:regression:bound} holds for the precision level $\epsilon$  of magnitude order $\sO\big((R+\Delta_R) \log^{1/2}[320(R+\Delta_R)]/C^{1/2}\big)$.

The right hand side of~\eqref{thm:regression:bound} contains two parts.  The first term~(a) bounds the subspace difference~(I) in~\eqref{eqn:regression:errorterm23}, and it is an error term relative to $\|\mY\|_N$. The second term~(b) bounds~(II) and~(III) in~\eqref{eqn:regression:errorterm23} together, and its magnitude is relative to  $\|\mY^\perp\|_N$.  

To help get a better understanding of the two terms on the right hand side of ~\eqref{thm:regression:bound}, we can  relate them to similar quantities in finite dimensional problems. For the randomized classical linear regression \citep{drineas2011faster}, the error bound can be simplified to one term containing $\|\mY^\perp\|_N$ . This is because their corresponding predictors $x_n$ span a finite dimensional space. Their work agrees with our result when $\mathrm{dim}(\sH_X)$  is finite and relatively small, and we directly take $R$ to be $\mathrm{dim}(\sH_X)$ without truncation. Indeed, with large enough subsample size $C$, the finite dimensional space can be easily recovered, and   $\hsP_R=\tsP_R$ is guaranteed with high  probability.  This implies that the term~(I) will  vanish in~\eqref{eqn:regression:errorterm23} for linear regression, and consequently   the first term~(a) will disappear in~\eqref{thm:regression:bound}.  The result~\eqref{thm:regression:bound} would become  an error bound whose magnitude is only related to the size of the residual, $\|\mY^\perp\|_N$. 

When $\mathrm{dim}(\sH_X)$  is finite and $R<\mathrm{dim}(\sH_X)$, our  randomized FLR  corresponds  to  randomized multivariate principal component regression (PCR). In the literature,  additive error bound relative to $\|\mY\|_2$  has been established for the randomized PCR   \citep[see. Theorem~5 of][]{mor2019sketching}.  Our theoretical bound~\eqref{thm:regression:bound} reveals different sources of uncertainties  whose sizes are proportional to $\|\mY^\perp\|_N$ or $\|\mY\|_N$. This is a consequence of the  subsample error decomposition  in Lemma~\ref{lemma:regErrorDeco}.

Notice the precision parameter $\epsilon$ has an upper bound $g_R/3$ in the theorem. 	This is because we need to control the difference between the truncated inverses $\tsC^+_{XX} - \hsC^+_{XX}$. The difference is generally unbounded unless 	we require $\|\tsC_{XX} - \hsC_{XX}\| $ to be smaller than the eigengap $g_R$. Thereby, the theorem is derived conditional on the event
$\|\tsC_{XX} - \hsC_{XX}\| \le (1/3)  g_R$, 	which holds at least with the stated probability~\eqref{thm:regression:eventprob}.

\section{Simulation Study} \label{sec:simu}

This section conducts a simulation study for assessing the  randomized functional principal component analysis in Algorithm~\ref{alg:sampleFPCA} and the randomized functional linear regression in Algorithm~\ref{alg:sampleFLR}. Specifically, for these randomized algorithms, we will consider three sampling probabilities: 1) the uniform sampling probability (UNIF), $p_n = 1/N$; 2) the importance sampling probability \citep[IMPO,][]{he2020randomized}, $p_n  = \| x_n\|^2 / \sum_{m = 1}^N\Vert x_m\Vert^2$; 3) the proposed  functional principal subspace sampling probability (FunPrinSS)  estimated by Algorithm~\ref{alg:sampleProb} with $\alpha = 0.5$.  These sampling probabilities will be supplied to the randomized algorithms, and the accuracy of the subsampled estimation will be compared.

\subsection{Data Generation}
\label{sec:simu:data}

In each replication of the simulation study, we will generate one synthetic dataset $\{x_n\}_{n = 1}^N$ as follows. Each observation $x_n$ is a continuous function defined over the compact interval $T = [0,1]$. In particular, it is generated according to $x_n(t) = \sum_{r = 1}^{50} \sigma_r\xi_{nr}\theta_r(t)$, where the $\theta_r$ is set as the  Fourier function which is $\theta_r(t)=\sqrt{2}\sin(2\pi r t)$ when $r$ is odd, and  $\theta_r(t)=\sqrt{2}\cos(2\pi r t)$ when $r$ is even. The scores $\xi_{nr}$'s are independent random variables with zero  mean  and unit variance.  	For the random function $x_n$, its covariance operator $\sC_{XX}$  is determined by the integral 	$(\sC_{XX} u)(t') = \int_0^1 k(t, t') u(t) \intd t$ for any $u\in L_2([0,1])$, with integral kernel $k(t, t') = \sum_{j=1}^p \sigma_r^2 \theta_r(t) \theta_r(t')$. We can see  the $r$-th eigenvalue and eigenfunction of $\sC_{XX}$ are $\sigma_{r}^2$ and $\theta_r$, respectively.

The settings  to specify the eigenvalues $\{\sigma_r^2\}_{r = 1}^{50}$ and the distribution of $\xi_{nr}$ is described in Table~\ref{tab:simu:paras}.  
The eigenvalues  $\{\sigma^2_r\}_{r = 1}^{50}$ are specified as two cases: \texttt{Exponential Decay (ED)} and \texttt{Polynomial Decay (PD)}. The decline rate of the PD eigenvalues is slower than the ED eigenvalues. Hence, the eigengap for the PD setting is smaller, and the estimation of the subspace spanned by the first R eigenfunctions is more difficult. 

Similar to the simulation setting of~\cite{ma2015statistical}, the random scores, $\xi_{nr}$'s, are independently drawn from one of the three distributions: \texttt{Nearly Uniform (NU)} distribution, for which $\xi_{nr}$'s are drawn independently from standard normal distribution $N(0,1)$;  \texttt{Moderately Nonuniform (MN)} distribution,  where the scores $\xi_{nr}$'s are independently generated by t-distribution with three degree of freedom and scaled to unit variance; and  \texttt{Very Nonuniform (VN)} distribution, where $\xi_{nr}$ follows t-distribution with one degree of freedom and scaled to unit variance. Notice among these three distributions, the score distribution for \texttt{NU} has the lightest tail, while has the heaviest tail for \texttt{VN}.  As a result, the functions in a synthetic dataset from \texttt{NU} will behave more homogeneously.  On the other hand,  the synthetic datasets from \texttt{MN}  or \texttt{VN}  are likely to contain some observations with extreme magnitude. \cite{ma2015statistical} note  observations in the VN setting could have drastically different leverages, adversely affecting the naive uniform sampling.

\begin{table}[t]
	\centering
	\caption{Simulation Parameter Specification.  \label{tab:simu:paras}}
	\resizebox{1\textwidth}{!}{
		\begin{tabular}{c|c|c}
			{\bf Parameter} & {\bf Type} & {\bf Description} \\
			\hline
			\hline
			\multirow{2}{*}{Eigenvalue $\{\sigma^2_r\}_{r = 1}^{50}$} & \texttt{Exponential Decay (ED)} & $\sigma^2_r = c\cdot\kappa^{r}$ with $c=2^{51}$ and $\kappa=0.5$\\
			& \texttt{Polynomial Decay (PD)} &  $\sigma^2_r = c\cdot r^{-\kappa}$ with $c=100$ and $\kappa=1.5$\\
			\hline
			\multirow{2}{*}{Score $\xi_{nr}$'s} & \texttt{Nearly Uniform (NU)} & standard normal distribution $N(0,1)$\\
			& \texttt{Moderately Nonuniform (MN)} & t-distribution with three degree of freedom and scaled to unit variance\\
			& \texttt{Very Nonuniform (VN)} & t-distribution with one degree of freedom and scaled to unit variance\\
			\hline
			
		\end{tabular}
	}
\end{table}

By combining two different choices of eigenvalues (ED or PD) and three different choices of score distributions (NU, MN, or VN), we have totally six distinct ways to generate a full sample dataset $\{x_n\}_{n=1}^N$ of size $N = 10,000$. In each replicate of the simulation, one dataset will be generated in a specified way.  The procedure is repeated  $1000$ times to report the average loss and associated standard errors based on the subsample size $C \in \{100,300,500,700,1000,3000,5000,7000\}$.	We will apply the randomized algorithms to estimate the leading functional principal components in Section~\ref{sec:simu:fpca}  and solve the functional linear regression  in  Section~\ref{sec:simu:flr}.  We present here the results with the ED eigenvalue setting. See Section~\ref{sec:supp:simulation} in the Appendix for the results with the PD eigenvalue setting.

\begin{figure}[t]
	\centering
	\includegraphics[width = 0.9\textwidth, height = 0.45\textwidth]{./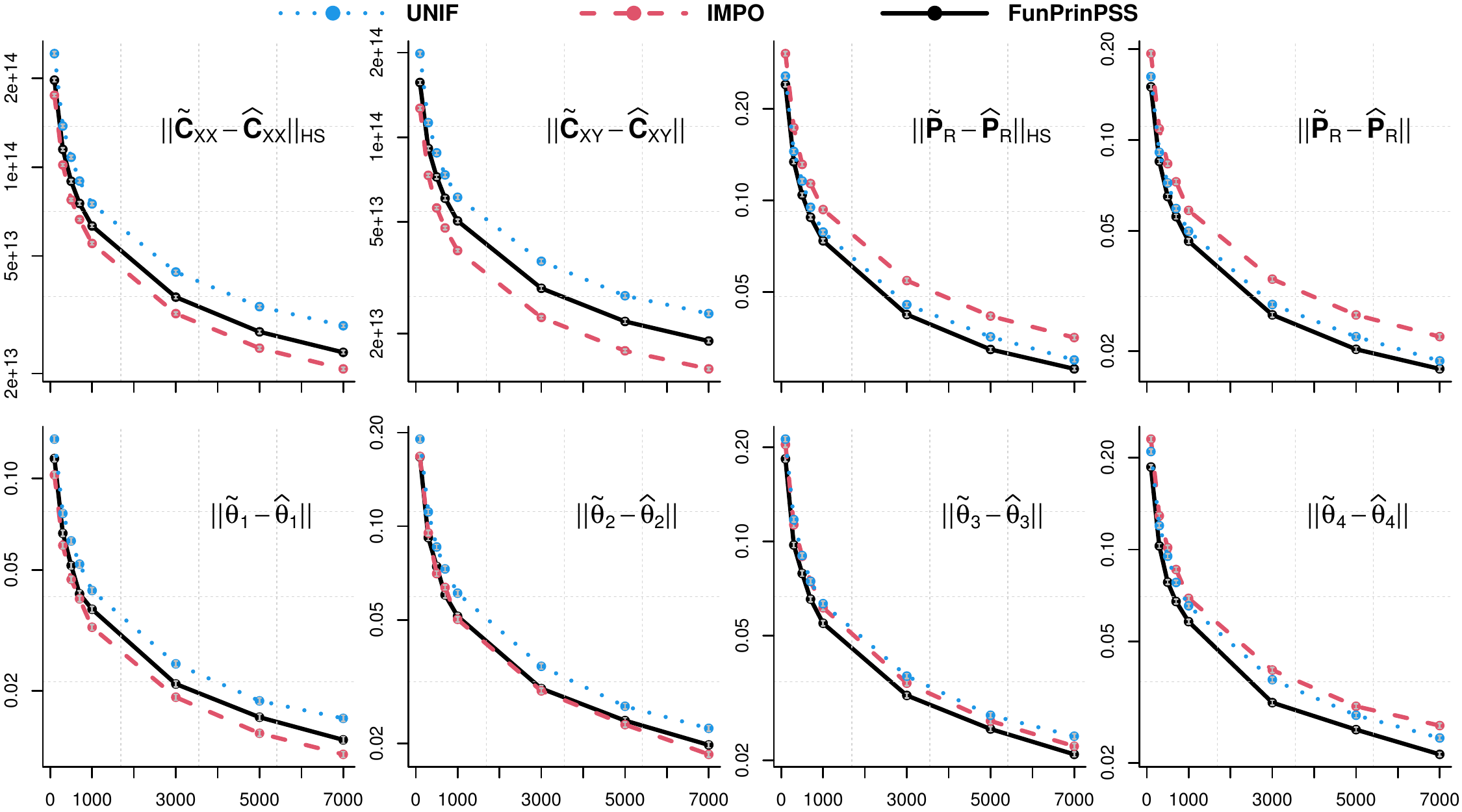} 
	\caption{Randomized covariance operator and FPCA estimation for the ED and NU setting. The vertical axes represent comparison metrics in  $\log_{10}$ scale. The blue dotted, red dashed  and black solid lines correspond to the results of the  UNIF, IMPO and FunPrinSS sampling, respectively.}\label{fig:ErrorNUexp}
\end{figure}
\begin{figure}[ht]
	\centering
	\includegraphics[width = 0.9\textwidth, height = 0.45\textwidth]{./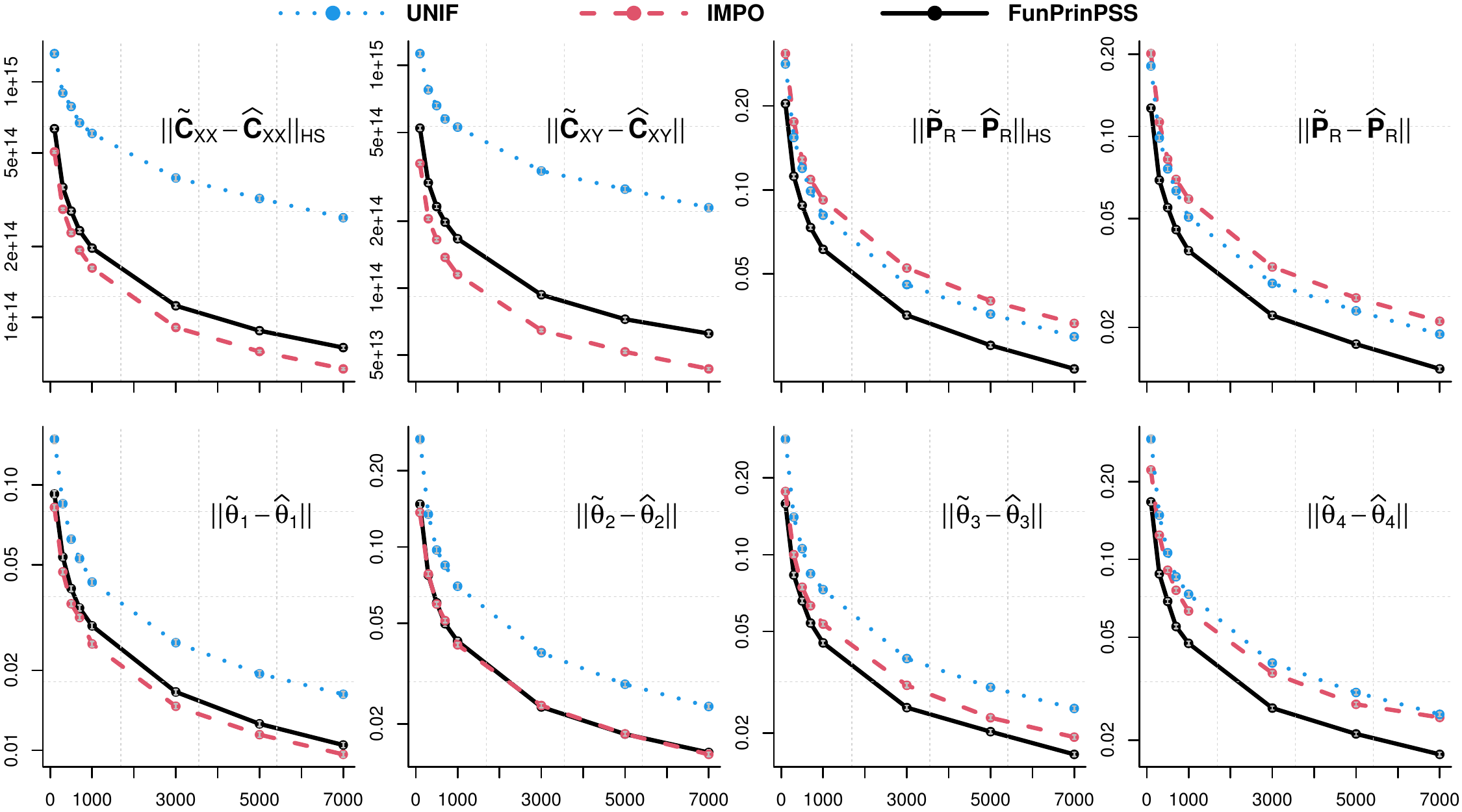} 
	\caption{Randomized covariance operator and FPCA estimation for the ED and MN setting. }\label{fig:ErrorMNexp}
\end{figure}
\begin{figure}[ht]
	\centering
	\includegraphics[width = 0.9\textwidth, height = 0.45\textwidth]{./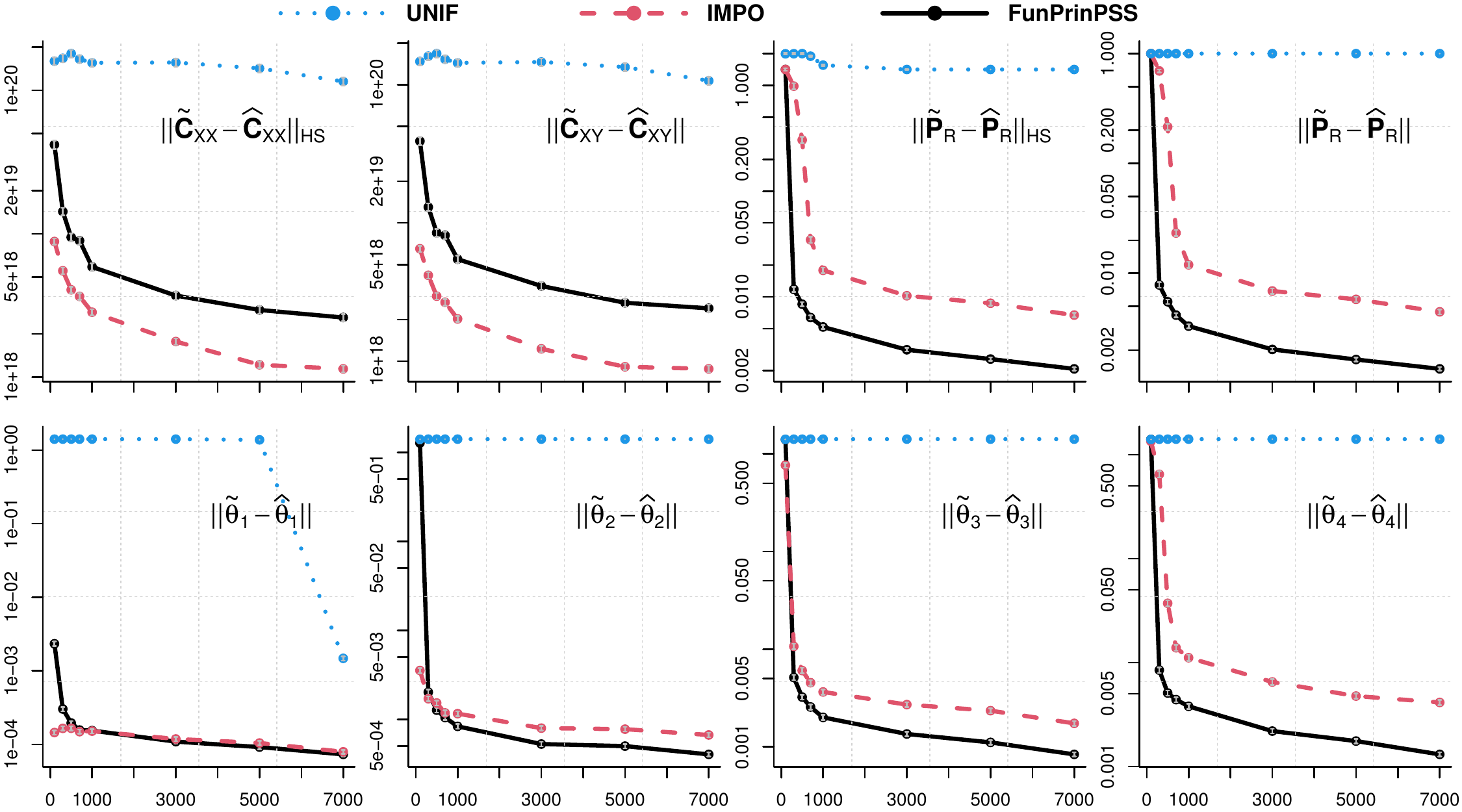} 
	\caption{Randomized covariance operator and FPCA estimation  for the ED and VN setting.}\label{fig:ErrorVNexp}
\end{figure}

\subsection{Randomized FPCA}
\label{sec:simu:fpca}

To assess the randomized FPCA algorithm, we  supply Algorithm~\ref{alg:sampleFPCA} with the three sampling probabilities discussed at the beginning of this section.   In particular, we will	estimate  the first $R=5$ functional principal components ($\theta_1,\cdots, \theta_5$) and their spanned subspace. A variety of criteria are used to assess the algorithm performance. The functional principal subspace estimation error is assessed via the  Hilbert-Schmidt norm and the operator norm  $\| \tsP_R - \hsP_R\|_{HS}$ and $\| \tsP_R - \hsP_R\|$ for the functional principal subspace projection operator. The estimation error for each principal component function will be computed by $\|\ttheta_r - \text{sign}(\langle\ttheta_r,\htheta_r\rangle)\htheta_r\|$. As a byproduct,  the subsampled covariance operator errors $\|\tsC_{XX}-\hsC_{XX}\|_{HS}$ and $\|\tsC_{XX}-\hsC_{XX}\|$ are also reported.
Notice these  criteria are assessing the differences between the subsampled estimates and the full sample estimates, as developed in our theoretical results.

The simulation results with exponentially decaying (ED) eigenvalues are presented in  Figures~\ref{fig:ErrorNUexp}--\ref{fig:ErrorVNexp}.  The three figures correspond to the cases with NU, MN and VN distributions, respectively. In each figure, the first two panels in the first row  demonstrate the estimation error of $\|\tsC_{XX}-\hsC_{XX}\|_{HS}$ and  $\|\tsC_{XX}-\hsC_{XX}\|$. The third and fourth panels in the first row report the error of functional subspace projection operator  $\| \tsP_R - \hsP_R\|_{HS}$ and $\| \tsP_R - \hsP_R\|$. The estimation error for the first four eigenfunctions are reported in the second row.  In addition, for each panel,  the horizontal axes represent the subsample size $C$, while the vertical axes are the comparison metrics in the logarithm scale with base 10. The blue dotted, red dashed  and black solid lines correspond to the results of UNIF, IMPO and FunPrinSS probability, respectively.

From the figures, we can see that as the subsample size increases, all errors tend to decrease monotonically. There are more points worth emphasizing.  Firstly, UNIF has the worst performance in most of the cases. It almost fails completely for the VN datasets in Figure~\ref{fig:ErrorVNexp}, where the some principal component scores have extreme magnitude. Secondly, the sampling strategy~IMPO performs  best in estimating  $\hsC_{XX}$ and $\htheta_1$, which agrees with the result of \cite{he2020randomized}. It also confirms our observations from Figure~\ref{fig:toyFPCA} in Section~\ref{sec:introduction} that this sampling strategy over-emphasizes the eigenfunction $\htheta_1$ with the most dominant eigenvalues. As for the estimation of $\htheta_3$ and $\htheta_4$, the performance of IMPO is no longer the best.  Especially for $\htheta_4$ with a small eigenvalue, its accuracy can even be    worse  than UNIF, as shown in the last penal of Figure~\ref{fig:ErrorNUexp}. Thirdly, our proposed FunPrinSS has the smallest  error for estimating $\hsP_R$, which provides empirical support for our theoretical results of Theorem~\ref{thm:projectionOperator}. It also leads in the accuracy of estimating $\htheta_2, \cdots, \htheta_4$, and the accuracy remains appealing on the VN dataset with extreme observations in Figure~\ref{fig:ErrorVNexp}. The above shows the result for the ED eigenvalue setting.  The randomized FPCA results have similar interpretation for  the PD eigenvalue setting. See  Section~\ref{sec:supp:simulation} in the Appendix.

\subsection{Randomized FLR}
\label{sec:simu:flr}

\begin{figure}[t]
	\centering
	\includegraphics[width = 0.8\textwidth, height = 0.45\textwidth]{./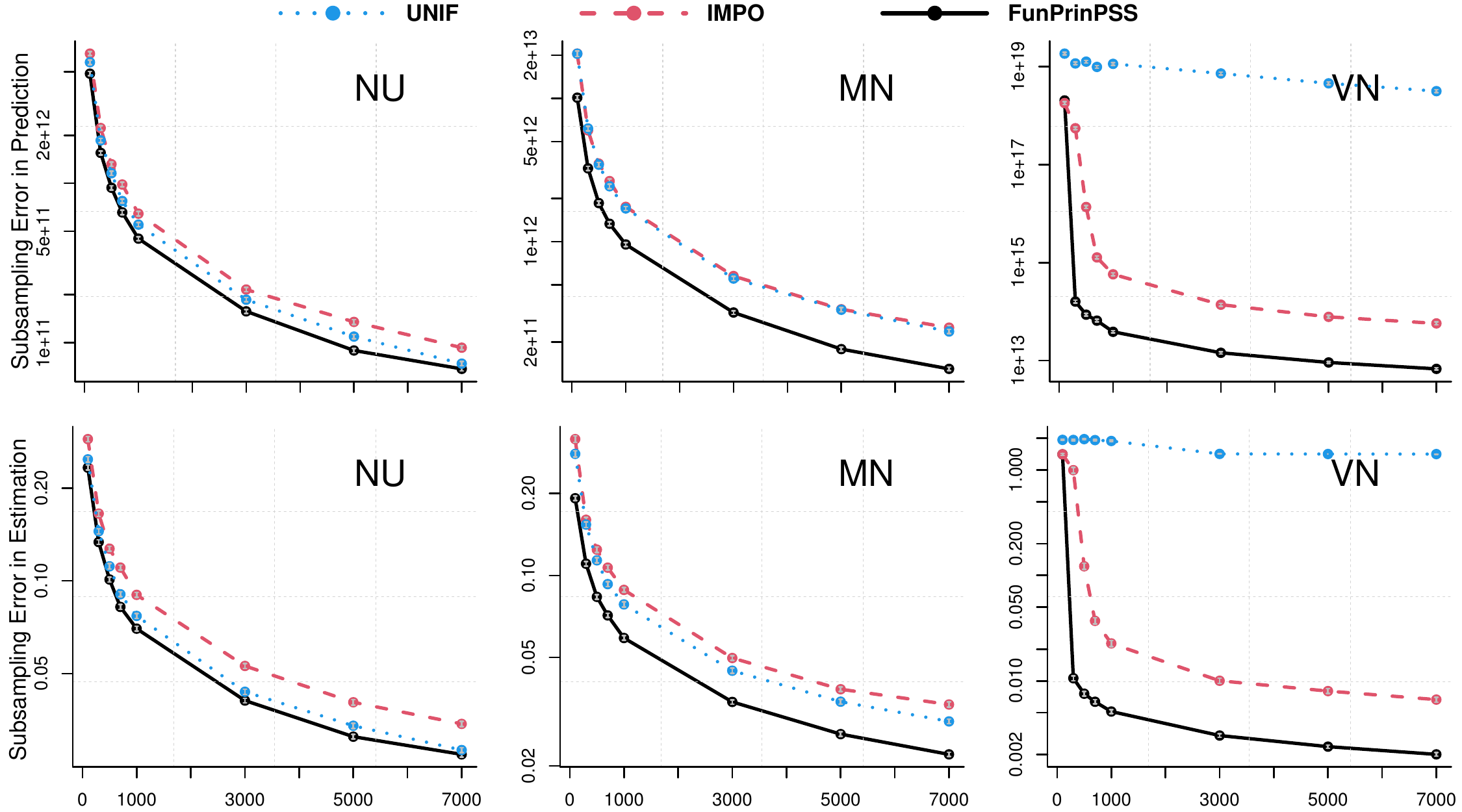} 
	\caption{Randomized FLR for predictors $x_n$ generated with the exponentially decaying eigenvalues. The blue dotted, red dashed  and black solid lines correspond to the results of the UNIF, IMPO and FunPrinSS sampling, respectively.}\label{fig:Regressionexp}
\end{figure}

To examine the performance of the proposed randomized FLR algorithm, we generate data pairs $\{(x_n, Y_n)\}_{n=1}^N$ with $N = 10,000$ as follows. The functional predictor 
$x_n$ is simulated by  the procedure in Section~\ref{sec:simu:data}. The additional scalar response $Y_n$ is set by $Y_n = \langle x_n, \Psi\rangle + \epsilon_n$ with noise $\epsilon_n\sim N(0,1)$.  Besides, the regression function is specified as $\Psi(t) = \sum_{r = 1}^{50}  \theta_r(t)$, where  $\theta_r$'s are the Fourier bases specified in Section~\ref{sec:simu:data}. Recall by combining different choices of eigenvalues (ED or PD), and different choices of score distributions (NU, MN, or VN), six types of  simulation settings are considered.

As in Section~\ref{sec:simu:fpca}, we  consider the same sampling probabilities for Algorithm~\ref{alg:sampleFLR} to obtain the estimator $\widetilde{\Psi}$.  	We consider two metrics to compare these sampling schemes. 	The first is the    prediction error  	$\| \sT\hPsi - \sT\widetilde{\Psi} \|_N^2 = (1/N) \sum_{n=1}^N \langle x_n, \hPsi - \widetilde{\Psi} \rangle^2$ discussed in Section~\ref{subsec:flr:theory}.  	The second is direct estimation error $\|\hPsi-\widetilde{\Psi}\|^2 = \int_0^1 [\hPsi(t)-\widetilde{\Psi}(t)]^2\intd  t$ in $L_2$ norm. The latter  is a stronger norm on the difference between the full sample empirical estimate $\hPsi$ and subsampled estimator $\widetilde{\Psi}$.

The simulation is carried out with 1000 replicates and  the operator inverse is truncated at $R=5$.  Figure~\ref{fig:Regressionexp}  presents the prediction error and the estimation error for the ED eigenvalue setting and different score distributions (NU, MN, VN).  The first row depicts the prediction error, while the second row shows the estimation error. The three columns correspond to the NU, MN and VN score distribution, respectively. For each panel, the horizontal axis is the subsample size $C$, and the vertical axis plots the error in the logarithm scale with base 10. In addition, the blue dotted, red dashed  and black solid lines correspond to the results of UNIF, IMPO and FunPrinSS sampling probabilities, respectively. It is remarkable that the proposed FunPrinSS has the smallest errors in all cases and its performance is stable in the VN datasets with extreme observations.  Between different versions of sampling probabilities, their performance difference enlarges as the tails of the score distributions grow heavier. In some cases (e.g. when the scores has NU distribution), the non-informative uniform sampling (UNIF) is slightly better than the the IMPO sampling. However, UNIF has the worst performance for the VN score distribution.   Figure~\ref{fig:Regressionpoly} in  Appendix~\ref{sec:supp:simulation} presents the simulation results for the PD eigenvalue setting. It also conveys a similar message.

\section{Real Data Analysis}
\label{sec:realdata}

\begin{figure}[t]
	\centering
	\includegraphics[width = 0.64\textwidth]{./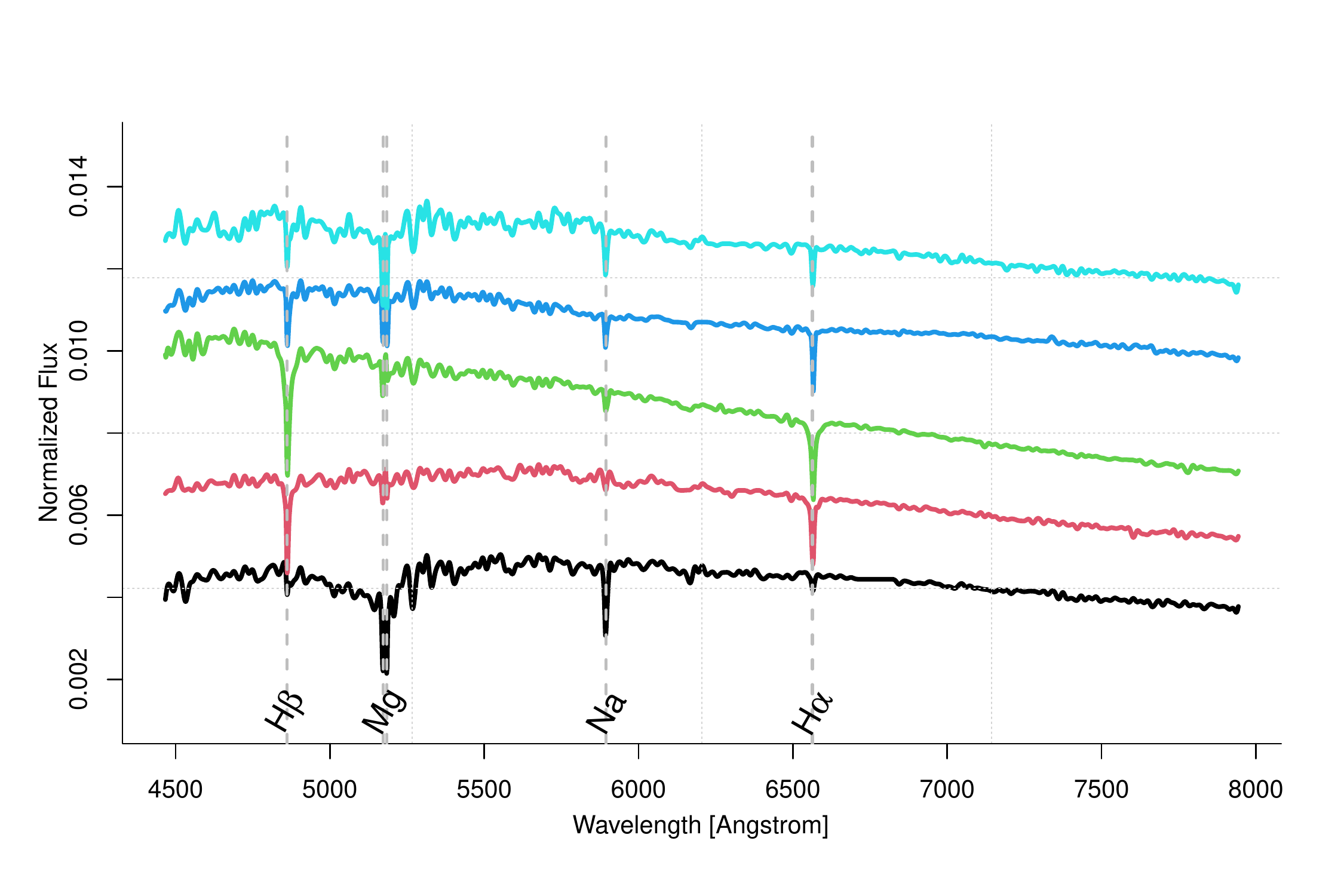} 
	\caption{Five stella spectra from LAMOST with 0.002 offsets in the vertical direction. The horizontal axis is  wavelength (\r{A}) and the vertical axis is  normalized flux.}\label{fig:FiveSpectrum}
\end{figure}

We now illustrate the effectiveness of our randomized algorithm on a stellar spectrum dataset. The dataset contains observations from  the Large sky Area Multi-Object fiber Spectroscopic Telescope \citep[LAMOST,][]{zhao2012lamost}. The LAMOST project began carrying out its  spectroscopic survey of millions of stars and galaxies in 2012. The current LAMOST data release includes more than nine million spectra. Aided by this rapid  progress in the availability of spectra data, we can conduct scientific research on many advanced astrophysical topics, such as exploring  galaxy star generation, tracing the structure and evolution history of our Milky way galaxy and searching for signatures of dark matter distribution and sub-structures in the Milky Way halo.

\begin{table}[t]
	\centering
	\caption{Randomized covariance operator and FPCA  estimation with  $R = 5$ for the stellar spectra dataset. The first column is the comparison metric and the scale of the reported numbers.  In Column 2--10 are the Monte Carlo averages and the standard errors in the parentheses. }
	\label{table:RealDataFPCAError}
	\resizebox{\columnwidth}{!}{%
		\begin{tabular}{  c | c c  c| c c c| c c c  }
			&  \multicolumn{3}{c|}{$C=1,000$} &  \multicolumn{3}{c|}{$C=5,000$}  &  \multicolumn{3}{c}{$C=10,000$}    \\
			\hline 
			\hline 
			
			& UINF & IMPO & FunPrinSS & UINF & IMPO & FunPrinSS & UINF & IMPO & FunPrinSS  \\
			
			\hline 
			$\|\tsC_{XX}-\hsC_{XX}\|_{HS}$ $(\times 10^{-9})$ & 20.75 & {\bf 4.62} & 12.55 & 9.58 & {\bf 2.05} & 5.64 & 7.09 & {\bf 1.44} & 3.91 \\
			&(0.45)&(0.04)&(0.24)&(0.17)&(0.02)&(0.11)&(0.12)&(0.01)&(0.08)\\
			$\|\tsC_{XX}-\hsC_{XX}\|$ $(\times 10^{-9})$ & 19.72  & {\bf 3.52} & 12.30 & 9.05 & {\bf 1.56} & 5.54 & 6.72 & {\bf 1.10} & 3.84 \\
			&(0.46)&(0.03)&(0.25)&(0.18)&(0.01)&(0.11)&(0.12)&(0.01)&(0.08)\\
			$\|\tsP_R-\hsP_R\|_{HS}$ $(\times 10^{-1})$  & 11.65 & 6.45 & {\bf 2.31} & 7.11 & 2.87 & {\bf 1.03} & 5.35 & 2.05 & {\bf 0.73} \\
			&(0.11)&(0.10)&(0.04)&(0.13)&(0.05)&(0.01)&(0.12)&(0.03)&(0.01)\\
			$\| \tsP_R - \hsP_R\|$ $(\times 10^{-1})$ & 7.72 & 4.34 & {\bf 1.52} & 4.76 & 1.93 & {\bf 0.68} & 3.59 & 1.37 & {\bf 0.48} \\
			&(0.08)&(0.07)&(0.03)&(0.10)&(0.04)&(0.01)&(0.08)&(0.03)&(0.01)\\
			$\|\ttheta_1 - \text{sign}(\langle\ttheta_1,\htheta_1\rangle)\htheta_1\|$ $(\times 10^{-2})$ & 1.79 & {\bf 0.77} & 0.91 & 0.89 & {\bf 0.34} & 0.40 & 0.64 & {\bf 0.24} & 0.28 \\
			&(0.04)&(0.01)&(0.01)&(0.01)&(0.00)&(0.00)&(0.01)&(0.00)&(0.00)\\
			$\|\ttheta_2 - \text{sign}(\langle\ttheta_2,\htheta_2\rangle)\htheta_2\|$ $(\times 10^{-2})$& 21.70 & 7.94 & {\bf 5.15} & 7.70 & 3.54 & {\bf 2.31} & 5.35 & 2.49 & {\bf 1.62} \\
			&(0.81)&(0.09)&(0.07)&(0.10)&(0.04)&(0.03)&(0.06)&(0.03)&(0.02)\\
			$\|\ttheta_3 - \text{sign}(\langle\ttheta_3,\htheta_3\rangle)\htheta_3\|$ $(\times 10^{-2})$ & 39.39 & 12.37 & {\bf 6.46} & 16.69 & 4.53 & {\bf 2.88} & 9.55 & 3.20 & {\bf 1.99} \\
			&(1.40)&(0.36)&(0.07)&(0.86)&(0.05)&(0.03)&(0.52)&(0.03)&(0.02)\\
			$\|\ttheta_4 - \text{sign}(\langle\ttheta_4,\htheta_4\rangle)\htheta_4\|$ $(\times 10^{-2})$ & 99.42 & 34.68 & {\bf 9.91} & 53.67 & 14.53 & {\bf 4.29} & 37.89 & 9.93 & {\bf 3.03} \\
			&(0.92)&(0.83)&(0.20)&(1.10)&(0.35)&(0.07)&(0.91)&(0.23)&(0.05)\\
			$\|\ttheta_5 - \text{sign}(\langle\ttheta_5,\htheta_5\rangle)\htheta_5\|$ $(\times 10^{-2})$ & 90.89 & 55.06 & {\bf 18.11} & 61.87 & 24.30 & {\bf 7.93} & 47.71 & 17.08 & {\bf 5.62} \\
			&(1.10)&(0.99)&(0.29)&(1.28)&(0.46)&(0.11)&(1.15)&(0.29)&(0.08)\\
			\hline
		\end{tabular}
	}	
\end{table}

From the the current LAMOST data release, we randomly select a dataset containing $N = 110,000$ stellar spectra  with $r$-band signal-to-noise ratio larger than 35. Each spectrum is normalized to have unit norm and get denoised by a spline wavelet method. 	In addition, the average spectrum is subtracted from each observation  such that the dataset has zero mean. 	We treat this as our full dataset and apply the subsampling algorithms over this dataset. 	Figure~\ref{fig:FiveSpectrum} presents five example spectra from the pre-processed dataset.   In the figure,  the horizontal axis  is  wavelength, and the vertical axis is flux which measures the brightness of an object.  Each spectrum observation can be viewed as functional data, where the flux value $x_n(\lambda)$ varies as a function of  the wavelength $\lambda$.  In Figure~\ref{fig:FiveSpectrum}, several important spectral features are marked as vertical dashed lines. These spectral features are adopted by \cite{liu2015spectral} for spectral classification.  

Characterizing the diversity of  stars and extracting  stellar atmospheric parameters are of key astrophysical importance for astronomical observational studies.  We will apply the randomized FPCA in Algorithm~\ref{alg:sampleFPCA} and the randomized FLR in Algorithm~\ref{alg:sampleFLR} to this dataset.

\subsection{Results for Randomized FPCA}

Dimensional reduction  has been an effective  tools for astronomical spectrum data analysis \citep{bermejo2013pca,McGurk:2010aw}. We  apply the randomized FPCA in  Algorithm~\ref{alg:sampleFPCA} with the three sampling strategies (UNIF, IMPO and FunPrinSS) in Section~\ref{sec:simu}.  We will estimate the leading $R=5$ functional principal components, together with the covariance operator. The parameter $R=5$ is chosen  such that FVE is approximately $98\%$. The improvement of FVE with $R=6$ is marginal (see Figure~\ref{fig:realdata:FVE}).

Based on the whole selected 110,000 spectra, the full sample estimators are computed with brute force.  Besides,	we  apply the randomized algorithm with the subsample size $C$ varying among $1,000$, $5,000$ and $10,000$.  The randomized estimates are compared against  the full sample results by the metrics introduced in Section~\ref{sec:simu:fpca}.  	For each subsample size, the  procedure is repeated $1,000$ times to compute the average estimation loss. 	Table~\ref{table:RealDataFPCAError} presents the resulting average  accuracy and related standard errors. The method performance in Table~\ref{table:RealDataFPCAError} can be interpreted  similarly as that from  Section~\ref{sec:simu:fpca}. The IMPO sampling strategy delivers the smallest error in estimating $\htheta_1$ and $\hsC_{XX}$. However, FunPrinSS has the best performance in estimating the projection operator $\hsP_R$ and the other functional principal components.

\begin{figure}[p]
	\centering
	\includegraphics[width=0.95\textwidth]{./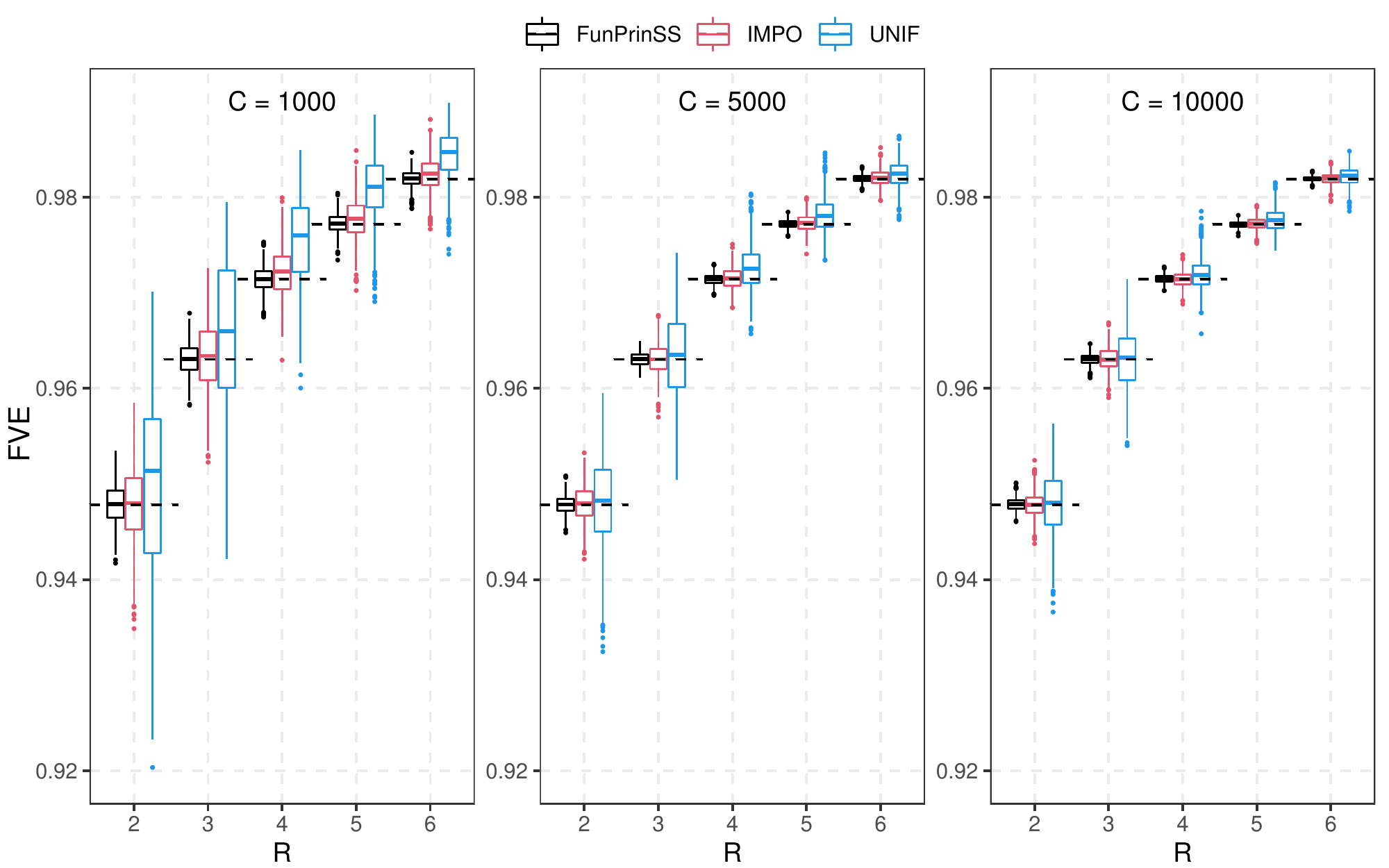}
	\caption{The boxplot of the FVE of  the subsampled functional principal component estimates  under 1000 replicates. The three panels  correspond to the subsample size $C=1000$, $5000$ and $10000$, respectively. In each panel, different sampling probabilities (UNIF, IMPO, FunPrinSS) are compared over various $R$. The FVEs of the full sample estimator are shown as dashed horizontal lines.  }  \label{fig:realdata:FVE}
\end{figure}

\begin{figure}[p]
	\centering
	\includegraphics[width = \textwidth,height = 0.5\textwidth]{./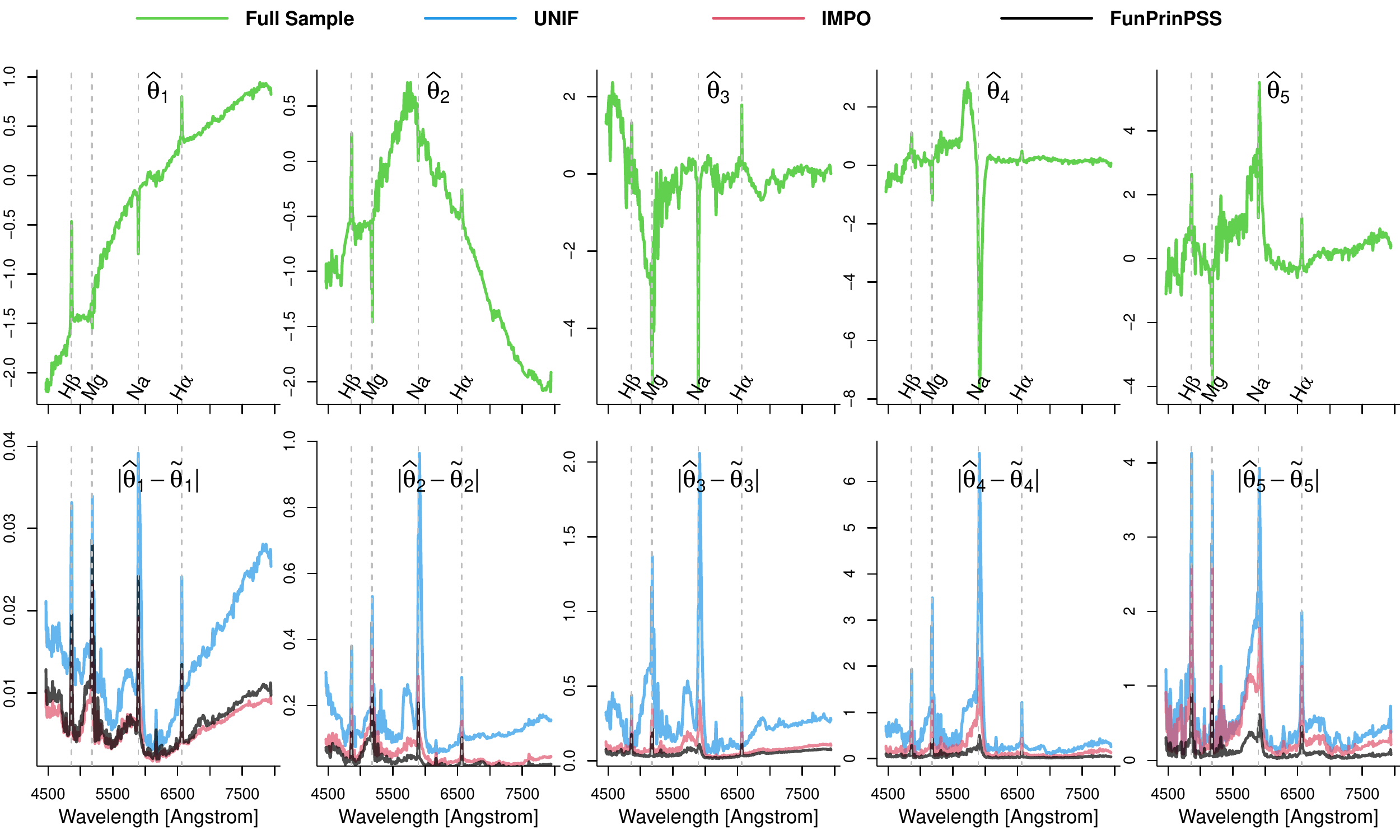} 
	\caption{ The green solid lines (in the first row) represent the leading five eigenfunctions of full sample covariance operator. The blue, red and black solid lines (in the second row) represent the averaged absolute difference between the full sample eigenfunctions  and the subsampled eigenfunctions obtained by UNIF, IMPO and FunPrinSS sampling, respectively. }\label{fig:RealDataEigenFun}
\end{figure}

In Table~\ref{table:RealDataFPCAError}, for the  randomized FPCA with the  FunPrinSS probability, we can see the  Monte Carlo average error $\|\tsP_R - \hsP_R\|\approx 0.152$, $0.068$, $0.048$ for $C=1000$,  $5000$ and $10000$, respectively. Neglecting the second order term ($\epsilon^2$) in~\eqref{thm:projectionOperator:bound} of Theorem~\ref{thm:projectionOperator}, we can find   the expected subsample error $\Expect\|\tsP_R - \hsP_R\|$  is  upper bounded by  a first-order   term with magnitude order $\sO\big((R+\Delta_R) \log^{1/2} (R+\Delta_R)/C^{1/2} \big)$. For $C=1000$,  $5000$ and $10000$, the values of this theoretical order are $\sO\big(0.469)$, $\sO\big( 0.210)$ and $\sO\big( 0.148)$, respectively. The ratio between the Monte Carlo average error and the theoretical first-order approximate magnitude is stable at a constant around $0.32$.

In Figure~\ref{fig:RealDataEigenFun}, we plot the leading five full sample eigenfunctions in the first row of panels,  and the averaged absolute difference between the full sample and  subsampled eigenfunctions in the second row of panels.  The absolute  difference $|\htheta_{r}(\lambda) - \ttheta_{r}(\lambda)|$ at each wavelength $\lambda$ is averaged over   the $1000$ replicates when the  subsample size is $C=1,000$. From the figure, we can compare the ability  of capturing  the spectral features ($H_{\beta}$, $Na$, $Mg$ and $H_{\alpha}$ marked by the vertical dashed lines)  by various sampling probabilities. As expected, the  UNIF sampling (colored by blue) has the largest error for characterizing the eigenfunctions and spectral features. The IMPO  probability (red) is competitive in recovering spectral lines for the first  eigenfunctions  but not so for the rest of eigenfunctions. For the other eigenfunctions,  the  error of  estimating the spectral features by the  FunPrinSS  probability (black) is the smallest.

We  examine the accuracy of 	the FVE of  the subsampled estimator  (see $\widetilde{\mathrm{FVE}}$ in Section~\ref{subsec:fpca:chooseR}). Figure~\ref{fig:realdata:FVE} draws the boxplots for the FVE of the subsampled  functional principal component estimates under 1000 replicates. The three panels correspond to the subsample size $C=1000$, $5000$ and $10000$, respectively. In each panel, the sampling probabilities (UNIF, IMPO, FunPrinSS) are compared over various $R$. The FVEs of the full sample estimator are shown as dashed horizontal lines. From the figure, we can see the  FVE of the subsampled estimator   serves as an estimate of the FVE of the full sample estimator. In particular, the  FVE of the subsampled estimator   with the  FunPrinSS probability is very accurate with small variance.

\subsection{Results for Randomized FLR}

\begin{figure}[p]
	\centering
	\includegraphics[height = 0.55\textwidth,width = 0.85\textwidth]{./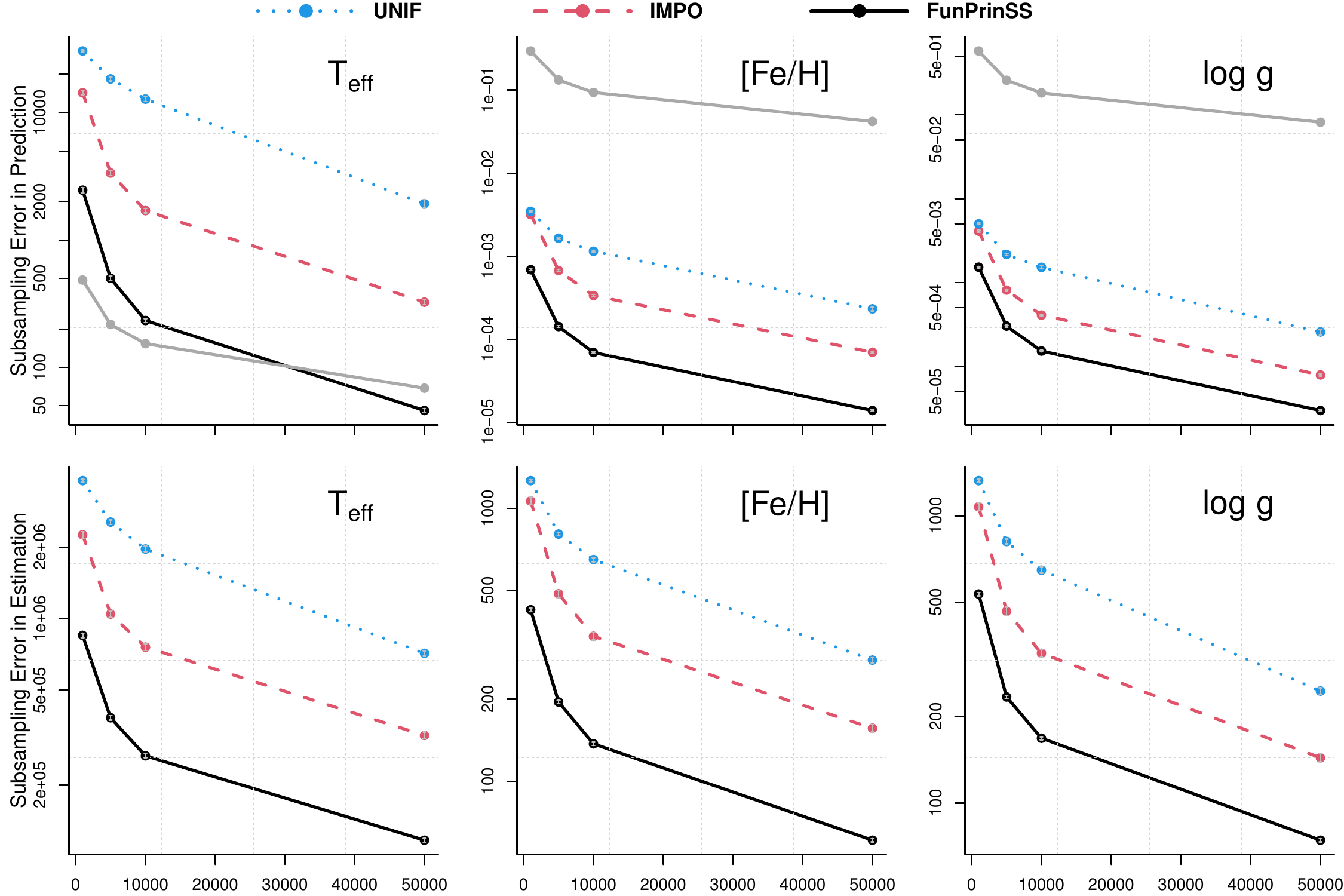} 
	\caption{Subsampling error in estimation and in prediction for the randomized FLR for the LAMOST spectra dataset. The blue dotted, red dashed  and black solid lines correspond to the results of the UNIF, IMPO and FunPrinSS sampling, respectively. The gray solid curves in the first row show  the magnitude $\sO\big((R+\Delta_R) (\|\mY\|_N+\|\mY^{\perp}\|_N) \log^{1/2} (R+\Delta_R)/C^{1/2} \big)$ of  the first-order  upper bound  for the expected subsample error.}\label{fig:RealDataRegression}
\end{figure}
\begin{figure}[p]
	\centering
	\includegraphics[height = 0.55\textwidth,width = 0.95\textwidth]{./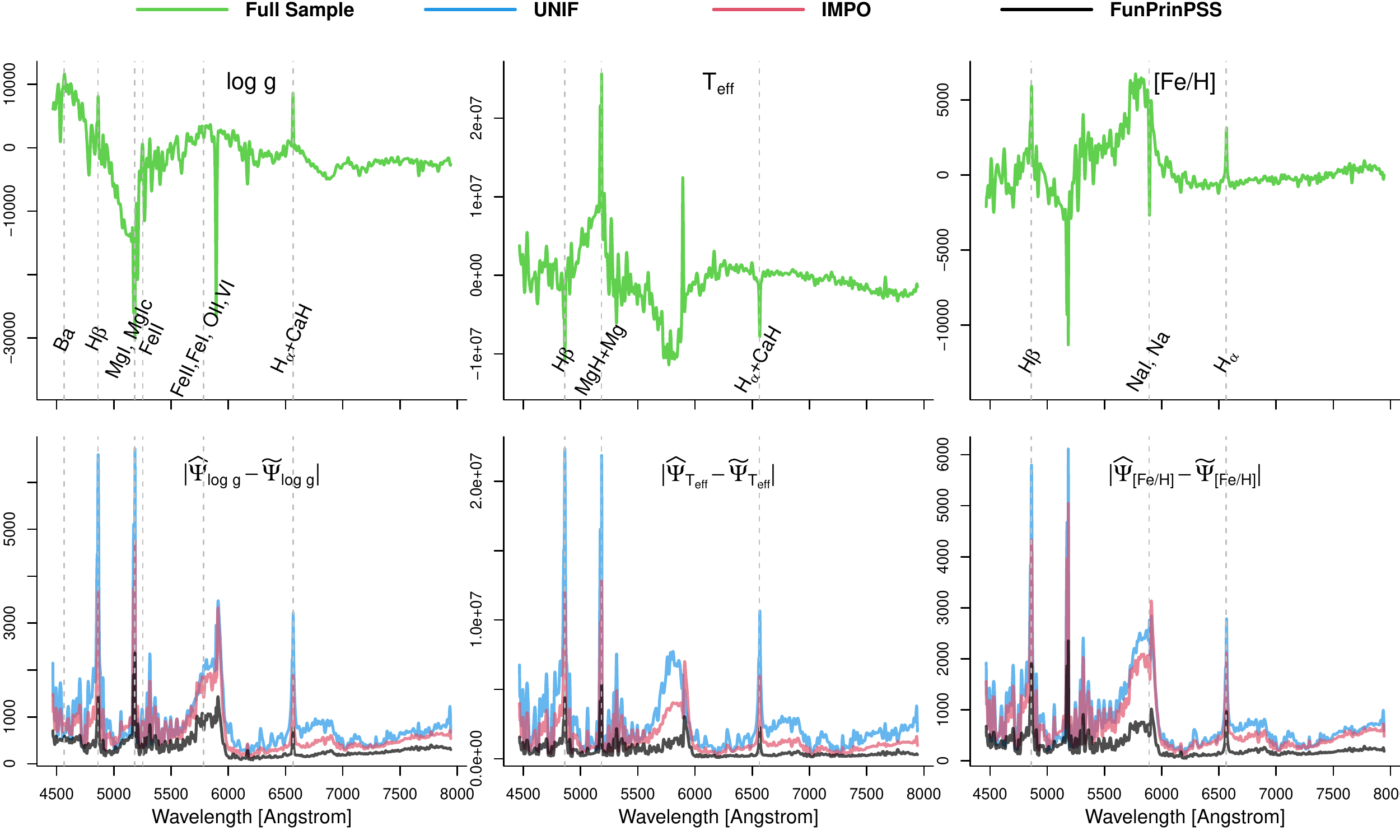} 
	\caption{The comparison of $\widetilde{\Psi}$ obtained by the randomized FLR with different sampling probabilities. The first row (green lines) presents the full sample regression functions $\widehat{\Psi}$ and the second row shows the average absolute difference between the full sample and the subsampled estimates.  The blue, red and black solid lines (in the second row) correspond to the subsampled estimator $\widetilde{\Psi}$'s  obtained by UNIF, IMPO and FunPrinSS, respectively. The left, middle and right  column have the results for the  $\log\, g$, $T_{\text{eff}}$ and [Fe/H] response, respectively.}\label{fig:RealDataRegressionPhi}
\end{figure}

The fundamental stellar atmospheric parameters such as effective temperature ($T_{\text{eff}}$), surface gravity (log $g$), and metallicity ([Fe/H]) are useful probes for revealing the kinematic and chemical properties of  stars. 	We conduct three functional linear regression tasks to compare the different sampling schemes (UNIF, IMPO and FunPrinSS). For each task, the  response $Y_n$ is selected  as one of the  stellar atmospheric parameters: the effective temperature ($T_{\text{eff}}$),  the surface gravity ($\log g$), and the metallicity ([Fe/H]).  In particular, 
the regression model is $Y_n = \int_\Lambda x_n(\lambda) \Psi(\lambda) \intd \lambda + \epsilon_n$. Each spectrum $x_n$ is integrated with the regression function $\Psi$ over the observed wavelength range. Both the response and the spectrum predictor are centered with zero mean, so we do not need to estimate the intercept term.

The accuracy metrics in Section~\ref{sec:simu:flr} are also employed here for comparison
between the full sample and subsampled estimator.  The full sample estimator  $\widehat{\Psi}$ is  obtained via brute force computation on the full dataset.   In Figure~\ref{fig:RealDataRegression}, the subsampling prediction error and estimation error are shown in the first and second row, respectively. 
The three columns of Figure~\ref{fig:RealDataRegression} corresponds to the results with the response $T_{\text{eff}}$, $\log g$, and [Fe/H], respectively.
In each penal, the vertical axis is the error in the scale of $\log_{10}$; and the horizontal axis is the subsample size $C$. These plots suggest that the proposed FunPrinSS method outperforms the other two sampling schemes under all circumstances.  Theorem~\ref{thm:regression} indicates 	the expected subsample error $\Expect\| \sT\widetilde{\Psi} - \sT\hPsi  \|_N $  can be approximately upper bounded by a first-order  term with  magnitude order $\sO\big((R+\Delta_R) (\|\mY\|_N+\|\mY^{\perp}\|_N) \log^{1/2} (R+\Delta_R)/C^{1/2} \big)$. The values of  the theoretical order are plotted as the gray solid curves in  the first row of Figure~\ref{fig:RealDataRegression}. The vertical axis is in the scale of $\log_{10}$, and the theoretical order predicts the  Monte Carlo performance up to an offset.  Note the theoretical order underpredicts  the actual Monte Carlo average error for small $C$, as the higher  order error terms  in Theorem~\ref{thm:regression} have been discarded.

Figure~\ref{fig:RealDataRegressionPhi} plots the full sample regression function estimates $\widehat{\Psi}$ in the first row.  The second row shows  the averaged absolute difference between the full sample estimates $\widehat{\Psi}$ and  the subsampled estimates $\widetilde{\Psi}$ over $1000$ replicates with $C=1,000$.
The work of \cite{li2014sdss} has identified a few spectral features  in determining the stellar atmosphere parameters. Some of these key spectral features are also marked by the vertical dashed lines. The proposed FunPrinSS has generally smaller error in identifying these spectral features (e.g. $H_{\alpha}$ $\lambda 6563$) and provides a better estimation of the feature strength.

\section*{Acknowledgment}
	The authors thank the editor, the associate editor, and an anonymous reviewer for constructive comments that helped significantly improve this work.  The authors also thank Professor Jianhua Z. Huang, Professor Xiangyu Chang and Professor Yixuan Qiu for comments on the initial draft of this work.  Shiyuan He's research was  supported by National Natural Science Foundation of China (No.11801561). This research was supported by Public Computing Cloud, Renmin University of China.

%\bibliographystyle{imsart-nameyear}
%\bibliography{ref}

\begin{thebibliography}{55}
% BibTex style file: imsart-nameyear.bst, 2017-11-03
% Default style options (sort=1,type=nameyear).
% Used options (sort=1,type=nameyear).

\bibitem[\protect\citeauthoryear{Bermejo, Ramos and
  Prieto}{2013}]{bermejo2013pca}
\begin{barticle}[author]
\bauthor{\bsnm{Bermejo},~\bfnm{J~Mu{\~n}oz}\binits{J.~M.}},
  \bauthor{\bsnm{Ramos},~\bfnm{A~Asensio}\binits{A.~A.}} \AND
  \bauthor{\bsnm{Prieto},~\bfnm{C~Allende}\binits{C.~A.}}
(\byear{2013}).
\btitle{A PCA approach to stellar effective temperatures}.
\bjournal{Astronomy \& Astrophysics}
\bvolume{553}
\bpages{A95}.
\end{barticle}
\endbibitem

\bibitem[\protect\citeauthoryear{Cardot, Degras and
  Josserand}{2013}]{cardot2013confidence}
\begin{barticle}[author]
\bauthor{\bsnm{Cardot},~\bfnm{Herv{\'e}}\binits{H.}},
  \bauthor{\bsnm{Degras},~\bfnm{David}\binits{D.}} \AND
  \bauthor{\bsnm{Josserand},~\bfnm{Etienne}\binits{E.}}
(\byear{2013}).
\btitle{Confidence bands for Horvitz--Thompson estimators using sampled noisy
  functional data}.
\bjournal{Bernoulli}
\bvolume{19}
\bpages{2067--2097}.
\end{barticle}
\endbibitem

\bibitem[\protect\citeauthoryear{Cardot, Goga and
  Lardin}{2013}]{cardot2013uniform}
\begin{barticle}[author]
\bauthor{\bsnm{Cardot},~\bfnm{Herv{\'e}}\binits{H.}},
  \bauthor{\bsnm{Goga},~\bfnm{Camelia}\binits{C.}} \AND
  \bauthor{\bsnm{Lardin},~\bfnm{Pauline}\binits{P.}}
(\byear{2013}).
\btitle{Uniform convergence and asymptotic confidence bands for model-assisted
  estimators of the mean of sampled functional data}.
\bjournal{Electronic journal of statistics}
\bvolume{7}
\bpages{562--596}.
\end{barticle}
\endbibitem

\bibitem[\protect\citeauthoryear{Cardot and
  Josserand}{2011}]{cardot2011horvitz}
\begin{barticle}[author]
\bauthor{\bsnm{Cardot},~\bfnm{Herv{\'e}}\binits{H.}} \AND
  \bauthor{\bsnm{Josserand},~\bfnm{Etienne}\binits{E.}}
(\byear{2011}).
\btitle{Horvitz--Thompson estimators for functional data: Asymptotic confidence
  bands and optimal allocation for stratified sampling}.
\bjournal{Biometrika}
\bvolume{98}
\bpages{107--118}.
\end{barticle}
\endbibitem

\bibitem[\protect\citeauthoryear{Cardot, Mas and Sarda}{2007}]{cardot2007clt}
\begin{barticle}[author]
\bauthor{\bsnm{Cardot},~\bfnm{Herv{\'e}}\binits{H.}},
  \bauthor{\bsnm{Mas},~\bfnm{Andr{\'e}}\binits{A.}} \AND
  \bauthor{\bsnm{Sarda},~\bfnm{Pascal}\binits{P.}}
(\byear{2007}).
\btitle{CLT in functional linear regression models}.
\bjournal{Probability Theory and Related Fields}
\bvolume{138}
\bpages{325--361}.
\end{barticle}
\endbibitem

\bibitem[\protect\citeauthoryear{Cardot et~al.}{2010}]{cardot2010properties}
\begin{barticle}[author]
\bauthor{\bsnm{Cardot},~\bfnm{Herv{\'e}}\binits{H.}},
  \bauthor{\bsnm{Chaouch},~\bfnm{Mohamed}\binits{M.}},
  \bauthor{\bsnm{Goga},~\bfnm{Camelia}\binits{C.}} \AND
  \bauthor{\bsnm{Labru{\`e}re},~\bfnm{Catherine}\binits{C.}}
(\byear{2010}).
\btitle{Properties of design-based functional principal components analysis}.
\bjournal{Journal of statistical planning and inference}
\bvolume{140}
\bpages{75--91}.
\end{barticle}
\endbibitem

\bibitem[\protect\citeauthoryear{Connolly et~al.}{1994}]{connolly1994spectral}
\begin{barticle}[author]
\bauthor{\bsnm{Connolly},~\bfnm{Andy~J}\binits{A.~J.}},
  \bauthor{\bsnm{Szalay},~\bfnm{AS}\binits{A.}},
  \bauthor{\bsnm{Bershady},~\bfnm{MA}\binits{M.}},
  \bauthor{\bsnm{Kinney},~\bfnm{AL}\binits{A.}} \AND
  \bauthor{\bsnm{Calzetti},~\bfnm{D}\binits{D.}}
(\byear{1994}).
\btitle{Spectral classification of galaxies: an orthogonal approach}.
\bjournal{arXiv preprint astro-ph/9411044}.
\end{barticle}
\endbibitem

\bibitem[\protect\citeauthoryear{Degras}{2014}]{degras2014rotation}
\begin{barticle}[author]
\bauthor{\bsnm{Degras},~\bfnm{David}\binits{D.}}
(\byear{2014}).
\btitle{Rotation sampling for functional data}.
\bjournal{Statistica Sinica}
\bpages{1075--1095}.
\end{barticle}
\endbibitem

\bibitem[\protect\citeauthoryear{Delaigle et~al.}{2010}]{delaigle2010defining}
\begin{barticle}[author]
\bauthor{\bsnm{Delaigle},~\bfnm{Aurore}\binits{A.}},
  \bauthor{\bsnm{Hall},~\bfnm{Peter}\binits{P.}} \betal{et~al.}
(\byear{2010}).
\btitle{Defining probability density for a distribution of random functions}.
\bjournal{The Annals of Statistics}
\bvolume{38}
\bpages{1171--1193}.
\end{barticle}
\endbibitem

\bibitem[\protect\citeauthoryear{Dicker et~al.}{2017}]{dicker2017kernel}
\begin{barticle}[author]
\bauthor{\bsnm{Dicker},~\bfnm{Lee~H}\binits{L.~H.}},
  \bauthor{\bsnm{Foster},~\bfnm{Dean~P}\binits{D.~P.}},
  \bauthor{\bsnm{Hsu},~\bfnm{Daniel}\binits{D.}} \betal{et~al.}
(\byear{2017}).
\btitle{Kernel ridge vs. principal component regression: Minimax bounds and the
  qualification of regularization operators}.
\bjournal{Electronic Journal of Statistics}
\bvolume{11}
\bpages{1022--1047}.
\end{barticle}
\endbibitem

\bibitem[\protect\citeauthoryear{Drineas, Kannan and
  Mahoney}{2006a}]{drineas2006fast}
\begin{barticle}[author]
\bauthor{\bsnm{Drineas},~\bfnm{Petros}\binits{P.}},
  \bauthor{\bsnm{Kannan},~\bfnm{Ravi}\binits{R.}} \AND
  \bauthor{\bsnm{Mahoney},~\bfnm{Michael~W}\binits{M.~W.}}
(\byear{2006}a).
\btitle{Fast Monte Carlo algorithms for matrices I: Approximating matrix
  multiplication}.
\bjournal{SIAM Journal on Computing}
\bvolume{36}
\bpages{132--157}.
\end{barticle}
\endbibitem

\bibitem[\protect\citeauthoryear{Drineas, Kannan and
  Mahoney}{2006b}]{drineas2006fast2}
\begin{barticle}[author]
\bauthor{\bsnm{Drineas},~\bfnm{Petros}\binits{P.}},
  \bauthor{\bsnm{Kannan},~\bfnm{Ravi}\binits{R.}} \AND
  \bauthor{\bsnm{Mahoney},~\bfnm{Michael~W}\binits{M.~W.}}
(\byear{2006}b).
\btitle{Fast Monte Carlo algorithms for matrices II: Computing a low-rank
  approximation to a matrix}.
\bjournal{SIAM Journal on computing}
\bvolume{36}
\bpages{158--183}.
\end{barticle}
\endbibitem

\bibitem[\protect\citeauthoryear{Drineas, Kannan and
  Mahoney}{2006c}]{drineas2006fast3}
\begin{barticle}[author]
\bauthor{\bsnm{Drineas},~\bfnm{Petros}\binits{P.}},
  \bauthor{\bsnm{Kannan},~\bfnm{Ravi}\binits{R.}} \AND
  \bauthor{\bsnm{Mahoney},~\bfnm{Michael~W}\binits{M.~W.}}
(\byear{2006}c).
\btitle{Fast Monte Carlo algorithms for matrices III: Computing a compressed
  approximate matrix decomposition}.
\bjournal{SIAM Journal on Computing}
\bvolume{36}
\bpages{184--206}.
\end{barticle}
\endbibitem

\bibitem[\protect\citeauthoryear{Drineas, Mahoney and
  Muthukrishnan}{2006}]{Drineas2006Subspace}
\begin{binproceedings}[author]
\bauthor{\bsnm{Drineas},~\bfnm{Petros}\binits{P.}},
  \bauthor{\bsnm{Mahoney},~\bfnm{Michael~W.}\binits{M.~W.}} \AND
  \bauthor{\bsnm{Muthukrishnan},~\bfnm{S.}\binits{S.}}
(\byear{2006}).
\btitle{Subspace Sampling and Relative-Error Matrix Approximation:
  Column-Row-Based Methods}.
In \bbooktitle{Algorithms - ESA 2006, 14th Annual European Symposium, Zurich,
  Switzerland, September 11-13, 2006, Proceedings}.
\end{binproceedings}
\endbibitem

\bibitem[\protect\citeauthoryear{Drineas and
  Mahoney}{2018}]{drineas2018lectures}
\begin{barticle}[author]
\bauthor{\bsnm{Drineas},~\bfnm{Petros}\binits{P.}} \AND
  \bauthor{\bsnm{Mahoney},~\bfnm{Michael~W}\binits{M.~W.}}
(\byear{2018}).
\btitle{Lectures on randomized numerical linear algebra}.
\bjournal{The Mathematics of Data}
\bvolume{25}
\bpages{1}.
\end{barticle}
\endbibitem

\bibitem[\protect\citeauthoryear{Drineas et~al.}{2011}]{drineas2011faster}
\begin{barticle}[author]
\bauthor{\bsnm{Drineas},~\bfnm{Petros}\binits{P.}},
  \bauthor{\bsnm{Mahoney},~\bfnm{Michael~W}\binits{M.~W.}},
  \bauthor{\bsnm{Muthukrishnan},~\bfnm{S}\binits{S.}} \AND
  \bauthor{\bsnm{Sarl{\'o}s},~\bfnm{Tam{\'a}s}\binits{T.}}
(\byear{2011}).
\btitle{Faster least squares approximation}.
\bjournal{Numerische mathematik}
\bvolume{117}
\bpages{219--249}.
\end{barticle}
\endbibitem

\bibitem[\protect\citeauthoryear{Drineas et~al.}{2012}]{drineas2012fast}
\begin{barticle}[author]
\bauthor{\bsnm{Drineas},~\bfnm{Petros}\binits{P.}},
  \bauthor{\bsnm{Magdon-Ismail},~\bfnm{Malik}\binits{M.}},
  \bauthor{\bsnm{Mahoney},~\bfnm{Michael~W}\binits{M.~W.}} \AND
  \bauthor{\bsnm{Woodruff},~\bfnm{David~P}\binits{D.~P.}}
(\byear{2012}).
\btitle{Fast approximation of matrix coherence and statistical leverage}.
\bjournal{Journal of Machine Learning Research}
\bvolume{13}
\bpages{3475--3506}.
\end{barticle}
\endbibitem

\bibitem[\protect\citeauthoryear{Eisenstein et~al.}{2011}]{eisenstein2011sdss}
\begin{barticle}[author]
\bauthor{\bsnm{Eisenstein},~\bfnm{Daniel~J}\binits{D.~J.}},
  \bauthor{\bsnm{Weinberg},~\bfnm{David~H}\binits{D.~H.}},
  \bauthor{\bsnm{Agol},~\bfnm{Eric}\binits{E.}},
  \bauthor{\bsnm{Aihara},~\bfnm{Hiroaki}\binits{H.}},
  \bauthor{\bsnm{Prieto},~\bfnm{Carlos~Allende}\binits{C.~A.}},
  \bauthor{\bsnm{Anderson},~\bfnm{Scott~F}\binits{S.~F.}},
  \bauthor{\bsnm{Arns},~\bfnm{James~A}\binits{J.~A.}},
  \bauthor{\bsnm{Aubourg},~\bfnm{{\'E}ric}\binits{{\'E}.}},
  \bauthor{\bsnm{Bailey},~\bfnm{Stephen}\binits{S.}},
  \bauthor{\bsnm{Balbinot},~\bfnm{Eduardo}\binits{E.}} \betal{et~al.}
(\byear{2011}).
\btitle{SDSS-III: Massive spectroscopic surveys of the distant universe, the
  Milky Way, and extra-solar planetary systems}.
\bjournal{The Astronomical Journal}
\bvolume{142}
\bpages{72}.
\end{barticle}
\endbibitem

\bibitem[\protect\citeauthoryear{Hadjipantelis and
  M{\"u}ller}{2018}]{hadjipantelis2018functional}
\begin{bincollection}[author]
\bauthor{\bsnm{Hadjipantelis},~\bfnm{Pantelis~Zenon}\binits{P.~Z.}} \AND
  \bauthor{\bsnm{M{\"u}ller},~\bfnm{Hans-Georg}\binits{H.-G.}}
(\byear{2018}).
\btitle{Functional data analysis for big data: A case study on california
  temperature trends}.
In \bbooktitle{Handbook of Big Data Analytics}
\bpages{457--483}.
\bpublisher{Springer}.
\end{bincollection}
\endbibitem

\bibitem[\protect\citeauthoryear{Halko, Martinsson and
  Tropp}{2011}]{halko2011finding}
\begin{barticle}[author]
\bauthor{\bsnm{Halko},~\bfnm{Nathan}\binits{N.}},
  \bauthor{\bsnm{Martinsson},~\bfnm{Pergunnar}\binits{P.}} \AND
  \bauthor{\bsnm{Tropp},~\bfnm{Joel~A}\binits{J.~A.}}
(\byear{2011}).
\btitle{Finding Structure with Randomness: Probabilistic Algorithms for
  Constructing Approximate Matrix Decompositions}.
\bjournal{Siam Review}
\bvolume{53}
\bpages{217--288}.
\end{barticle}
\endbibitem

\bibitem[\protect\citeauthoryear{Hall et~al.}{2007}]{hall2007methodology}
\begin{barticle}[author]
\bauthor{\bsnm{Hall},~\bfnm{Peter}\binits{P.}},
  \bauthor{\bsnm{Horowitz},~\bfnm{Joel~L}\binits{J.~L.}} \betal{et~al.}
(\byear{2007}).
\btitle{Methodology and convergence rates for functional linear regression}.
\bjournal{The Annals of Statistics}
\bvolume{35}
\bpages{70--91}.
\end{barticle}
\endbibitem

\bibitem[\protect\citeauthoryear{He and Yan}{2020}]{he2020randomized}
\begin{barticle}[author]
\bauthor{\bsnm{He},~\bfnm{Shiyuan}\binits{S.}} \AND
  \bauthor{\bsnm{Yan},~\bfnm{Xiaomeng}\binits{X.}}
(\byear{2020}).
\btitle{Randomized estimation of functional covariance operator via
  subsampling}.
\bjournal{Stat}
\bvolume{9}
\bpages{e311}.
\end{barticle}
\endbibitem

\bibitem[\protect\citeauthoryear{Horv{\'a}th and
  Kokoszka}{2012}]{horvath2012inference}
\begin{bbook}[author]
\bauthor{\bsnm{Horv{\'a}th},~\bfnm{Lajos}\binits{L.}} \AND
  \bauthor{\bsnm{Kokoszka},~\bfnm{Piotr}\binits{P.}}
(\byear{2012}).
\btitle{Inference for functional data with applications}
\bvolume{200}.
\bpublisher{Springer Science \& Business Media}.
\end{bbook}
\endbibitem

\bibitem[\protect\citeauthoryear{Horv{\'a}th, Kokoszka and
  Reeder}{2013}]{horvath2013estimation}
\begin{barticle}[author]
\bauthor{\bsnm{Horv{\'a}th},~\bfnm{Lajos}\binits{L.}},
  \bauthor{\bsnm{Kokoszka},~\bfnm{Piotr}\binits{P.}} \AND
  \bauthor{\bsnm{Reeder},~\bfnm{Ron}\binits{R.}}
(\byear{2013}).
\btitle{Estimation of the mean of functional time series and a two-sample
  problem}.
\bjournal{Journal of the Royal Statistical Society: Series B (Statistical
  Methodology)}
\bvolume{75}
\bpages{103--122}.
\end{barticle}
\endbibitem

\bibitem[\protect\citeauthoryear{Hsing and Eubank}{2015}]{hsing2015theoretical}
\begin{bbook}[author]
\bauthor{\bsnm{Hsing},~\bfnm{Tailen}\binits{T.}} \AND
  \bauthor{\bsnm{Eubank},~\bfnm{Randall}\binits{R.}}
(\byear{2015}).
\btitle{Theoretical foundations of functional data analysis, with an
  introduction to linear operators}.
\bpublisher{John Wiley \& Sons}.
\end{bbook}
\endbibitem

\bibitem[\protect\citeauthoryear{James, Hastie and
  Sugar}{2000}]{james2000principal}
\begin{barticle}[author]
\bauthor{\bsnm{James},~\bfnm{Gareth~M}\binits{G.~M.}},
  \bauthor{\bsnm{Hastie},~\bfnm{Trevor~J}\binits{T.~J.}} \AND
  \bauthor{\bsnm{Sugar},~\bfnm{Catherine~A}\binits{C.~A.}}
(\byear{2000}).
\btitle{Principal component models for sparse functional data}.
\bjournal{Biometrika}
\bvolume{87}
\bpages{587--602}.
\end{barticle}
\endbibitem

\bibitem[\protect\citeauthoryear{Kato}{2012}]{kato2012estimation}
\begin{barticle}[author]
\bauthor{\bsnm{Kato},~\bfnm{Kengo}\binits{K.}}
(\byear{2012}).
\btitle{Estimation in functional linear quantile regression}.
\bjournal{Annals of Statistics}
\bvolume{40}
\bpages{3108--3136}.
\end{barticle}
\endbibitem

\bibitem[\protect\citeauthoryear{Kokoszka and
  Reimherr}{2017}]{kokoszka2017introduction}
\begin{bbook}[author]
\bauthor{\bsnm{Kokoszka},~\bfnm{Piotr}\binits{P.}} \AND
  \bauthor{\bsnm{Reimherr},~\bfnm{Matthew}\binits{M.}}
(\byear{2017}).
\btitle{Introduction to functional data analysis}.
\bpublisher{CRC Press}.
\end{bbook}
\endbibitem

\bibitem[\protect\citeauthoryear{Koltchinskii
  et~al.}{2016}]{koltchinskii2016asymptotics}
\begin{binproceedings}[author]
\bauthor{\bsnm{Koltchinskii},~\bfnm{Vladimir}\binits{V.}},
  \bauthor{\bsnm{Lounici},~\bfnm{Karim}\binits{K.}} \betal{et~al.}
(\byear{2016}).
\btitle{Asymptotics and concentration bounds for bilinear forms of spectral
  projectors of sample covariance}.
In \bbooktitle{Annales de l'Institut Henri Poincar{\'e}, Probabilit{\'e}s et
  Statistiques}
\bvolume{52}
\bpages{1976--2013}.
\bpublisher{Institut Henri Poincar{\'e}}.
\end{binproceedings}
\endbibitem

\bibitem[\protect\citeauthoryear{Lardin-Puech, Cardot and
  Goga}{2014}]{lardin2014analysing}
\begin{barticle}[author]
\bauthor{\bsnm{Lardin-Puech},~\bfnm{Pauline}\binits{P.}},
  \bauthor{\bsnm{Cardot},~\bfnm{Herv{\'e}}\binits{H.}} \AND
  \bauthor{\bsnm{Goga},~\bfnm{Camelia}\binits{C.}}
(\byear{2014}).
\btitle{Analysing large datasets of functional data: a survey sampling point of
  view}.
\bjournal{Journal de la Soci{\'e}t{\'e} Fran{\c{c}}aise de Statistique}
\bvolume{155}
\bpages{70--94}.
\end{barticle}
\endbibitem

\bibitem[\protect\citeauthoryear{Li et~al.}{2014}]{li2014sdss}
\begin{barticle}[author]
\bauthor{\bsnm{Li},~\bfnm{Xiangru}\binits{X.}},
  \bauthor{\bsnm{Wu},~\bfnm{QM~Jonathan}\binits{Q.~J.}},
  \bauthor{\bsnm{Luo},~\bfnm{Ali}\binits{A.}},
  \bauthor{\bsnm{Zhao},~\bfnm{Yongheng}\binits{Y.}},
  \bauthor{\bsnm{Lu},~\bfnm{Yu}\binits{Y.}},
  \bauthor{\bsnm{Zuo},~\bfnm{Fang}\binits{F.}},
  \bauthor{\bsnm{Yang},~\bfnm{Tan}\binits{T.}} \AND
  \bauthor{\bsnm{Wang},~\bfnm{Yongjun}\binits{Y.}}
(\byear{2014}).
\btitle{SDSS/SEGUE spectral feature analysis for stellar atmospheric parameter
  estimation}.
\bjournal{The Astrophysical Journal}
\bvolume{790}
\bpages{105}.
\end{barticle}
\endbibitem

\bibitem[\protect\citeauthoryear{Liu et~al.}{2015}]{liu2015spectral}
\begin{barticle}[author]
\bauthor{\bsnm{Liu},~\bfnm{Chao}\binits{C.}},
  \bauthor{\bsnm{Cui},~\bfnm{Wen-Yuan}\binits{W.-Y.}},
  \bauthor{\bsnm{Zhang},~\bfnm{Bo}\binits{B.}},
  \bauthor{\bsnm{Wan},~\bfnm{Jun-Chen}\binits{J.-C.}},
  \bauthor{\bsnm{Deng},~\bfnm{Li-Cai}\binits{L.-C.}},
  \bauthor{\bsnm{Hou},~\bfnm{Yong-Hui}\binits{Y.-H.}},
  \bauthor{\bsnm{Wang},~\bfnm{Yue-Fei}\binits{Y.-F.}},
  \bauthor{\bsnm{Yang},~\bfnm{Ming}\binits{M.}} \AND
  \bauthor{\bsnm{Zhang},~\bfnm{Yong}\binits{Y.}}
(\byear{2015}).
\btitle{Spectral classification of stars based on LAMOST spectra}.
\bjournal{Research in Astronomy and Astrophysics}
\bvolume{15}
\bpages{1137}.
\end{barticle}
\endbibitem

\bibitem[\protect\citeauthoryear{Liu et~al.}{2018}]{liu2018remote}
\begin{barticle}[author]
\bauthor{\bsnm{Liu},~\bfnm{Peng}\binits{P.}},
  \bauthor{\bsnm{Di},~\bfnm{Liping}\binits{L.}},
  \bauthor{\bsnm{Du},~\bfnm{Qian}\binits{Q.}} \AND
  \bauthor{\bsnm{Wang},~\bfnm{Lizhe}\binits{L.}}
(\byear{2018}).
\btitle{Remote Sensing Big Data: Theory, Methods and Applications}.
\bjournal{Remote Sensing}
\bvolume{10}
\bpages{711}.
\end{barticle}
\endbibitem

\bibitem[\protect\citeauthoryear{Ma, Mahoney and Yu}{2015}]{ma2015statistical}
\begin{barticle}[author]
\bauthor{\bsnm{Ma},~\bfnm{Ping}\binits{P.}},
  \bauthor{\bsnm{Mahoney},~\bfnm{Michael~W}\binits{M.~W.}} \AND
  \bauthor{\bsnm{Yu},~\bfnm{Bin}\binits{B.}}
(\byear{2015}).
\btitle{A statistical perspective on algorithmic leveraging}.
\bjournal{The Journal of Machine Learning Research}
\bvolume{16}
\bpages{861--911}.
\end{barticle}
\endbibitem

\bibitem[\protect\citeauthoryear{McGurk, Kimball and
  Ivezic}{2010}]{McGurk:2010aw}
\begin{barticle}[author]
\bauthor{\bsnm{McGurk},~\bfnm{Rosalie~C.}\binits{R.~C.}},
  \bauthor{\bsnm{Kimball},~\bfnm{Amy~E.}\binits{A.~E.}} \AND
  \bauthor{\bsnm{Ivezic},~\bfnm{Zeljko}\binits{Z.}}
(\byear{2010}).
\btitle{{Principal Component Analysis of SDSS Stellar Spectra}}.
\bjournal{Astron. J.}
\bvolume{139}
\bpages{1261--1268}.
\bdoi{10.1088/0004-6256/139/3/1261}
\end{barticle}
\endbibitem

\bibitem[\protect\citeauthoryear{Minsker}{2017}]{minsker2017some}
\begin{barticle}[author]
\bauthor{\bsnm{Minsker},~\bfnm{Stanislav}\binits{S.}}
(\byear{2017}).
\btitle{On some extensions of Bernstein's inequality for self-adjoint
  operators}.
\bjournal{Statistics \& Probability Letters}
\bvolume{127}
\bpages{111--119}.
\end{barticle}
\endbibitem

\bibitem[\protect\citeauthoryear{Mor-Yosef and Avron}{2019}]{mor2019sketching}
\begin{barticle}[author]
\bauthor{\bsnm{Mor-Yosef},~\bfnm{Liron}\binits{L.}} \AND
  \bauthor{\bsnm{Avron},~\bfnm{Haim}\binits{H.}}
(\byear{2019}).
\btitle{Sketching for principal component regression}.
\bjournal{SIAM Journal on Matrix Analysis and Applications}
\bvolume{40}
\bpages{454--485}.
\end{barticle}
\endbibitem

\bibitem[\protect\citeauthoryear{Peng and Paul}{2009}]{peng2009geometric}
\begin{barticle}[author]
\bauthor{\bsnm{Peng},~\bfnm{Jie}\binits{J.}} \AND
  \bauthor{\bsnm{Paul},~\bfnm{Debashis}\binits{D.}}
(\byear{2009}).
\btitle{A geometric approach to maximum likelihood estimation of the functional
  principal components from sparse longitudinal data}.
\bjournal{Journal of Computational and Graphical Statistics}
\bvolume{18}
\bpages{995--1015}.
\end{barticle}
\endbibitem

\bibitem[\protect\citeauthoryear{Pilanci and
  Wainwright}{2015}]{pilanci2015randomized}
\begin{barticle}[author]
\bauthor{\bsnm{Pilanci},~\bfnm{Mert}\binits{M.}} \AND
  \bauthor{\bsnm{Wainwright},~\bfnm{Martin~J}\binits{M.~J.}}
(\byear{2015}).
\btitle{Randomized sketches of convex programs with sharp guarantees}.
\bjournal{IEEE Transactions on Information Theory}
\bvolume{61}
\bpages{5096--5115}.
\end{barticle}
\endbibitem

\bibitem[\protect\citeauthoryear{Pilanci and
  Wainwright}{2017}]{pilanci2017newton}
\begin{barticle}[author]
\bauthor{\bsnm{Pilanci},~\bfnm{Mert}\binits{M.}} \AND
  \bauthor{\bsnm{Wainwright},~\bfnm{Martin~J}\binits{M.~J.}}
(\byear{2017}).
\btitle{Newton sketch: A near linear-time optimization algorithm with
  linear-quadratic convergence}.
\bjournal{SIAM Journal on Optimization}
\bvolume{27}
\bpages{205--245}.
\end{barticle}
\endbibitem

\bibitem[\protect\citeauthoryear{Ramsay}{2004}]{ramsay2004functional}
\begin{barticle}[author]
\bauthor{\bsnm{Ramsay},~\bfnm{James~O}\binits{J.~O.}}
(\byear{2004}).
\btitle{Functional data analysis}.
\bjournal{Encyclopedia of Statistical Sciences}
\bvolume{4}.
\end{barticle}
\endbibitem

\bibitem[\protect\citeauthoryear{Raskutti and Mahoney}{2016}]{raskutti2016a}
\begin{barticle}[author]
\bauthor{\bsnm{Raskutti},~\bfnm{Garvesh}\binits{G.}} \AND
  \bauthor{\bsnm{Mahoney},~\bfnm{Michael~W}\binits{M.~W.}}
(\byear{2016}).
\btitle{A statistical perspective on randomized sketching for ordinary
  least-squares}.
\bjournal{Journal of Machine Learning Research}
\bvolume{17}
\bpages{7508--7538}.
\end{barticle}
\endbibitem

\bibitem[\protect\citeauthoryear{Tian}{2010}]{tian2010functional}
\begin{barticle}[author]
\bauthor{\bsnm{Tian},~\bfnm{Tian~Siva}\binits{T.~S.}}
(\byear{2010}).
\btitle{Functional Data Analysis in Brain Imaging Studies}.
\bjournal{Frontiers in Psychology}
\bvolume{1}
\bpages{35--35}.
\end{barticle}
\endbibitem

\bibitem[\protect\citeauthoryear{Tropp et~al.}{2015}]{tropp2015introduction}
\begin{barticle}[author]
\bauthor{\bsnm{Tropp},~\bfnm{Joel~A}\binits{J.~A.}} \betal{et~al.}
(\byear{2015}).
\btitle{An introduction to matrix concentration inequalities}.
\bjournal{Foundations and Trends{\textregistered} in Machine Learning}
\bvolume{8}
\bpages{1--230}.
\end{barticle}
\endbibitem

\bibitem[\protect\citeauthoryear{Turkbrowne}{2013}]{turkbrowne2013functional}
\begin{barticle}[author]
\bauthor{\bsnm{Turkbrowne},~\bfnm{Nicholas~B}\binits{N.~B.}}
(\byear{2013}).
\btitle{Functional interactions as big data in the human brain}.
\bjournal{Science}
\bvolume{342}
\bpages{580--584}.
\end{barticle}
\endbibitem

\bibitem[\protect\citeauthoryear{Wang}{2019}]{wang2019more}
\begin{barticle}[author]
\bauthor{\bsnm{Wang},~\bfnm{HaiYing}\binits{H.}}
(\byear{2019}).
\btitle{More Efficient Estimation for Logistic Regression with Optimal
  Subsamples}.
\bjournal{Journal of Machine Learning Research}
\bvolume{20}
\bpages{1--59}.
\end{barticle}
\endbibitem

\bibitem[\protect\citeauthoryear{Wang, Gittens and
  Mahoney}{2017}]{wang2017sketched}
\begin{barticle}[author]
\bauthor{\bsnm{Wang},~\bfnm{Shusen}\binits{S.}},
  \bauthor{\bsnm{Gittens},~\bfnm{Alex}\binits{A.}} \AND
  \bauthor{\bsnm{Mahoney},~\bfnm{Michael~W}\binits{M.~W.}}
(\byear{2017}).
\btitle{Sketched ridge regression: Optimization perspective, statistical
  perspective, and model averaging}.
\bjournal{The Journal of Machine Learning Research}
\bvolume{18}
\bpages{8039--8088}.
\end{barticle}
\endbibitem

\bibitem[\protect\citeauthoryear{Wang, Zhu and Ma}{2018}]{wang2018optimal}
\begin{barticle}[author]
\bauthor{\bsnm{Wang},~\bfnm{HaiYing}\binits{H.}},
  \bauthor{\bsnm{Zhu},~\bfnm{Rong}\binits{R.}} \AND
  \bauthor{\bsnm{Ma},~\bfnm{Ping}\binits{P.}}
(\byear{2018}).
\btitle{Optimal subsampling for large sample logistic regression}.
\bjournal{Journal of the American Statistical Association}
\bvolume{113}
\bpages{829--844}.
\end{barticle}
\endbibitem

\bibitem[\protect\citeauthoryear{Woodruff}{2014}]{woodruff2014sketching}
\begin{barticle}[author]
\bauthor{\bsnm{Woodruff},~\bfnm{David~P}\binits{D.~P.}}
(\byear{2014}).
\btitle{Sketching as a tool for numerical linear algebra}.
\bjournal{arXiv preprint arXiv:1411.4357}.
\end{barticle}
\endbibitem

\bibitem[\protect\citeauthoryear{Yang et~al.}{2017}]{yang2017randomized}
\begin{barticle}[author]
\bauthor{\bsnm{Yang},~\bfnm{Yun}\binits{Y.}},
  \bauthor{\bsnm{Pilanci},~\bfnm{Mert}\binits{M.}},
  \bauthor{\bsnm{Wainwright},~\bfnm{Martin~J}\binits{M.~J.}} \betal{et~al.}
(\byear{2017}).
\btitle{Randomized sketches for kernels: Fast and optimal nonparametric
  regression}.
\bjournal{The Annals of Statistics}
\bvolume{45}
\bpages{991--1023}.
\end{barticle}
\endbibitem

\bibitem[\protect\citeauthoryear{Yao, M{\"u}ller and
  Wang}{2005}]{yao2005functional}
\begin{barticle}[author]
\bauthor{\bsnm{Yao},~\bfnm{Fang}\binits{F.}},
  \bauthor{\bsnm{M{\"u}ller},~\bfnm{Hans-Georg}\binits{H.-G.}} \AND
  \bauthor{\bsnm{Wang},~\bfnm{Jane-Ling}\binits{J.-L.}}
(\byear{2005}).
\btitle{Functional data analysis for sparse longitudinal data}.
\bjournal{Journal of the American Statistical Association}
\bvolume{100}
\bpages{577--590}.
\end{barticle}
\endbibitem

\bibitem[\protect\citeauthoryear{Yao et~al.}{2005}]{yao2005regression}
\begin{barticle}[author]
\bauthor{\bsnm{Yao},~\bfnm{Fang}\binits{F.}},
  \bauthor{\bsnm{M{\"u}ller},~\bfnm{Hans-Georg}\binits{H.-G.}},
  \bauthor{\bsnm{Wang},~\bfnm{Jane-Ling}\binits{J.-L.}} \betal{et~al.}
(\byear{2005}).
\btitle{Functional linear regression analysis for longitudinal data}.
\bjournal{The Annals of Statistics}
\bvolume{33}
\bpages{2873--2903}.
\end{barticle}
\endbibitem

\bibitem[\protect\citeauthoryear{Yuan and Cai}{2010}]{yuan2010reproducing}
\begin{barticle}[author]
\bauthor{\bsnm{Yuan},~\bfnm{Ming}\binits{M.}} \AND
  \bauthor{\bsnm{Cai},~\bfnm{T~Tony}\binits{T.~T.}}
(\byear{2010}).
\btitle{A reproducing kernel Hilbert space approach to functional linear
  regression}.
\bjournal{The Annals of Statistics}
\bvolume{38}
\bpages{3412--3444}.
\end{barticle}
\endbibitem

\bibitem[\protect\citeauthoryear{Zhang, Ning and
  Ruppert}{2021}]{zhang2021optimal}
\begin{barticle}[author]
\bauthor{\bsnm{Zhang},~\bfnm{Tao}\binits{T.}},
  \bauthor{\bsnm{Ning},~\bfnm{Yang}\binits{Y.}} \AND
  \bauthor{\bsnm{Ruppert},~\bfnm{David}\binits{D.}}
(\byear{2021}).
\btitle{Optimal sampling for generalized linear models under measurement
  constraints}.
\bjournal{Journal of Computational and Graphical Statistics}
\bvolume{30}
\bpages{106--114}.
\end{barticle}
\endbibitem

\bibitem[\protect\citeauthoryear{Zhao et~al.}{2012}]{zhao2012lamost}
\begin{barticle}[author]
\bauthor{\bsnm{Zhao},~\bfnm{Gang}\binits{G.}},
  \bauthor{\bsnm{Zhao},~\bfnm{Yong-Heng}\binits{Y.-H.}},
  \bauthor{\bsnm{Chu},~\bfnm{Yao-Quan}\binits{Y.-Q.}},
  \bauthor{\bsnm{Jing},~\bfnm{Yi-Peng}\binits{Y.-P.}} \AND
  \bauthor{\bsnm{Deng},~\bfnm{Li-Cai}\binits{L.-C.}}
(\byear{2012}).
\btitle{LAMOST spectral survey---An overview}.
\bjournal{Research in Astronomy and Astrophysics}
\bvolume{12}
\bpages{723}.
\end{barticle}
\endbibitem

\end{thebibliography}

\appendix

\clearpage

\section{Comparison with  sampling probabilities in the multivariate context}
\label{sec:samplecompare}

There are  some underlying connections between	the FunPrinSS probability~\eqref{eqn:prob:FPCAProptoExact} for functional data and some sampling probabilities  for multivariate statistics in the literature. In this section, we discuss  some of the similarities and differences.
	
	The FunPrinSS probability~\eqref{eqn:prob:FPCAProptoExact}  is closely related to the leverage sampling probability~\citep{drineas2012fast} for classical linear regression. Let $\mathbf{X} = \left(\mathbf{x}_1,\ldots,\mathbf{x}_N\right)^T$ be a $N\times p$ fixed design matrix with $N$ observations and $p$ variables (with $N>p$). 
	Suppose the singular value decomposition (SVD) of $\mathbf{X}$ is $\mX = \mU\mD\mV^T$, where $\mD$ is a diagonal matrix with singular values, and the columns of $\mU$ and $\mV$ contain the  left and right singular vectors, respectively. Denote $U_{nj}$ as the $(n,j)$-th element of $\mU$.
	The leverage sampling probability $p_n$ is proportional to the \textit{statistical leverage} $h_{nn}$ of $n$-th observation, i.e. 
	$p_n\propto h_{nn} = \mathbf{x}_n^T\left(\mathbf{X}^T\mathbf{X}\right)^{-}\mathbf{x}_n
	= \sum_{j = 1}^p U_{nj}^2.$ When the samples $\vx_n$'s have zero mean,  $U_{nj}$ can be interpreted as  multivariate principal component scores, and the statistical leverage  $h_{nn}$ is the summation of all squared scores in the $p$-dimensional space. In contrast, our FunPrinSS probability~\eqref{eqn:prob:FPCAProptoExact} makes a rank-$R$ truncation to the Karhunen-Lo\`{e}ve expansion~\eqref{eq:KL} of the infinite dimensional functional data, and takes into account the effect of the remaining scores via the projected residual $(I - \hsP_R) x_n$.

	The FunPrinSS probability is also connected to the subspace sampling  of \cite{Drineas2006Subspace}. To perform the CUR  decomposition for a matrix $\mX\in\bbR^{N\times p}$, they compute the row sampling probabilities as follows. A matrix $\mC\in\bbR^{N\times q}$ is formed by sampling columns from $\mX$, and its SVD $\mC = \mU_C\mD_C\mV_C^T$ is computed. Then, the sampling probability for $n$-th row of $\mX$ is computed from the $n$-th row of $\mU_C$, the $n$-th row of $\mX^{\perp}_C = (\mI-\mU_C\mU_C^T)\mX$ and their product
	$$
	p_n = \frac{(1/3)\|(\mU_C)_{(n)}\|_2^2}{\sum_{m = 1}^N \|(\mU_C)_{(m)}\|_2^2} +  
	\frac{(1/3)\|(\mU_C)_{(n)}\|_2\|(\mX^{\perp}_C)_{(n)}\|_2}{\sum_{m = 1}^N \|(\mU_C)_{(m)}\|_2\|(\mX^{\perp}_C)_{(m)}\|_2} 
	+
	\frac{(1/3)\|(\mX^{\perp}_C)_{(n)}\|_2^2}{\sum_{m = 1}^N \|(\mX^{\perp}_C)_{(m)}\|_2^2}.
	$$
	In the above, $(\mA)_{(n)}$ represents the $n$-th row of a matrix $\mA$ and $\|\cdot\|_2$ is the classical unnormalized Euclidean norm. When we view the rows of $\mX$ as samples (observations) of a zero mean random vector and the columns of $\mX$ as variables, the matrix $\mU_C$ serves as an estimate of the principal component scores, and $(\mI-\mU_C\mU_C^T)\mX$  is the residual matrix.

\section{Concentration Inequality for Compact Operators}

Theorem 7.3.1 of \cite{tropp2015introduction} developed intrinsic matrix Bernstein concentration inequality. Together with the techniques in Section 3.2 of \cite{minsker2017some}, it can be easily extended to compact operators. 
Similar results can also be found in Lemma~5 of \cite{dicker2017kernel}.
In the below, $\sH_1$ and $\sH_2$ are two  Hilbert spaces.

\begin{lemma} \label{lemma:mainConcentration}
	For a finite sequence of random operators $\sZ_i$ mapping from $\sH_1$ to $\sH_2$, they satisfies $\Expect \sZ_i = 0$  and $\Vert \sZ_i \Vert \le L$.
	Suppose their summation is $\sS = \sum_i \sZ_i$ and $\sS^*$ is its adjoint. Let $\sV_1$ and $\sV_2$ be  semidefinite upper bounds for  $\mathrm{Var}_1(\sS)$ and $\mathrm{Var}_2(\sS)$, respectively. That is,
	$\sV_1 \succeq  \mathrm{Var}_1(\sS) = \Expect (\sS\sS^*) = \sum_i \Expect (\sZ_i\sZ_i^*)$, and $\sV_2 \succeq  \mathrm{Var}_2(\sS) = \Expect (\sS^*\sS) = \sum_i \Expect (\sZ_i^*\sZ_i)$. Define an intrinsic bound $d = \mathrm{intdim}(\sV_1) + \mathrm{intdim}(\sV_2)$ and a variance bound
	$v = \max\{\Vert \sV_1\Vert, \Vert \sV_2\Vert  \}$, then for $t \ge \sqrt{v} + L /3$, 
	$$
	\Prob(\Vert \sS\Vert \ge t) \le 4 d \exp\left(- \frac{t^2/2}{v + Lt/3}\right).
	$$
\end{lemma}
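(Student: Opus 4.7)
The plan is to reduce the two-sided concentration claim for $\sZ_i:\sH_1\to\sH_2$ to the self-adjoint operator intrinsic-dimension Bernstein inequality of Minsker (2017, Section 3.2) via the standard Hermitian dilation. First, I would replace each $\sZ_i$ by its $2\times2$ block self-adjoint dilation
$$\tilde{\sZ}_i \;=\; \begin{pmatrix} 0 & \sZ_i \\ \sZ_i^{*} & 0 \end{pmatrix}$$
acting on $\sH_2\oplus\sH_1$, and set $\tilde{\sS}=\sum_i\tilde{\sZ}_i$. Elementary identities give $\|\tilde{\sZ}_i\|=\|\sZ_i\|\le L$, $\Expect\tilde{\sZ}_i=0$, and $\|\tilde{\sS}\|=\|\sS\|$, so it suffices to bound $\|\tilde{\sS}\|$.

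Second, I would compute the variance proxy for the dilated sequence. A direct calculation yields $\tilde{\sZ}_i^{\,2} = (\sZ_i\sZ_i^{*})\oplus(\sZ_i^{*}\sZ_i)$, hence
$$\sum_i \Expect \tilde{\sZ}_i^{\,2} \;=\; \Expect(\sS\sS^{*}) \oplus \Expect(\sS^{*}\sS) \;\preceq\; \sV_1\oplus\sV_2 \;=:\; \tilde{\sV}.$$
Block-diagonality of $\tilde{\sV}$ immediately gives $\|\tilde{\sV}\| = \max\{\|\sV_1\|,\|\sV_2\|\} = v$, while $\mathrm{intdim}(\tilde{\sV}) = (\tr\sV_1+\tr\sV_2)/v \le \mathrm{intdim}(\sV_1)+\mathrm{intdim}(\sV_2) = d$, where the inequality uses $\|\sV_j\|\le v$ for $j=1,2$.

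Third, I would apply the self-adjoint operator intrinsic-dimension Bernstein inequality to the dilated sequence $\{\tilde{\sZ}_i\}$ with parameters $L$, $v$, and $\mathrm{intdim}(\tilde{\sV})$: for $t\ge \sqrt{v}+L/3$,
$$\Prob(\|\sS\|\ge t) \;=\; \Prob(\|\tilde{\sS}\|\ge t) \;\le\; 4\,\mathrm{intdim}(\tilde{\sV})\exp\!\Big(-\tfrac{t^2/2}{v+Lt/3}\Big) \;\le\; 4d\exp\!\Big(-\tfrac{t^2/2}{v+Lt/3}\Big),$$
which is the stated bound.

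The main obstacle will be the infinite-dimensional extension of the intrinsic-dimension matrix Bernstein inequality of Tropp (2015, Theorem 7.3.1), since compact operators replace finite-dimensional self-adjoint matrices. This is already settled by the Laplace-transform and finite-rank truncation argument of Minsker (2017); the hypothesis that $\mathrm{intdim}(\sV_j)$ be finite guarantees $\tilde{\sV}$ is trace class, which is precisely what that argument requires. Once this operator version of Bernstein is in hand, the remainder of the proof is the routine dilation bookkeeping above.
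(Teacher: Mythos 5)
Your proposal is correct and follows exactly the route the paper intends: the paper gives no written proof of this lemma, instead citing Tropp (2015, Theorem 7.3.1) together with Minsker's (2017) operator extension, and Tropp's rectangular intrinsic-dimension Bernstein inequality is itself obtained from the self-adjoint case by precisely the Hermitian dilation and $\sV_1\oplus\sV_2$ bookkeeping you carry out (your observation that $\mathrm{intdim}(\sV_1\oplus\sV_2)\le\mathrm{intdim}(\sV_1)+\mathrm{intdim}(\sV_2)$ is the only step needed to pass from Tropp's $d$ to the paper's $d$). No gaps.
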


The next lemma provides upper bound for self-adjoint operator.

\begin{lemma} \label{lemma:mainConcentration2}
	For a finite sequence of random self-adjoint  operators $\sZ_i$ mapping from $\sH_1$ to $\sH_1$, they satisfies $\Expect \sZ_i = 0$ and $\Vert \sZ_i \Vert \le L$. Suppose their summation is $\sS = \sum_i \sZ_i$, and $\sV$ is the semidefinite upper bound for 
	$\sum_i \Expect(\sZ_i^2)$. 
	Define an intrinsic bound $d = \mathrm{intdim}(\sV)$ and a variance bound
	$v = \Vert \sV\Vert$, then for $t \ge \sqrt{v} + L /3$, 
	$$
	\Prob(\Vert \sS\Vert \ge t) \le 4 d \exp\left(- \frac{t^2/2}{v + Lt/3}\right).
	$$
\end{lemma}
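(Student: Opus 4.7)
The plan is to directly adapt the intrinsic-dimension matrix Bernstein inequality (Theorem~7.3.1 of \cite{tropp2015introduction}) to compact self-adjoint operators on $\sH_1$, working with $\sS$ itself rather than invoking Lemma~\ref{lemma:mainConcentration} through a Hermitian dilation. The latter reduction would take $\mathrm{intdim}(\sV_1)+\mathrm{intdim}(\sV_2)=2\,\mathrm{intdim}(\sV)$ and produce the weaker constant $8d$ instead of the target $4d$, so the self-adjoint structure must be exploited from the start.

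First I would establish the one-sided intrinsic Bernstein estimate
\begin{equation*}
\Prob(\lambda_{\max}(\sS)\ge t)\;\le\;2\,\mathrm{intdim}(\sV)\,\exp\!\Bigl(-\frac{t^2/2}{v+Lt/3}\Bigr),\qquad t\ge\sqrt{v}+L/3.
\end{equation*}
The route is the Laplace transform / Chernoff bound $\Prob(\lambda_{\max}(\sS)\ge t)\le e^{-\theta t}\Expect\,\tr\,e^{\theta\sS}$, followed by Lieb's concavity inequality to obtain $\Expect\,\tr\,e^{\theta\sS}\le \tr\,\exp\bigl(\sum_i\log\Expect e^{\theta\sZ_i}\bigr)$. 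The subexponential estimate $\log\Expect e^{\theta\sZ_i}\preceq g(\theta)\,\Expect\sZ_i^2$ with $g(\theta)=(e^{\theta L}-1-\theta L)/L^2$ then reduces the right-hand side to $\tr\,\exp(g(\theta)\sV)$. The intrinsic-dimension refinement replaces the ambient dimension by $\mathrm{intdim}(\sV)=\tr(\sV)/\|\sV\|$ via the estimate $\tr\,\phi(cM)\le\mathrm{intdim}(M)\,\phi(c\|M\|)$ valid for positive trace-class $M$, $c>0$, and convex $\phi$ with $\phi(0)=0$. Optimizing over $\theta$ in the regime $t\ge\sqrt{v}+L/3$ produces the one-sided bound.

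Applying the same estimate to $-\sS=\sum_i(-\sZ_i)$, whose second-moment dominance is again $\sV$ because $(-\sZ_i)^2=\sZ_i^2$ and whose summands have the identical norm bound $L$, and combining via $\|\sS\|=\max\{\lambda_{\max}(\sS),\lambda_{\max}(-\sS)\}$ with a union bound yields the claimed two-sided tail with constant $4\,\mathrm{intdim}(\sV)$. The verification that $v,L$ are unchanged under $\sS\mapsto-\sS$ is immediate from the hypotheses.

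The main obstacle is the infinite-dimensional extension: Tropp's argument is phrased for finite matrices, so the trace, matrix exponential, Lieb's inequality, and the $\tr\,\phi(cM)$ refinement all must be justified on compact operators. I would follow the approximation scheme in \cite{minsker2017some}: let $P_n$ be the orthogonal projector onto the span of the first $n$ eigenvectors of $\sV$, apply the matrix version of the bound to $P_n\sS P_n$ on the range of $P_n$, and pass to the limit $n\to\infty$. The hypothesis $\mathrm{intdim}(\sV)<\infty$ together with $\|\sV\|=v<\infty$ forces $\tr(\sV)<\infty$, so $\sV$ is trace class and dominated convergence justifies passing to the limit in both the trace exponential and the intrinsic-dimension inequality. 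The remaining steps (Lieb, subexponential bound, Chernoff optimization) then carry over verbatim from the matrix setting.
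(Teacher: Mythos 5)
The paper offers no proof of this lemma at all --- it is quoted from Theorem~7.3.1 of \cite{tropp2015introduction} together with the finite-rank approximation technique of \cite{minsker2017some} --- so your proposal can only be measured against that cited route, and structurally you follow it faithfully: a Chernoff bound through a convex test function, Lieb's concavity, the subexponential bound on $\log\Expect e^{\theta\sZ_i}$, the intrinsic-dimension estimate $\tr\,\varphi(A)\le\mathrm{intdim}(A)\,\varphi(\Vert A\Vert)$, and a projection-and-limit argument to pass from matrices to compact operators. Your observation that the Hermitian-dilation route through Lemma~\ref{lemma:mainConcentration} would only deliver $8\,\mathrm{intdim}(\sV)$ is also correct, and your remarks on why $\sV$ is trace class and why $L$, $v$ are unchanged under $\sS\mapsto-\sS$ are fine.

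The gap is the constant you assert for the one-sided estimate. The machinery you describe yields, after choosing $\theta=t/(v+Lt/3)$, a prefactor of the form $d\,\bigl(1+3/(\theta t)^2\bigr)$, and the hypothesis $t\ge\sqrt{v}+L/3$ guarantees only $\theta t\ge 1$; hence the one-sided constant is $4d$ --- which is exactly what Tropp's Theorem~7.3.1 states for $\Prob(\lambda_{\max}(\sS)\ge t)$ --- and not the $2d$ you claim. (At $\theta t=1$ the prefactor $e^{a}/(e^{a}-a-1)$ equals about $3.79$, so no sharpening to $2$ is possible under this threshold.) Obtaining $2d$ one-sided would require $\theta t\ge\sqrt{3}$, i.e., a strictly stronger condition on $t$ than the lemma imposes, and you give no argument for it. Doubling the honest $4d$ by the union bound over $\pm\sS$ lands at $8d$, the same constant as the dilation route you set out to avoid, so as written the proposal does not establish the stated $4d$. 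A secondary point: in infinite dimensions $\Expect\tr e^{\theta\sS}$ is infinite, so the Chernoff step must be run from the outset with the centered test function $\varphi(a)=e^{a}-a-1$ (using $\Expect\sZ_i=0$ to discard the linear term), not with the displayed inequality $\Prob(\lambda_{\max}(\sS)\ge t)\le e^{-\theta t}\Expect\tr e^{\theta\sS}$; you invoke the correct intrinsic-dimension lemma for convex $\varphi$ with $\varphi(0)=0$, but the Chernoff bound you actually write down is the dimension-dependent one.
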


\section{Proof for Section~\ref{sec:fpca:theory}}\label{appendix:fpca:proof}

\subsection{Proof of Lemma~\ref{lemma:pcaLinearSum} }
Define $\hsP_r = \htheta_r\otimes \htheta_r$. With the definition of the resolvent
$\sR_{\hsC_{XX}}(\eta) = ( \hsC_{XX}-\eta I )^{-1} = -  \sum_{r=1}^\infty \frac{1}{ \eta - \hsigma_{r}^2  } \hsP_r$,
the linear error term can be expanded as
\begin{align*}
	L_R(\sE) & = \frac{1}{2\pi i} \oint_{\Gamma_R} \sR_{\hsC_{XX}}(\eta) \sE \sR_{\hsC_{XX}}(\eta) \mathrm{d}\eta 
	= \sum_{r=1}^R \sum_{s=R+1}^\infty \frac{1}{\hsigma_{r}^2 - \hsigma_{s}^2}
	\left( \hsP_r\sE \hsP_s + \hsP_s\sE \hsP_r \right).
\end{align*}
by Cauchy integration formula. The above results can also be deduced from Theorem~5.1.4 of \cite{hsing2015theoretical}.
Notice that
\begin{align*}
	\hsP_r (x_n \otimes x_n) \hsP_s &= 
	(\htheta_r\otimes \htheta_r) (x_n \otimes x_n) (\htheta_s\otimes \htheta_s)  \\
	&= \langle \htheta_r, x_n\rangle \cdot \langle \htheta_s, x_n \rangle\cdot (\htheta_s\otimes \htheta_r) 
	= \hsigma_r \hsigma_s \hxi_{ns}\hxi_{nr} (\htheta_s\otimes \htheta_r),
\end{align*}
and $\hsP_r \hsC_{XX} \hsP_s = 0$ for any $r\neq s$. From these, we know for $r\neq s$ that
\begin{align*}
	\hsP_r \sE \hsP_s = \hsP_r (\tsC_{XX} - \hsC_{XX}) \hsP_s = \frac{\hsigma_r\hsigma_s}{CN} \sum_{c=1}^C
	\frac{\txi_{cr}\txi_{cs} }{\tp_c} (\htheta_s\otimes \htheta_r).
\end{align*}
In the above, $\txi_{cs}$ is the $s$-th score of the subsampled $\tx_c$. 
It holds $\txi_{cs}  = \langle \tx_c, \htheta_s\rangle/\hsigma_{s}$ when $\hsigma_s>0$; and $\txi_{cs}  =1$ when $\hsigma_s=0$.
Notice it is computed from the full sample eigenfunction $\htheta_s$ and the subsampled $\tx_c$. As a result, the linear error term can be expressed as
$L_R(\sE) = \frac{1}{C} \sum_{c=1}^C \frac{\sZ_c}{N\tp_c}$, with
$$ \sZ_c =  \sum_{r=1}^R
\sum_{s=R+1}^\infty \frac{\hsigma_r\hsigma_s}{\hsigma_r^2 - \hsigma_s^2} \times
\txi_{cr}\txi_{cs} \times 
\big[\htheta_r \otimes \htheta_s +
\htheta_s \otimes \htheta_r\big].$$

\subsection{Proof of Theorem \ref{thm:projectionOperator}} \label{sec:proof:fpca}
By Lemma~\ref{lemma:lsbound}, we have
$\|\tsP_R - \hsP_R\|  \le \| L_R(\sE) \| +
\| S_R(\sE) \|  \le \| L_R(\sE) \|  +K_R \| \sE \| ^2/ g_R^2$.
It follows that
$$
\Prob\big(\|\tsP_R - \hsP_R\| \ge \epsilon + \hsigma_1^4K_R\epsilon^2/ g_R^2\big) 
\le \Prob \big( \| L_R(\sE) \| \ge \epsilon \big)
+ \Prob \big( \| \sE \| \ge \hsigma_{1}^2\epsilon \big).
$$
The two probabilities on the right hand side are controlled by the next two lemmas. The conclusion of  Theorem~\ref{thm:projectionOperator} follows by combing the next two lemmas and noting that
$\Delta_0 \le R + \Delta_R$.

\begin{lemma} \label{lemma:projection:part1}
	Under the conditions of Theorem \ref{thm:projectionOperator}, it holds that
	\begin{align*}
		\Prob \big( \| L_R(\sE) \| \ge \epsilon \big)
		\le 8R \times 
		\exp\Big( -\frac{C\epsilon^2/2}{\frac{\beta G_R^2}{2} \big(R + \Delta_R\big)^2+ 
			[G_R\beta \big( R + \Delta_R\big) /\sqrt{2}] \times  \epsilon /3} \Big)
	\end{align*}
	for $\epsilon\cdot  C\ge\sqrt{\beta C/2}G_R(R+\Delta_R) + [G_R\beta \big( R + \Delta_R\big) /\sqrt{2}]/3$.
\end{lemma}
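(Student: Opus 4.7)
The strategy is to represent $L_R(\sE)$ as a sum of iid mean-zero self-adjoint operators and invoke the intrinsic-dimension operator Bernstein inequality of Lemma~\ref{lemma:mainConcentration2}. By Lemma~\ref{lemma:pcaLinearSum}, $L_R(\sE)=\sum_{c=1}^C \sZ_c/(CN\tp_c)$ is an iid sum; each summand is self-adjoint (since $\sZ_c$ is a sum of operators of the form $\htheta_r\otimes\htheta_s+\htheta_s\otimes\htheta_r$) and has mean zero (by linearity of $L_R$ and the unbiasedness $\Expect\tsC_{XX}=\hsC_{XX}$, noting $L_R(\hsC_{XX})=0$ because $\hsP_r\hsC_{XX}\hsP_s=0$ for $r\neq s$). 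The proof then reduces to assembling the Bernstein parameters $L$, $v$ and $d$ from this representation.

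For the uniform summand bound, I would rerun the chain~\eqref{eqn:ZcHSnorm}--\eqref{eqn:intuitive}, replacing the heuristic $f_{rs}\approx 1$ by the rigorous estimate $f_{rs}\leq G_R$ valid under Condition~\ref{assumption:eigenvalue}, to obtain
\[
\|\sZ_c\|\leq\|\sZ_c\|_{HS}\leq\frac{G_R}{\sqrt{2}}\Big[\textstyle\sum_{r=1}^R\txi_{cr}^2+\|(I-\hsP_R)\tx_c\|^2/\hsigma_R^2\Big].
\]
The FunPrinSS lower bound $p_n\geq p_n^{\mathrm{Exact}}/\beta$ combined with the normalization estimate~\eqref{eqn:fpca:dimension} produces the matching denominator bound $N\tp_c\geq [\sum_r\txi_{cr}^2+\|(I-\hsP_R)\tx_c\|^2/\hsigma_R^2]/[\beta(R+\Delta_R)]$; the bracketed factors cancel exactly, leaving $\|\sZ_c/(N\tp_c)\|\leq G_R\beta(R+\Delta_R)/\sqrt{2}$, which is the $L$ in the claim. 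The identity $\tr(\sZ_c^2)=\|\sZ_c\|_{HS}^2$ and the same cancellation bound the trace of the variance proxy $\sum_c\Expect[(\sZ_c/(CN\tp_c))^2]$ by $G_R^2\beta(R+\Delta_R)^2/(2C)$; since $\|\cdot\|\leq\tr$ for PSD operators, this yields the $v$ in the exponential.

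The intrinsic-dimension bound $d\leq 2R$ comes from decomposing $\sZ_c=\sZ_c^{(1)}+(\sZ_c^{(1)})^*$ where $\sZ_c^{(1)}=\sum_{r=1}^R w_{r,c}\otimes\htheta_r$ and $w_{r,c}=\txi_{cr}\sum_{s>R}\frac{\hsigma_r\hsigma_s}{\hsigma_r^2-\hsigma_s^2}\txi_{cs}\htheta_s$ lies in the orthogonal complement of $\mathrm{span}(\htheta_1,\ldots,\htheta_R)$. Since $(\sZ_c^{(1)})^2=0=((\sZ_c^{(1)})^*)^2$, one has $\sZ_c^2=\sZ_c^{(1)}(\sZ_c^{(1)})^*+(\sZ_c^{(1)})^*\sZ_c^{(1)}$, whose two self-adjoint summands are supported on the orthogonal subspaces $(I-\hsP_R)(\sH_X)$ and $\mathrm{span}(\htheta_1,\ldots,\htheta_R)$. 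Writing the expected variance as $\tilde T_B+\tilde T_A$ accordingly, the principal block $\tilde T_A$ has rank $\leq R$, so $\tr(\tilde T_A)\leq R\|\tilde T_A\|$; and the identity $\tr(\sA\sA^*)=\tr(\sA^*\sA)$ applied inside each expectation gives $\tr(\tilde T_B)=\tr(\tilde T_A)$. Orthogonality of the supports yields $\|\tilde T_B+\tilde T_A\|=\max(\|\tilde T_B\|,\|\tilde T_A\|)\geq\|\tilde T_A\|$, so
\[
\tr(\tilde T_B+\tilde T_A)=2\tr(\tilde T_A)\leq 2R\|\tilde T_A\|\leq 2R\|\tilde T_B+\tilde T_A\|,
\]
i.e.\ $\text{intdim}(\tilde T_B+\tilde T_A)\leq 2R$. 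Plugging these $L$, $v$ and $d$ into Lemma~\ref{lemma:mainConcentration2} under the stated range of $\epsilon$ delivers the concentration bound.

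The main obstacle is precisely this intrinsic-dimension control: the residual directions $w_{r,c}$ depend on the random subsample and may collectively span an arbitrarily large subspace of $(I-\hsP_R)(\sH_X)$, so the expected variance need not be of low rank in the ambient infinite-dimensional space. The observation above sidesteps this difficulty by noting that $\sZ_c$ factors through the fixed $R$-dimensional principal subspace, forcing the complementary block $\tilde T_B$ to carry the same trace as the rank-$\leq R$ block $\tilde T_A$; this alone pins the total intrinsic dimension to $2R$ without any dependence on $\dim(\sH_X)$.
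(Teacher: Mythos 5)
Your proposal is correct and follows essentially the same route as the paper: the same representation $L_R(\sE)=\frac{1}{C}\sum_c \sZ_c/(N\tp_c)$ from Lemma~\ref{lemma:pcaLinearSum}, the same operator Bernstein inequality (Lemma~\ref{lemma:mainConcentration2}), the same Hilbert--Schmidt bound with $f_{rs}\le G_R$ and the probability cancellation yielding $L=G_R\beta(R+\Delta_R)/\sqrt{2}$, and the same trace bound for the variance proxy. Your intrinsic-dimension argument is packaged more structurally (via the nilpotent split $\sZ_c=\sZ_c^{(1)}+(\sZ_c^{(1)})^*$ and the equal-trace orthogonal blocks), whereas the paper reaches the same $\mathrm{intdim}\le 2R$ by directly lower-bounding $\|\Expect\sS^2\|$ with the average of the $R$ diagonal coefficients on the principal subspace; the two observations are equivalent.
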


\begin{lemma} \label{lemma:projection:part2}
	Denote  $\Delta_0=\mathrm{intdim}(\hsC_{XX})$ as the intrinsic dimension of $\hsC_{XX}$.
	Under the conditions of Theorem \ref{thm:projectionOperator}, we have
	$$
	\Prob\left(\left\Vert \tsC_{XX} - \hsC_{XX}\right\Vert \ge \epsilon \hsigma_{1}^2\right) \le 
	4 \Delta_0 \exp\Big(- \frac{C\epsilon^2/2}{\beta (R + \Delta_R) + [\beta (R + \Delta_R)+1]\epsilon/3}\Big),$$
	for $\epsilon$ satisfying 	$\epsilon \cdot C > \sqrt{C\beta (R + \Delta_R)} + [\beta(R+\Delta_R) + 1] /3$
\end{lemma}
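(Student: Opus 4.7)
The plan is to apply the intrinsic-dimension Bernstein inequality of Lemma~\ref{lemma:mainConcentration2} to the centered sum
$$
\tsC_{XX} - \hsC_{XX} = \frac{1}{C}\sum_{c=1}^C \sY_c, \qquad \sY_c := \frac{1}{N\tp_c}\,\tx_c\otimes \tx_c - \hsC_{XX},
$$
whose summands are i.i.d.\ conditional on the full dataset, mean zero, and self-adjoint. To deploy the template I need to extract a uniform operator-norm bound $\|\sY_c\|\le L$ and a semidefinite variance proxy $\sV \succeq \sum_c \Expect[(\sY_c/C)^2]$, then read off $v = \|\sV\|$ and $\mathrm{intdim}(\sV)$.

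The one genuinely substantive ingredient is the ratio inequality
$$
\frac{\|x_n\|^2}{N p_n^{\mathrm{Exact}}} \;\le\; (R+\Delta_R)\,\hsigma_1^2,
$$
which is where the structure of the FunPrinSS probability enters. Writing $A_n = \sum_{r=1}^R \hxi_{nr}^2$ and $B_n = \|(I-\hsP_R)x_n\|^2$, the Karhunen--Lo\`eve expansion yields $\|x_n\|^2 \le \hsigma_1^2 A_n + B_n$, while the computation in~\eqref{eqn:fpca:dimension} bounds the normalizer of $p_n^{\mathrm{Exact}}$ by $N(R+\Delta_R)$, hence $p_n^{\mathrm{Exact}} \ge (A_n + B_n/\hsigma_R^2)/[N(R+\Delta_R)]$. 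The ratio then reads $(R+\Delta_R)\hsigma_1^2\cdot (A_n+B_n/\hsigma_1^2)/(A_n+B_n/\hsigma_R^2)\le (R+\Delta_R)\hsigma_1^2$ since $\hsigma_1\ge\hsigma_R$. Combined with $p_n\ge p_n^{\mathrm{Exact}}/\beta$ this gives $\|x_n\|^2/(Np_n)\le \beta(R+\Delta_R)\hsigma_1^2$. A triangle inequality then yields $\|\sY_c\|\le [\beta(R+\Delta_R)+1]\hsigma_1^2 =: L$; and using the identity $(x\otimes x)^2 = \|x\|^2\,(x\otimes x)$ together with $\Expect\sY_c = 0$,
$$
\Expect(\sY_c^2) \;=\; \frac{1}{N}\sum_{n=1}^N \frac{\|x_n\|^2}{N p_n}(x_n\otimes x_n) - \hsC_{XX}^2 \;\preceq\; \beta(R+\Delta_R)\hsigma_1^2\cdot \hsC_{XX},
$$
after discarding the negative semidefinite second term. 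Hence $\sV := \beta(R+\Delta_R)\hsigma_1^2\,\hsC_{XX}/C$ is a valid variance proxy with $v = \|\sV\| = \beta(R+\Delta_R)\hsigma_1^4/C$ and $\mathrm{intdim}(\sV)=\mathrm{intdim}(\hsC_{XX})=\Delta_0$.

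To finish, I plug $t=\epsilon\hsigma_1^2$ into Lemma~\ref{lemma:mainConcentration2} with per-term norm bound $L/C$. The admissibility condition $t\ge \sqrt{v}+L/(3C)$ simplifies to $\epsilon C\ge \sqrt{C\beta(R+\Delta_R)}+[\beta(R+\Delta_R)+1]/3$, exactly as assumed, and after the common factor $\hsigma_1^4/C$ cancels, the Bernstein exponent becomes
$$
-\frac{t^2/2}{v+(L/C)\,t/3} \;=\; -\frac{C\epsilon^2/2}{\beta(R+\Delta_R)+[\beta(R+\Delta_R)+1]\epsilon/3},
$$
matching the stated bound with prefactor $4\Delta_0$. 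The main obstacle is really the ratio inequality above, as it is the place where the FunPrinSS lower bound~\eqref{eqn:prob:NearlyExact} must be converted into spectral control of the summands; the rest of the argument is purely mechanical bookkeeping inside the Bernstein template.
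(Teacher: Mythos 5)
Your proposal is correct and follows essentially the same route as the paper's proof: the paper normalizes the summands by $\hsigma_1^2$ up front and applies Lemma~\ref{lemma:mainConcentration2} to $\sS=\sum_c\sZ_c$ at threshold $C\epsilon$, whereas you keep the unnormalized $\sY_c$ and use threshold $\epsilon\hsigma_1^2$, which is an algebraically equivalent bookkeeping choice. Your key ratio inequality $\|x_n\|^2/(Np_n)\le\beta(R+\Delta_R)\hsigma_1^2$ is exactly the bound the paper extracts from~\eqref{eqn:proof:projprob}, and the resulting $L$, variance proxy, intrinsic dimension $\Delta_0$, and exponent all match.
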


\subsection{Proof of Lemma~\ref{lemma:projection:part1} }
Recall from Lemma~\ref{lemma:pcaLinearSum}, we have the relation 
$L_R(\sE)  = \frac{1}{C} \sum_{c=1}^C \frac{\sZ_c}{N\tp_c}$ with
$$ \frac{\sZ_c}{N\tp_c} =  \sum_{r=1}^R
\sum_{s=R+1}^\infty \frac{\hsigma_r\hsigma_s}{\hsigma_r^2 - \hsigma_s^2} \times
\frac{\txi_{cr}\txi_{cs}}{\tp_c N} \times 
\big[\htheta_r \otimes \htheta_s +
\htheta_s \otimes \htheta_r\big].$$
We will apply Lemma~\ref{lemma:mainConcentration2} to the summation $\sS=\sum_{c=1}^C \frac{\sZ_c}{N\tp_c}$. We first bound the operator norm of each summand by
\begin{align*}
	\|  \frac{\sZ_c}{N\tp_c} \|^2& \le \| \frac{\sZ_c}{N\tp_c}\|_{HS}^2  = 
	2 \sum_{r=1}^R
	\sum_{s=R+1}^\infty \frac{\hsigma_r^2\hsigma_s^2}{[\hsigma_r^2 - \hsigma_s^2]^2} \times
	\frac{\txi_{cr}^2\txi_{cs}^2}{\tp_c^2 N^2}   \\
	&= 
	2 \sum_{r=1}^R
	\sum_{s=R+1}^\infty \frac{\hsigma_r^2\hsigma_R^2}{[\hsigma_r^2 - \hsigma_s^2]^2} \times
	\frac{\txi_{cr}^2(\hsigma_s^2\txi_{cs}^2 / \hsigma_R^2)}{\tp_c^2 N^2}   \\
	&\stackrel{(i)}{\le} \frac{2 G_R^2}{\tp_c^2 N^2} \Big( \sum_{r=1}^R\txi_{cr}^2\Big) 
	\Big( \sum_{s=R+1}^\infty\hsigma_s^2 \txi_{cs}^2 /\hsigma_R^2 \Big) \\
	&\stackrel{(ii)}{\le} \frac{ G_R^2}{2\tp_c^2 N^2} \Big( \sum_{r=1}^R\txi_{cr}^2
	+ \sum_{s=R+1}^\infty\hsigma_s^2 \txi_{cs}^2 /\hsigma_R^2 \Big)^2. 
\end{align*}
In the above, (i) used that
$\frac{\hsigma_r^2}{\hsigma_r^2 - \hsigma_s^2} \le G_R$ and $\hsigma_r^2 - \hsigma_s^2 \ge \hsigma_R^2 - \hsigma_{R+1}^2$ for any $r\le R$ and $s>R$. Inequality (ii) used that
$ab \le (\frac{a+b}{2})^2$ for real numbers $a$ and $b$.
For the sampling probability $\{p_n\}_{n=1}^N$ satisfying $p_n \ge p_n^{\text{Exact}} / \beta$, we have
$$
p_n \ge \frac{1}{\beta} \Big( \sum_{r=1}^R\hxi_{nr}^2
+ \sum_{s=R+1}^\infty\hsigma_s^2 \hxi_{ns}^2 /\hsigma_R^2 \Big) /
\sum_{m=1}^N \Big( \sum_{r=1}^R\hxi_{mr}^2
+ \sum_{s=R+1}^\infty\hsigma_s^2 \hxi_{ms}^2 /\hsigma_R^2 \Big),
$$
as $\sum_{n=1}^N \hxi_{nr}^2/N = 1$, we have
\begin{align}
	p_n &\ge \frac{1}{\beta\cdot N} \Big( \sum_{r=1}^R\hxi_{nr}^2
	+ \sum_{s=R+1}^\infty\hsigma_s^2 \hxi_{ns}^2 /\hsigma_R^2 \Big) /
	\big( R
	+ \sum_{s=R+1}^\infty\hsigma_s^2  /\hsigma_R^2 \big) \nonumber\\
	&  \ge \frac{1}{\beta\cdot N \big( R
		+ \Delta_R \big)} \Big( \sum_{r=1}^R\hxi_{nr}^2
	+ \sum_{s=R+1}^\infty\hsigma_s^2 \hxi_{ns}^2 /\hsigma_R^2 \Big).
	\label{eqn:proof:projprob}
\end{align}
As a result, 
$\|  \frac{\sZ_c }{N\tp_c}\|^2 \le  \frac{ G_R^2\beta^2}{2} \big( R + \Delta_R\big)^2$, which is
\begin{equation}\label{eqn:projectLinearOperatorUpper}
	\|  \frac{\sZ_c}{N\tp_c} \| \le L = G_R\beta \big( R + \Delta_R\big) /\sqrt{2}.
\end{equation}

Next, we bound the variance of the summation $\sS=\sum_{c=1}^C \frac{\sZ_c }{N\tp_c}$. Note that
\begin{align*}
	\frac{\sZ_c\sZ_c^*}{N^2\tp_c^2} = \frac{\sZ_c^*\sZ_c}{N^2\tp_c^2}=&\sum_{r,r'=1}^R \htheta_r \otimes \htheta_{r'} \Big[
	\sum_{s=R+1}^\infty \frac{\hsigma_r\hsigma_{r'}\hsigma_s^2}{(\hsigma_r^2 - \hsigma_s^2)(\hsigma_{r'}^2 - \hsigma_s^2)}
	\times \frac{\txi_{cr}\txi_{cr'} \txi_{cs}^2}{\tp_c^2 N^2}\Big] \\ 
	& + \sum_{s,s'=R+1}^\infty \htheta_s \otimes \htheta_{s'} \Big[
	\sum_{r=1}^R \frac{\hsigma_s\hsigma_{s'}\hsigma_r^2}{(\hsigma_r^2 - \hsigma_s^2)(\hsigma_{r}^2 - \hsigma_{s'}^2)}
	\times \frac{\txi_{cr}^2\txi_{cs} \txi_{cs'}}{\tp_c^2 N^2}\Big].
\end{align*}
Because all the $\sZ_c$'s are independent with zero mean, and $ \sS^*\sS$ is positive semi-definite and self-adjoint, its
largest eigenvalue is upper bounded by its trace 
\begin{align}
	&\|\Expect\sS^2\|  \le C\cdot \mathrm{tr} (\Expect \frac{\sZ_c\sZ_c^*}{N^2\tp_c^2}) \nonumber\\
	= &C \sum_{r=1}^R \Big[
	\sum_{s=R+1}^\infty \frac{\hsigma_r^2\hsigma_s^2}{(\hsigma_r^2 - \hsigma_s^2)^2}
	\times \sum_{n=1}^N \frac{\hxi_{nr}^2\hxi_{ns}^2}{p_n N^2}\Big] 
	+ C \sum_{s=R+1}^\infty \Big[
	\sum_{r=1}^R \frac{\hsigma_r^2\hsigma_s^2}{(\hsigma_r^2 - \hsigma_s^2)^2}
	\times \sum_{n=1}^N \frac{\hxi_{nr}^2\hxi_{ns}^2}{p_n N^2}\Big] \nonumber\\ 
	= &2C \sum_{r=1}^R 
	\sum_{s=R+1}^\infty \frac{\hsigma_r^2\hsigma_s^2}{(\hsigma_r^2 - \hsigma_s^2)^2}
	\times \sum_{n=1}^N \frac{\hxi_{nr}^2\hxi_{ns}^2}{p_n N^2}. \label{eqn::projectTraceUpper}
\end{align}
It follows that
\begin{align}
	\|\Expect\sS^2\| & \le 2CG_R^2  \sum_{n=1}^N \sum_{r=1}^R  \sum_{s=R+1}^\infty\frac{\hxi_{nr}^2(\hsigma_s^2\hxi_{ns}^2/\hsigma_R^2)}{p_n N^2}\nonumber\\ 
	&\le 2CG_R^2  \sum_{n=1}^N  \frac{1}{p_n N^2} \Big(\sum_{r=1}^R\hxi_{nr}^2  \Big)
	\Big( \sum_{s=R+1}^\infty \hsigma_s^2\hxi_{ns}^2/\hsigma_R^2 \Big) \nonumber\\ 
	&\le CG_R^2  \sum_{n=1}^N  \frac{1}{2p_n N^2} \Big(\sum_{r=1}^R\hxi_{nr}^2 +
	\sum_{s=R+1}^\infty \hsigma_s^2\hxi_{ns}^2/\hsigma_R^2 \Big)^2 \nonumber\\ 
	&\le \frac{\beta CG_R^2}{2N^2}  \Bigg[\sum_{n=1}^N  \Big(\sum_{r=1}^R\hxi_{nr}^2 +
	\sum_{s=R+1}^\infty \hsigma_s^2\hxi_{ns}^2/\hsigma_R^2 \Big)^2\Bigg]^2  \nonumber\\ 
	&\le  \frac{\beta CG_R^2}{2} \big(R + \Delta_R\big)^2.
	\label{eqn:projectVarUpper}
\end{align}
The eigenvalue of $\Expect \sS^2$ is lower bounded by any of the coefficient of $\htheta_r\otimes \htheta_r$
for $r = 1,\cdots, R$. It can also be lower bounded by their average
\begin{align*}
	\|\Expect \sS^2\| \ge 
	\frac{C}{R} \sum_{r=1}^R \langle\htheta_r\otimes \htheta_r,
	\Expect \sZ_c\sZ_c^*\rangle
	=
	\frac{C}{R} \sum_{r=1}^R \Big[
	\sum_{s=R+1}^\infty \frac{\hsigma_r^2\hsigma_s^2}{(\hsigma_r^2 - \hsigma_s^2)^2}
	\times \sum_{n=1}^N \frac{\hxi_{nr}^2\hxi_{ns}^2}{p_n N^2}\Big]
\end{align*}
Combining this with the trace evaluation in~\eqref{eqn::projectTraceUpper}, the intrinsic dimension has upper bound $ \mathrm{intdim} ( \Expect \sS^2)\le 2R$.
Therefore, applying Lemma~\ref{lemma:mainConcentration2} with~\eqref{eqn:projectLinearOperatorUpper} and~\eqref{eqn:projectVarUpper}, we get
\begin{align*}
	\Prob \big( \| L_R(\sE) \| \ge \epsilon \big) &=
	\Prob \big( \| \sS \| \ge  C\epsilon \big)  \\
	\le & 8R \times 
	\exp\Big( -\frac{C\epsilon^2/2}{\frac{\beta G_R^2}{2} \big(R + \Delta_R\big)^2+ 
		[G_R\beta \big( R + \Delta_R\big) /\sqrt{2}] \times  \epsilon /3} \Big).
\end{align*}

\subsection{Proof of Lemma~\ref{lemma:projection:part2} }
Notice that
$$
\frac{1}{\hsigma_{1}^2}\left[ \tsC_{XX} - \hsC_{XX} \right]
= 	\frac{1}{\hsigma_{1}^2} \times
\frac{1}{C}\sum_{c=1}^C \Big[\frac{1}{\tp_cN}\tx_c\otimes \tx_c
- \hsC_{XX} \Big] = \frac{1}{C} \sum_{c=1}^C \sZ_c = \frac{1}{C}\sS\,. 
$$
In the above, we have set $\sZ_c  :=\frac{1}{\hsigma_{1}^2}  \Big[\frac{1}{\tp_cN}\tx_c\otimes \tx_c - \hsC_{XX} \Big] $.
Our goal is to apply Lemma~\ref{lemma:mainConcentration2} to 
the summation $\sS =  \sum_{c=1}^C \sZ_c$. 
Each summand has the operator norm upper bound
\begin{equation} \label{proof:corrCx:ZcL}
	\| \sZ_c \| \le \frac{1}{N\tp_c} \cdot \frac{\| \tx_c\|^2}{\hsigma_{1}^2}+1.
\end{equation}
According to the operator norm bound~\eqref{proof:corrCx:ZcL} and the probability bound~\eqref{eqn:proof:projprob}, we can derive that
\begin{align*}
	\| \sZ_c\| &\le \beta (R + \Delta_R) \frac{\| \tx_c\|^2/\hsigma_1^2}{\sum_{r=1}^R\txi_{cr}^2
		+ \sum_{s=R+1}^\infty\hsigma_s^2 \txi_{cs}^2 /\hsigma_R^2 } +  1 \\
	&= \beta (R + \Delta_R) \frac{
		\sum_{r=1}^\infty\hsigma_r^2 \txi_{cr}^2 /\hsigma_1^2
	}{\sum_{r=1}^R\txi_{cr}^2
		+ \sum_{s=R+1}^\infty\hsigma_s^2 \txi_{cs}^2 /\hsigma_R^2 } +  1 \\
	&\le \beta (R + \Delta_R)+ 1.
\end{align*}
with the sampling probability $\{p_n\}_{n=1}^N$ satisfying $p_n \ge p_n^{\text{Exact}} / \beta$.

For the variance of $\sS$, it holds that
\begin{align}
	\Expect [\sZ_c^*\sZ_c]  &= \frac{1}{\hsigma_{1}^4 } \Big[
	\sum_{n=1}^N 	\frac{1}{p_nN^2} \langle x_n, x_n \rangle x_n \otimes x_n
	-\hsC_{XX}^*\hsC_{XX}	\Big] \nonumber \\
	& \preceq \frac{1}{\hsigma_{1}^4 }  \sum_{n=1}^N 	\frac{1}{p_nN^2} \langle x_n, x_n \rangle x_n \otimes x_n.
	\label{corrCx:Vbound}
\end{align}
Based on~\eqref{corrCx:Vbound} and~\eqref{eqn:proof:projprob}, we can get that
$$
\Expect [\sZ_c^*\sZ_c] \preceq \frac{1}{N} \sum_{n=1}^N \beta (R + \Delta_R) \times \frac{x_n \otimes x_n}{\hsigma_{1}^2}
=  [\beta  (R + \Delta_R) \ /\hsigma_{1}^2] \hsC_{XX}.
$$
Set $\sV = C[\beta  (R + \Delta_R) \ /\hsigma_{1}^2] \hsC_{XX}$, 
we have $\|\sV\| = \beta C (R + \Delta_R)$, and 
$ \mathrm{intdim}(\sV) = \Delta_0$.
Therefore, by Lemma~\ref{lemma:mainConcentration2}, for $\epsilon$ satisfying 
$\epsilon \cdot C > \sqrt{C\beta (R + \Delta_R)} + [\beta(R+\Delta_R) + 1] /3$,
\begin{align*}
	\Prob\Big(\left\Vert \tsC_{XX} - \hsC_{XX}\right\Vert \ge \epsilon \hsigma_{1}^2\Big) =&
	\Prob\Big(\left\Vert \sS\right\Vert \ge  C\epsilon\Big)  \\
	\le & 
	4 \Delta_0 \exp\Big(- \frac{\epsilon^2C/2}{\beta (R + \Delta_R) + [\beta (R + \Delta_R)+1]\epsilon/3}\Big).
\end{align*}

\section{Proof for Section~\ref{sec:flr} }\label{appendix:flr:proof}

\subsection{Proof of Lemma~\ref{lemma:regErrorDeco} }

In classical linear regression,  the residual is perpendicular to the columns of the design matrix. For functional regression with 
scalar response, this residual $\mY^\perp$ is perpendicular to the span of the leadning R scores. In particular,  
define $\vxi_r = ( \hxi_{1r},\cdots, \hxi_{Nr})^T$ to be a  vector of the $r$-th scores for all samples, 
then $ \mY^{\perp}$ is perpendicular to the span of the first $R$ score vectors, which is 
$\text{span}\{\vxi_1, \cdots, \vxi_R\}$. 
This implies $\mY$ can be decomposed  into the two parts
$\mY = \mY^{\parallelsum}+ \mY^{\perp}$,
with $ \mY^{\parallelsum}$  inside the span  $\text{span}\{\vxi_1, \cdots, \vxi_R\}$. 

Now we verify that the $ \mY^{\perp}$ is perpendicular to the span of the first $R$ score vectors. Note $\mY^{\perp} =\mY - \sT\hPsi= (I - \sT(\sT^*\sT)^+\sT^* ) \mY$. Also, for any $\ma\in\bbR^N$, it holds that
\begin{align*}
	\sT(\sT^*\sT)^+\sT^* \ma &=
	\sT(\sT^*\sT)^+ \Big(\sum_{n=1}^{N} a_n x_n/N\Big) \\
	&= \sT \Big(\sum_{r=1}^R\sum_{n=1}^{N} a_n 
	\langle  \htheta_r, x_n\rangle \htheta_r /\hsigma_r^2/N\Big)  \\
	&= \sT \Big(\sum_{r=1}^R\sum_{n=1}^{N} a_n \hxi_{nr}\htheta_r /(\hsigma_rN)\Big)  \\
	&= \sT \Big(\sum_{r=1}^R\langle \ma, \vxi_r\rangle_N\htheta_r /\hsigma_r \Big)   = \sum_{r=1}^R\langle \ma, \vxi_r\rangle_N \vxi_r.
\end{align*}
As a result, $\sT(\sT^*\sT)^+\sT^*  =  \sum_{r=1}^R \vxi_r \otimes \vxi_r$ is the projection onto $\text{span}\{\vxi_1, \cdots, \vxi_R\}$.
It is also idempotent $(\sT(\sT^*\sT)^+\sT^* )^2 = \sT(\sT^*\sT)^+\sT^* $. It follows that
$$
\sT\hsC_{XX}^+\sT^* \mY^{\perp} = 
\sT(\sT^*\sT)^+\sT^* \mY^{\perp} = 
\sT(\sT^*\sT)^+\sT^* (I - \sT(\sT^*\sT)^+\sT^* ) \mY = \vzero.
$$
This is the last conclusion stated in Lemma~\ref{lemma:regErrorDeco}.

For each predictor $x_n$, we define a related element $x_n^+ = \sum_{r=1}^R \hxi_{nr} \htheta_r/\hsigma_r$. Define a new mapping 
$\sT^+:\bbR^N\to \sH_X$ such that for any  $\ma\in\bbR^N$,
$\sT^+$ maps $\ma$  to $
\sT^+\ma := \frac{1}{N}\sum_{n=1}^N a_n x_n^+ $. 
Similarly, we can also deduce that $\sT\sT^+ = \sum_{r=1}^R \vxi_r \otimes \vxi_r$. In fact, for any $\ma\in\bbR^N$, we have $\sT\sT^+\ma\in \bbR^N$.
The $m$-th element of $\sT\sT^+\ma$ is
\begin{align*}
	\langle x_m, \sT^+\ma\rangle &= 
	\langle x_m, \frac{1}{N}\sum_{n=1}^N a_n x_n^+\rangle
	=\frac{1}{N}\sum_{n=1}^N a_n\langle x_m,  x_n^+\rangle
	= \frac{1}{N}\sum_{n=1}^N a_n\Big(\sum_{r=1}^{R}\hxi_{nr}\hxi_{mr}\Big)\\
	&=\sum_{r=1}^{R}\Big(\frac{1}{N}\sum_{n=1}^N a_n\hxi_{nr}\Big)\hxi_{mr}
	= \sum_{r=1}^{R}\langle \ma, \vxi_r\rangle_N \hxi_{mr}.
\end{align*}
The above equation verifies that $\sT\sT^+ = \sum_{r=1}^R \vxi_r \otimes \vxi_r$ is exactly the projection operator onto $\text{span}\{\vxi_1,
\cdots, \vxi_R\}$. Therefore, we can get
$\mY^{\parallelsum} = \sT\sT^+\mY$,  $ \mY^{\perp} = (I -\sT\sT^+)\mY$. 
and $\sT\sT^+  \mY^{\perp}=\vzero$.

Our target is to analyze the prediction error $\mF =  \sT\widetilde{\Psi}-\sT\hPsi$, which is
$$
\mF = \sT\tsC_{XX}^+\tilde{z}
- \sT\hsC_{XX}^+\hat{z} =
\sT(\sT^*\sD^*\sD\sT)^+\sT^*\sD^*\sD\mY
- \sT(\sT^*\sT)^+\sT^*\mY.
$$
Plug in the decomposition $\mY = \mY^{\parallelsum}+ \mY^{\perp}$, we have
\begin{align*}
	\mF = &\sT(\sT^*\sD^*\sD\sT)^+\sT^*\sD^*\sD\mY^{\parallelsum} + \sT(\sT^*\sD^*\sD\sT)^+\sT^*\sD^*\sD\mY^{\perp} \nonumber\\
	& \qquad  -\sT(\sT^*\sT)^+\sT^*\mY^{\parallelsum}
	-\sT(\sT^*\sT)^+\sT^*\mY^{\perp}\nonumber \\
	\stackrel{(i)}{=} &\sT(\sT^*\sD^*\sD\sT)^+\sT^*\sD^*\sD\sT\sT^+\mY + \sT(\sT^*\sD^*\sD\sT)^+\sT^*\sD^*\sD\mY^{\perp} \nonumber\\
	& \qquad  -\sT(\sT^*\sT)^+\sT^*\sT\sT^+\mY\nonumber \\
	\stackrel{(ii)}{=} &\sT\tsP_R\sT^+\mY + \sT\tsC^+_{XX}\sT^*\sD^*\sD\mY^{\perp}   -\sT\hsP_R\sT^+\mY\nonumber \\
	= &\sT(\tsP_R-\hsP_R)\sT^+\mY \\
	&\qquad + \sT(\tsC^+_{XX} - \hsC^+_{XX})\sT^*(\sD^*\sD)\mY^{\perp} 
	+  \sT \hsC^+_{XX}\sT^*(\sD^*\sD)\mY^{\perp} .
\end{align*}
In the above, equality~(i) uses the relations $\sT(\sT^*\sT)^+\sT^* \mY^{\perp} = \vzero$ and $\mY^{\parallelsum} = \sT\sT^+\mY$.
The  equation~(ii) is based on the following equations 
\begin{align*}
	\tsP_R &=\tsC_{XX}^+ \tsC_{XX}= (\sT^*\sD^*\sD\sT)^+\sT^*\sD^*\sD\sT, \\
	\hsP_R &= \hsC_{XX}^+ \hsC_{XX}=(\sT^*\sT)^+\sT^*\sT.
\end{align*}

\subsection{Proof of Theorem~\ref{thm:regression}}

We prove a stronger result of Theorem~\ref{thm:regression2} below. Theorem~\ref{thm:regression} directly follows from 
Theorem~\ref{thm:regression2} by simplifying notations. 

\begin{theorem} \label{thm:regression2}
	Under Condition~\ref{assumption:eigenvalue},
	set $Z_1 =\hsigma_1^6K_R / ( g_R ^2\hsigma_R^2) $
	and $Z_2 = \beta(R+\Delta_R)$. In addition, 
	\begin{align*}
		V_1 &= G_R^2 Z^2_2/\beta, \quad 
		L_1 = G_RZ_2/2, \\
		V_2 &=  2 +   (2+G_R^2/2)Z_2^2/\beta,\quad 
		L_2 = \sqrt{2R}+ ( \sqrt{2}+ G_R/\sqrt{2} )Z_2,\\
		V_3 &= Z_2,\quad L_3 = Z_2+1.
	\end{align*}
	Suppose $C$ subsamples are obtained according to the  probability satisfying
	\begin{equation} 
		p_n \ge \frac{1}{\beta}
		\frac{\sum_{r=1}^R \hxi_{nr}^2 + \|(I - \hsP_R) x_n \|^2  / \hsigma_{R}^2}{\sum_{m=1}^N \big[ \sum_{r=1}^R \hxi_{mr}^2 + \|( I - \hsP_R) x_m \|^2  / \hsigma_{R}^2\big] }
	\end{equation}
	with $\beta\ge 1$. Then for large enough $C$, 
	with probability at least
	\begin{align*}
		&1 - 4(R + \Delta_R)  \exp\Big(-\frac{C\epsilon_1^2}{V_1 +L_1\epsilon_1/3 }\Big)
		-  8R  \exp\Big(-\frac{C\epsilon_1^2/2}{V_2 +L_2\epsilon_1/3 }\Big) \\
		& \qquad \qquad \qquad - 4 \Delta_0 \exp\Bigg(- \frac{C\epsilon^2_2/2}{V_3+ (V_3 + 1)\epsilon_2/3}\Bigg) - 3\epsilon_3^2/C
	\end{align*}
	it holds that
	\begin{align} 
		\| \sT\hPsi - \sT\widetilde{\Psi} \|_N & \le 
		\big( \epsilon_1 + (\hsigma_R/\hsigma_1) Z_1\cdot 
		\epsilon_2^2\big)  \|\mY \|_N \nonumber \\
		& \qquad +
		\big(\epsilon_1+  2Z_1 \cdot  \epsilon_2^2\big)  (1 + \sqrt{Z_2}/\epsilon_3)  \|\mY^{\perp}\|_N
		+(\sqrt{Z_2}/\epsilon_3) \|\mY^{\perp}\|_N .	\label{thm:regression:boundLong}
	\end{align}
	In the above, $\epsilon_1$ satisfies $ \epsilon_1\ge  \max_{j=1,2}\{\sqrt{V_j/C} + L_j/(3C)\}$. Additionally, $\epsilon_2$ satisfies $ \epsilon_2\ge  \sqrt{V_3/C} + L_3/(3C)$ and $\epsilon_2\le  g_R /3$. Lastly,  $\epsilon_3$ is any strictly positive number.
\end{theorem}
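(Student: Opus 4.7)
The plan is to start from the three-term decomposition of Lemma~\ref{lemma:regErrorDeco},
\begin{equation*}
\sT\widetilde{\Psi} - \sT\hPsi \;=\; \underbrace{\sT(\tsP_R-\hsP_R)\sT^+\mY}_{\mathrm{(I)}} \;+\; \underbrace{\sT(\tsC^+_{XX} - \hsC^+_{XX})\sT^*\sD^*\sD\mY^{\perp}}_{\mathrm{(II)}} \;+\; \underbrace{\sT\hsC^+_{XX}\sT^*\sD^*\sD\mY^{\perp}}_{\mathrm{(III)}},
\end{equation*}
and to bound each term on its own high-probability event, collecting them with a union bound to produce the stated tail probability. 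The four probabilistic contributions listed in the statement correspond respectively to the linear part of (I), the quadratic part of (I), the spectral-perturbation event used in (II), and the Markov bound for (III).

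\textbf{Term (I).} First I would use the identity $\sT^+ = \hsC^+_{XX}\sT^*$ to recognise $\sT^+\mY = \hPsi$, and then split $\tsP_R-\hsP_R = L_R(\sE)+S_R(\sE)$ as in Section~\ref{subsec:fpca:operator}. For the linear piece $\sT L_R(\sE)\hPsi$ I would insert the summand representation of Lemma~\ref{lemma:pcaLinearSum}: since $\hsP_R\hPsi = \hPsi$ kills the $\htheta_r\otimes\htheta_s$ term, each random summand lives in $\mathrm{range}(I-\hsP_R)$, and because the $C$ subsample draws are independent, Lemma~\ref{lemma:mainConcentration2} applies with a variance of order $V_1 = G_R^2 Z_2^2/\beta$ and per-summand bound $L_1 = G_R Z_2/2$. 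The two improvements over the plain FPCA bound come from $\|\hPsi\|^2 = \sum_{r\le R}\langle\mY,\vxi_r\rangle_N^2/\hsigma_r^2 \le \|\mY\|_N^2/\hsigma_R^2$ and from $\|\sT v\|_N \le \hsigma_{R+1}\|v\|$ whenever $v\in\mathrm{range}(I-\hsP_R)$, which together yield the first-order contribution $\epsilon_1\|\mY\|_N$. For the quadratic piece $\sT S_R(\sE)\hPsi$, Lemma~\ref{lemma:lsbound}(ii) combined with $\|\sT v\|_N\le\hsigma_1\|v\|$ and the same bound on $\|\hPsi\|$ gives $\|\sT S_R(\sE)\hPsi\|_N \le (\hsigma_R/\hsigma_1)\,Z_1\,\epsilon_2^2\,\|\mY\|_N$ on the event $\|\sE\|\le\hsigma_1^2\epsilon_2$, which is delivered by Lemma~\ref{lemma:projection:part2} calibrated with $V_3,L_3$.

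\textbf{Terms (II) and (III).} On the event $\|\sE\|\le g_R/3$ the leading $R$ eigenvalues of $\tsC_{XX}$ stay strictly inside $\Gamma_R$, so the representation $\tsC^+_{XX} = -(2\pi i)^{-1}\oint_{\Gamma_R}\eta^{-1}\sR_{\tsC_{XX}}(\eta)\,\mathrm{d}\eta$ is valid. Subtracting the corresponding formula for $\hsC^+_{XX}$ and invoking the resolvent identity $\sR_{\tsC_{XX}}(\eta)-\sR_{\hsC_{XX}}(\eta) = -\sR_{\hsC_{XX}}(\eta)\sE\sR_{\tsC_{XX}}(\eta)$ gives, via a standard stadium-contour length estimate, a linear-in-$\|\sE\|$ bound $\|\tsC^+_{XX}-\hsC^+_{XX}\|\lesssim \|\sE\|/(\hsigma_R^2 g_R)$. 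Combining this with $\|\sT v\|_N\le \hsigma_1\|v\|$ and $\|\sT^*\ma\|\le \hsigma_1\|\ma\|_N$ controls $\|\mathrm{(II)}\|_N$ in terms of $\|\sT^*\sD^*\sD\mY^\perp\|$; writing this as $\sT^*\mY^\perp + \sT^*(\sD^*\sD-I)\mY^\perp$ yields a deterministic $\|\mY^\perp\|_N$ piece and a mean-zero piece handled by the same Markov argument used for (III). For Term (III) the orthogonality $\sT\hsC^+_{XX}\sT^*\mY^\perp=\vzero$ (last line of Lemma~\ref{lemma:regErrorDeco}) rewrites $\mathrm{(III)} = \Pi_R\sD^*\sD\mY^\perp$ with $\Pi_R=\sum_{r\le R}\vxi_r\otimes\vxi_r$; a direct variance calculation using $p_n\ge p_n^{\mathrm{Exact}}/\beta$ gives $\Expect\|\Pi_R\sD^*\sD\mY^\perp\|_N^2\le Z_2\|\mY^\perp\|_N^2/C$, so that Markov's inequality with threshold $(\sqrt{Z_2}/\epsilon_3)\|\mY^\perp\|_N$ produces the $\epsilon_3^2/C$-type tail.

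\textbf{Main obstacle.} The most delicate point is Term (II): because the truncated pseudo-inverse is discontinuous in $\tsC_{XX}$, we must guarantee that $\Gamma_R$ still isolates exactly the leading $R$ eigenvalues of $\tsC_{XX}$ after perturbation, which forces the conditioning on $\{\|\sE\|\le g_R/3\}$ and explains the constraint $\epsilon_2\le g_R/3$ in the statement. A second technical subtlety is the coupling between the random sketching operator $\sD^*\sD$ and the fixed residual $\mY^\perp$, which requires variance bounds calibrated exactly to the FunPrinSS weights and is responsible both for the factor $Z_2=\beta(R+\Delta_R)$ and for the $(1+\sqrt{Z_2}/\epsilon_3)$ and $(\sqrt{Z_2}/\epsilon_3)$ multipliers appearing in~\eqref{thm:regression:boundLong}. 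Once these ingredients are in place, the theorem follows from a union bound over the four concentration events.
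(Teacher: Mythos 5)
Your overall skeleton matches the paper's: the three-term decomposition of Lemma~\ref{lemma:regErrorDeco}, the $L_R/S_R$ split for term (I) (and your observation that $\sT^+\mY=\hPsi$ and that only one of the two tensor summands of $\sZ_c$ survives is equivalent to the paper's computation $\sT(\htheta_r\otimes\htheta_s)\sT^+=(\hsigma_s/\hsigma_r)\vxi_r\otimes\vxi_s$, $\sT(\htheta_s\otimes\htheta_r)\sT^+=0$), the conditioning on $\{\|\sE\|\le g_R/3\}$ to make the truncated inverse well-behaved, and the Markov/variance argument for term (III). Those parts are sound.

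The genuine gap is in term (II). You bound $\|\tsC^+_{XX}-\hsC^+_{XX}\|$ by a crude resolvent-identity estimate of order $\|\sE\|/(\hsigma_R^2 g_R)$ and then submultiply with $\|\sT\|\le\hsigma_1$ and $\|\sT^*\sD^*\sD\mY^\perp\|$. On the event $\|\sE\|\le\hsigma_1^2\epsilon_2$ this yields a \emph{first-order} contribution of order $\big(\hsigma_1^4/(\hsigma_R^2 g_R)\big)\,\epsilon_2\,(1+\sqrt{Z_2}/\epsilon_3)\|\mY^\perp\|_N$, which is not the stated $\epsilon_1(1+\sqrt{Z_2}/\epsilon_3)\|\mY^\perp\|_N$: the eigengap-dependent prefactor does not appear in the theorem's error bound (there, $G_R=\hsigma_R^2/g_R$ enters only through $V_2,L_2$ inside the exponential, i.e.\ through the required subsample size). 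To obtain the stated form one must expand $\tsC^+_{XX}-\hsC^+_{XX}=P_R(\sE)+Q_R(\sE)$ explicitly, sandwich the linear term between (factorizations of) $\sT$ and $\sT^*$ so that the $\hsigma_r$-scalings cancel the $1/\hsigma_r^2$ and $1/(\hsigma_r^2-\hsigma_s^2)$ factors in $P_R(\sE)$, and apply the operator Bernstein inequality to the resulting sum $\sum_c\sZ_c$ in $\bbR^N$ — this is precisely what produces the second tail term $8R\exp\big(-\tfrac{C\epsilon_1^2/2}{V_2+L_2\epsilon_1/3}\big)$ with its constants $V_2=2+(2+G_R^2/2)Z_2^2/\beta$ and $L_2=\sqrt{2R}+(\sqrt2+G_R/\sqrt2)Z_2$. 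Your probability accounting betrays the omission: you attribute that tail term to "the quadratic part of (I)", but the quadratic parts of both (I) and (II) are absorbed by the single event $\{\|\sE\|\le\hsigma_1^2\epsilon_2\}$ (the $4\Delta_0$ term), and the $3\epsilon_3^2/C$ term covers three separate Markov events (for $\sT^*\sD^*\sD\mY^\perp$, for the reweighted $\sU_2\sD^*\sD\mY^\perp$ needed inside term (II), and for term (III)), not one. Without the concentration analysis of the sandwiched first-order operator, neither the error bound's coefficients nor the stated failure probability can be recovered.
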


Theorem~\ref{thm:regression2} provides characterization of the first order and higher order error by expanding $\tsP_R-\hsP_R$ and $\tsC_{XX}^+-\hsC_{XX}^{+}$.
Theorem~\ref{thm:regression} follows  by setting $\epsilon=\epsilon_1=\epsilon_2=\sqrt{Z_2}/\epsilon_3$ with $\epsilon\le 1$. 
We can also find that bound~\eqref{thm:regression:boundLong} simplifies to
\begin{align*} 
	\| \sT\widetilde{\Psi}-\sT\hPsi  \|_N & \le 
\big( \epsilon+ (\hsigma_R/\hsigma_1) Z_1 \epsilon^2\big)  \|\mY \|_N 	+
	2\big(\epsilon+  2Z_1\epsilon^2 \big)   \|\mY^{\perp}\|_N +\epsilon \|\mY^{\perp}\|_N \\
	&\le \big( \epsilon+ \hsigma_RZ_1/\hsigma_1 \epsilon^2 \big)  \|\mY \|_N 
	+\big(3\epsilon+  4Z_1\epsilon^2 \big)   \|\mY^{\perp}\|_N .
\end{align*}
Also, by noticing that
\begin{align*}
	\max\{V_1,V_2,V_3\} \le V:=&  2 +   (2+G_R^2)Z_2^2/\beta, \\
	\max\{L_1,L_2,L_3\} \le L:=& \sqrt{2R}+ ( \sqrt{2}+ G_R/\sqrt{2} )Z_2,
\end{align*}
we will be able to reach the conclusion of Theorem~\ref{thm:regression}.

Theorem~\ref{thm:regression2} is proved by a detailed analysis of the prediction error terms~\eqref{eqn:regression:errorterm23} in Lemma~\ref{lemma:regErrorDeco}. The first two terms are bounded by the next two lemmas.

\begin{lemma} \label{lemma:regression:part1}
	Define quantities
	\begin{align*}
		&V_1 = \beta  G_R^2 \big(R + \Delta_R\big)^2, \quad 
		L_1 = G_R\beta \big( R + \Delta_R\big)/2 , \\
		&V_3  = \beta (R + \Delta_R),  \quad
		L_3  = \beta (R + \Delta_R)+1.
	\end{align*}
	For $\epsilon_1,\epsilon_2$ satisfying $\epsilon_1  \ge \sqrt{V_1/C} + L_1/(3C)$, and
	$\epsilon_2  \ge \sqrt{V_3/C} + L_3/(3C)$,
	with probability at least 
	\begin{align*}
		1 - 4(R + \Delta_R)  \exp\Big(-\frac{C\epsilon^2_1}{V_1 +L_1\epsilon_1/3 }\Big)
		- 4 \Delta_0 \exp\Bigg(- \frac{\epsilon^2_2C/2}{V_3+ L_3\epsilon_2/3}\Bigg).
	\end{align*}
	it holds that
	\begin{align*}
		\|\sT(\tsP_R-\hsP_R)\sT^+\mY\|_N \le \big[ \epsilon_1 +K_R\hsigma_1^5  /(\hsigma_R  g_R^2)\cdot 
		\epsilon_2^2\big] 
		\|\mY \|_N
	\end{align*}
\end{lemma}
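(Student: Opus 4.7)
\textbf{Proof plan for Lemma~\ref{lemma:regression:part1}.} Following the template of Theorem~\ref{thm:projectionOperator}, I would split the subspace perturbation into its first-order and higher-order parts via $\tsP_R - \hsP_R = L_R(\sE) + S_R(\sE)$ from~\eqref{eqn:projerror:L}--\eqref{eqn:projerror:S}, and correspondingly decompose
\[
\sT(\tsP_R - \hsP_R)\sT^+\mY \;=\; \sT L_R(\sE)\sT^+\mY \;+\; \sT S_R(\sE)\sT^+\mY,
\]
handling each summand separately. The two probability terms in the lemma statement then correspond to the concentration events for these two pieces.

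For the higher-order piece, the plan is purely deterministic. Using $\sT^*\sT = \hsC_{XX}$ gives $\|\sT u\|_N \le \hsigma_1\|u\|$, while the $\langle\cdot,\cdot\rangle_N$-orthonormality of $\{\vxi_r\}_{r=1}^R$ (verified in the proof of Lemma~\ref{lemma:regErrorDeco}) together with $\sT^+\mY = \sum_{r=1}^R(\langle\mY,\vxi_r\rangle_N/\hsigma_r)\htheta_r$ gives $\|\sT^+\mY\| \le \|\mY\|_N/\hsigma_R$. Combining these with Lemma~\ref{lemma:lsbound}(ii) and the event $\|\sE\|\le\hsigma_1^2\epsilon_2$ supplied by Lemma~\ref{lemma:projection:part2} yields
\[
\|\sT S_R(\sE)\sT^+\mY\|_N \;\le\; \hsigma_1\cdot\frac{K_R\|\sE\|^2}{g_R^2}\cdot\frac{\|\mY\|_N}{\hsigma_R} \;\le\; \frac{K_R\hsigma_1^5}{\hsigma_R g_R^2}\,\epsilon_2^2\,\|\mY\|_N,
\]
which is exactly the second contribution at the advertised probability cost.

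For the first-order piece, the key observation is that $u := \sT^+\mY$ lies entirely in $\mathrm{span}(\htheta_1,\ldots,\htheta_R)$. Writing $u = \sum_{r=1}^R\alpha_r\htheta_r$ with $\alpha_r = \langle\mY,\vxi_r\rangle_N/\hsigma_r$, using $\sT\htheta_s = \hsigma_s\vxi_s$, and plugging into the representation of Lemma~\ref{lemma:pcaLinearSum} (the $\htheta_r\otimes\htheta_s$ part annihilates $u$ because $s>R$), one obtains a clean i.i.d.\ decomposition
\[
\sT L_R(\sE)\sT^+\mY \;=\; \frac{1}{C}\sum_{c=1}^C \vw_c, \qquad \vw_c \;=\; \frac{1}{N\tp_c}\sum_{s>R}\vxi_s\sum_{r=1}^R\frac{\hsigma_r\hsigma_s^2}{\hsigma_r^2-\hsigma_s^2}\,\alpha_r\,\txi_{cr}\txi_{cs}.
\]
I would then apply Lemma~\ref{lemma:mainConcentration} viewing each $\vw_c$ as a map $\bbR\to\bbR^N$ equipped with $\langle\cdot,\cdot\rangle_N$. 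The per-summand norm bound uses $\|\vxi_s\|_N$-orthonormality, Cauchy--Schwarz over $r$ (with the grouping $\alpha_r$ versus $(\hsigma_r\hsigma_s^2/(\hsigma_r^2-\hsigma_s^2))\txi_{cr}$), the estimate $|\hsigma_r\hsigma_s^2/(\hsigma_r^2-\hsigma_s^2)|\le G_R\hsigma_s$ (since $\hsigma_s\le\hsigma_r$ for $s>R\ge r$), the AM--GM inequality applied to $a = \sum_r\txi_{cr}^2$ and $b = \|(I-\hsP_R)\tx_c\|^2/\hsigma_R^2$, the probability lower bound $\tp_c N\ge (R+\Delta_R)^{-1}\beta^{-1}(a+b)$ coming from~\eqref{eqn:prob:NearlyExact}, and finally $\hsigma_R^2\|\alpha\|^2\le\|\mY\|_N^2$; the outcome is $\|\vw_c\|_N\le L_1\|\mY\|_N$. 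A parallel computation on $\sum_c\Expect\vw_c\vw_c^*$ (resp.\ $\sum_c\Expect\vw_c^*\vw_c$) delivers the matrix (resp.\ scalar) variance bound $C V_1\|\mY\|_N^2$, with the matrix variance PSD-dominated by a multiple of $\sum_{s>R}\hsigma_s^2\vxi_s\vxi_s^*$ whose intrinsic dimension is exactly $\Delta_R$; the scalar variance adds the $+1$, and the two combine to give the prefactor $4(R+\Delta_R)$.

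The main obstacle will be arranging the two Cauchy--Schwarz steps so that $\alpha_r$ enters only through the aggregate $\sum_r\alpha_r^2\hsigma_r^2\le\|\mY\|_N^2$ (rather than through $\sum_r\alpha_r^2$, which would spawn a spurious $1/\hsigma_R^2$ factor in $V_1$ and $L_1$), and simultaneously that the matrix-variance upper bound retains the form $\sum_{s>R}\hsigma_s^2\vxi_s\vxi_s^*$ so the intrinsic dimension stays tied to $\Delta_R$ instead of the ambient $N-R$. This balancing is genuinely more delicate than in Lemma~\ref{lemma:projection:part1}: there the summand $\sZ_c$ was symmetric in the indices $(r,s)$ and both were summed against operators of the form $\htheta_r\otimes\htheta_s+\htheta_s\otimes\htheta_r$, whereas here the free index $s$ appears linearly and must be coupled with the $\hsigma_s^2$ weight precisely to reproduce the residual intrinsic dimension $\Delta_R$ rather than a much larger dimensional factor.
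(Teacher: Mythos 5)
Your proposal is correct and follows the paper's route in all essentials: the same split $\tsP_R-\hsP_R=L_R(\sE)+S_R(\sE)$, the same deterministic treatment of the higher-order piece via $\|\sT\|\le\hsigma_1$, $\|\sT^+\|\le 1/\hsigma_R$, Lemma~\ref{lemma:lsbound}(ii) and the event $\|\sE\|\le\hsigma_1^2\epsilon_2$ from Lemma~\ref{lemma:projection:part2}, and the same key structural observation that only the $\htheta_s\otimes\htheta_r$ half of each summand in Lemma~\ref{lemma:pcaLinearSum} survives conjugation by $\sT(\cdot)\sT^+$, producing rank-one terms supported on $\mathrm{span}\{\vxi_s\}_{s>R}$ whose variance has intrinsic dimension tied to $\Delta_R$. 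The one genuine difference is where you apply the Bernstein inequality: the paper bounds the operator norm $\|\sT L_R(\sE)\sT^+\|$ of $\sS'=\sum_c\sZ_c'$ with $\sZ_c'=\sum_{r\le R<s}\frac{\hsigma_s^2}{\hsigma_r^2-\hsigma_s^2}\frac{\txi_{cr}\txi_{cs}}{\tp_cN}\,\vxi_r\otimes\vxi_s$ and only then multiplies by $\|\mY\|_N$, whereas you concentrate the vectors $\vw_c=\sZ_c'\mY$ directly. Both give the stated constants (your prefactor $4(\Delta_R+1)\le 4(R+\Delta_R)$); the paper's operator-level version has the advantage that the coefficients $\alpha_r$ never enter the per-summand and variance bounds, so the "delicate balancing" you worry about simply does not arise there. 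In your vector-level version it does arise, and the grouping you mention in passing ($\alpha_r$ against $\frac{\hsigma_r\hsigma_s^2}{\hsigma_r^2-\hsigma_s^2}\txi_{cr}$) is the wrong one — it yields $\sum_r\alpha_r^2\le\|\mY\|_N^2/\hsigma_R^2$ and inflates $V_1,L_1$ by $1/\hsigma_R^2$; you must instead pair $\alpha_r\hsigma_r$ with $\frac{\hsigma_s^2}{\hsigma_r^2-\hsigma_s^2}\txi_{cr}$ so that $\alpha_r$ enters only through $\sum_r\alpha_r^2\hsigma_r^2=\sum_r\langle\mY,\vxi_r\rangle_N^2\le\|\mY\|_N^2$, after which $\frac{\hsigma_s^4}{(\hsigma_r^2-\hsigma_s^2)^2}\le G_R^2\hsigma_s^2/\hsigma_R^2$, AM--GM, and the lower bound~\eqref{eqn:prob:NearlyExact} reproduce exactly $L_1$ and $V_1$. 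Since you explicitly identify the correct aggregate as the target, the plan goes through; it is just slightly more bookkeeping than the paper's.
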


\begin{lemma}  \label{lemma:regression:part2}
	Define quantities
	\begin{align*}
		V_2 &= 2 +  \beta (2+G_R^2/2)(R+\Delta_R)^2, \\
		L_2 &= \sqrt{2R}+ ( \sqrt{2}+ G_R /\sqrt{2})\beta(R + \Delta_R) \\
		V_3 & = \beta (R + \Delta_R),  \quad
		L_3  = \beta (R + \Delta_R)+1.
	\end{align*}
	For large enough $C$ and $\epsilon_1,\epsilon_2$ satisfying $\epsilon_1  \ge \sqrt{V_2/C} + L_2/(3C)$,
	$\epsilon_2  \ge \sqrt{V_3/C} + L_3/(3C)$ and $\epsilon_2\le  g_R/3$,
	with probability at least 
	\begin{align*}
		1 - 8R  \exp\Big(-\frac{C\epsilon^2/2}{V_2 +L_2\epsilon/3 }\Big)
		- 4 \Delta_0 \exp\Bigg(- \frac{\epsilon^2_2C/2}{V_3+ L_3\epsilon_2/3}\Bigg).
	\end{align*}
	it holds that
	\begin{align*}
		\| \sT(\tsC_{XX}^+ - \hsC_{XX}^+)\sT^*\sD^*\sD\mY^{\perp}\|_N \le 
		\big(\epsilon_1+ 2Z_1\cdot  \epsilon^2_2\big)\big(1 + \sqrt{Z_2}/\epsilon_3\big)   \|\mY^{\perp}\|_N.
	\end{align*}
\end{lemma}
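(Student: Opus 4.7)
The approach parallels the proof of Theorem~\ref{thm:projectionOperator} and Lemma~\ref{lemma:regression:part1}, but now targets the truncated inverse difference $\tsC^+_{XX}-\hsC^+_{XX}$ rather than the projection difference $\tsP_R-\hsP_R$. First, I would split the target as
\[
\sT(\tsC^+_{XX}-\hsC^+_{XX})\sT^*\sD^*\sD\mY^{\perp}
= \sT(\tsC^+_{XX}-\hsC^+_{XX})\sT^*\mY^{\perp}
+ \sT(\tsC^+_{XX}-\hsC^+_{XX})\bigl[\sT^*\sD^*\sD\mY^{\perp}-\sT^*\mY^{\perp}\bigr],
\]
which is precisely what generates the $(1+\sqrt{Z_2}/\epsilon_3)$ factor in the bound: the ``$1$'' handles $\sT^*\mY^{\perp}$ and the ``$\sqrt{Z_2}/\epsilon_3$'' handles the subsampling deviation. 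I would also use that $\hsC^+_{XX}\sT^*\mY^{\perp}=\vzero$, because $\sT^*\mY^{\perp}$ is orthogonal to $\htheta_1,\ldots,\htheta_R$; this is the operator-level counterpart of~\eqref{eqn:regression:perp}.

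Next, I would decompose $\tsC^+_{XX}-\hsC^+_{XX}=L^+_R(\sE)+S^+_R(\sE)$ via the contour representation $\hsC^+_{XX}=-\frac{1}{2\pi i}\oint_{\Gamma_R}\eta^{-1}\sR_{\hsC_{XX}}(\eta)\,\intd\eta$, combined with the Neumann expansion of $\sR_{\tsC_{XX}}(\eta)$, which is valid on the event $\{\|\sE\|\le g_R/3\}$. This is a direct adaptation of the derivation in Section~\ref{subsec:fpca:operator}, with an extra factor $|\eta|^{-1}\le 2/\hsigma_R^2$ along $\Gamma_R$. The higher-order term then obeys a bound of the form $\|S^+_R(\sE)\|\le O\bigl(\|\sE\|^2/(\hsigma_R^2 g_R^2)\bigr)$ in analogy with Lemma~\ref{lemma:lsbound}(ii). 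Together with $\|\sE\|\le\hsigma_1^2\epsilon_2$ from Lemma~\ref{lemma:projection:part2} and the reduction $\|\sT u\|_N\le\hsigma_1\|u\|$ implied by $\sT^*\sT=\hsC_{XX}$, this yields the $2Z_1\epsilon_2^2$ contribution and inherits the $4\Delta_0\exp(-\cdot)$ failure probability.

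For the first-order term $L^+_R(\sE)$, I would expand it as an independent subsampling sum $(1/C)\sum_c\sZ^+_c/(N\tp_c)$ exactly as in Lemma~\ref{lemma:pcaLinearSum}, and apply the intrinsic-dimension Bernstein inequality of Lemma~\ref{lemma:mainConcentration2}. The coefficient bound $\hsigma_r^2/(\hsigma_r^2-\hsigma_s^2)\le G_R$ together with the FunPrinSS lower bound~\eqref{eqn:proof:projprob} produces the constants $V_2$, $L_2$, the accuracy $\epsilon_1$, and the $8R\exp(-\cdot)$ failure probability. For the deviation piece, I would bound $\sT^*\sD^*\sD\mY^{\perp}-\sT^*\mY^{\perp}$ via Markov's inequality, since it is the empirical mean of $C$ i.i.d.\ zero-mean terms satisfying
\[
\Expect\Big\|\frac{\tilde{Y}_c^{\perp}\tx_c}{N\tp_c}\Big\|^2
=\sum_{n=1}^N\frac{(Y_n^{\perp})^2\|x_n\|^2}{N^2 p_n}
\le \hsigma_1^2 Z_2\|\mY^{\perp}\|_N^2,
\]
where the last step combines $\|x_n\|^2\le\hsigma_1^2\bigl[\sum_{r\le R}\hxi_{nr}^2+\sum_{s>R}\hsigma_s^2\hxi_{ns}^2/\hsigma_R^2\bigr]$ with~\eqref{eqn:proof:projprob}. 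Markov at threshold $\hsigma_1\sqrt{Z_2}\|\mY^{\perp}\|_N/\epsilon_3$ then yields the $\sqrt{Z_2}/\epsilon_3$ factor, with the $\hsigma_1$ absorbed again through $\|\sT u\|_N\le\hsigma_1\|u\|$; the resulting Markov failure probability $\epsilon_3^2/C$ is folded into the total failure probability at the level of Theorem~\ref{thm:regression2}.

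The main obstacle is the first-order concentration of $L^+_R(\sE)$: compared with Lemma~\ref{lemma:projection:part1}, the additional $\eta^{-1}$ weight in the contour integral produces mixed contributions from pairs $(r,s)$ with both $r,s\le R$ and with $r\le R<s$, so the operator-norm and variance bounds for the summands $\sZ^+_c/(N\tp_c)$ require more careful bookkeeping to identify $V_2$ and $L_2$ precisely. Once these constants are in place, the conclusion follows from a union bound and a final collection of the four cross terms of $(\epsilon_1+2Z_1\epsilon_2^2)(1+\sqrt{Z_2}/\epsilon_3)$.
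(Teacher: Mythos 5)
Your proposal follows essentially the same route as the paper's proof: the contour-integral split of $\tsC_{XX}^+-\hsC_{XX}^+$ into a linear term $P_R(\sE)$ and a remainder $Q_R(\sE)$ with $\|Q_R(\sE)\|\lesssim K_R\|\sE\|^2/(g_R^2\hsigma_R^2)$ on the event $\|\sE\|\le g_R/3$, Bernstein concentration for the sandwiched first-order sum (whose within-$R$ and cross-$R$ pairs produce $V_2,L_2$ and the $8R$ factor), Lemma~\ref{lemma:projection:part2} for $\|\sE\|\le\hsigma_1^2\epsilon_2$, and a Markov bound on the subsampled residual giving the $(1+\sqrt{Z_2}/\epsilon_3)$ factor. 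The one point your sketch glosses over, and which the paper handles by factoring $\sT=\sU_1\sV_1$ and $\sT^*=\sV_2\sU_2$ with the eigenvalue reweighting in $\sU_2$ for $s>R$, is that the first-order bound must exploit the $\hsigma_s$-decay of the residual vector's components rather than a raw operator norm of $\sT P_R(\sE)$ — but this is exactly the "careful bookkeeping" you correctly flag as the remaining work.
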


To control the last term of the prediction error term in Lemma~\ref{lemma:regErrorDeco}, we need further decomposition of the two operators
$\sT$ and $\sT^*$.  Specifically, we make the decomposition
$\sT = \sU_1\sV_1$ with $\sU_1$ and $\sV_1$ defined as follows. For any $u\in\sH_X$, $\sV_1 u = (\hsigma_1 \langle \htheta_1, u\rangle, \cdots, 
\hsigma_N \langle \htheta_N, u\rangle)^T$. Meanwhile, 
the operator $\sU_1$ maps $\ma \in\bbR^N$ to
$\sU_1\ma = \sum_{n=1}^N a_n\vxi_n \in\bbR^N$. They have operator norm $\|\sU_1\| =\sup_{\| \ma\|_N\le 1} \| \sU_1\ma\|_N  =\sqrt{N}$ and $\|\sV_1\| =\sup_{\| u\|\le 1} \| \sV_1 u\|_N= \hsigma_1/\sqrt{N}$, respectively.

In addition, the decomposition that $\sT^* = \sV_2 \sU_2$ is constructed. For any $\ma \in\bbR^N$, $\sU_2$ maps $\ma$ to another vector
$\sU_2\ma = \mb \in\bbR^N$, whose vector elements are
$$B_r = \begin{cases}
	N\langle \vxi_r, \ma\rangle_N, & \text{ if } r\le R; \\
	N(\hsigma_r/\hsigma_{R})\langle \vxi_r, \ma\rangle_N, & \text{ if } r > R. \\
\end{cases}
$$ 
In the above, its element for $r>R$ is reweighted by the eigenvalue such that
$\| \sU_2\sD^*\sD \mY^{\perp} \|_N$ can be controlled as in~\eqref{eqn:lemma:yperpResult2}. 
In addition, $\sV_2$ maps $\ma\in \bbR^N$ to an element in $\sH_X$, such that
$$\sV_2 \ma =(1/N) \sum_{r=1}^R\hsigma_r a_r \htheta_r +
(\hsigma_{R} /N)\sum_{r=R+1}^N  a_r \htheta_r.$$ 
It is easy to verify their operator norm as $\|\sU_2\| = \sqrt{N}$ and $\|\sV_2\| = \hsigma_1/\sqrt{N}$.

\begin{lemma}    \label{lemma:regression:part3}
	With the sampling probability $\{p_n\}_{n=1}^N$ satisfying $p_n \ge p_n^{\text{Exact}} / \beta$, each of the following inequality holds with probability at least $1-\epsilon_3^2/C$,
	\begin{align}
		\| \sT^* \sD^*\sD \mY^{\perp} \| &\le \hsigma_1 \Big(1+
		\sqrt{Z_2} /\epsilon_3 \Big)\| \mY^{\perp}\|_N,
		\label{eqn:lemma:yperpResult1}\\
		\| \sU_2\sD^*\sD \mY^{\perp} \|_N /\sqrt{N} &\le\Big(1+
		\sqrt{Z_2} /\epsilon_3 \Big) \| \mY^{\perp}\|_N,  \label{eqn:lemma:yperpResult2}  \\
		\| \sT \hsC_{XX}^+ \sT^*\sD^*\sD \mY^{\perp} \|_N  &\le 
		\sqrt{Z_2} \| \mY^{\perp}\|_N /\epsilon_3,  \label{eqn:lemma:yperpResult3} 
	\end{align}
	where  $Z_2 = \beta(R+\Delta_R)$. 
\end{lemma}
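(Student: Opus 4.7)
The plan is to establish all three bounds through a common second-moment argument, built on the decomposition $\sD^*\sD\,\mY^{\perp} = \mY^{\perp} + (\sD^*\sD-I)\mY^{\perp}$ together with the identity $\Expect[\sD^*\sD] = I$, which follows from the with-replacement sampling. The deterministic piece will be controlled via the orthogonality $\langle \vxi_r,\mY^{\perp}\rangle_N = 0$ for $r\le R$, and the zero-mean fluctuation $(\sD^*\sD-I)\mY^{\perp}$ will be bounded in expected squared norm and then converted into a high-probability statement by Markov's inequality.

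First I would expand each left-hand side in the score basis: $\sT^*\mathbf{a} = \sum_r \hsigma_r \langle \vxi_r,\mathbf{a}\rangle_N \htheta_r$, the product $\sT\hsC_{XX}^+\sT^* = \sum_{r=1}^R \vxi_r\otimes \vxi_r$ (as shown in the proof of Lemma~\ref{lemma:regErrorDeco}), and the components of $\sU_2\mathbf{a}$ are $N c_r\langle \vxi_r,\mathbf{a}\rangle_N$ with $c_r=1$ for $r\le R$ and $c_r=\hsigma_r/\hsigma_R$ for $r>R$. A direct single-draw computation then gives
\begin{equation*}
\Var\big(\langle \vxi_r,\sD^*\sD\,\mY^{\perp}\rangle_N\big) \le \frac{1}{CN^2}\sum_{n=1}^N \frac{(\mY^{\perp})_n^2\,\hxi_{nr}^2}{p_n}.
\end{equation*}
Weighting this identity by $\hsigma_r^2$ (for bound~1), by $c_r^2$ (for bound~2), or restricting to $r\le R$ with unit weight (for bound~3), and summing over $r$, produces an expected-squared-norm bound of the form $\frac{1}{CN^2}\sum_n (\mY^{\perp})_n^2\,W_n/p_n$, where $W_n$ equals $\|x_n\|^2$ in the first case and is bounded by the FunPrinSS numerator $\sum_{r\le R}\hxi_{nr}^2 + \|(I-\hsP_R)x_n\|^2/\hsigma_R^2$ in the other two.

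Next I would invoke the nearly-exact lower bound $p_n\ge p_n^{\mathrm{Exact}}/\beta$, which collapses the ratio to $W_n/p_n \le \beta N(R+\Delta_R)\cdot W_n/\mathrm{denom}$. For bounds~2 and~3 the quantity $W_n/\mathrm{denom}$ is at most $1$ by construction, yielding $\Expect\|\cdot\|^2 \le Z_2\|\mY^{\perp}\|_N^2/C$. For bound~1, writing $\|x_n\|^2 = \hsigma_R^2\sum_{r\le R}(\hsigma_r/\hsigma_R)^2\hxi_{nr}^2 + \|(I-\hsP_R)x_n\|^2$ and using the elementary inequality $(\hsigma_1^2 S + B)/(\hsigma_R^2 S + B) \le \hsigma_1^2/\hsigma_R^2$ (valid for $S,B\ge 0$) bounds $W_n/\mathrm{denom}$ by $\hsigma_1^2$, giving $\Expect\|\cdot\|^2 \le \hsigma_1^2 Z_2\|\mY^{\perp}\|_N^2/C$. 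Markov's inequality at the threshold $\sqrt{Z_2}/\epsilon_3$, rescaled by $\hsigma_1$ in bound~1, then produces the stated $\epsilon_3^2/C$ tail probability.

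Finally I would assemble each inequality via the triangle inequality. For~\eqref{eqn:lemma:yperpResult1}, the deterministic term $\sT^*\mY^{\perp}$ has norm at most $\hsigma_1\|\mY^{\perp}\|_N$, so the Markov bound on $\|\sT^*(\sD^*\sD-I)\mY^{\perp}\|$ delivers the factor $\hsigma_1(1+\sqrt{Z_2}/\epsilon_3)$. For~\eqref{eqn:lemma:yperpResult2}, Bessel's inequality combined with $c_r\le 1$ for $r>R$ gives $\|\sU_2\mY^{\perp}\|_N/\sqrt{N}\le \|\mY^{\perp}\|_N$, after which the Markov bound finishes the estimate. For~\eqref{eqn:lemma:yperpResult3} the deterministic part vanishes outright by~\eqref{eqn:regression:perp}, so the Markov bound alone suffices. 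The main obstacle I anticipate is the bookkeeping around the FunPrinSS normalization: one must aggregate the weighted variances over $r$ \emph{before} plugging in the $p_n$ lower bound, otherwise one loses a spurious factor of $R$. This is exactly the balance that the FunPrinSS denominator was engineered to match, so once the weighted sums are recognized as that denominator, the probabilities align cleanly with the $\epsilon_3^2/C$ target.
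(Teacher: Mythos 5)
Your proposal is correct and follows essentially the same route as the paper: split $\sD^*\sD\mY^{\perp}$ into its mean $\mY^{\perp}$ plus a zero-mean fluctuation, bound the fluctuation's expected squared norm using $p_n \ge p_n^{\mathrm{Exact}}/\beta$ (so that the weighted score sums collapse against the FunPrinSS denominator), and finish with Markov's inequality and the triangle inequality. The paper writes out only~\eqref{eqn:lemma:yperpResult1} in vector form and declares the other two analogous, whereas you carry out the same second-moment computation coordinate-wise in the score basis for all three bounds; this is a presentational rather than a substantive difference.
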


Theorem~\ref{thm:regression2} is established by combining the above three lemmas. Their proof is presented in the following subsections. The proof of Lemma~\ref{lemma:regression:part3} is presented firstly.

\subsection{Proof of Lemma~\ref{lemma:regression:part3} }

We only present the proof of \eqref{eqn:lemma:yperpResult1}. The other two results~\eqref{eqn:lemma:yperpResult2} and~\eqref{eqn:lemma:yperpResult3}
can be derived similarly. 
Notice that
$$
\sT^*\sD^*\sD\mY^{\perp} - \sT^*\mY^{\perp} = 
\frac{1}{NC} \sum_{c=1}^C\frac{\tx_c \tilde{Y}_c^{\perp} }{\tp_c} - 
\frac{1}{N}\sum_{n=1}^N x_n Y_n^{\perp}.
$$
It has zero expectation with respect to the subsampling process.
We  can also compute its expected squared norm as
\begin{align*}
	\Expect \|\sT^*\sD^*\sD\mY^{\perp} - \sT^*\mY^{\perp} \|^2 &=
	\frac{1}{C}\cdot  \Expect \Big\| 
	\frac{1}{N}\frac{\tx_1 \tilde{Y}_1^{\perp} }{\tp_1} - 
	\frac{1}{N}\sum_{n=1}^N x_n Y_n^{\perp} \Big\|^2 \\
	& \le \frac{1}{CN^2} \sum_{n=1}^N \frac{\| x_n\|^2 (Y_n^{\perp})^2 }{p_n}  \\
	& \le \frac{\hsigma_1^2\beta(R+\Delta_R)}{CN} \sum_{n=1}^N 
	\frac{(\| x_n\|^2/\hsigma_1^2) (Y_n^{\perp})^2 }{\sum_{r=1}^R \hxi_r^2 + \| (I - \hsP_R) x_n\|^2/\hsigma^2_R}  \\
	& = \hsigma_1^2\beta(R+\Delta_R) \| \mY^{\perp}\|_N^2 / C.
\end{align*}
As a direct consequence of Markov inequality, for $\epsilon_3 > 0$, with probability at least $1 - \epsilon_3^2/C$ 
it holds that
$$
\|\sT^*\sD^*\sD\mY^{\perp} - \sT^*\mY^{\perp} \|^2 \le (1/\epsilon^2_3)\times
\hsigma_1^2\beta(R+\Delta_R) \| \mY^{\perp}\|_N^2.
$$
Together with $\|\sT^* \mY^{\perp}\| \le \|\sT^*\| \cdot \|  \mY^{\perp}\|_N = \hsigma_1 \|  \mY^{\perp}\|_N$, 
with probability at least $1 - \epsilon_3^2/C$  it holds that
\begin{align*}
	\| \sT^* \sD^*\sD \mY^{\perp} \| & \le \|\sT^* \mY^{\perp}\| +
	\| \sT^* \sD^*\sD \mY^{\perp} - \sT^* \mY^{\perp} \|  \\
	& \le \hsigma_1 \|  \mY^{\perp}\|_N
	+ (\hsigma_1/\epsilon_3) \| \mY^{\perp}\|_N \sqrt{\beta(R+\Delta_R)} \\
	& \le  \hsigma_1 \|  \mY^{\perp}\|_N 
	(1 + \sqrt{Z_2}/\epsilon_3)
\end{align*}
Thereby, we have proven~\eqref{eqn:lemma:yperpResult1}.

\subsection{Proof of Lemma~\ref{lemma:regression:part1} }
Recall the difference $\tsP_R-\hsP_R = L_R(\sE) + S_R(\sE)$ is expressed as the sum of a linear part and higher order part. By Lemma~\ref{lemma:lsbound}, 
$\Vert S_R(\sE)  \Vert \le K_R\left( \Vert \sE\Vert  / g_R  \right)^2$, it follows that
\begin{align*}
	\|\sT(\tsP_R-\hsP_R)\sT^+\mY\|_N \le& \| \sT L_R(\sE)\sT^+\mY\|_N + \| \sT S_R(\sE)\sT^+ \mY\|_N \\
	\le &  \| \sT L_R(\sE)\sT^+\| \|\mY\|_N + \| \sT\|  \|S_R(\sE)\| \| \sT^+\| \| \mY\|_N \\
	\le &  \| \sT L_R(\sE)\sT^+\| \|\mY\|_N +  \big[\hsigma_1K_R /(\hsigma_R  g_R^2)\big] \|\sE\|^2 \| \mY\|_N. 
\end{align*}
From here we know that
\begin{align}
	&\Prob\Big(\|\sT(\tsP_R-\hsP_R)\sT^+\mY\|_N 
	\ge \big[ \epsilon_1 +K_R\hsigma_1^5  /(\hsigma_R  g_R^2)\cdot 
	\epsilon_2^2\big]  \|\mY \|_N  \Big) \nonumber \\
	\le & \Prob\Big(\| \sT L_R(\sE)\sT^+\| \ge \epsilon_1\Big)+
	\Prob\Big(\|\sE\| \ge \epsilon_2\hsigma_1^2 \Big). \label{eqn:regPart1:2prob}
\end{align}
The second probability in~\eqref{eqn:regPart1:2prob} is controled by Lemma~\ref{lemma:projection:part2}.
We now proceed to control the first probability. 

Based on the result of Lemma~\ref{lemma:pcaLinearSum}, 
$$
\sT L_R(\sE)\sT^+ = \frac{1}{C} \sum_{c=1}^C \sT  \sZ_c\sT^+ 
= \frac{1}{C} \sum_{c=1}^C \sZ'_c, $$
with $\sZ'_c = \sT  \sZ_c\sT^+$
and
$ \sZ_c =  \sum_{r=1}^R
\sum_{s=R+1}^\infty \frac{\hsigma_r\hsigma_s}{\hsigma_r^2 - \hsigma_s^2} \times
\frac{\txi_{cr}\txi_{cs}}{\tp_c N} \times 
\big[\htheta_r \otimes \htheta_s +
\htheta_s \otimes \htheta_r\big]$. We will apply Lemma~\ref{lemma:mainConcentration} to the summation 
$\sS' = \sum_{c=1}^C \sZ'_c$ to conclude the proof of this lemma.

Note that for $r\le R$, $s > R$ and any $\ma=(a_1,\cdots, a_N)\in\bbR^N$, it holds that
\begin{align}
	& \sT (\htheta_r\otimes \htheta_s) \sT^+ \ma = \sT (\htheta_r\otimes \htheta_s) 
	\Big(\frac{1}{N}\sum_{n=1}^N a_n x_n^+\Big) =
	(\sT \theta_s) \Big(\frac{1}{N}\sum_{n=1}^N a_n \langle\htheta_r, x_n^+\rangle\Big) \nonumber \\
	= &\begin{pmatrix} \langle x_1, \htheta_s\rangle \\ \vdots \\ \langle x_N, \htheta_s\rangle \end{pmatrix}
	\Big(\frac{1}{N}\sum_{n=1}^N a_n \hxi_{nr} /\hsigma_r\Big)
	= \begin{pmatrix} \hsigma_s \hxi_{1s} \\ \vdots \\\hsigma_s \hxi_{Ns} \end{pmatrix} 
	\Big(\frac{1}{N} \sum_{n=1}^N a_n \hxi_{nr} /\hsigma_r\Big) \nonumber\\
	=& \frac{\hsigma_s}{\hsigma_r} \langle \vxi_r, \ma\rangle_N \times \vxi_s
	= \frac{\hsigma_s}{\hsigma_r} (\vxi_r \otimes \vxi_s) \ma, \label{eqn:regression:lemma1:tthetat}
\end{align}
where $\vxi_r = (\hxi_{1r}, \cdots, \hxi_{Nr} )^T$ is the vector of the $r$-th score for all samples.
Note $\vxi_r$ is a unit vector in $\bbR^N$ with respect to the norm 
$\|\cdot\|_N$. 
For $r\le R$ and $s > R$,
the above implies that $\sT (\htheta_r\otimes \htheta_s) \sT^+
=  \frac{\hsigma_s}{\hsigma_r} (\vxi_r \otimes \vxi_s)$ and by similar derivation we can find that 
$\sT (\htheta_s\otimes \htheta_r) \sT^+ =  0$.
As a result,
\begin{equation} \label{eqn:regressionProjIndividual}
	\sZ'_c = \sT \sZ_c \sT^+=  \sum_{r=1}^R
	\sum_{s=R+1}^\infty \frac{\hsigma_s^2}{\hsigma_r^2 - \hsigma_s^2} \Big(
	\frac{\txi_{cr}\txi_{cs}}{\tp_c N}\Big) \vxi_r \otimes \vxi_s
\end{equation}
Apply almost identical argument that leading to~\eqref{eqn:projectLinearOperatorUpper}, we have
$$\|  \sZ'_c \| \le L = G_R\beta \big( R + \Delta_R\big) /2.$$
This upper bound is smaller than  that in~\eqref{eqn:projectLinearOperatorUpper} by a factor of $\sqrt{2}$, as 
we only have one tensor product term in~\eqref{eqn:regressionProjIndividual}. For the variance terms, we need to derive bounds for $\Expect \sS' (\sS')^*$ and
$\Expect (\sS')^*\sS' $. For the latter, we have
\begin{align}
	\Expect(\sZ'_c)^*  \sZ'_c &= 
	\sum_{r,r'=1}^R \Bigg[\sum_{s=R+1}^\infty 
	\frac{\hsigma_s^4}{(\hsigma_r^2 - \hsigma_s^2)(\hsigma_{r'}^2 - \hsigma_{s}^2) }
	\Big( \sum_{n=1}^N \frac{\hxi_{ns}^2\hxi_{nr}\hxi_{nr'} }{p_n N^2}\Big) \Bigg] \vxi_r \otimes \vxi_{r'}
\end{align}
Similar to the derivation of~\eqref{eqn:projectVarUpper}, we get the upper bound,
\begin{align*}
	\| \Expect (\sS')^* \sS'\| &\le  
	C\cdot \mathrm{tr}( (\sZ'_c)^*  \sZ'_c ) \\
	& = C\sum_{r=1}^R \sum_{s=R+1}^\infty 
	\frac{\hsigma_s^4}{(\hsigma_r^2 - \hsigma_s^2)^2 }
	\Big( \sum_{n=1}^N \frac{\hxi_{nr}^2\hxi_{ns}^2 }{p_n N^2}\Big)  \\
	&\le  CG_R^2 \sum_{r=1}^R \sum_{s=R+1}^\infty 
	\Big( \sum_{n=1}^N \frac{\hxi_{nr}^2 ( \hsigma_s^2\hxi_{ns}^2/\hsigma_R^2) }{p_n N^2}\Big)  \\
	&   \le  \beta C G_R^2 (R + \Delta_R)^2/4. 
\end{align*}
We can also verify  that $\mathrm{intdim} \big(\Expect (\sS')^* \sS'\big)\le R$. 
As for  $\Expect \sS' (\sS')^* $, we will show that there exists a positive semidefinite operator $\sV$ such that
\begin{equation} \label{eqn:reg:ssupper2}
	\Expect \sS' (\sS')^* \preceq \sV,\,
	\mathrm{intdim} \big(\sV \big) \le \Delta_R, \,
	\text{ and } \|\sV\| \le C\beta RG^2_R (R + \Delta_R) .
\end{equation}
The proof of this result is deferred to latter part of this subsection. 

Note 
\begin{align*}
	\max\{\|\Expect \sS' (\sS')^*\|, \|\Expect (\sS')^*\sS' \| \} 
	\le &\max\{\beta C G_R^2 (R + \Delta_R)^2/4,\beta C RG^2_R (R + \Delta_R)\} \\
	\le & \beta CG_R^2 (R + \Delta_R)^2.
\end{align*}
Now, collecting all of the above results, according to Lemma~\ref{lemma:mainConcentration},  we arrive at the conclusion of this lemma
\begin{align*}
	&\Prob \big( \| \sT L_R(\sE) \sT^+ \| \ge \epsilon_1 \big) \\
	=& 
	\Prob \big( \| \sS' \| \ge C\epsilon_1 \big) \\
	\le& 4(R + \Delta_R) \cdot
	\exp\Big( -\frac{C\epsilon_1^2}{\beta  G_R^2 \big(R + \Delta_R\big)^2+ 
		G_R\beta \big( R + \Delta_R\big)   \epsilon_1 /6} \Big).
\end{align*}

It remains to prove the claim of~\eqref{eqn:reg:ssupper2}, notice that
\begin{align}
	\Expect \sZ'_c (\sZ'_c)^* = 
	\sum_{s,s'=R+1}^\infty \Bigg[\sum_{r=1}^R 
	\frac{\hsigma_s^2\hsigma_{s'}^2}{(\hsigma_r^2 - \hsigma_s^2)(\hsigma_r^2 - \hsigma_{s'}^2) }
	\Big( \sum_{n=1}^N \frac{\hxi_{nr}^2\hxi_{ns}\hxi_{ns'} }{p_n N^2}\Big) \Bigg] \vxi_s \otimes \vxi_{s'} 
\end{align}
Define $f_{rs} = \hsigma_s^2/ (\hsigma_r^2 - \hsigma_s^2)$.
For any $\ma\in\bbR^N$, we have
\begin{align*}
	\langle\ma, \Expect \sZ'_c (\sZ'_c)^* \ma\rangle
	=& 
	\sum_{s,s'=R+1}^\infty \Big[\sum_{r=1}^R 
	f_{rs} f_{rs'}
	\Big( \sum_{n=1}^N \frac{\hxi_{nr}^2\hxi_{ns}\hxi_{ns'} }{p_n N^2}\Big) \Big] \langle \ma, \vxi_s\rangle \langle \ma, \vxi_{s'}\rangle \\
	=& \sum_{r=1}^R \sum_{n=1}^N \frac{\hxi^2_{nr}}{p_n N^2} 
	\Big(\sum_{s=R+1}^\infty f_{rs} \hxi_{ns}\langle \ma, \vxi_s\rangle \Big)^2 \\
	\stackrel{(i)}{\le}   & \frac{\beta (R + \Delta_R)}{N} \sum_{r=1}^R \sum_{n=1}^N
	\Big(\sum_{s=R+1}^\infty f_{rs} \hxi_{ns}\langle \ma, \vxi_s\rangle \Big)^2 \\
	= & \frac{\beta (R + \Delta_R)}{N} \sum_{r=1}^R \sum_{n=1}^N
	\sum_{s,s'=R+1}^\infty f_{rs} f_{rs'}\hxi_{ns}\hxi_{ns'}\langle \ma, \vxi_s\rangle
	\langle \ma, \vxi_s'\rangle\\
	\stackrel{(ii)}{=}  & \beta (R + \Delta_R) \sum_{r=1}^R 
	\sum_{s=R+1}^\infty f^2_{rs} \langle \ma, \vxi_s\rangle^2\\
	\stackrel{(iii)}{\le} & \beta RG_R(R + \Delta_R)/g_R 
	\sum_{s=R+1}^\infty \hsigma^2_{s} \langle \ma, \vxi_s\rangle^2.
\end{align*}
In the above, the inequality~(i) uses the property of the sampling probability $\{p_n\}_{n=1}^N$ satisfying $p_n \ge p_n^{\text{Exact}} / \beta$. Equation~(ii) uses the orthonormality between the score vectors, i.e.,  $\langle \vxi_{s}, \vxi_{s'}\rangle_N = I(s=s')$ where $I(\cdot)$ is the indicator function. Inequality~(iii) uses the fact that
$f_{rs} \le G_R$.

From the above, we can conclude that
$$C\Expect \sZ'_c (\sZ'_c)^* \preceq \sV: = 
C\beta RG_R(R + \Delta_R)/g_R  \sum_{s=R+1}^{\infty}
\hsigma^2_{s}  \vxi_{s} \otimes \vxi_{s}.
$$
For $\sV$, the intrinsic dimension $\mathrm{intdim} (\sV) = \Delta_R$, and its bound for the operator norm  is
$$
\|\sV\| = C\beta RG_R (R + \Delta_R) (\hsigma_{R+1}^2  /g_R) 
\le C\beta RG^2_R (R + \Delta_R) .
$$

\subsection{Proof of Lemma~\ref{lemma:regression:part2} }
For the truncated inverse operator
\begin{align*}
	\tsC_{XX}^+ - \hsC_{XX}^+ & = -\frac{1}{2\pi i} \oint_{\Gamma_R} \frac{1}{\eta} ( \tsC_{XX}-\eta I)^{-1} \mathrm{d}\eta
	+  \frac{1}{2\pi i} \oint_{\Gamma_R} \frac{1}{\eta} (\hsC_{XX}-\eta I )^{-1} \mathrm{d}\eta \\
	&= -\frac{1}{2\pi i} \oint_{\Gamma_R} \frac{1}{\eta} ( \hsC_{XX}-\eta I)^{-1}
	\big[I + \sE  (\hsC_{XX}-\eta I )^{-1} \big]^{-1} \mathrm{d}\eta
	+ \frac{1}{2\pi i} \oint_{\Gamma_R} \frac{1}{\eta} ( \hsC_{XX}-\eta I)^{-1} \mathrm{d}\eta.
\end{align*}
Recall $\sE =  \tsC_{XX} -  \hsC_{XX}$ and  $\sR_{\hsC_{XX}}(\eta) = ( \hsC-\eta I )^{-1}= -\sum_{r=1}^\infty \frac{1}{\eta - \hsigma_{r}^2  } \hsP_r$, then
\begin{align*}
	\tsC_{XX}^+ - \hsC_{XX}^+ 
	&= -\frac{1}{2\pi i} \oint_{\Gamma_R} \frac{1}{\eta} \sR_{\hsC_{XX}}(\eta)
	\big[I + \sE \sR_{\hsC_{XX}}(\eta) \big]^{-1} \mathrm{d}\eta
	+  \frac{1}{2\pi i} \oint_{\Gamma_R} \frac{1}{\eta} \sR_{\hsC_{XX}}(\eta) \mathrm{d}\eta  \\
	& =  \frac{1}{2\pi i} \oint_{\Gamma_R} \frac{1}{\eta} \sR_{\hsC_{XX}}(\eta) \sE \sR_{\hsC_{XX}}(\eta)\mathrm{d}\eta -
	\frac{1}{2\pi i}
	\sum_{k\ge 2} \oint_{\Gamma_R} \frac{1}{\eta} \sR_{\hsC_{XX}}(\eta) \big[-\sE \sR_{\hsC_{XX}}(\eta)\big]^k\mathrm{d}\eta.
\end{align*}
The above states the difference is a summation of the linear term 
$P_R(\sE)$ and  the remainder term $Q_R(\sE)$ with
\begin{align}
	P_R(\sE) &=  \frac{1}{2\pi i} \oint_{\Gamma_R} \frac{1}{\eta} \sR_{\hsC_{XX}}(\eta) \sE \sR_{\hsC_{XX}}(\eta)\mathrm{d}\eta, \label{eqn:inverror:P}\\
	Q_R(\sE) &=  -\frac{1}{2\pi i} \sum_{k\ge 2} \oint_{\Gamma_R} \frac{(-1)^k}{\eta} \sR_{\hsC_{XX}}(\eta) \big[-\sE \sR_{\hsC_{XX}}(\eta)\big]^k\mathrm{d}\eta. \label{eqn:inverror:Q}
\end{align}
Given $\|\sE\| <  g_R/3$ the latter can be controled by $\|Q_R(\sE) \| \le 2K_R \|\sE\|^2 / ( g_R^2\hsigma_R^2)$. It follows from here that
\begin{align*}
	&\| \sT(\tsC_{XX}^+ - \hsC_{XX}^+)\sT^*\sD^*\sD\mY^{\perp}\|_N \\
	\le& 
	\| \sT P_R(\sE) \sT^*  \sD^*\sD\mY^{\perp}\|_N +
	\| \sT\| \|Q_R(\sE)\|   \|\sT^* \sD^*\sD\mY^{\perp}\|_N \\
	\le & (\|\sU_1 \|/\sqrt{N})(N \| \sV_1 P_R(\sE) \sV_2\|)( \|\sU_2\sD^*\sD\mY^{\perp}\|_N/\sqrt{N} ) \\
	&\qquad + 2\hsigma_1K_R / ( g_R^2\hsigma_R^2)
	\|\sE\|^2\cdot \|\sT^* \sD^*\sD\mY^{\perp}\|_N .
\end{align*}
Recall we have defined $Z_1 =\hsigma_1^6K_R / ( g_R^2\hsigma_R^2) $
and $Z_2 = \beta(R+\Delta_R)$.  
In order to prove the lemma, we   need to control 
\begin{align}
	&\Prob \Big( \| \sT(\tsC_{XX}^+ - \hsC_{XX}^+)\sT^*\sD^*\sD\mY^{\perp}\|_N \ge 
	\big(\epsilon_1+ 2Z_1\cdot  \epsilon^2_2\big)\big(1 + \sqrt{Z_2}/\epsilon_3\big)   \|\mY^{\perp}\|_N\Big) \nonumber \\
	\le & \Prob\Big( N\cdot \| \sV P_R(\sE) \sV^*\| \ge \epsilon_1\Big) + \Prob\Big( \|  \sE \|  \ge \hsigma_1^2\epsilon_2\Big)\label{eqn:regPart2:prob21} \\
	& \quad 
	+\Prob\Big( \|\sU_2 \sD^*\sD\mY^{\perp}\|_N/\sqrt{N}  \ge \big(1 + \sqrt{Z_2}/\epsilon_3\big)  \|\mY^{\perp}\|_N \Big)
	\label{eqn:regPart2:prob22} \\
	& \quad +\Prob\Big( \|\sT^* \sD^*\sD\mY^{\perp}\|_N \ge \hsigma_1 \big(1 + \sqrt{Z_2}/\epsilon_3\big)   \|\mY^{\perp}\|_N \Big)
	. \label{eqn:regPart2:prob23}
\end{align}
The second probability in~\eqref{eqn:regPart2:prob21} is controled by 
Lemma~\ref{lemma:projection:part2},  and the probability in~\eqref{eqn:regPart2:prob22} and~\eqref{eqn:regPart2:prob23} is controled by Lemma~\ref{lemma:regression:part3}. Note we need the additional requirement that $\epsilon_2 \le  g_R/3$.

In the rest of the proof, we only need to  control the first probability in~\eqref{eqn:regPart2:prob21}. Recall $\hsP_s=\htheta_r\otimes \htheta_r$, it holds that
\begin{align*}
	P_R(\sE) &= \frac{1}{2\pi i} \oint_{\Gamma_R} \frac{1}{\eta} \sR_{\hsC_{XX}}(\eta) \sE \sR_{\hsC_{XX}}(\eta)\mathrm{d}\eta
	= \frac{1}{2\pi i} \oint_{\Gamma_R}\sum_{r,s=1}^\infty \frac{1}{\eta(\eta - \hsigma_r^2)(\eta - \hsigma_s^2)} 
	\hsP_r \sE \hsP_s\mathrm{d}\eta \\
	& =  \frac{1}{2\pi i} \oint_{\Gamma_R}\sum_{r=1}^\infty \frac{1}{\eta(\eta - \hsigma_r^2)^2} \hsP_r \sE \hsP_r\mathrm{d}\eta+
	\frac{1}{2\pi i} \oint_{\Gamma_R}\sum_{r\neq s} \frac{1}{\eta(\eta - \hsigma_r^2)(\eta - \hsigma_s^2)} 
	\hsP_r \sE \hsP_s\mathrm{d}\eta \\
	& = -\sum_{r=1}^R \frac{1}{\hsigma_r^4} \hsP_r \sE \hsP_r
	+ \sum_{r=1}^{R-1} \sum_{s=r+1}^R \Big( \frac{1}{\hsigma_r^2(\hsigma_r^2 - \hsigma_s^2)} +
	\frac{1}{\hsigma_s^2(\hsigma_s^2 - \hsigma_r^2)}\Big) (\hsP_r \sE \hsP_s+\hsP_s \sE \hsP_r) \\
	&\qquad \qquad+ \sum_{r=1}^R \sum_{s=R+1}^\infty  \frac{1}{\hsigma_r^2(\hsigma_r^2 - \hsigma_s^2)} 
	(\hsP_r \sE \hsP_s + \hsP_s \sE \hsP_r) \\
	& = -\sum_{r=1}^R \frac{1}{\hsigma_r^4} \hsP_r \sE \hsP_r
	-\sum_{r=1}^{R-1} \sum_{s=r+1}^R  \frac{1}{\hsigma_r^2\hsigma_s^2} (\hsP_r \sE \hsP_s+\hsP_s \sE \hsP_r) \\
	&\qquad \qquad+ \sum_{r=1}^R \sum_{s=R+1}^\infty  \frac{1}{\hsigma_r^2(\hsigma_r^2 - \hsigma_s^2)} 
	(\hsP_r \sE \hsP_s + \hsP_s \sE \hsP_r).
\end{align*}
In the above, the third equation employs Cauchy integral formula. Recall the relation
$$
\sE = \tsC_{XX} - \hsC_{XX} = \frac{1}{C\cdot N}\sum_{c=1}^C \frac{1}{\tp_c }\tx_c\otimes \tx_c
- \hsC_{XX},
$$
we find that
\begin{align}
	P_R(\sE) &=  -\sum_{r=1}^R \frac{1}{C N \hsigma_r^2} \Big[\sum_{c=1}^C 
	\Big(\frac{\txi_{cr}^2}{\tp_c } -  N\Big)\Big] \htheta_r \otimes \htheta_r\nonumber \\
	&\qquad \qquad-\sum_{r=1}^{R-1} \sum_{s=r+1}^R  \frac{1}{CN\hsigma_r\hsigma_s}
	\Big[\sum_{c=1}^C 
	\frac{\txi_{cr}\txi_{cs}}{\tp_c } \Big]  ( \htheta_r \otimes \htheta_s +  \htheta_s \otimes \htheta_r) \nonumber\\
	&\qquad \qquad+ \sum_{r=1}^R \sum_{s=R+1}^\infty  \frac{\hsigma_s}{CN\hsigma_r(\hsigma_r^2 - \hsigma_s^2)} 
	\Big[\sum_{c=1}^C 
	\frac{\txi_{cr}\txi_{cs}}{\tp_c } \Big]  ( \htheta_r \otimes \htheta_s +  \htheta_s \otimes \htheta_r). \label{eqn:inversefirstorder}
\end{align}
In the above, $\txi_{cr}  = \langle \tx_c, \htheta_r\rangle/\hsigma_r$ is the $r$-th score of the subsampled observation $\tx_c$.

Define $\ve_s = (0,\cdots, \sqrt{N}, \cdots, 0)^T$ as a unit vector (w.r.t $\|\cdot\|_N$) in $\bbR^N$, whose $s$-th element is $\sqrt{N} $ and all the other elements are zero. Due to the similar reasoning as~\eqref{eqn:regression:lemma1:tthetat}, we can verify that
$$\sV_1  ( \htheta_r \otimes \htheta_s ) \sV_2 =
\begin{cases}
	(\hsigma_r \hsigma_s/N) \ve_r \otimes \ve_s & \text{ if } r \le R, \\
	(\hsigma_{R} \hsigma_s/N) \ve_r \otimes \ve_s & \text{ if }   r > R.
\end{cases} 
$$
It follows that
\begin{align*}
	N(\sV_1 P_R(\sE)\sV_2) &=  -\sum_{r=1}^R \frac{1}{C N } \Big[\sum_{c=1}^C 
	\Big(\frac{\txi_{cr}^2}{\tp_c } -  N\Big)\Big] \ve_r \otimes \ve_r \\
	&\qquad \qquad-\sum_{r=1}^{R-1} \sum_{s=r+1}^R  \frac{1}{CN}
	\Big[\sum_{c=1}^C \frac{\txi_{cr}\txi_{cs}}{\tp_c } \Big]  (\ve_r \otimes \ve_s+  \ve_s \otimes \ve_r) \\
	&\qquad \qquad+ \sum_{r=1}^R \sum_{s=R+1}^\infty  \frac{\hsigma_s\hsigma_{R}}{CN(\hsigma_r^2 - \hsigma_s^2)} 
	\Big[\sum_{c=1}^C \frac{\txi_{cr}\txi_{cs}}{\tp_c } \Big] 
	(\ve_r \otimes \ve_s+  \ve_s \otimes \ve_r) . \end{align*}
In order to apply Lemma~\ref{lemma:mainConcentration}, let $\sS:=C N(\sV_1 P_R(\sE)\sV_2)  = \sum_{c=1}^C \sZ_c$ with
\begin{align}
	\sZ_c &:=  -\sum_{r=1}^R \frac{1}{N }
	\Big(\frac{\txi_{cr}^2}{\tp_c } -  N\Big) \ve_r \otimes \ve_r 
	-\sum_{r=1}^{R-1} \sum_{s=r+1}^R  \frac{1}{N}
	\frac{\txi_{cr}\txi_{cs}}{\tp_c }  (\ve_r \otimes \ve_s+  \ve_s \otimes \ve_r) \nonumber \\
	&\qquad \qquad+ \sum_{r=1}^R \sum_{s=R+1}^\infty  \frac{\hsigma_s\hsigma_{R}}{N(\hsigma_r^2 - \hsigma_s^2)} 
	\frac{\txi_{cr}\txi_{cs}}{\tp_c } (\ve_r \otimes \ve_s+  \ve_s \otimes \ve_r) . \label{eqn:regression2:zc}
\end{align}
The operator norm of $\sZ_c$ is bounded by its Hilbert-Schmidt norm as
\begin{align*}
	N^2\| \sZ_c\|^2 \le  &  N^2\| \sZ_c\|^2_{HS} \\
	= &  \sum_{r=1}^R\Big(\frac{\txi_{cr}^2}{\tp_c } -  N\Big)^2 +
	2 \sum_{r=1}^{R-1} \sum_{s=r+1}^R \Big( \frac{\txi_{cr}\txi_{cs}}{\tp_c } \Big)^2
	+ 2\sum_{r=1}^R \sum_{s=R+1}^\infty  \frac{\hsigma_s^2\hsigma_{R}^2}{(\hsigma_r^2 - \hsigma_s^2)^2} 
	\Big(\frac{\txi_{cr}\txi_{cs}}{\tp_c } \Big)^2 \\
	\le &   \sum_{r=1}^R\Big(\frac{2\txi_{cr}^4}{\tp_c^2}  + 2N^2\Big) +
	4 \sum_{r=1}^{R-1} \sum_{s=r+1}^R  \frac{\txi_{cr}^2\txi_{cs}^2 }{\tp_c^2}
	+  \frac{2\hsigma_R^4}{ g_R^2} \sum_{r=1}^R \sum_{s=R+1}^\infty 
	\frac{\txi_{cr}^2(\hsigma_s^2\txi_{cs}^2/\hsigma_R^2) }{\tp_c^2 } \\
	\le & 2RN^2 + \frac{2}{\tp_c^2} \Big( \sum_{r=1}^R \txi_{cr}^2 \Big)^2 +
	\frac{2\hsigma_R^4}{ g_R^2\tp_c^2}  \Big( \sum_{r=1}^R \txi_{cr}^2 \Big)
	\Big(  \sum_{s=R+1}^\infty  \hsigma_s^2\txi_{cs}^2/\hsigma_R^2 \Big) \\
	\le & 2 RN^2+ 2 \beta^2 N^2 (R + \Delta_R)^2 
	+   G_R^2 \beta^2 N^2(R + \Delta_R)^2/2,
\end{align*}
where $ g_R = \hsigma_R^2 - \hsigma_{R+1}^2$ is the eigengap. From here, we get the upper bound that
$ \| \sZ_c\|^2 \le 2 R +  2 \beta^2  (R + \Delta_R)^2 
+   G_R^2 \beta^2 (R + \Delta_R)^2/2$. That is
\begin{align}
	\| \sZ_c\| & \le \sqrt{2R} +  \sqrt{2} \beta  (R + \Delta_R)
	+   G_R \beta (R + \Delta_R)/\sqrt{2} \nonumber \\
	& \le  \sqrt{2R}+ ( \sqrt{2}+ G_R/\sqrt{2} )\beta(R + \Delta_R).
\end{align}

Next, we manage to bound the variance of  $\sS = \sum_{c=1}^C \sZ_c$ with $\sZ_c$ defined in~\eqref{eqn:regression2:zc}. Note that  
$\|\Expect\sS^2\|  \le C\cdot \mathrm{tr} (\Expect \sZ_c^2) $. Evaluating the trace requires computing
$\langle \Expect \sZ_c^2, \ve_r \otimes \ve_r\rangle $ for $r=1,2,\cdots$. Firstly, for $r=1,\dots, R$, it holds that
\begin{align*}
	N^2 \langle \sZ_c^2, \ve_r \otimes \ve_r\rangle 
	& = \Big(\frac{\txi_{cr}^2}{\tp_c } -  N\Big)^2
	+  \sum_{\substack{s\le R\\ s\neq r} } \frac{\txi_{cr}^2\txi_{cs}^2}{\tp_c^2 } 
	+\sum_{s=R+1}^\infty  \frac{\hsigma_s^2\hsigma_{R}^2}{(\hsigma_r^2 - \hsigma_s^2)^2} 
	\frac{\txi_{cr}^2\txi_{cs}^2}{\tp_c^2 }  \\
	& \le 2 N^2+ \frac{2\txi_{cr}^2}{\tp_c^2 } 
	+ 4 \sum_{\substack{s\le R\\ s\neq r} } \frac{\txi_{cr}^2\txi_{cs}^2}{\tp_c^2} 
	+ G_R^2 \sum_{s=R+1}^\infty \frac{ \txi_{cr}^2(\hsigma_s^2\txi_{cs}^2/\hsigma_R^2)}{\tp_c^2 }.
\end{align*}
Take expectation with respect to the subsampling process
\begin{align*}
	N^2 \langle \Expect \sZ_c^2, \ve_r \otimes \ve_r\rangle 
	& \le 2N^2 + \sum_{n=1}^N\frac{2\hxi_{nr}^4}{p_n } 
	+ 4 \sum_{n=1}^N \sum_{\substack{s\le R\\ s\neq r} } \frac{\hxi_{nr}^2\hxi_{ns}^2}{p_n} 
	+ G_R^2 \sum_{n=1}^N \sum_{s=R+1}^\infty 
	\frac{\hxi_{nr}^2(\hsigma_s^2\hxi_{ns}^2/\hsigma_R^2)}{p_n }.
\end{align*}
Similarly, for $s=R+1,\cdots$ we have
\begin{align*}
	N^2 \langle \Expect \sZ_c^2, \ve_s \otimes \ve_s\rangle 
	& \le 
	G_R^2 \sum_{n=1}^N \sum_{r=1}^R \frac{ \hxi_{nr}^2(\hsigma_s^2\hxi_{ns}^2/\hsigma_R^2)}{p_n }.
\end{align*}
Combining the above, we get
\begin{align*}
	N^2\|\Expect\sS^2\| & \le CN^2\cdot \mathrm{tr} (\Expect \sZ_c^2) \le
	CN^2 \sum_{r=1}^\infty \langle \Expect \sZ_c^2, \ve_r \otimes \ve_r\rangle \\
	& \le  2CN^2+ \sum_{n=1}^N \frac{2C}{p_n} \Big( \sum_{r=1}^R \hxi_{nr}^2\Big)^2 +
	2C\cdot G_R^2 \sum_{n=1}^N \sum_{r=1}^R \sum_{s=R+1}^\infty\frac{ \hxi_{nr}^2(\hsigma_s^2\hxi_{ns}^2/\hsigma_R^2)}{p_n }.
\end{align*}
Plug in $p_n \ge\big(\sum_{r=1}^R \hxi_{nr}^2 + \sum_{s=R+1}^\infty\hsigma_s^2 \hxi_{ns}^2/\hsigma_R^2\big) /
[\beta N(R + \Delta_R)]$ to get that
\begin{align*}
	N^2\|\Expect\sS^2\| 
	& \le  2CN^2+ 2C \beta N(R + \Delta_R) \sum_{n=1}^N \frac{\Big(\sum_{r=1}^R \hxi_{nr}^2\Big)^2}{\sum_{r=1}^R \hxi_{nr}^2 + \sum_{s=R+1}^\infty\hsigma_s^2 \hxi_{ns}^2/\hsigma_R^2 } \\
	&\qquad   +
	2C \beta N(R + \Delta_R) \cdot G_R^2 \sum_{n=1}^N
	\frac{ \Big( \sum_{r=1}^R \hxi_{nr}^2\Big)\Big( \sum_{s=R+1}^\infty \hsigma_s^2\hxi_{ns}^2/\hsigma_R^2 \Big)}{\sum_{r=1}^R \hxi_{nr}^2 + \sum_{s=R+1}^\infty\hsigma_s^2 \hxi_{ns}^2/\hsigma_R^2 } \\
	& \le  2CN^2+ 2C \beta N(R + \Delta_R) \sum_{n=1}^N \sum_{r=1}^R \hxi_{nr}^2\\
	&\qquad   +
	2C \beta N(R + \Delta_R) \cdot (G_R^2/4) \sum_{n=1}^N
	\Big( \sum_{r=1}^R \hxi_{nr}^2 + \sum_{s=R+1}^\infty\hsigma_s^2 \hxi_{ns}^2/\hsigma_R^2\Big) \\
	& \le  2CN^2+ 2C \beta N^2(R + \Delta_R) R
	+ (1/2)C \beta N^2G_R^2 (R + \Delta_R)^2 .
\end{align*}
From here we get the upper bound  
\begin{align}
	\|\Expect\sS^2\|\le& 2C +
	2C \beta (R + \Delta_R) R   +C \beta G_R^2 (R + \Delta_R)^2/2 \nonumber \\
	\le & 2C +   C\beta (2+G_R^2/2)(R+\Delta_R)^2.
\end{align}
Similarly as in the proof of Lemma~\ref{lemma:projection:part1}, we can establish that
$ \mathrm{intdim} ( \Expect \sS^2)\le 2R$. Finally, it follows that for $\epsilon_1  \ge \sqrt{V_2/C} + L_2/(3C)$,
\begin{align*}
	\Prob\Big( N\cdot \| \sT P_R(\sE) \sT^*\|  \ge \epsilon_1\Big)
	\le  8R  \exp\Big(-\frac{C\epsilon^2/2}{V_2 +L_2\epsilon/3 }\Big).
\end{align*}
with $V_2 = 2 +  \beta (2+G_R^2/2)(R+\Delta_R)^2$
and $L_2 =  \sqrt{2R}+ ( \sqrt{2}+ G_R /\sqrt{2})\beta(R + \Delta_R)$.

\section{Proof of Proposition~\ref{proposition:problowerbound}}
\label{proposition:problowerbound:proof}

For the $n$-th sample $x_n$, we evaluate the difference between the full sample probability and pilot probability estimated by Algorithm~\ref{alg:sampleProb}. Our target is to control the difference
\begin{align}
	&\Big[ \sum_{r=1}^R \hxi_{nr}^2 + \|(I - \hsP_R) x_n \|^2  / \hsigma_{R}^2\Big] -\Big[\sum_{r=1}^R (\hxi_{nr}')^2 + \|(I - \tsP_R') x_n \|^2  / (\tsigma_{R}')^2 \Big] \nonumber\\
	=& \langle \hsC_{XX}^+ - (\tsC_{XX}')^+,  x_n\otimes x_n\rangle + 
	\frac{1}{(\tsigma_{R}')^2} \langle \tsP_R'-\hsP_R , x_n\otimes x_n\rangle \nonumber \\
	&\qquad\qquad+\|(I - \hsP_R) x_n \|^2 \times \Big(\frac{1}{\hsigma_{R}^2}-\frac{1}{(\tsigma_{R}')^2} \Big)
	\nonumber\\
	=&- \langle P_R(\sE'),  x_n\otimes x_n\rangle+
	\langle L_R(\sE')/(\tsigma_{R}')^2,  x_n\otimes x_n\rangle+
	\langle -Q_R(\sE') + S_R(E')/(\tsigma_{R}')^2,  x_n\otimes x_n\rangle \nonumber\\
	&\qquad \qquad +\|(I - \hsP_R') x_n \|^2 \times\Big(\frac{1}{\hsigma_{R}^2}-\frac{1}{(\tsigma_{R}')^2} \Big) \label{eqn:last:bound0}
\end{align}
The last equality relies on the decomposition~\eqref{eqn:projerror:S}, the decomposition~\eqref{eqn:inverror:P} and~\eqref{eqn:inverror:Q}, and the error $\sE'= \tsC_{XX}'-\hsC_{XX}$.

With a little abuse of notation, in the rest of this proof, we will use some simplified notations. In particular, we will denote the pilot estimates $\tsP_R'$, $\tsigma_{nr}'$, $\tsC_{XX}'$ and $\sE'$ from Line~2 of Algorithm~\ref{alg:sampleProb} by $\tsP_R$, $\tsigma_{r}$, $\tsC_{XX}$ and $\sE$, respectively. 
This should not cause confusion as we are only comparing the pilot estimates against the full sample estimates in this proof. With almost identical argument as Lemma~\ref{lemma:projection:part2}, we can establish the next result.

\begin{lemma} \label{lemma:operatorXX}
	Let  $\Delta_0= \mathrm{intdim}(\hsC_{XX} )$ and  $\hsigma_{1}^2 = \Vert \hsC_{XX} \Vert$. Suppose the sampling probability satisfies
	\begin{equation} \label{eqn:probmixed}
		p_n=\frac{1}{2} \frac{\Vert x_n \Vert^2}{\sum_{m=1}^{N}  \Vert x_m \Vert^2}+\frac{1}{2N}.
	\end{equation}  Then, 
	we have that
	$$\Prob\Big(\Vert \tsC_{XX} - \hsC_{XX} \Vert \ge \epsilon\cdot \hsigma_{1}^2 \Big)\le     4 \Delta_0 \exp\Big(- \frac{C\epsilon^2/2}{2 \Delta_0 + (2 \Delta_0+1)\epsilon/3}\Big)  \,,$$
	for $\epsilon$ satisfying $C\cdot\epsilon > \sqrt{2C\cdot \Delta_0}
	+ (2\Delta_0+ 1) /3$.
\end{lemma}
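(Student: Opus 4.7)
\textbf{Proof proposal for Lemma~\ref{lemma:operatorXX}.} The plan is to cast $\tsC_{XX}-\hsC_{XX}$ as a normalized sum of independent, mean-zero, self-adjoint random operators and then apply the intrinsic-dimension Bernstein inequality (Lemma~\ref{lemma:mainConcentration2}). Concretely, for each subsampling draw $c=1,\dots,C$ set
\[
\sZ_c := \frac{1}{\hsigma_1^2}\Big[\frac{1}{N\tp_c}\tx_c\otimes \tx_c - \hsC_{XX}\Big],
\qquad \sS := \sum_{c=1}^C \sZ_c,
\]
so that $\tsC_{XX}-\hsC_{XX} = (\hsigma_1^2/C)\,\sS$ and $\{\sS\ge C\epsilon\} = \{\|\tsC_{XX}-\hsC_{XX}\|\ge \epsilon\hsigma_1^2\}$. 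By construction $\Expect\sZ_c=0$, and the $\sZ_c$ are i.i.d.\ self-adjoint.

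Next I would extract the two ingredients that feed into Lemma~\ref{lemma:mainConcentration2}, using each half of the mixed sampling probability $p_n = \tfrac{1}{2}\|x_n\|^2/\sum_m\|x_m\|^2 + \tfrac{1}{2N}$ for a different purpose. For the uniform operator bound, the importance half gives $p_n \ge \tfrac{1}{2}\|x_n\|^2 / \sum_m \|x_m\|^2$, hence
\[
\frac{\|\tx_c\|^2}{N\tp_c} \;\le\; \frac{2\sum_m\|x_m\|^2}{N} \;=\; 2\tr(\hsC_{XX}) \;=\; 2\Delta_0\hsigma_1^2,
\]
which combined with the triangle inequality yields $\|\sZ_c\| \le 2\Delta_0 + 1 =: L$. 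For the variance, write $Y_c = \tx_c\otimes\tx_c/(N\tp_c)$ so that $\sZ_c^2 = (Y_c-\hsC_{XX})^2/\hsigma_1^4$ and therefore $\Expect\sZ_c^2 \preceq \Expect Y_c^2/\hsigma_1^4$. Computing
\[
\Expect Y_c^2 \;=\; \sum_{n=1}^N \frac{\|x_n\|^2}{N^2 p_n}\, x_n\otimes x_n,
\]
and again using the importance half of $p_n$ gives $\Expect Y_c^2 \preceq 2\tr(\hsC_{XX})\cdot\hsC_{XX} = 2\Delta_0\hsigma_1^2 \hsC_{XX}$. Summing, $\Expect\sS^2 \preceq \sV := 2C\Delta_0\,\hsC_{XX}/\hsigma_1^2$, for which $\|\sV\| = 2C\Delta_0 =: v$ and $\mathrm{intdim}(\sV) = \tr(\hsC_{XX})/\hsigma_1^2 = \Delta_0 =: d$.

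Feeding $(d,v,L) = (\Delta_0,\,2C\Delta_0,\,2\Delta_0+1)$ into Lemma~\ref{lemma:mainConcentration2} with $t=C\epsilon$, the range condition $t\ge \sqrt{v}+L/3$ reads $C\epsilon \ge \sqrt{2C\Delta_0}+(2\Delta_0+1)/3$, which is exactly the hypothesis, and the tail estimate becomes
\[
\Prob(\|\sS\|\ge C\epsilon) \;\le\; 4\Delta_0 \exp\!\Big(-\frac{(C\epsilon)^2/2}{2C\Delta_0 + (2\Delta_0+1)C\epsilon/3}\Big)
\;=\; 4\Delta_0\exp\!\Big(-\frac{C\epsilon^2/2}{2\Delta_0+(2\Delta_0+1)\epsilon/3}\Big),
\]
which is the advertised bound.

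The only non-routine point is the semidefinite step that turns $\Expect Y_c^2$ into something controlled by $\hsC_{XX}$: one must notice that the importance half of $p_n$ cancels the $\|x_n\|^2$ factor appearing in $\Expect Y_c^2$, leaving a rank-one sum proportional to $\hsC_{XX}$ whose intrinsic dimension collapses to $\Delta_0$ rather than the ambient dimension. The rest is essentially the same bookkeeping that was carried out for Lemma~\ref{lemma:projection:part2}, with the mixed probability replacing $p_n \ge p_n^{\mathrm{Exact}}/\beta$ and the factor $R+\Delta_R$ replaced by $\Delta_0$.
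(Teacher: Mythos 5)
Your proof is correct and is essentially the paper's intended argument: the paper establishes this lemma by remarking that it follows from the same reasoning as Lemma~\ref{lemma:projection:part2}, and you have carried out exactly that reasoning, using the importance half of the mixed probability to get $\|x_n\|^2/(Np_n)\le 2\tr(\hsC_{XX})=2\Delta_0\hsigma_1^2$, which replaces the factor $\beta(R+\Delta_R)$ by $2\Delta_0$ in the operator-norm bound, the variance bound, and the intrinsic dimension before invoking Lemma~\ref{lemma:mainConcentration2}. All the individual computations (the semidefinite bound $\Expect\sZ_c^2\preceq \Expect Y_c^2/\hsigma_1^4$, the identity $\Expect Y_c^2=\sum_n \|x_n\|^2 x_n\otimes x_n/(N^2p_n)$, and $\mathrm{intdim}(\sV)=\Delta_0$) check out.
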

From the above lemma,  for $\epsilon =(8\Delta_0 +2)\log(120\Delta_0)/(\hsigma_{1}^2C^{1/2})\le  g_R/(3\hsigma_1^2 )$, we know that
$$
\Prob\Big(\Vert \tsC_{XX} - \hsC_{XX} \Vert \ge 
\frac{\hsigma_{1}^2\sqrt{(8\Delta_0 +2)\log(120\Delta_0)}}{C^{1/2}}
\Big)\le \frac{1}{30}.
$$
This means the event
\begin{equation} \label{eqn:last:event}
	E = \Big\{\Vert \tsC_{XX} - \hsC_{XX} \Vert \le 
	\frac{\hsigma_{1}^2\sqrt{(8\Delta_0 +2)\log(120\Delta_0)}}{C^{1/2}}\Big\}
\end{equation}
holds with probability at least $1-1/30$.

For large enough $C \ge 144\hsigma_{1}^4\gamma_0^2/g_R^2$ for $\gamma_0 = \sqrt{(\Delta_0 +1)\log(120\Delta_0)}$. Then, under the event $E$~\eqref{eqn:last:event}, 
it holds that
$$\Vert \tsC_{XX} - \hsC_{XX} \Vert  \le g_R/3\le \hsigma_R^2/2,$$ 
and it further holds that
\begin{equation} \label{eqn:last:sigmasigma}
	| \hsigma_{R}^2 - \tsigma_R^2| \le \hsigma_{R}^2/2 \implies
	-\hsigma_{R}^2/2  \le  \hsigma_{R}^2 - \tsigma_R^2 \le \hsigma_{R}^2/2 
	\implies \hsigma_{R}^2/2  \le   \tsigma_R^2 \le 3\hsigma_{R}^2/2.
\end{equation}
Under the event $E$~\eqref{eqn:last:event}, it also holds that
\begin{align} 
	\Big|\frac{1}{\tsigma_{R}^2} - 
	\frac{1}{\hsigma_{R}^2}\Big| \le
	\frac{\Vert \tsC_{XX} - \hsC_{XX} \Vert}{\tsigma_{R}^2\hsigma_{R}^2} \le
	\frac{2\hsigma_{1}^2\sqrt{(8\Delta_0 +2)\log(120\Delta_0)}}{C^{1/2}\hsigma_{R}^4},
\end{align}
where the last inequality uses the relation~\eqref{eqn:last:sigmasigma}. It follows  for the last term in~\eqref{eqn:last:bound0} that
\begin{align}
	&\|(I - \hsP_R) x_n \|^2 \times \Big|\frac{1}{\tsigma_{R}^2} - 
	\frac{1}{\hsigma_{R}^2}\Big| \nonumber \\
	\le&(\|(I - \hsP_R) x_n \|^2/\hsigma_R^2 )	\frac{2\hsigma_{1}^2\sqrt{(8\Delta_0 +2)\log(120\Delta_0)}}{C^{1/2}\hsigma_{R}^2} \nonumber\\
	\le&(\sum_{r=1}^R \hxi_{nr}^2 +\|(I - \hsP_R) x_n \|^2/\hsigma_R^2 )	\frac{2\hsigma_{1}^2\sqrt{(8\Delta_0 +2)\log(120\Delta_0)}}{C^{1/2}\hsigma_{R}^2}.\label{eqn:last:bound1}
\end{align}

Under the event $E$, it directly follows for the second order error term in~\eqref{eqn:last:bound0} that
\begin{align}
	&\Big|\langle S_R(\sE)/ \tsigma_{R}^2 -Q_r(\sE), x_n\otimes x_n \rangle\Big| \nonumber\\
	\stackrel{(i)}{\le }& K_R/g_R^2 \big[1/ \tsigma_{R}^2 +2  / \hsigma_R^2\big]\|\sE\|^2\times
	\Vert x_n\Vert^2 \nonumber\\
	\stackrel{(ii)}{\le }& 4K_R  /  (\hsigma_R^2g_R^2)\times\|\sE\|^2\times \|x_n\|^2
	\nonumber \\
	\stackrel{(iii)}{\le }& 4K_R \hsigma_{1}^2 /(g_R^2\hsigma_{R}^2 )\times\|\sE\|^2\times
	\Big[\sum_{r=1}^R  \hxi_{nr}^2 +
	\sum_{s=R+1}^\infty  \hsigma_s^2\hxi_{ns}^2/\hsigma_R^2\Big]\nonumber \\
	\le& \frac{32K_R\hsigma_{1}^6(\Delta_0 +1)\log(120\Delta_0)}{Cg_R^2\hsigma_{R}^2 } \times
	\Big[\sum_{r=1}^R \hxi_{nr}^2 + \|(I - \hsP_R) x_n \|^2  / \hsigma_{R}^2 \Big]\label{eqn:last:bound2}\\
\end{align}

In the above, the first inequality~(i) uses Lemma~\ref{lemma:lsbound} and the bound for~\eqref{eqn:inverror:Q}. Step~(ii) uses the relation~\eqref{eqn:last:sigmasigma}. 
The inequality~(iii) uses the relation $\|x_n\|^2/\hsigma_{1}^2\le \big[\sum_{r=1}^R  \hxi_{nr}^2 +
\sum_{s=R+1}^\infty  \hsigma_s^2\hxi_{ns}^2/\hsigma_R^2\big]$.

It remains to analyze the first order terms in~\eqref{eqn:last:bound0}.
By the result of Lemma~\ref{lemma:pcaLinearSum}, the first order error term  $L_R(\sE)$	can be expressed as  $L_R(\sE) = \frac{1}{C}\sum_{c=1}^C \sZ_c/(N\tp_c)$. Note in this expression, we use the sampling probability~\eqref{eqn:probmixed}. It can be evaluated that
\begin{align}
	&\frac{1}{\tsigma_{R}^2}\langle \sum_{c=1}^{C} \sZ_c/(N\tp_c) , x_n\otimes x_n\rangle\nonumber\\
	= & \frac{2}{C\tsigma_{R}^2} \sum_{r=1}^R
	\sum_{s=R+1}^\infty \frac{\hsigma_r^2\hsigma_s^2}{\hsigma_r^2 - \hsigma_s^2} \times
	\Big(\sum_{c=1}^{C}\frac{\txi_{cr}\txi_{cs}}{ N\tp_c} \Big) \times \hxi_{nr}\hxi_{ns} \nonumber\\
	\le & \frac{2 \hsigma_R^2}{C\tsigma_{R}^2}\Big[\sum_{r=1}^R
	\sum_{s=R+1}^\infty  \Big(\frac{\hsigma_r^2}{\hsigma_r^2 - \hsigma_s^2} \Big)^2 
	\Big(\sum_{c=1}^{C}\frac{\txi_{cr}\hsigma_s\txi_{cs}/\hsigma_R}{ N\tp_c} \Big)^2 \Big]^{1/2}
	\Big[\sum_{r=1}^R \sum_{s=R+1}^\infty  \hxi_{nr}^2\hsigma_s^2\hxi_{ns}^2/\hsigma_R^2\Big]^{1/2}\nonumber\\
	\le & \frac{2 G_R\hsigma_R^2}{C\tsigma_{R}^2} \Big[\sum_{r=1}^R \sum_{s=R+1}^\infty 
	\Big(\sum_{c=1}^{C}\frac{\txi_{cr}\hsigma_s\txi_{cs}/\hsigma_R}{ N\tp_c} \Big)^2 \Big]^{1/2}
	\Big[\sum_{r=1}^R  \hxi_{nr}^2\Big]^{1/2}
	\Big[\sum_{s=R+1}^\infty  \hsigma_s^2\hxi_{ns}^2/\hsigma_R^2\Big]^{1/2} \nonumber\\
	\le &\frac{G_R\hsigma_R^2}{C\tsigma_{R}^2} \Big[\sum_{r=1}^R
	\sum_{s=R+1}^\infty 
	\Big(\sum_{c=1}^{C}\frac{\txi_{cr}\hsigma_s\txi_{cs}/\hsigma_R}{ N\tp_c} \Big)^2 \Big]^{1/2}
	\Big[\sum_{r=1}^R  \hxi_{nr}^2 +
	\sum_{s=R+1}^\infty  \hsigma_s^2\hxi_{ns}^2/\hsigma_R^2\Big] .\label{eqn:last:LrEbound}
\end{align}
As the $C$ subsamples are independent with zero mean, it holds that
\begin{align*}
	\Expect \sum_{r=1}^R
	\sum_{s=R+1}^\infty 
	\Big(\sum_{c=1}^{C}\frac{\txi_{cr}\hsigma_s\txi_{cs}/\hsigma_R}{ N\tp_c} \Big)^2 
	=& C\sum_{n=1}^{N}\sum_{r=1}^R\sum_{s=R+1}^\infty 
	\frac{\hxi_{nr}^2 \hsigma_s^2 \hxi_{ns}^2 /\hsigma_R^2}{ N^2  p_n} \\
	=& C\sum_{n=1}^{N} \frac{1}{N^2 p_n}
	\Big(\sum_{r=1}^R\hxi_{nr}^2\Big)
	\Big(\sum_{s=R+1}^\infty \hsigma_s^2 \hxi_{ns}^2/\hsigma_R^2\Big)\\
	\le & \frac{C}{4}\sum_{n=1}^{N} \frac{1}{N^2 p_n}
	\Big(\sum_{r=1}^R\hxi_{nr}^2 +\sum_{s=R+1}^\infty \hsigma_s^2 \hxi_{ns}^2 /\hsigma_R^2\Big)^2.
\end{align*}
For the sampling probability~\eqref{eqn:probmixed}, we use the lower bound that
\begin{equation}
	p_n=\frac{1}{2} \frac{\Vert x_n \Vert^2}{\sum_{m=1}^{N}  \Vert x_m \Vert^2}+\frac{1}{2N} \ge \frac{1}{2} \frac{\Vert x_n \Vert^2}{\sum_{m=1}^{N}  \Vert x_m \Vert^2}=
	\frac{1}{2N} \frac{\Vert x_n \Vert^2}{\tr(\hsC_{XX})}.
\end{equation}
Then, 
\begin{align*}
	\Expect \sum_{r=1}^R
	\sum_{s=R+1}^\infty 
	\Big(\sum_{c=1}^{C}\frac{\txi_{cr}\hsigma_s\txi_{cs}/\hsigma_R}{ N\tp_c} \Big)^2 	\le & \frac{C}{2}\sum_{n=1}^{N} \frac{\tr(\hsC_{XX})}{N \cdot \| x_n\|^2}
	\Big(\sum_{r=1}^R\hxi_{nr}^2 +\sum_{s=R+1}^\infty \hsigma_s^2 \hxi_{ns}^2/\hsigma_R^2\Big)^2\\
	\le & \frac{C}{2}\sum_{n=1}^{N} \frac{\tr(\hsC_{XX})/\hsigma_{1}^2}{N }
	\Big(\sum_{r=1}^R\hxi_{nr}^2 +\sum_{s=R+1}^\infty \hsigma_s^2 \hxi_{ns}^2/\hsigma_R^2\Big)\\
	\le & \frac{C(R+\Delta_R)\tr(\hsC_{XX})}{2\hsigma_{1}^2} \\
	= & \frac{C(R+\Delta_R) \Delta_0 }{2} \le  \frac{C(R+\Delta_R)^2 }{2}. 
\end{align*}
The last inequality uses the relation that $\Delta_0 \le (R+\Delta_R)$.
By Markov inequality and the upper bound~\eqref{eqn:last:LrEbound}, with probability at least $1-\frac{1}{30}$, it holds that
\begin{align}
	\big|	\langle L_R(\sE)/ \tsigma_{R}^2 , x_n\otimes x_n\rangle\big| \le&
	\frac{\sqrt{15}G_R\hsigma_R^2(R+\Delta_R)}{C^{1/2}\tsigma_{R}^2 }\times 
	\Big[\sum_{r=1}^R  \hxi_{nr}^2 +
	\sum_{s=R+1}^\infty  \hsigma_s^2\hxi_{ns}^2/\hsigma_R^2\Big]  \nonumber\\
	\le&
	\frac{8G_R(R+\Delta_R)}{C^{1/2}} \Big[\sum_{r=1}^R \hxi_{nr}^2 + \|(I - \hsP_R) x_n \|^2  / \hsigma_{R}^2 \Big].  \label{eqn:last:bound3}
\end{align}
The last inequality uses again the relation~\eqref{eqn:last:sigmasigma}.

From the first order expression~\eqref{eqn:inversefirstorder} for the truncated inverse operator, it holds that
\begin{align}
	\langle P_R(\sE), x_n\otimes x_n\rangle &=  -\sum_{r=1}^R \frac{1}{C N } \Big[\sum_{c=1}^C 
	\Big(\frac{\txi_{cr}^2}{\tp_c } -  N\Big)\Big] \hxi_{nr}^2 -\sum_{r=1}^{R-1} \sum_{s=r+1}^R  \frac{1}{CN}
	\Big[\sum_{c=1}^C 
	\frac{\txi_{cr}\txi_{cs}}{\tp_c } \Big]  (2\hxi_{nr} \hxi_{ns}) \nonumber\\
	&\qquad \qquad+ \sum_{r=1}^R \sum_{s=R+1}^\infty  \frac{\hsigma_s^2}{CN(\hsigma_r^2 - \hsigma_s^2)} 
	\Big[\sum_{c=1}^C 
	\frac{\txi_{cr}\txi_{cs}}{\tp_c } \Big]  (2\hxi_{nr} \hxi_{ns}).
	\label{eqn:last:P1} \end{align}
The first two terms on the right hand side of the above equation can be bounded
\begin{align}
	&\left|\sum_{r=1}^R \Big[\sum_{c=1}^C 
	\Big(\frac{\txi_{cr}^2}{\tp_c } -  N\Big)\Big] \hxi_{nr}^2 +\sum_{r=1}^{R-1} \sum_{s=r+1}^R  
	\Big[\sum_{c=1}^C 
	\frac{\txi_{cr}\txi_{cs}}{\tp_c } \Big]  (2\hxi_{nr} \hxi_{ns}) \right|\nonumber\\
	\le & \Big\{\sum_{r=1}^R  \hxi_{nr}^4 +2\sum_{r=1}^{R-1}\sum_{s=r+1}^R  
	\hxi_{nr}^2 \hxi_{ns}^2 \Big\}^{1/2}\times
	\Big\{\sum_{r=1}^R \Big[\sum_{c=1}^C 
	\Big(\frac{\txi_{cr}^2}{\tp_c } -  N\Big)\Big]^2 +2\sum_{r=1}^{R-1} \sum_{s=r+1}^R   \Big[\sum_{c=1}^C  \frac{\txi_{cr}\txi_{cs}}{\tp_c } \Big]^2 \Big\}^{1/2} \nonumber\\
	= & \sum_{r=1}^R  \hxi_{nr}^2 \times
	\Big\{\sum_{r=1}^R \Big[\sum_{c=1}^C 
	\Big(\frac{\txi_{cr}^2}{\tp_c } -  N\Big)\Big]^2 +2\sum_{r=1}^{R-1} \sum_{s=r+1}^R   \Big[\sum_{c=1}^C  \frac{\txi_{cr}\txi_{cs}}{\tp_c } \Big]^2 \Big\}^{1/2}\nonumber\\
	\le & \Big\{
	\sum_{r=1}^R \hxi_{nr}^2+ \sum_{s=R+1}^\infty  \hsigma_s^2\hxi_{ns}^2/\hsigma^2_R\Big\}
	\times
	\Big\{\sum_{r=1}^R \Big[\sum_{c=1}^C 
	\Big(\frac{\txi_{cr}^2}{\tp_c } -  N\Big)\Big]^2 +2\sum_{r=1}^{R-1} \sum_{s=r+1}^R   \Big[\sum_{c=1}^C  \frac{\txi_{cr}\txi_{cs}}{\tp_c } \Big]^2 \Big\}^{1/2},\label{eqn:last:P2}
\end{align}
and
\begin{align}
	& \left|\sum_{r=1}^R \sum_{s=R+1}^\infty  \frac{\hsigma_s^2}{(\hsigma_r^2 - \hsigma_s^2)} 
	\Big[\sum_{c=1}^C 
	\frac{\txi_{cr}\txi_{cs}}{\tp_c } \Big]  (2\hxi_{nr} \hxi_{ns})\right| \nonumber\\
	\le & 2\hsigma_{R}\Big\{
	\sum_{r=1}^R \sum_{s=R+1}^\infty  \frac{1}{(\hsigma_r^2 - \hsigma_s^2)^2} 
	\Big[\sum_{c=1}^C 
	\frac{\txi_{cr}\hsigma_s\txi_{cs}}{\tp_c } \Big]^2\Big\}^{1/2}
	\times \Big\{
	\sum_{r=1}^R \sum_{s=R+1}^\infty  \hxi_{nr}^2 \hsigma_s^2\hxi_{ns}^2/\hsigma^2_R\Big\}^{1/2}\nonumber\\
	\le & \Big\{ \sum_{r=1}^R \sum_{s=R+1}^\infty  \frac{\hsigma_{R}^4}{(\hsigma_r^2 - \hsigma_s^2)^2} 
	\Big[\sum_{c=1}^C 
	\frac{\txi_{cr}\hsigma_s\txi_{cs}/\hsigma_{R}}{\tp_c } \Big]^2\Big\}^{1/2}
	\times \Big\{
	\sum_{r=1}^R \hxi_{nr}^2+ \sum_{s=R+1}^\infty  \hsigma_s^2\hxi_{ns}^2/\hsigma^2_R\Big\}\nonumber\\
	\le & \Big\{  G_R^2 \sum_{r=1}^R\sum_{s=R+1}^\infty  \Big[\sum_{c=1}^C 
	\frac{\txi_{cr}\hsigma_s\txi_{cs}/\hsigma_{R}}{\tp_c } \Big]^2\Big\}^{1/2}
	\times \Big\{
	\sum_{r=1}^R \hxi_{nr}^2+ \sum_{s=R+1}^\infty  \hsigma_s^2\hxi_{ns}^2/\hsigma^2_R\Big\}. \label{eqn:last:P3}
\end{align}
Combining the above~\eqref{eqn:last:P1}, ~\eqref{eqn:last:P2} and~\eqref{eqn:last:P3} we get
\begin{align}
	&\big|\langle P_R(\sE), x_n\otimes x_n\rangle \big|\le
	2\Big\{
	\sum_{r=1}^R \hxi_{nr}^2+ \sum_{s=R+1}^\infty  \hsigma_s^2\hxi_{ns}^2/\hsigma^2_R\Big\}
	\times \nonumber\\
	&\qquad \qquad\qquad\qquad\Big\{\sum_{r=1}^R \Big[\sum_{c=1}^C 
	\Big(\frac{\txi_{cr}^2}{\tp_c } -  N\Big)\Big]^2 +2\sum_{r=1}^{R-1} \sum_{s=r+1}^R   \Big[\sum_{c=1}^C  \frac{\txi_{cr}\txi_{cs}}{\tp_c } \Big]^2 \nonumber\\
	&\qquad\qquad\qquad\qquad\qquad\qquad+
	G_R^2 \sum_{r=1}^R \sum_{s=R+1}^\infty  \Big[\sum_{c=1}^C 
	\frac{\txi_{cr}\hsigma_s\txi_{cs}/\hsigma_R}{\tp_c } \Big]^2\Big\}^{1/2}.  \label{eqn:last:P4} \end{align}
Compute the expectation for the last line
\begin{align*}
	&\frac{1}{N^2C^2}\Expect\Big\{\sum_{r=1}^R \Big[\sum_{c=1}^C 
	\Big(\frac{\txi_{cr}^2}{\tp_c } -  N\Big)\Big]^2 +2\sum_{r=1}^{R-1} \sum_{s=r+1}^R   \Big[\sum_{c=1}^C  \frac{\txi_{cr}\txi_{cs}}{\tp_c } \Big]^2  +  G_R^2\sum_{r=1}^R \sum_{s=R+1}^\infty 
	\Big[\sum_{c=1}^C 
	\frac{\txi_{cr}\hsigma_s\txi_{cs}/\hsigma_R}{\tp_c } \Big]^2 
	\Big\} \\
	\le &\frac{1}{N^2C}\Big\{2\sum_{r=1}^R \sum_{n=1}^N
	\frac{\hxi_{nr}^4}{p_n } + 2R N^2 +4\sum_{r=1}^{R-1} \sum_{s=r+1}^R  \sum_{n=1}^N  \frac{\hxi_{nr}^2\hxi_{ns}^2}{p_n }  +4 G_R^2\sum_{r=1}^R \sum_{s=R+1}^\infty  
	\Big[\sum_{n=1}^N 
	\frac{\hxi_{nr}^2 \hsigma_s^2\hxi_{ns}^2/\hsigma^2_R }{p_n } \Big] \Big\}\\
	\le &\frac{2+G_R^2}{N^2C}\sum_{n=1}^N \frac{1}{p_n} \Big[\sum_{r=1}^R \hxi_{nr}^2+
	\sum_{s=R+1}^\infty\hsigma_{s}^2\hxi_{ns}^2/\hsigma^2_R\Big]^2+ 2R/C \\
	\le &\frac{4+2G_R^2}{C}\sum_{n=1}^N \frac{\tr(\hsC_{XX})}{N \cdot \| x_n\|^2} \Big[\sum_{r=1}^R \hxi_{nr}^2+
	\sum_{s=R+1}^\infty\hsigma_{s}^2\hxi_{ns}^2/\hsigma^2_R\Big]^2+ 2R/C \\
	\le &\frac{4+2G_R^2}{C\hsigma_{1}^2}\sum_{n=1}^N \frac{\tr(\hsC_{XX})}{N} \Big[\sum_{r=1}^R \hxi_{nr}^2+
	\sum_{s=R+1}^\infty\hsigma_{s}^2\hxi_{ns}^2/\hsigma^2_R\Big]+ 2R/C \\
	\le &\frac{(4+2G_R^2 )\Delta_0}{C}  \big[R+\Delta_R\big]+ 2R/C \\
	\le &\frac{(6+2G_R^2 ) }{C}  \big[R+\Delta_R\big]^2.
\end{align*}
Again, by Markov inequality and~\eqref{eqn:last:P4}, with probability at least $1-\frac{1}{30}$, it holds that
\begin{align}
	|\langle P_R(\sE), x_n\otimes x_n\rangle |\le 
	&\frac{2(30)^{1/2}(6+2G_R^2 )^{1/2} }{C^{1/2}}  \big[R+\Delta_R\big] \times \Big[\sum_{r=1}^R \hxi_{nr}^2 + \|(I - \hsP_R) x_n \|^2  / \hsigma_{R}^2 \Big]\nonumber\\
	\le 
	&\frac{27(1+G_R ) }{C^{1/2}}  \big[R+\Delta_R\big] \times \Big[\sum_{r=1}^R \hxi_{nr}^2 + \|(I - \hsP_R) x_n \|^2  / \hsigma_{R}^2 \Big]  \label{eqn:last:bound4}
\end{align}

Combining \eqref{eqn:last:bound0}, \eqref{eqn:last:bound1}, 
\eqref{eqn:last:bound2}, \eqref{eqn:last:bound3} and
\eqref{eqn:last:bound4}, we arrive at
\begin{align*}
	&\left|\Big[ \sum_{r=1}^R \hxi_{nr}^2 + \|(I - \hsP_R) x_n \|^2  / \hsigma_{R}^2\Big] -\Big[\sum_{r=1}^R (\hxi_{nr}')^2 + \|(I - \tsP_R') x_n \|^2  / (\tsigma_{R}')^2 \Big] \right|\\
	\le& \gamma\times\Big[\sum_{r=1}^R (\hxi_{nr})^2 + \|(I - \hsP_R) x_n \|^2  / \hsigma_{R}^2 \Big] .
\end{align*}
with probability at least $1-1/10$  and
\begin{align*}
	\gamma = \frac{35(1+G_R ) }{C^{1/2}} \big[R+\Delta_R\big] 
	+\frac{8\hsigma_{1}^2\gamma_0}{C^{1/2}\hsigma_{R}^2}+
	\frac{32K_R\hsigma_{1}^6\gamma_0^2}{Cg_R^2\hsigma_{R}^2 } .
\end{align*}
In the above $\gamma_0 = \sqrt{(\Delta_0 +1)\log(120\Delta_0)}$. From the above, we can conclude the results of Proposition~\ref{proposition:problowerbound}.

\clearpage 
\section{Additional Results for Simulation Study}
\label{sec:supp:simulation}

Section~\ref{sec:simu} conducts the randomized FPCA in Algorithm~\ref{alg:sampleFPCA} and the randomized FLR in Algorithm~\ref{alg:sampleFLR} with three sampling probabilities. These include: 1) the naive uniform sampling probability (UNIF), $p_n = 1/N$; 2) the importance sampling probability \citep[IMPO,][]{he2020randomized}, $p_n  = \| x_n\|^2 / \sum_{m = 1}^N\Vert x_m\Vert^2$; 3) the proposed  functional principal subspace sampling probability (FunPrinSS)  estimated by Algorithm~\ref{alg:sampleProb} with $\alpha = 0.5$.
This section contains additional simulation results of Section~\ref{sec:simu} with the  \texttt{Polynomial Decay (PD)} eigenvalue setting. See Table~\ref{tab:simu:paras} for  details. 

Figure~\ref{fig:ErrorNUpoly}--\ref{fig:ErrorVNpoly} show the results for the estimation of covariance operator estimation and functional principal components. They correspond to the NU, MN and VN distribution settings for the scores, respectively.  The interpretation of these figures is similar to that of Figure~\ref{fig:ErrorNUexp}--\ref{fig:ErrorVNexp}. The IMPO sampling has the highest accuracy in estimating $\hsC_{XX}$ and $\htheta_1$, while our proposed FunPrinSS sampling leads in performance of estimating the other eigenfunctions and the projection operator $\hsP_R$.

Figure~\ref{fig:Regressionpoly} shows the functional linear regression (FLR) result of the PD eigenvalue setting. The figure conveys similar messages as Figure~\ref{fig:Regressionexp}.  Still, our proposed FunPrinSS sampling leads in performance. The UNIF sampling has higher accuracy than IMPO in the NU distribution setting, but fails completely in the VN distribution setting.

\begin{figure}[p]
	\centering
	\includegraphics[width = 1\textwidth, height = 0.5\textwidth]{./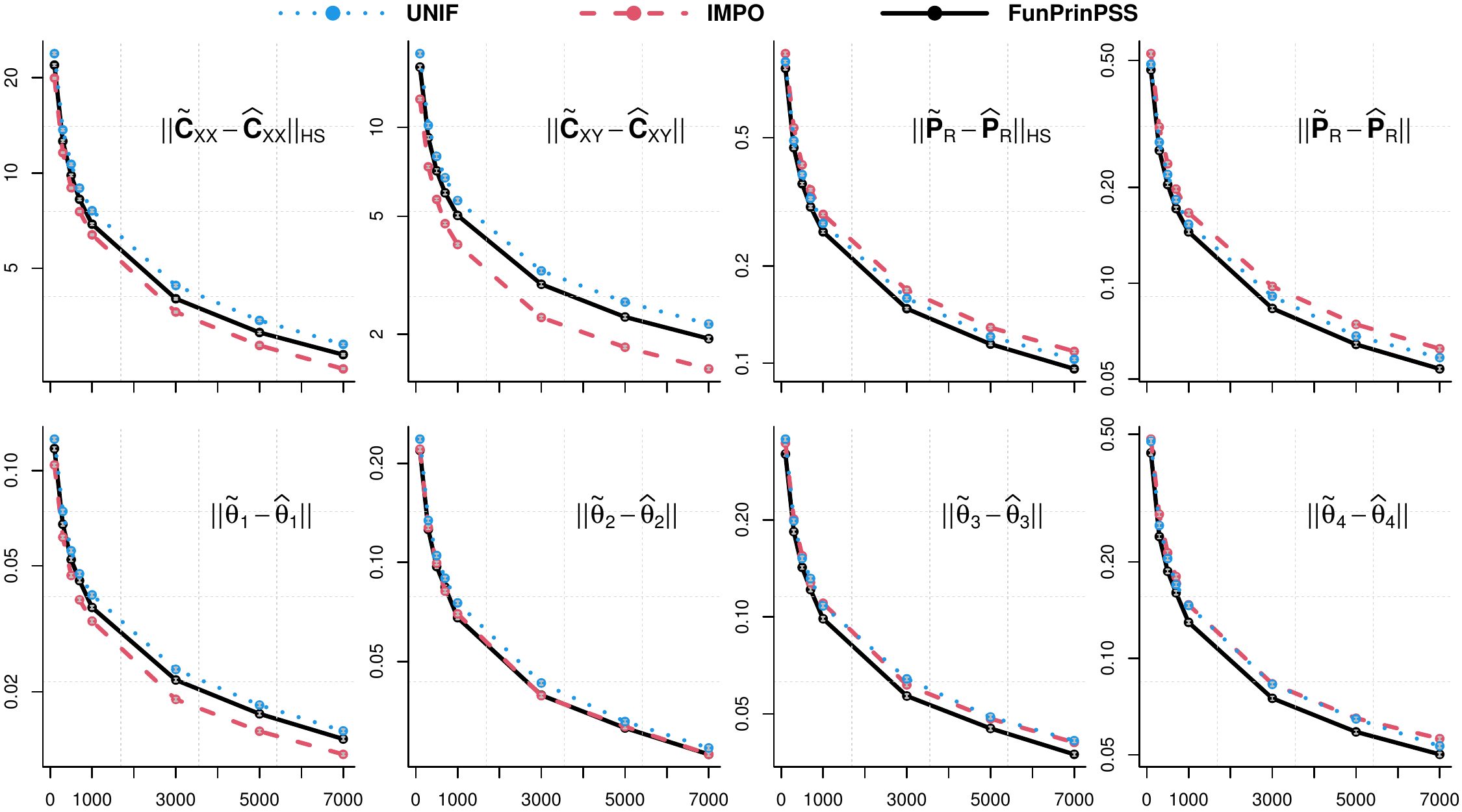} 
	\caption{Randomized covariance operator estimation and FPCA for the PD and NU setting.}\label{fig:ErrorNUpoly}
\end{figure}
\begin{figure}[h!]
	\centering
	\includegraphics[width = 1\textwidth, height = 0.5\textwidth]{./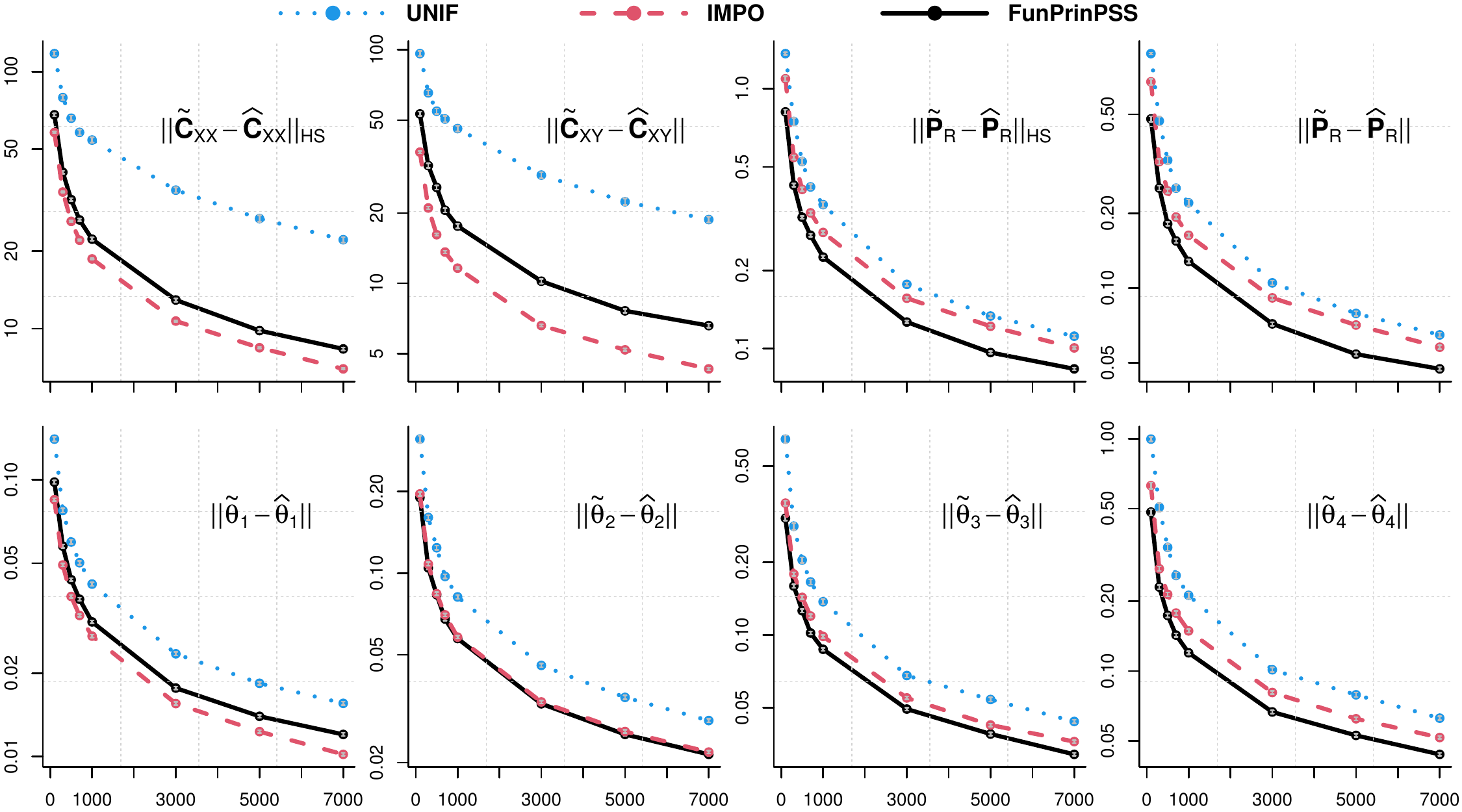} 
	\caption{Randomized covariance operator estimation and FPCA for the PD and MN setting.}\label{fig:ErrorMNpoly}
\end{figure}
\begin{figure}[h!]
	\centering
	\includegraphics[width = 1\textwidth, height = 0.5\textwidth]{./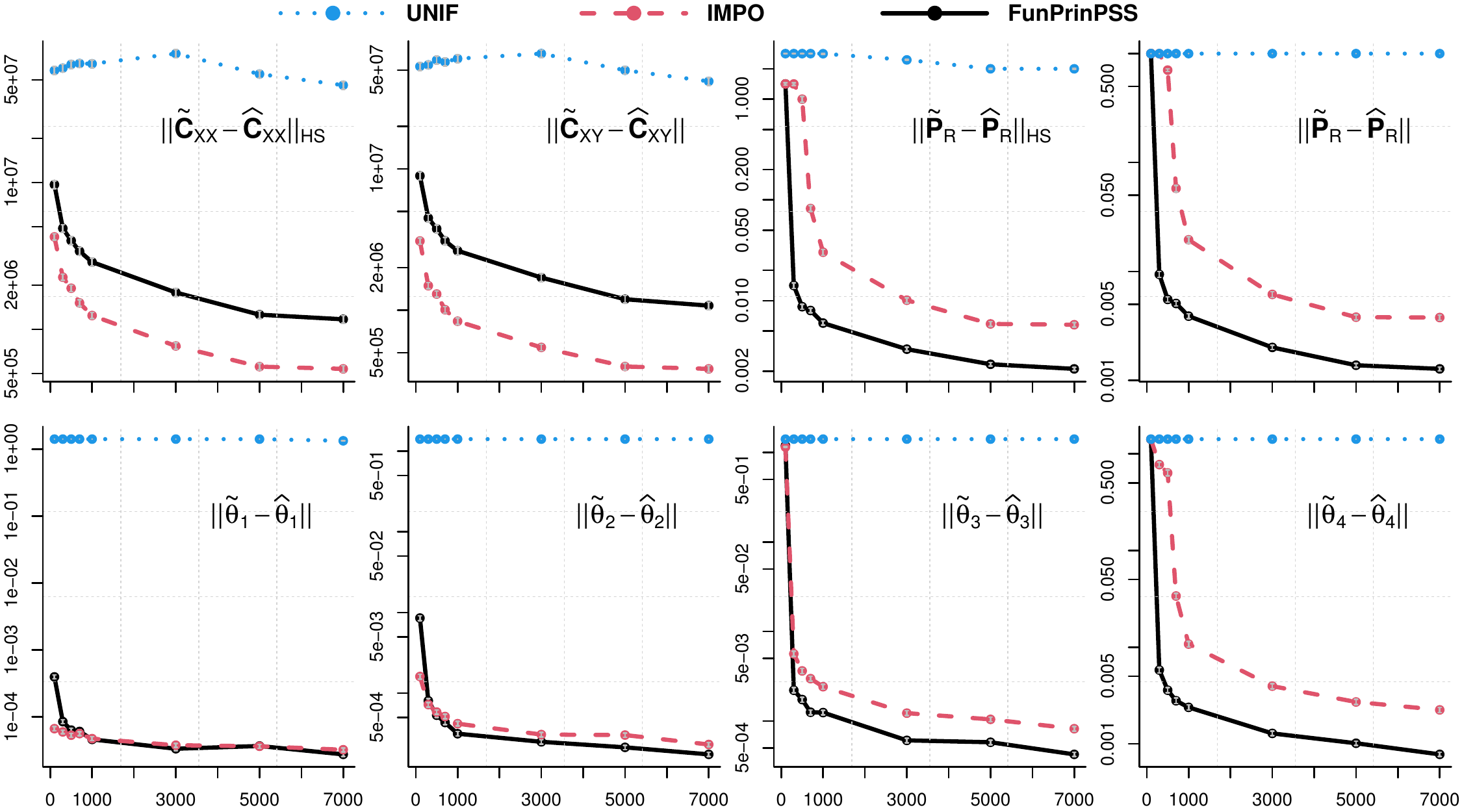} 
	\caption{Randomized covariance operator estimation and FPCA for the PD and VN setting.}\label{fig:ErrorVNpoly}
\end{figure}
\begin{figure}[h!]
	\centering
	\includegraphics[width = 1\textwidth, height = 0.5\textwidth]{./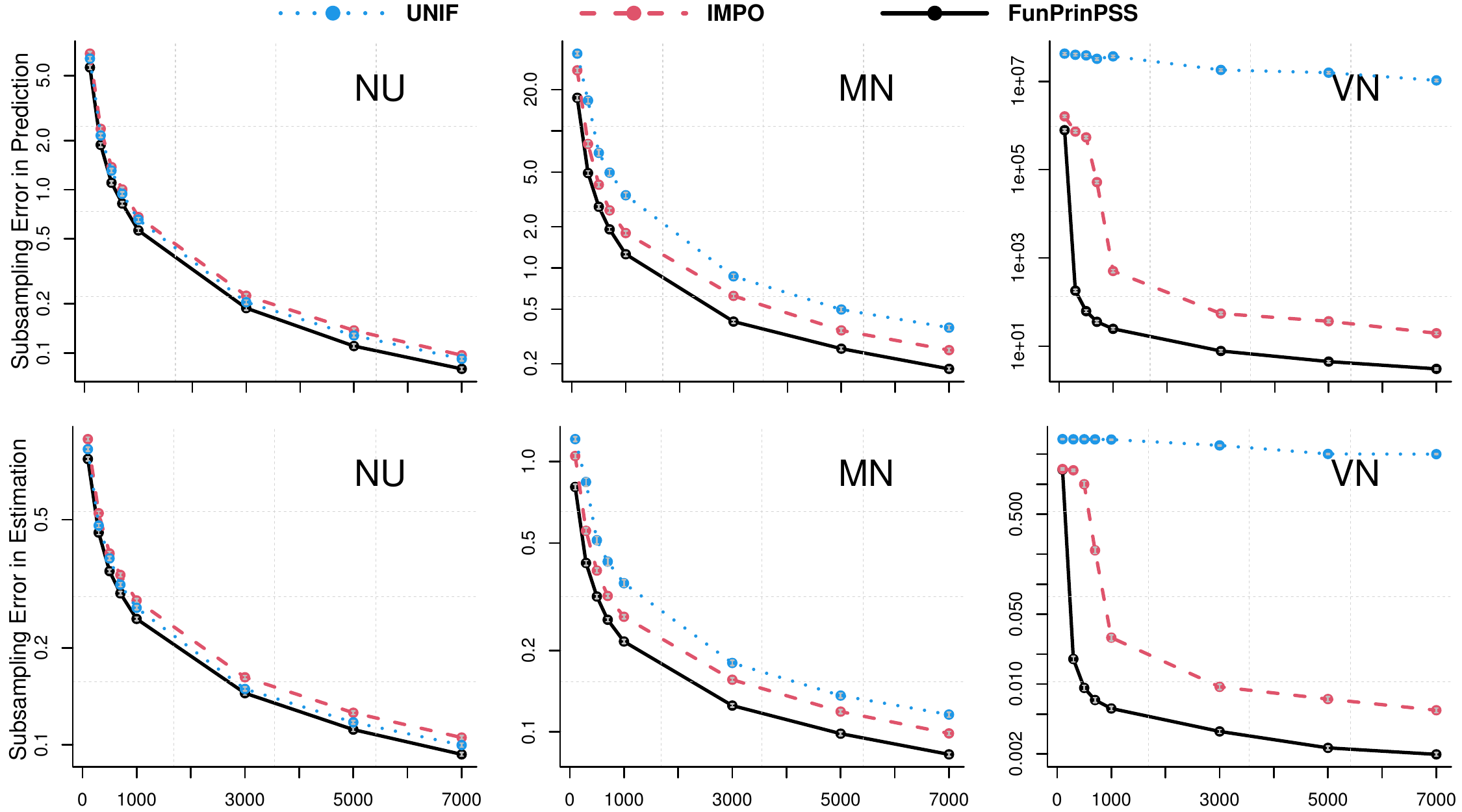} 
	\caption{Randomized FLR for predictors generated with polynomially decaying  eigenvalues.}\label{fig:Regressionpoly}
\end{figure}

\end{document}